\documentclass[11pt]{article}

\usepackage[margin=1in]{geometry}
\usepackage{amsmath,amssymb,amsthm,mathtools}
\usepackage{enumitem}
\usepackage{graphicx}
\usepackage{xcolor}
\usepackage{float}
\usepackage{array}
\usepackage[colorlinks=true,allcolors=blue!60!black]{hyperref}
\theoremstyle{plain}
\newtheorem{theorem}{Theorem}
\newtheorem{proposition}[theorem]{Proposition}
\newtheorem{lemma}[theorem]{Lemma}
\newtheorem{corollary}[theorem]{Corollary}
\newtheorem{fact}[theorem]{Fact}
\theoremstyle{definition}
\newtheorem{definition}[theorem]{Definition}

\theoremstyle{remark}
\newtheorem{remark}[theorem]{Remark}
\theoremstyle{plain}
\DeclareMathSizes{10}{10}{8}{6}

\newcommand{\CC}{\mathbb{C}}
\newcommand{\EE}{\mathbb{E}}
\newcommand{\Tr}{\operatorname{Tr}}
\newcommand{\ratio}{\mathcal{R}}
\newcommand{\Cat}{\mathrm{Cat}}
\newcommand{\Ex}{\operatorname{\mathbb E}}

\newcommand{\Y}{\mathcal Y_n}
\newcommand{\Haar}{\mathrm{Haar}}
\newcommand{\cay}{\mathsf c}
\newcommand{\TV}{d_{\mathrm{TV}}}
\newcommand{\Gsh}{\mathsf C^{\sharp}_{n,t}}
\newcommand{\Gp}{\mathsf C_{n,t}}
\newcommand{\ket}[1]{|#1\rangle}
\newcommand{\bra}[1]{\langle#1|}
\newcommand{\poly}{\mathrm{poly}}
\usepackage{paper/main/circuit_figures}

\title{\bf Logarithmic-Depth Fermion Sampling:\\
Anticoncentration and Average-Case Hardness}
\author{%
  {\large Natansh Mathur\textsuperscript{1}\qquad
  Iordanis Kerenidis\textsuperscript{2,3}}\\[.6em]
  {\small\textsuperscript{1}Quantinuum, Singapore}\\[.2em]
  {\small\textsuperscript{2}Quantum Signals, Paris, France}\\[.2em]
  {\small\textsuperscript{3}IRIF, CNRS and Universit\'e Paris Cit\'e}}
\date{}
\hypersetup{pdfauthor={Natansh Mathur and Iordanis Kerenidis},bookmarksdepth=3}
\makeatletter
\renewcommand{\maketitle}{%
  \begin{center}
    {\Large\@title\par}
    \vspace{1em}
    \@author
  \end{center}\vspace{.5em}}
\makeatother

\newcommand{\CircuitSamples}{2048}
\newcommand{\CircuitMaxResidual}{1.87}
\newcommand{\CircuitEventRows}{%
8 & 3 & $0.489\;[0.467,0.510]$ & $0.575\;[0.553,0.596]$ \\
12 & 4 & $0.491\;[0.469,0.512]$ & $0.478\;[0.456,0.499]$ \\
16 & 5 & $0.514\;[0.493,0.536]$ & $0.493\;[0.472,0.515]$ \\
}

\begin{document}
\maketitle

\begin{abstract}

Provable separations between quantum and classical computation are scarce.
Sampling problems give some of the clearest conditional ones: Boson Sampling, and its fermionic
counterpart, Fermion Sampling, where noninteracting particles are moved by
passive linear optics and the computational resource is supplied by a
non-Gaussian magic input. Fermion Sampling has anticoncentration and
high-precision average-case hardness when the transformation is drawn globally
at random, which costs linear depth and quadratically many gates. A natural
open question was whether this linear depth is necessary.

In this work we show that logarithmic-depth circuits suffice, for both
guarantees and for the same ensemble. For fresh uniform matchings with
independent Haar two-mode gates, the collision reaches the passive-Haar
scale at a sharp threshold of $\log n / \log(9/4) \approx 0.855 \log_2 n$
layers, with an explicit limiting transition profile and a matching lower
bound from two-particle correlations. The mechanism is input-dependent:
the magic state suppresses the extensive weight of the slowest relaxation
mode, leaving the next one to set the scale -- with an occupation-basis
input the collision stays large even under full Haar randomness, so
randomizing the gates alone does not suffice. At a larger logarithmic
depth we prove, in the real-RAM model, average-case $\#\mathsf{P}$-hardness of estimating output
probabilities to additive error $2^{-O(n \log^2 n)}$ on any fixed fraction
of instances strictly above $3/4$, by building hard instances in four native layers, hiding
them inside typical schedules, and interpolating along Cayley paths with
a rational linear-program decoder. A finite $192$-gate alphabet preserves
the collision law exactly. Both guarantees hold at logarithmic native fermionic depth between arbitrary mode pairs with
$O(n \log n)$ gates, replacing the quadratic budget of the global-Haar
construction. The accuracy proved is finer than the $1/N$ scale ($N=\binom{n}{n/2}$) that
sampling-to-counting requires, so hardness of sampling to constant
total-variation distance remains open.

\end{abstract}

\clearpage
\begingroup
\small
\tableofcontents
\endgroup
\setcounter{tocdepth}{2}
\clearpage

\section{Introduction}\label{sec:intro}

\subsection{Motivation}

Provable separations between quantum and classical computation are scarce,
and sampling problems supply some of the clearest conditional ones. Boson Sampling and its
fermionic counterpart, Fermion Sampling, both take a fixed non-Gaussian
resource, move it with noninteracting dynamics, and measure. For Fermion Sampling, the dynamics
with an occupation-basis input are classically simulable~\cite{Terhal2002Matchgates}; the extra computational
resource comes from the magic input~\cite{Hebenstreit2019magic,Oszmaniec2022FermionSampling}. This
separation of roles makes them unusually clean settings in which to ask
what a quantum device must actually do in order to be hard to simulate.

Fermion Sampling has both properties that sampling-advantage arguments
require: the output probabilities spread across exponentially many
outcomes, and estimating them is hard on typical
instances~\cite{Oszmaniec2022FermionSampling}. Both were established for globally
Haar-random passive transformations, which on $n$ modes cost $O(n)$ depth
and $O(n^2)$ two-mode gates~\cite{Braccia2025OptimalFLO}. That is a substantial
resource demand, and it is not obviously necessary: the guarantees are
statements about a distribution over circuits, and a much smaller amount
of randomness might already produce it. A natural open question was whether
this linear depth is necessary.

Two obstacles stand in the way, and they are the reason the question is
not routine. The first is that randomizing the gates alone is not enough.
With an occupation-basis input the collision remains large even under full
Haar randomness, so the input, as well as the gate ensemble, decides whether
the distribution spreads; a sharp depth analysis must track the input explicitly
rather than rely solely on a generic bound on circuit randomness. The second is
that a hardness reduction must place its hard instance inside a schedule
that the shallow ensemble actually produces. A globally Haar-random
argument supplies no such embedding, because it never commits to which
pairs of modes interact.

\subsection{Main results}

Consider $n$ modes, with $n$ divisible by four, prepared in a product of
four-mode paired magic states. Each layer independently chooses a uniform
perfect matching and applies independent Haar-random two-mode passive
gates to its pairs. Measuring occupations gives an output with half the
modes occupied, one of $N = \binom{n}{n/2}$ possibilities.
Section~\ref{sec:model} defines the input and circuit ensemble precisely.

The quantity we track is the \emph{collision}: the probability that two
independent shots of the same circuit return the same outcome.
Normalizing by the uniform-output value $N^{-1}$ and averaging over
circuit instances gives the collision ratio $\ratio_{n,t}$, which equals
one for the uniform distribution. A ratio bounded by a constant is the
standard sufficient criterion for anticoncentration. Globally Haar-random
passive transformations achieve $\ratio_{\Haar}(n) \to 2$ for this input;
the question is how many random matching layers are needed to reach that
scale.

Our first result answers this sharply. The collision reaches any fixed
multiple $q > 1$ of the passive-Haar value at depth
\begin{equation}
  t^{*}(q) \;=\; \frac{\log n}{\log(9/4)} \;+\; O_q(1),
\end{equation}
whose leading term is approximately $0.855 \log_2 n$. Writing
$z = n(4/9)^t$, the transition has the explicit limiting profile
$\ratio_{n,t}/\ratio_{\Haar}(n) \to e^{3z/2}$, uniformly for bounded $z$,
while $\ratio_{n,t} \to \infty$ when $z \to \infty$. An independent lower
bound from two-particle correlations shows that no substantially earlier
depth can have bounded collision, so the logarithmic scaling is optimal
within this ensemble. Theorem~\ref{thm:magicprofile} in
Section~\ref{sec:passiverandom} gives the leading constant, the profile, and the
finite-size error.

A finite alphabet of $192$ two-mode gates reproduces the two-copy channel
of Haar exactly, so every collision statement above holds verbatim for a
discrete gate set. Composing with a uniform classical relabeling of the
output modes converts the collision bound into anticoncentration at each
fixed half-filled output: its probability is at least a constant times
$N^{-1}$ on a constant fraction of instances. Both hold at the same
$O(\log n)$ depth and $O(n\log n)$ gate count.
Section~\ref{sec:corollaries} supplies an exact finite-size
certificate and the resulting probability guarantee.

Spreading the probabilities does not make them hard to compute, and our
second result addresses that separately. Fix $0<\gamma<1/4$. For the
continuous Haar-gate ensemble with uniform output relabeling, at an
efficiently computable depth $t(n)=O(\log n)$ satisfying
$t(n)\ge6.5\log_2 n+9$, estimating the probability of a fixed half-filled
output to additive error $2^{-O(n\log^2 n)}$ on at least a $3/4+\gamma$
fraction of circuit instances, for all sufficiently large $n$, suffices
to solve $\#\mathsf{P}$ counting problems in randomized polynomial time
in the real-RAM model. The same conclusion holds for exact evaluation.
Theorem~\ref{thm:shallow-hardness} in Section~\ref{sec:hardness} states the
reduction and its precision requirements. The finite alphabet retains the
anticoncentration guarantee; its average-case hardness is not established
by this reduction.

\paragraph{Proof structure.}
Two replicas, Howe duality, and permutation symmetry reduce the collision
to a reversible Markov chain with $O(n)$ states, despite the exponential
dimension of the many-particle space. Its decay rates depend only on the
gate ensemble; its spectral weights depend only on the input, and this
separation is what makes the input-dependence visible. For an
occupation-basis input the slowest mode carries weight proportional to
$n$; for the paired magic input that weight is bounded by a constant, and
the second mode, with weight close to $3n/2$ and decay rate approximately
$(4/9)^t$, sets the transition. Uniform coefficient bounds and a
Hahn-to-Hermite limit assemble the higher modes into the explicit profile.
Connected two-particle correlations evolve under a separate three-state
chain whose deviations lower-bound the collision at every depth,
excluding earlier crossings without assuming that the collision decreases
monotonically. Section~\ref{sec:passiverandom} explains the mechanisms;
Appendices~\ref{app:matchingproof} and~\ref{app:magic-analysis} give the proofs.

The hardness proof has a different structure. Dual-rail logical qubits and
postselected fusion turn the paired input into a universal graph-state
resource in four native layers, with all fermionic signs tracked
explicitly and all postselections deferred to the final measurement. Swap
networks then place that computation inside a sampled matching schedule: a
meeting lemma shows that typical schedules provide enough paths in
logarithmic depth, and a routing lemma moves the occupied modes to a
uniform target output. A gatewise Cayley path connects Haar-random gates
to the embedded hard circuit while keeping the queried distribution close
to the target ensemble, and error-correcting interpolation, in a robust
linear-programming variant, recovers the hard endpoint from noisy and
partly incorrect replies. Appendices~\ref{hard:sec:worst}
through~\ref{hard:sec:embeddings} give these proofs.

\paragraph{Scope.}
Depth counts native layers of gates acting on arbitrary mode pairs; input
preparation and physical routing are separate resource questions,
specified in Section~\ref{sec:model}. The collision criterion is
sufficient but not necessary: a divergent collision does not by itself
rule out fixed-output anticoncentration at an earlier depth, so the lower
bound constrains the collision route rather than anticoncentration as
such. The accuracy in the hardness theorem is finer than the $N^{-1}$
scale that sampling-to-counting arguments require, so hardness of sampling
to constant total-variation distance does not follow;
Section~\ref{sec:conclusion} discusses this gap.

\section{Related work}\label{sec:related}

\paragraph{Fermionic foundations.}
Oszmaniec \emph{et al.}\ prove Haar anticoncentration for the same paired
magic input~\cite{Oszmaniec2022FermionSampling}; here we determine the sharp
threshold at which random matching layers relax to it.
Braccia \emph{et al.}\ characterize passive-FLO commutants through Howe
duality~\cite{Braccia2026Commutant}. For our ensemble we also need the
spectrum of the matching chain and the input-dependent collision profile.
Oh \emph{et al.}\ efficiently estimate amplitudes and correlators of paired
inputs to additive error~\cite{Oh2026MagicSimulation}; such estimates do not
by themselves give a sampler with small total-variation error for our
occupation distribution.

\paragraph{Learning architectures and generative models.}
Kerenidis studies paired magic inputs in brick-wall and butterfly learning
architectures~\cite{Kerenidis2026Scalable}.
Coyle \emph{et al.}\ connect sampling hardness to generative modeling through
Ising Born machines~\cite{Coyle2020BornSupremacy}.
Raj, Mathur, and Perdomo-Ortiz benchmark generative models including fermionic
Born machines and show that moment-matching losses need not certify
generalization~\cite{Raj2026Generalization}.
These works motivate studying the full output distribution; our guarantees
concern independently sampled matching layers and gates, rather than the
distribution of circuits produced by training.

\paragraph{Worst-to-average reductions.}
Oszmaniec \emph{et al.}\ prove worst-to-average hardness for globally
Haar-random Fermion Sampling~\cite{Oszmaniec2022FermionSampling}.
Go, Oh, and Jeong establish high-precision average-case hardness for
logarithmic-depth Boson Sampling in a structured interferometer
architecture~\cite{Go2026ShallowBoson}.
Our reduction targets a different ensemble, with independent local Haar gates
on a shallow random matching schedule. This requires an architectural step,
which embeds a four-layer hard computation in a typical sampled schedule
and routes its output to a uniform target.
The interpolation step follows the continuous-path approach developed for
random quantum circuits~\cite{Bouland2019RCS,Movassagh2023hardness}.
Robust polynomial extrapolation also underlies the additive-error results
of Kondo, Mori, and Movassagh and of Krovi
\cite{Kondo2021Robustness,Krovi2022averagecase}.
Our endpoint decoder is a rational linear program that tolerates both
small errors and arbitrary outliers, without an additional $\mathsf{NP}$
oracle; the approximate reduction of Go \emph{et al.}\ uses
$\mathsf{BPP}^{\mathsf{NP}}$. Related robust interpolation methods are
used by Quek in a Hamiltonian setting~\cite{Quek2025Hamiltonian}.
For the worst-case construction we use
graph-state measurement patterns and a fermionic version of type-I fusion
\cite{BrowneRudolph2005,Raussendorf2003MBQC,Broadbent2009Blind}.

\paragraph{Circuit anticoncentration and spectral methods.}
Dalzell, Hunter-Jones, and Brand\~ao prove collision anticoncentration for
qubit circuits of logarithmic depth through a stochastic
reduction~\cite{Dalzell2022LogDepth}; in our fermionic setting, the weights of
the input decide which relaxation mode dominates.
Bremner, Montanaro, and Shepherd establish anticoncentration for IQP circuits
through polynomial-gap moments~\cite{Bremner2016averagecase}, whereas our
tools come from the representation theory of passive FLO.
Ghosh, Hangleiter, and Helsen use matching combinatorics and Krawtchouk
polynomials to prove anticoncentration for constant-depth graph states, with
random product measurements in place of our occupation
readout~\cite{Ghosh2025RegularGraphs}.
Diaconis and Shahshahani analyze Bernoulli--Laplace mixing with Hahn
polynomials~\cite{DiaconisShahshahani1987}; the spectral structure is the same
as ours, but the transition operator is different.

\paragraph{Bosonic comparisons.}
Kolarovszki \emph{et al.}\ prove anticoncentration for Boson Sampling in the
saturated regime using two-copy projections~\cite{Kolarovszki2026Photonic};
our focus is instead on how the fermionic collision relaxes with depth.
Mhiri \emph{et al.}\ derive bosonic moments and anticoncentration beyond the
dilute regime from operator-space projections~\cite{Mhiri2026Boson}, while
our analysis uses exterior-power representations.
Shou \emph{et al.}\ locate the weak-anticoncentration transition of Gaussian
Boson Sampling through hafnian moments~\cite{Shou2026GBS}; there the control
parameter is the number of squeezed inputs rather than the circuit depth.
A preprint of Koehler and Leung settles anticoncentration of Gaussian
permanents by comparison with determinants~\cite{KoehlerLeung2026Permanent}.
That is a lower-tail result for independent Gaussian matrices, not for a
shallow circuit law like ours.

\paragraph{Design constructions and bounds.}
West, Cerezo, and Larocca construct exact strong matchgate $3$-designs
in logarithmic qubit depth with all-to-all
connectivity~\cite{West2026StrongMatchgate}. These designs concern active
FLO, which can change particle number. An exact two-design reproduces the
Haar collision for each input, but this moment identity concerns its own
ensemble and Haar measure. It does not determine the transition of our
passive random matching process. We obtain its sharp leading depth
constant, limiting profile, matching lower bound, and input-dependent
suppression of the slowest mode. A design guarantee alone supplies neither
this relaxation analysis nor an average-case hardness reduction.

West \emph{et al.}\ prove lower bounds on the depth and gate count of group
designs~\cite{Nogo2025}. These do not preclude our collision estimate, which
is for a fixed input and uses native fermionic gates between arbitrary pairs.
Grevink \emph{et al.}\ give design no-go results and a fourth-moment
obstruction for matchgates with the vacuum input in a one-dimensional block
architecture~\cite{Grevink2025Glue}; both the input and the architecture
differ from our paired input and fresh random matchings.
Heinrich \emph{et al.}\ relate collision anticoncentration to relative state
two-designs under local-unitary invariance~\cite{Heinrich2026Anticoncentration},
an assumption that our passive ensemble with fixed particle number does not
satisfy.

\section{Model and collision ratio}\label{sec:model}

A fermionic mode is a single-particle orbital that can be empty or occupied
by one fermion. For $n$ modes, let $a_i^\dagger$ and $a_i$ create and
annihilate a fermion in mode $i$, and let $n_i=a_i^\dagger a_i$ be its
occupation operator. A passive FLO transformation is the second
quantization $\Gamma(U)$ of a unitary matrix $U\in U(n)$, defined by
\[
 \Gamma(U)a_i^\dagger\Gamma(U)^\dagger=\sum_{j=1}^nU_{ji}a_j^\dagger,
 \qquad \Gamma(U)|0\cdots0\rangle=|0\cdots0\rangle.
\]
It preserves total particle number. We work at half filling, $n=2k$,
where $k$ is the number of particles. The allowed occupation strings form
$\mathcal Y_n=\{y\in\{0,1\}^n:|y|=k\}$, where $|y|=\sum_i y_i$;
the corresponding Hilbert space has dimension $N=\binom nk$.
For $x,y\in\mathcal Y_n$, let $U[y,x]$ be the $k\times k$ submatrix
with rows and columns selected by their occupied modes. Then
\begin{equation}\label{eq:amp}
 \langle y|\Gamma(U)|x\rangle=\det U[y,x],\qquad
 \langle y|\Gamma(U)|\Psi\rangle=\sum_{x\in\mathcal Y_n}c_x\det U[y,x]
\end{equation}
for a normalized input $|\Psi\rangle=\sum_{x\in\mathcal Y_n}c_x|x\rangle$.
This is the exterior-power formula, proved in
\cite[Lemma~9, Eq.~(D2)]{Oszmaniec2022FermionSampling}.
Our specific input requires $n=4B$, where $B$ is the number of blocks:
\begin{equation}\label{eq:magic}
 |\Psi_{\rm in}\rangle=|\Psi_4\rangle^{\otimes B},\qquad
 |\Psi_4\rangle=\frac{|0011\rangle+|1100\rangle}{\sqrt2}.
\end{equation}
This is the non-Gaussian paired magic input of Fermion
Sampling~\cite{Oszmaniec2022FermionSampling}.
In the Jordan--Wigner qubit encoding, each block is prepared from
$|0000\rangle$ by the circuit in Figure~\ref{fig:fermionsamplingscheme}(a),
redrawn from~\cite[Fig.~1]{Oszmaniec2022FermionSampling}. 

All $B$ blocks can be prepared in parallel at constant qubit depth.
This preparation supplies the non-Gaussian resource and is separate from
the passive evolution; its intermediate states need not have fixed
particle number. Here ``magic'' refers to a resource for FLO; the qubit
preparation itself is a Clifford circuit.

A two-mode passive gate $F_{ij}(u)=\Gamma_{ij}(u)$ acts on the ordered
pair of modes $(i,j)$. For $u=(u_{ab})_{a,b=1}^2\in U(2)$, its matrix in
the local occupation basis $(|00\rangle,|10\rangle,|01\rangle,|11\rangle)$ is
\begin{equation}\label{eq:flogate}
 F_{ij}(u)=
 \begin{pmatrix}
 1&0&0&0\\
 0&u_{11}&u_{12}&0\\
 0&u_{21}&u_{22}&0\\
 0&0&0&u_{11}u_{22}-u_{12}u_{21}
 \end{pmatrix}
 =1\oplus u\oplus\det u.
\end{equation}
The three blocks act on zero, one, and two particles, respectively, so the
matrix follows directly from~\eqref{eq:amp}. For example,
$u=X=\bigl(\begin{smallmatrix}0&1\\1&0\end{smallmatrix}\bigr)$ gives
the fermionic swap (fSWAP): it exchanges $|10\rangle$ and $|01\rangle$
and multiplies $|11\rangle$ by $-1$. Equation~\eqref{eq:flogate} is a
local fermionic matrix; embedding a gate between nonadjacent modes in an
ordered occupation basis includes the usual fermionic reordering signs.

\paragraph{Parameterized two-mode gates.}
Standard parameterizations use the real one-parameter reconfigurable beam
splitter $\mathrm{RBS}_{ij}(\theta)$~\cite[Eq.~(1)]{Landman2022QuantumMethods}
or the two-parameter passive gate $D_{\rm pas}(\alpha_1,\alpha_2)$ of
Oszmaniec \emph{et al.}~\cite[Sec.~III]{Oszmaniec2022FermionSampling}.
Supplementing either with single-mode phase rotations gives the full
$U(2)$ family; their exchange rotations differ by local phases.
For nonadjacent Jordan--Wigner modes, the qubit implementation requires
parity strings or fermionic routing.

\paragraph{Random matching layers.}
At layer $\ell$, choose a uniform perfect matching $M_\ell$ of the $n$
mode labels, independently of every other layer. For each pair
$(i,j)\in M_\ell$, sample $u_{\ell,ij}$ independently from normalized
Haar measure on $U(2)$. Write $L_\ell\in U(n)$ for the resulting
single-particle layer; its many-particle action is
$\Gamma(L_\ell)=\prod_{(i,j)\in M_\ell}F_{ij}(u_{\ell,ij})$.
A depth-$t$ core circuit has single-particle matrix
$\mathsf C_{n,t}=L_t\cdots L_1$, with the rightmost layer applied first.
We use the same notation for its probability law when taking expectations.

\begin{definition}[Collision probability and normalized collision ratio]
\label{def:collision}
For a fixed circuit matrix $V$ and input $\Psi$, let
$p_y(V)=|\langle y|\Gamma(V)|\Psi\rangle|^2$. Its collision probability is
$\sum_{y\in\mathcal Y_n}p_y(V)^2$, the probability that two independent
occupation measurements of that same circuit return the same string.
For a circuit ensemble $\nu$, define
\begin{equation}\label{eq:collision-definition}
 Z(\nu,\Psi)=\mathbb E_{V\sim\nu}\sum_{y\in\mathcal Y_n}p_y(V)^2,
 \qquad \ratio(\nu,\Psi)=N Z(\nu,\Psi)\ge1.
\end{equation}
\end{definition}
The factor $N$ makes the uniform distribution have ratio one. A ratio
bounded by a constant independent of $n$ is our collision criterion for
anticoncentration, following~\cite{Dalzell2022LogDepth}.
Section~\ref{sec:corollaries} combines it with output symmetry to bound
individual output probabilities, using the second-moment argument
of~\cite{Hangleiter2018Anticoncentration}.

\subsection{Classical output relabeling}\label{sec:relabeling}

The collision criterion does not require an additional permutation.
The fixed-output guarantee does require a symmetry argument: the block
structure of $\Psi_{\rm in}$ distinguishes some occupation strings, so
permutation invariance of the layer law alone does not make the core's
output probabilities identically distributed. At depth zero, for example,
only $2^B$ of the $N$ allowed outputs have nonzero probability.
To make every fixed output equivalent, sample an independent uniform
mode permutation $P$ and consider
\begin{equation}\label{eq:randommatchingensemble}
 \mathsf G_{n,t}=\mathsf C_{n,t}P.
\end{equation}
Here $P$ denotes both the permutation and its single-particle permutation
matrix; $Pz$ permutes the coordinates of an occupation string $z$.
Although~\eqref{eq:randommatchingensemble} places $P$ at the input,
it can be implemented by relabeling gate endpoints and classical outputs.

\begin{proposition}[Classical relabeling]\label{prop:permutationremoval}
For every input, the circuit-dependent collision $N\sum_y p_y^2$ has the same
distribution under $\mathsf C_{n,t}$ and $\mathsf G_{n,t}$; in particular,
\begin{equation}\label{eq:permutationcollision}
 \ratio(\mathsf C_{n,t},\Psi)=\ratio(\mathsf G_{n,t},\Psi).
\end{equation}
An instance $L_t\cdots L_1P$ can be implemented by applying
$L'_\ell=P^{-1}L_\ell P$, measuring $z$, and returning $y=Pz$ classically.
It uses $t$ native fermionic layers and $nt/2$ two-mode gates.
\end{proposition}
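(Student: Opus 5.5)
The plan is to treat the two parts of Proposition~\ref{prop:permutationremoval} separately: first the distributional identity for the collision, then the implementation claim.

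\emph{Distributional identity.} Fix a circuit matrix $V$ and a permutation $P$. The key fact is that second quantization is a group homomorphism, $\Gamma(VP)=\Gamma(V)\Gamma(P)$. We also need that $\Gamma(P)$ acts on occupation strings as a signed permutation, $\Gamma(P)|x\rangle=\pm|Px\rangle$, where the sign is the fermionic reordering sign. With these, the output distribution of $VP$ on $\Psi$ equals the output distribution of $V$ on the permuted input $\Psi'=\Gamma(P)\Psi$. This is the wrong direction for the first claim, which is stated for every input. So I would instead conjugate: write
\[
\mathsf C_{n,t}P=P\,(P^{-1}L_tP)\cdots(P^{-1}L_1P).
\]
Each layer law is invariant under conjugation by a fixed permutation. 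Indeed, $P^{-1}L_\ell P$ is the layer built on the relabeled matching $P^{-1}(M_\ell)$, which is again uniform, with gates $u_{\ell,ij}$ placed on the relabeled pairs. When $P$ reverses the order of a pair, the gate becomes $XuX$, and this is still Haar-distributed on $U(2)$. Since the layers are independent and independent of $P$, conditional on $P$ the tuple $(L'_1,\dots,L'_t)$ has the same joint law as $(L_1,\dots,L_t)$. Hence $\mathsf G_{n,t}\overset{d}{=}P\,\mathsf C'_{n,t}$, where $\mathsf C'_{n,t}$ is an independent copy of the core. Applying $\Gamma(P)$ at the output sends $p_y$ to $p_{P^{-1}y}$, up to phases that disappear in the modulus. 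This only reindexes the sum $\sum_y p_y^2$, so the collision is unchanged. Therefore $N\sum_y p_y^2$ has the same law under both ensembles, and taking expectations gives~\eqref{eq:permutationcollision}.

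\emph{Implementation.} The same identity shows that running the relabeled layers $L'_\ell$ and then applying $P$ reproduces the instance. Applying $P$ at the output before an occupation measurement is equivalent to measuring $z$ and reporting $y=Pz$, because $|\langle Pz|\Gamma(P)\phi\rangle|=|\langle z|\phi\rangle|$. Each $L'_\ell$ is again a perfect-matching layer of $n/2$ two-mode gates on relabeled endpoints. This gives $t$ native layers and $nt/2$ gates, and $P$ itself costs nothing quantum.

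\emph{Main obstacle.} The only delicate point is the fermionic sign bookkeeping. One has to check that $\Gamma(P)$ is a signed permutation on $\mathcal Y_n$, which follows from~\eqref{eq:amp}, since $\det P[y,x]$ is $\pm1$ exactly when $y=Px$. One also has to check that reversing the ordered pair inside $F_{ij}$ preserves the Haar law; this holds because $XuX$ is Haar when $u$ is. Phases never affect $p_y$, so the argument closes.
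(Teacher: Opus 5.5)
Your proposal is correct and follows essentially the same route as the paper: you factor $L_t\cdots L_1P=PC'$ with $C'=L'_t\cdots L'_1$, use that relabeling preserves the uniform matching law and that reversing an edge replaces $u$ by $XuX$ (still Haar), and then note that $\Gamma(P)$ at the output only reindexes the probability vector. That reindexing preserves each instance's collision and justifies measuring $z$ and reporting $Pz$. The paper additionally remarks that conjugation by $X$ preserves the uniform law on the finite alphabet $\mathcal F$; this is needed only for the later finite-gate version, not for the statement as given.
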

\noindent\emph{Proof:} Section~\ref{app:relabeling-proof}.
The identity $L_t\cdots L_1P=P L'_t\cdots L'_1$ moves the permutation
to the output. Conjugation preserves the matching law, so the core on the
right has the same distribution for every $P$. The final independent
uniform relabeling therefore makes the output moments equal for all
$y\in\mathcal Y_n$, while preserving each instance's collision.
All sampled matchings, gates, and $P$ are fixed across repeated shots of
one circuit instance.

Figure~\ref{fig:fermionsamplingscheme} shows the input preparation and the
complete sampling protocol. The depth $t$ counts native passive matching
layers after preparation. For a qubit line implementation one may instead
multiply the single-particle layers and recompile the entire resulting
$U\in U(n)$ using nearest-neighbor Givens rotations. This gives $O(n)$
qubit depth and $O(n^2)$ two-qubit gates for the whole passive circuit
\cite{Kivlichan2018lineardepth}; it need not preserve the original matching
layers or their $O(nt)$ gate count. The initial preparation adds only
constant depth and $3B$ CNOTs.

\FermionSamplingSchemeFigure

\section{The Haar benchmark}\label{sec:exact}

Before asking how many matching layers suffice, we need a target collision
scale. Here ``Haar'' means a uniformly random passive transformation
$U\in U(n)$, not a Haar-random unitary on the full $N$-dimensional
many-particle space. Write
$\ratio_{\rm Haar}(n,\Psi)=\ratio(\operatorname{Haar}(U(n)),\Psi)$.
This benchmark is input dependent: passive dynamics does not erase every
feature of a non-Gaussian state. The results below show that our magic
input has a constant Haar collision ratio, and set the normalization for
the finite-depth theorem in Section~\ref{sec:passiverandom}.
Haar anticoncentration for this input was established
in~\cite{Oszmaniec2022FermionSampling}; we use its two-copy representation
framework to express the benchmark for a general half-filled input.

The fourth power in $p_y^2$ can be written as a squared amplitude on two
mathematical replicas:
$|\Psi\Psi\rangle=|\Psi\rangle\otimes|\Psi\rangle$ and
$|yy\rangle=|y\rangle\otimes|y\rangle$. These replicas are an analysis
tool and do not require an extra experimental register. Label their
creation operators by $a_{i,c}^\dagger$, where $i$ is a physical mode
and $c\in\{1,2\}$ is the copy label. For $g\in SU(2)$, let $D(g)$
rotate the copy label identically at every mode:
\[
 D(g)a_{i,c}^\dagger D(g)^\dagger
 =\sum_{d=1}^2g_{dc}a_{i,d}^\dagger.
\]
This auxiliary action commutes with applying the same passive circuit to
both copies. Its invariant subspace is the \emph{copy-singlet space}
$V_0$; its projector and the input weight in it are
\[
 P_0=\int_{SU(2)}D(g)\,\mathrm dg,\qquad
 w_0(\Psi)=\|P_0|\Psi\Psi\rangle\|^2,
\]
where $\mathrm dg$ is normalized Haar measure. A doubled output
$|yy\rangle$ is a singlet: each physical mode contains either both copy
labels or neither, and a copy rotation leaves it unchanged.
Consequently only the singlet component of the input contributes to the
collision. Skew Howe duality makes $V_0$, at total particle number $2k$,
a single irreducible $U(n)$ representation
\cite{Howe1989Duality,GoodmanWallach2009}. Averaging over $U(n)$ makes
that component proportional to the identity. Appendix~\ref{app:proofs}
proves these statements and evaluates the dimension factor below.

\begin{theorem}[Haar collision]\label{thm:main}
For every normalized half-filled state $\Psi$ on $n=2k$ modes,
\begin{equation}\label{eq:main}
 \ratio_{\rm Haar}(n,\Psi)=\frac{(k+1)^2}{2k+1}\,w_0(\Psi).
\end{equation}
\end{theorem}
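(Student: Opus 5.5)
The plan is a standard Schur-lemma argument on two replicas, followed by a dimension count. For a fixed $U$, write $p_y(U)^2=|\langle yy|\Gamma(U)^{\otimes 2}|\Psi\Psi\rangle|^2$. The copy-rotation action $D(g)$ commutes with $\Gamma(U)^{\otimes2}$, so $P_0$ commutes with $\Gamma(U)^{\otimes 2}$. Since $|yy\rangle$ is a singlet, $\langle yy|=\langle yy|P_0$. Hence
\[
 p_y(U)^2=\langle yy|\,\Gamma(U)^{\otimes2}P_0\rho P_0\Gamma(U)^{\dagger\otimes2}\,|yy\rangle,\qquad \rho=|\Psi\Psi\rangle\langle\Psi\Psi|.
\]
The first step is to make this precise. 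The two-copy space is the exterior algebra $\Lambda(\CC^n\otimes\CC^2)$ at particle number $2k$, and the copy label carries the $U(2)$ action. Restricting to $SU(2)$ loses nothing, because total particle number already fixes the central $U(1)$ character. I will also verify that the two-replica tensor product $|\Psi\rangle\otimes|\Psi\rangle$ is identified with a vector in this exterior algebra, with consistent fermionic signs (ordering all copy-1 operators before copy-2 operators). Under that identification $|yy\rangle$ is indeed $SU(2)$-invariant, since each doubly occupied mode carries the antisymmetric copy singlet $a^\dagger_{i,1}a^\dagger_{i,2}$.

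Next I average over $U\sim\Haar(U(n))$. By skew Howe duality for the pair $(U(n),U(2))$ acting on $\Lambda(\CC^n\otimes\CC^2)$, the degree-$2k$ isotypic component for the $U(2)$ character $\det^k$ is a single irreducible $U(n)$-module. The $U(2)$ irrep $\det^k$ has shape $(k,k)$, so the $U(n)$ irrep has the transposed shape $\lambda=(2^k)$: $k$ rows of length $2$. Schur's lemma then gives
\[
 \EE_U\,\Gamma(U)^{\otimes2}P_0\rho P_0\Gamma(U)^{\dagger\otimes2}=\frac{\Tr(P_0\rho)}{\dim V_0}\,P_0=\frac{w_0(\Psi)}{\dim V_0}P_0 .
\]
Summing $\langle yy|\cdot|yy\rangle=1$ over the $N$ outputs gives $Z=N w_0/\dim V_0$, and therefore
\[
 \ratio_{\rm Haar}=\frac{N^2}{\dim V_0}\,w_0(\Psi).
\]

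It remains to show $\dim V_0=N^2(2k+1)/(k+1)^2$. I will compute $\dim V^{U(n)}_{(2^k)}$ for $n=2k$ via the hook-content formula. The contents of the $k\times 2$ diagram are $j-i$ for $i\le k$, $j\le 2$. Its hook lengths are $k-i+2$ in column $1$ and $k-i+1$ in column $2$, so the product of hooks is $(k+1)!\,k!$. The product of $n+\text{content}$ is
\[
 \prod_{i=1}^k(2k+1-i)(2k+2-i)=\frac{(2k)!}{k!}\cdot\frac{(2k+1)!}{(k+1)!}.
\]
Dividing gives
\[
 \dim V_0=\frac{(2k)!\,(2k+1)!}{k!\,(k+1)!\,(k+1)!\,k!}=\binom{2k}{k}^2\frac{2k+1}{(k+1)^2},
\]
as required. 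As a check, at $k=1$ this is $\dim \mathrm{Sym}^2\CC^2=3$.

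I expect the main obstacle to be the representation-theoretic bookkeeping rather than any estimate. Specifically, I need to show that the copy-singlet space at particle number $2k$ is exactly one irreducible $U(n)$-module with shape $(2^k)$, and not several copies or a different shape. This requires matching the conventions of skew Howe duality (transposed diagrams, with the $SU(2)$-trivial piece corresponding to the rectangular $U(2)$ weight $(k,k)$). It also requires the fermionic sign identification of the replica tensor product with the exterior algebra. As a fallback I would compute $\dim V_0$ directly as the $SU(2)$-invariant dimension in $\Lambda^{2k}(\CC^n\otimes\CC^2)$ by a character integral. Irreducibility could then be obtained by counting $U(n)$ highest-weight vectors inside $V_0$.
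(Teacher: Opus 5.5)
Your proposal is correct and follows essentially the same route as the paper. It identifies the two replicas with the doubled Fock space, uses that $|yy\rangle$ is a copy singlet, invokes skew Howe duality to make the singlet space the single irreducible $V^{(2^k)}$, and applies Schur averaging to get $\ratio_{\rm Haar}=N^2w_0/\dim V_0$. The differences are only cosmetic: you check the singlet property via $a^\dagger_{i,1}a^\dagger_{i,2}\mapsto\det g\,a^\dagger_{i,1}a^\dagger_{i,2}$ and compute $\dim V_0$ with the hook-content formula, whereas the paper uses $E_{12}|yy\rangle=0$ and Weyl's dimension formula to obtain $(2k+1)\Cat_k^2$, which equals your value.
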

\noindent\emph{Proof:} Appendix~\ref{app:proofs}, using the
representation-theoretic averaging underlying
\cite{Oszmaniec2022FermionSampling}.

For a Fock input, meaning a single occupation-basis state $|x\rangle$,
$w_0=1$, so $\ratio_{\rm Haar}$ grows linearly with $n$.
For the magic input, $w_0$ instead decays inversely with $n$, cancelling
the growing dimension factor:

\begin{proposition}[Canonical magic benchmark]\label{thm:magic}
For $n=4B$ and the input~\eqref{eq:magic}, write
$w_0(B)=w_0(\Psi_{\rm in})$ and
$\ratio_{\rm Haar}(n)=\ratio_{\rm Haar}(n,\Psi_{\rm in})$. Then
\begin{equation}\label{eq:magicexact}
 w_0(B)=\frac12\int_{-1}^1\left(\frac{3+x^2}{4}\right)^B\mathrm dx
 =4^{-B}\sum_{j=0}^B\binom Bj\frac{3^{B-j}}{2j+1}.
\end{equation}
Consequently $\ratio_{\rm Haar}(n)<4$ for every $B\ge1$, and
$\ratio_{\rm Haar}(n)=2+O(n^{-1})$ as $n\to\infty$.
\end{proposition}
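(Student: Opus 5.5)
The plan is to reduce $w_0(B)$ to an integral over $SU(2)$ of a product of single-block overlaps, evaluate one block in closed form, and then read off both the exact formula \eqref{eq:magicexact} and the asymptotics from Theorem~\ref{thm:main}.

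\textbf{Step 1: factorize over blocks.} Since $P_0$ is a Haar average of unitaries and a projector, write
\[
w_0(\Psi)=\langle\Psi\Psi|P_0|\Psi\Psi\rangle=\int_{SU(2)}\langle\Psi\Psi|D(g)|\Psi\Psi\rangle\,\mathrm dg .
\]
The copy rotation $D(g)$ acts mode by mode. It preserves particle number within each group of four physical modes, counted over both copies. The doubled input is a product over blocks of even-parity states $|\Psi_4\Psi_4\rangle$, after reordering creation operators block by block. Even operators on disjoint mode sets commute, so the reordering signs cancel between bra and ket, and
\[
w_0(B)=\int_{SU(2)}f(g)^B\,\mathrm dg,\qquad f(g)=\langle\Psi_4\Psi_4|D_4(g)|\Psi_4\Psi_4\rangle .
\]
I will state this fermionic-sign bookkeeping explicitly, since it is the one place the fermionic structure matters.

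\textbf{Step 2: evaluate one block.} Write $g=\bigl(\begin{smallmatrix}a&-\bar b\\ b&\bar a\end{smallmatrix}\bigr)$ and compute $f(g)$ by hand. The doubled block $|\Psi_4\Psi_4\rangle$ is a sum of four terms with four particles on eight orbitals. Each mode carries occupation $0$, $1$ or $2$ in the copy label, and $D(g)$ acts on the copy label at each mode as the spin-$0$, spin-$\tfrac12$ or spin-$0$ (determinant) representation, respectively. Pairs of modes that are doubly occupied, as in $|0011\rangle|0011\rangle$, are invariant. Terms such as $|0011\rangle|1100\rangle$ have four singly occupied modes, so they pick up matrix elements of $g^{\otimes4}$ between fixed copy patterns.

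Collecting the terms should give a polynomial in $|a|^2,|b|^2$ alone, with the phases cancelling pointwise. I expect the result to be
\[
f(g)=\tfrac{3+x^2}{4},\qquad x=|a|^2-|b|^2 .
\]
A sanity check is that $g=1$ gives $1$. Under Haar measure on $SU(2)\cong S^3$, the coordinate $x$ is uniform on $[-1,1]$ by Archimedes' theorem, which yields the integral in \eqref{eq:magicexact}. Expanding $\bigl(\tfrac{3+x^2}{4}\bigr)^B$ binomially and using $\tfrac12\int_{-1}^1 x^{2j}\,\mathrm dx=\tfrac1{2j+1}$ gives the finite sum.

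\textbf{Step 3: asymptotics.} With $k=2B$, Theorem~\ref{thm:main} gives
\[
\ratio_{\rm Haar}(n)=\frac{(2B+1)^2}{4B+1}\,w_0(B).
\]
The integrand concentrates near $x=\pm1$. Substituting $x=\pm(1-s)$ gives $\tfrac{3+x^2}{4}=1-\tfrac s2+\tfrac{s^2}{4}$. A Laplace expansion then yields
\[
w_0(B)=\frac{2}{B}+O(B^{-2}),
\]
and hence $\ratio_{\rm Haar}(n)=2+O(n^{-1})$.

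\textbf{Main obstacle: the uniform bound $\ratio_{\rm Haar}<4$ for all $B\ge1$.} Crude exponential bounds lose a constant factor, so they are not enough on their own. First I would try integrating by parts on the integral representation to get a recursion in $B$, namely
\[
(2B+1)\,w_0(B)=1+\tfrac{3B}{2}\,\bigl(w_0(B-1)-w_0(B)\bigr)\cdot\text{(constant)} ,
\]
or an equivalent relation obtained directly from the sum. From the recursion I would prove an explicit inequality such as $w_0(B)\le \tfrac{2}{B+c}$ by induction. That is enough for large $B$, and the finitely many small $B$ can be checked directly from the exact sum; for example $B=1$ gives $w_0=\tfrac56$ and $\ratio=\tfrac95\cdot\tfrac56=\tfrac32$.
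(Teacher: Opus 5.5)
Your Steps 1--3 follow the paper's own route. You factorize $\int D(g)\,\mathrm dg$ over blocks, evaluate one block as $(3+x^2)/4$ with $x=|a|^2-|b|^2$ uniform on $[-1,1]$, and expand at the endpoint to get $w_0(B)=2/B+O(B^{-2})$. In Step 2 you only announce the block value, so the computation still has to be done. The two doubled terms give $\tfrac12$, and the two all-single terms give $\tfrac14(2|a|^4+2|b|^4)$. The cross terms $g_{12}^2g_{21}^2=|b|^4$ enter with a plus sign, because moving one pair of creation operators past another is an even reordering. With the opposite sign you would get $f=|a|^2$, so this sign is the real content of the step.

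The genuine gap is the uniform bound $\ratio_{\rm Haar}(n)<4$. Your induction target $w_0(B)\le 2/(B+c)$ is false for every $c>-1$. The expansion $w_0(B)=2/B+2/B^2+O(B^{-3})$ puts $w_0(B)$ above $2/B$ for large $B$, and already $w_0(4)=319/630>1/2$. The recursion you wrote is also not the right one. Integrating by parts with $h=(3+x^2)/4$ and $x^2=4h-3$ gives
\[
 (2B+1)\,w_0(B)=1+\tfrac{3B}{2}\,w_0(B-1).
\]
More importantly, the premise that crude bounds lose too much is mistaken. The limit is $2$ and the target is $4$, so a bound may lose a factor two, and the paper's one-line argument does exactly that. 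Since $x^2\le x$ on $[0,1]$,
\[
 w_0(B)=\int_0^1h(x)^B\,\mathrm dx\le\int_0^1\Bigl(\frac{3+x}{4}\Bigr)^B\mathrm dx\le\frac{4}{B+1}.
\]
Hence $\ratio_{\rm Haar}(n)\le 4(2B+1)^2/\bigl((4B+1)(B+1)\bigr)<4$, because $(4B+1)(B+1)-(2B+1)^2=B$. The denominator $B+1$, rather than $B$, is what beats the prefactor $(2B+1)^2/(4B+1)=B+\tfrac34+O(B^{-1})$. An exponential majorant such as $e^{-B(1-x)/4}$ gives only $4/B$ and fails for large $B$, which is probably what you had in mind. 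If you prefer the recursion, induct directly on $w_0(B)<\gamma_B:=4(4B+1)/(2B+1)^2$. The inductive step $\bigl(1+\tfrac{3B}{2}\gamma_{B-1}\bigr)/(2B+1)\le\gamma_B$ reduces to $8B^3-32B^2+20B+3\ge0$, which holds for $B\ge4$. The base cases $B=1,2,3$ give $\ratio_{\rm Haar}=3/2$, $35/18$ and $581/260$.
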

\noindent\emph{Proof:} Appendix~\ref{app:proofs}, by evaluating the
copy rotation on one four-mode block and multiplying over the blocks.
Thus reaching a fixed multiple of this Haar benchmark is sufficient for
a constant collision ratio. Section~\ref{sec:passiverandom} determines
exactly how rapidly the random-matching circuit reaches that scale.

\section{Random-matching collision dynamics}\label{sec:passiverandom}

We now determine how quickly random matching layers on the magic
input~\eqref{eq:magic} reach the Haar collision scale. Write
$\ratio_{n,t}=\ratio(\mathsf C_{n,t},\Psi_{\rm in})$.
All asymptotic statements take $n\to\infty$ through multiples of four,
and $\log$ denotes the natural logarithm.

\begin{samepage}
\begin{theorem}[Magic-input collision profile and depth threshold]\label{thm:magicprofile}
Put $z=n(4/9)^t$. For every fixed $Z>0$, uniformly over integer depths
with $0<z\le Z$,
\begin{equation}\label{eq:magicprofile}
 \frac{\ratio_{n,t}}{\ratio_{\rm Haar}(n)}
 =e^{3z/2}\bigl(1+O_Z(n^{-1/2})\bigr).
\end{equation}
If $z\to\infty$, then $\ratio_{n,t}\to\infty$; if $z\to0$, the ratio in
\eqref{eq:magicprofile} tends to one. For every fixed $q>1$, the least depth
$t^\star(q)$ with $\ratio_{n,t}\le q\ratio_{\rm Haar}(n)$ satisfies
\begin{equation}\label{eq:magicsharpthreshold}
 t^\star(q)=\left\lceil\tau_n(q)+O_q(n^{-1/2})\right\rceil,
 \qquad \tau_n(q)=\frac{\log(3n/(2\log q))}{\log(9/4)}.
\end{equation}
\end{theorem}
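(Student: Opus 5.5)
Our plan is to reduce $\ratio_{n,t}$ to a spectral sum for a small reversible Markov chain, and then take asymptotics in $n$.

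\textbf{Reduction to a chain on $V_0$.} Write the collision as a two-replica quantity,
\[
 Z=\mathbb E\sum_y |\langle yy|\Gamma(V)^{\otimes2}|\Psi\Psi\rangle|^2 .
\]
Since every $|yy\rangle$ is a copy singlet, only $P_0|\Psi\Psi\rangle$ contributes. By Howe duality, $V_0$ at particle number $2k$ is irreducible under $U(n)$. We decompose $V_0$ under the mode-permutation group $S_n$, because averaging over uniform matchings commutes with $S_n$. The operator we need is the second-moment layer channel $\mathcal T=\mathbb E\,\Gamma(L)^{\otimes 2}\otimes\overline{\Gamma(L)}^{\otimes2}$, restricted to the relevant sector and paired with $\sum_y |yy\rangle\langle yy|$. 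On that sector it acts as a symmetric Markov operator on the $S_n$-orbits of "doubled configurations". Per mode, a doubled configuration is one of: empty, doubly occupied, or a singlet pair spread over two modes. The orbits are labelled by a single integer $m\in\{0,\dots,k\}$, the number of "split" singlet pairs, which gives $O(n)$ states.

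\textbf{Spectral form of the collision.} A single Haar $U(2)$ gate acts on a pair of modes by an explicit $2$-copy average. Composing with a uniform matching gives a birth--death-type kernel in $m$. We will diagonalize it with dual Hahn polynomials, in the same way as the Bernoulli--Laplace chain. The eigenvalues should be $\lambda_j$ with
\[
 \lambda_0=1,\qquad \lambda_1=1-O(1/n),\qquad \lambda_2=\tfrac49+O(1/n),
\]
and more generally $\lambda_j\approx(2/3)^{j}$-type factors, times lower-order corrections. This gives
\[
 \ratio_{n,t}=\ratio_{\rm Haar}(n)\Bigl(1+\sum_{j\ge1}c_j(\Psi_{\rm in})\lambda_j^{t}\Bigr),
\]
where the weights $c_j$ are squared overlaps of the input and output vectors with the $j$-th eigenvector. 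The input weights come from the block computation in Proposition~\ref{thm:magic}, using the generating function $((3+x^2)/4)^B$. The key cancellation we expect is that $c_1=O(1)$: the paired structure of $|\Psi_4\rangle$ kills the extensive component. By contrast, $c_2\approx 3n/2$, and in general $c_j\approx (3n/2)^{j/2}/(j/2)!$ for even $j$, with odd or slow modes bounded.

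\textbf{Asymptotics and threshold.} Substituting, the $j=2\ell$ terms give
\[
 \frac{(3z/2)^\ell}{\ell!}\,\bigl(1+O_Z(n^{-1/2})\bigr),\qquad z=n(4/9)^t .
\]
Summing over $\ell$ yields $e^{3z/2}$. The slow mode contributes $O(1)\lambda_1^t=O(1)$ relative to the others and must be shown to be absorbed into the error, or else cancelled exactly by the normalization. The hardest step is to prove uniform coefficient bounds $c_j\lambda_j^t\le C^j z^{j/2}/(j/2)!$ over all $j\le k$, with errors summable to $O(n^{-1/2})$. For this we plan to use a Hahn-to-Hermite limit: rescale $m$ around its Haar mean by $\sqrt n$, control the ratios of Hahn norms by Stirling bounds, and use dominated convergence for the tail.

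The remaining claims then follow. If $z\to\infty$, monotonicity of the positive leading terms (all $c_j\ge0$, $\lambda_j>0$) gives divergence. If $z\to0$, the ratio tends to one. Finally, solving $e^{3z/2}=q$ gives $z=\tfrac23\log q$, that is $t=\log(3n/(2\log q))/\log(9/4)$. Inverting the profile with its $O(n^{-1/2})$ error and taking the least integer depth gives \eqref{eq:magicsharpthreshold}.
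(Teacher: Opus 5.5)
Your architecture (an exact $(k+1)$-state overlap chain, Hahn diagonalization, input weights from the block generating function, and a Hermite limit summed uniformly) matches the paper's. However, two steps fail as you state them.

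\textbf{The slowest eigenvalue and the tail.} The slowest nontrivial eigenvalue is not $1-O(1/n)$. A full matching layer gives $\lambda_1=1-\frac{2k+1}{3(2k-1)}=\frac23+O(n^{-1})$, consistent with your own $(2/3)^j$ heuristic. With your value, the $j=1$ term would contribute about $c_1\approx3$ at the transition, and the limit would not be $e^{3z/2}$. With the correct value, the term equals $3(2/3)^t(1+o(1))=3\sqrt{z/n}\,(1+o(1))$, which is precisely an $O_Z(n^{-1/2})$ error. All odd modes enter this way, since $c_{2m+1}/n^{m+1/2}=O(n^{-1/2})$. Relatedly, $\lambda_j\approx(2/3)^j$ holds only for $j$ small compared with $\sqrt n$; the paper proves $\lambda_j=(2/3)^je^{O(j^2/n)}$ for $j\le k^{1/3}$. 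So dominated convergence alone cannot handle the tail. You need an explicit majorant such as $|c_j|\le(C\sqrt n)^j/\sqrt{j!}$ for $j\le B$, which the paper obtains from Hahn-root localization via the Jacobi matrix together with moment bounds on the sector law. For $j\gtrsim n/(\log n)^2$, you also need the global envelope $0\le\lambda_j\le(5/6)^j$ together with $|c_j|\le m_j$.

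\textbf{The signs of the coefficients.} More seriously, the $c_j$ are not squared overlaps. In the reversible chain, $c_j=m_j\phi_j(k)B_j$ with $B_j=\sum_r b_r\phi_j(r)$. The collision observable sits at the boundary sector $r=k$, while the magic input's sector law $b$ is spread around $r\approx n/4$. So nonnegativity of the $c_j$ is not automatic, and you give no argument for it. Both of your later claims rest on it: divergence as $z\to\infty$, and, more importantly, the identification of the \emph{least} crossing depth. The profile only controls $0<z\le Z$. For depths with $z\gg1$, down to $t=0$ where $z=n$, nothing you prove prevents $\ratio_{n,t}$ from dipping below $q\ratio_{\rm Haar}(n)$.

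The paper closes this gap with a sign-free lower bound valid at every depth. Cauchy--Schwarz against quadratic occupation functions gives $N\sum_yp_y^2\ge1+S_0/\eta_n$, where $S_0$ is the pair-correlation deviation. An exact three-state chain then gives $\mathbb ES_0(t)=e_n+a_n\lambda_1^t+b_n\lambda_2^t$ with nonnegative coefficients. Hence $\ratio_{n,t}\ge1+(3/2+o(1))\,n(4/9)^t$ at every depth. Without this bound, or a genuine proof that every $c_j\ge0$, the threshold statement does not follow from inverting the profile.
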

\end{samepage}
\noindent\emph{Proof:} Section~\ref{app:core-profile-proof}.
The error constants may depend on $Z$ or $q$, but not on $n$ or the depth
in the stated range. Thus the transition occurs within a bounded window
around $\log n/\log(9/4)$; the ceiling retains the effect of integer depth.
The proof has two parts: an exact collision chain gives the profile in
this window, and a separate two-particle bound excludes earlier crossings.
We explain both mechanisms below; Appendices~\ref{app:matchingproof} and~\ref{app:magic-analysis}
give their derivations.

\subsection{The collision chain and its spectrum}\label{sec:chain-overview}

\paragraph{Reducing the dynamics to $k+1$ weights.}
For any normalized half-filled input $\Psi$, use the singlet component
$|\varphi\rangle=P_0|\Psi\Psi\rangle$ and weight $w_0$ from
Section~\ref{sec:exact}. If $S,T$ are the occupied sets in the two copies,
the overlap $r=|S\cap T|$ counts doubly occupied modes. Let $Q_r$ project
onto the corresponding sector of the singlet space, and define
\[
 b_r=\frac{\langle\varphi|Q_r|\varphi\rangle}{w_0},\qquad
 b_r\ge0,\qquad \sum_{r=0}^k b_r=1.
\]
Averaging mode permutations and phases makes the projected density
operator uniform within each sector. The collision observable and the
layer law have these symmetries, so this averaging preserves expected
collision and requires no extra physical gates.

\CollisionProofFigure

One averaged matching layer transfers the sector weights by a reversible
Markov matrix $A$: $b(t)=A^tb$. Its stationary distribution is
\[
 \pi_r=\frac{\binom kr\binom{k+1}r}{\binom{2k+1}k}.
\]
Collision selects $r=k$, where the copies have identical occupations,
but gates transfer weight between this sector and the others. Tracking
all $k+1$ sectors therefore closes the dynamics. The exact reduction is
\begin{equation}\label{eq:main-chain-reduction}
 \boxed{\quad
 \frac{\ratio(\mathsf C_{n,t},\Psi)}{\ratio_{\rm Haar}(n,\Psi)}
 =\frac{(A^tb)_k}{\pi_k}
 =1+\sum_{j=1}^k c_j(\Psi)\lambda_j^t.
 \quad}
\end{equation}
It also holds for $\mathsf G_{n,t}$ by
Proposition~\ref{prop:permutationremoval}.
Here $\lambda_j$ is a decay factor of the chain, and $c_j(\Psi)$ records
how strongly the input excites that pattern and the collision detects it.
These spectral modes are patterns of sector weights, not physical modes.
Hahn polynomials diagonalize $A$; Lemma~\ref{lem:matchingchain} and
\eqref{eq:sharpexpansion} give the exact eigenvalues and coefficients.

\paragraph{The input determines the relaxation scale.}
The eigenvalues depend only on the gate ensemble. Lemma~\ref{lem:quantmatching}
gives $\lambda_j=(2/3)^j\exp[O(j^2/n)]$ for $j\le k^{1/3}$, with enough
precision for logarithmic depths. In particular,
$\lambda_1=2/3+O(n^{-1})$ and $\lambda_2=4/9+O(n^{-1})$.
For a Fock input, $c_1\sim n$. For the magic input, the sector law instead gives
\[
 \boxed{\quad c_1=3+O(n^{-1}),\qquad c_2=(3/2+o(1))n.\quad}
\]
The first identity is~\eqref{eq:B1magic}; the second follows from
Proposition~\ref{prop:magic-fixed-coeffs}. The slowest mode is still present,
but its bounded coefficient makes its contribution vanish at the
transition. The second contributes approximately $(3/2)n(4/9)^t$, which
sets the scale $z=n(4/9)^t$ and the leading depth $\log n/\log(9/4)$.

\ModeMechanismFigure

Higher modes determine the full profile. For each fixed integer $m\ge0$,
with $c_0=1$, Proposition~\ref{prop:magic-fixed-coeffs} gives
\[
 \frac{c_{2m}}{n^m}\longrightarrow\frac{(3/2)^m}{m!},\qquad
 \frac{c_{2m+1}}{n^{m+1/2}}\longrightarrow0.
\]
These limits follow by comparing input fluctuations with a Gaussian
distribution and the Hahn polynomials with their Hermite limits.
In the transition window, the even spectral terms approach
$(3z/2)^m/m!$ and the odd terms vanish, giving the formal sum
$\sum_{m\ge0}(3z/2)^m/m!=e^{3z/2}$.
Fixed-degree limits alone do not justify this sum: the number of modes
grows with $n$. The uniform coefficient and tail estimates in
Section~\ref{app:uniform-summation} make the passage rigorous and yield
the error in~\eqref{eq:magicprofile}. They also control the sum of absolute
values used by the finite-size certificate in Section~\ref{sec:corollaries}.

\subsection{Lower bound: two-particle correlations}\label{sec:lower-overview}

The transition profile alone does not exclude an earlier crossing.
For this we track pair occupations. The magic input has one-particle
density matrix $I_n/2$, preserved by every passive circuit, so each mode
already has mean occupation $1/2$. Pair correlations can nevertheless
retain structure. For a fixed circuit, measure their departure from the
uniform half-filled law by
\[
 S_0=\sum_{i<j}\bigl(\langle n_i n_j\rangle-q_n\bigr)^2,
 \qquad q_n=\frac{n-2}{4(n-1)}.
\]
Cauchy--Schwarz, applied to quadratic occupation functions under that
uniform law, gives
\begin{equation}\label{eq:main-two-particle-bound}
 N\sum_y p_y^2\ge1+\frac{S_0}{\eta_n},\qquad
 \eta_n=\frac{n(n-2)}{16(n-1)(n-3)}.
\end{equation}
Thus persistent pair correlations force excess collision. This bound
holds for every circuit before ensemble averaging.

To evolve $S_0$, subtract $q_n I$ from the two-particle density matrix.
Its diagonal entries are the pair deviations above, which passive gates
mix with off-diagonal entries. Grouping squared entries according to
whether their row and column mode pairs share two, one, or zero indices
closes the averaged evolution on three states. Their decay rates are
$1,\lambda_1,\lambda_2$, the same first rates as in the full collision
chain. For the magic input this gives
\[
 \mathbb E S_0(t)=e_n+a_n\lambda_1^t+b_n\lambda_2^t,
 \qquad e_n,a_n,b_n\ge0,\qquad b_n/\eta_n\sim3n/2.
\]
Lemma~\ref{lem:magic-early-lower} gives the exact coefficients and proves
the three-state reduction and~\eqref{eq:main-two-particle-bound}.
Keeping the last positive contribution yields
\[
 \ratio_{n,t}\ge1+\frac{b_n}{\eta_n}\lambda_2^t
 =1+(3/2+o(1))n(4/9)^t
 \qquad\text{for }t=O(\log n).
\]
The exact lower bound holds at every depth and decreases with $t$.
Since $\ratio_{\rm Haar}(n)<4$, it excludes reaching any fixed multiple
of Haar before a bounded window around $\log n/\log(9/4)$.
Within that window, the uniform profile locates the first crossing with
the precision in~\eqref{eq:magicsharpthreshold}. This argument does not
assume monotonicity of the actual collision.

\section{Finite gates and fixed-output anticoncentration}\label{sec:corollaries}

The collision theorem has a finite-gate version and an exact certificate at
every system size. Classical relabeling then gives a probability bound for
each fixed output.

\subsection{An exact finite gate alphabet}\label{sec:finitealphabet}

Let
\begin{equation}\label{eq:single-particle-clifford}
 \mathcal F=\langle e^{\pi i/4}I,H,S\rangle\le U(2),\qquad
 H=\frac1{\sqrt2}\begin{pmatrix}1&1\\1&-1\end{pmatrix},\quad
 S=\operatorname{diag}(1,i).
\end{equation}
Here $\langle\cdot\rangle$ denotes the group generated by the displayed
matrices, $I$ is the two-dimensional identity, and $i^2=-1$.
This is the $192$-element single-particle Clifford group with eighth-root
scalar phases. Its fermionic action $1\oplus u\oplus\det u$ need not be a
stabilizer Clifford gate on qubits.

\begin{lemma}[Exact finite replacement]\label{lem:cliffordtwirl}
For every operator $X$ on two replicas of the two-mode Fock space,
\begin{equation}\label{eq:clifford-fock-twirl}
 \frac1{192}\sum_{f\in\mathcal F}F_{ij}(f)^{\otimes2}X
 F_{ij}(f)^{\dagger\otimes2}
 =\int_{U(2)}F_{ij}(u)^{\otimes2}X
 F_{ij}(u)^{\dagger\otimes2}\,du.
\end{equation}
Thus independent uniform replacements preserve the complete two-copy channel
and every collision ratio, including all statements of
Theorem~\ref{thm:magicprofile}.
\end{lemma}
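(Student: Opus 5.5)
The plan is to recast both sides of \eqref{eq:clifford-fock-twirl} as group averages of a single finite-dimensional representation and compare their fixed-point spaces. Because $\mathcal F$ is a group, the left-hand side is the orthogonal projection (in Hilbert--Schmidt space) onto the $\mathcal F$-invariant vectors of
\[
 R(u)=F_{ij}(u)^{\otimes2}\otimes\overline{F_{ij}(u)}^{\otimes2},
\]
acting on $X$ by conjugation. The right-hand side is the projection onto the $U(2)$-invariant vectors of the same $R$. The two projections coincide iff every $U(2)$-irreducible constituent of $R$ that is nontrivial has no nonzero $\mathcal F$-fixed vector. The ``thus'' part then follows at once: each layer's two-copy channel is a product of such pair channels with independent gates, and the collision ratio is a linear functional of the composed two-copy channel.

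\textbf{Step 1: decompose $R$ into $U(2)$ irreps.} I would write $u=e^{i\phi}g$ with $g\in SU(2)$. From \eqref{eq:flogate}, $F(u)$ acts on the $N$-particle sector ($N=0,1,2$) as $e^{iN\phi}$ times, respectively, the trivial, the defining (spin $1/2$), and again the trivial representation of $g$, since $\det g=1$. Hence $F(u)$ is a direct sum of the $U(2)$ irreps $\mathbf 1$, $\CC^2$, and $\det$.

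Then $R$ is a sum of tensor products of at most two copies of $\CC^2$ and two copies of $\overline{\CC^2}$, twisted by powers of $\det$. Every irreducible constituent therefore has the form $\det^{m}\otimes\mathrm{Sym}^{2s}$, with spin $s\in\{0,\tfrac12,1,\tfrac32,2\}$ and $|m|\le4$. The constraint is that the total $U(1)$ charge, $2m+2s$ counted appropriately, lies in $\{-4,\dots,4\}$. I would record the exact list of pairs $(m,s)$ that occur.

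\textbf{Step 2: kill invariants of $\mathcal F$ in each nontrivial constituent.} There are two cases.

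For $s>0$: the image of $\mathcal F$ in $SO(3)=U(2)/U(1)$ is the octahedral rotation group. The spin-$s$ representation (with $s$ integer or half-integer) restricted to the lift of this group has no invariant vector for $s\le2$. The first octahedral invariants occur at $l=4$ and $l=6$, which is equivalent to the single-qubit Clifford group being a unitary $3$-design. The residual scalar phase can only help. To be safe, I would note that any $\mathcal F$-fixed vector is in particular fixed by $\mathcal F\cap SU(2)\cdot\{\text{scalars}\}$, whose $SO(3)$ image is already the full octahedral group. The character $\det^m$ can at most multiply by a phase, and averaging over the octahedral image already vanishes on spin $1,2$. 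For half-integer $s$, the scalar $-I\in\mathcal F$ acts with sign $-1$ times $(-1)^{2m}$ on that constituent; I would verify case by case from the charge constraint that this rules out invariants.

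For $s=0$, $m\ne0$: the constituent is the character $\det^m$. Since $\det(e^{\pi i/4}I)=i$, this character is nontrivial on $\mathcal F$ unless $4\mid m$. The only remaining case is $m=\pm4$ with $s=0$. I would exclude it by the charge count: total charge $\pm8$ in $\det$ units exceeds what $R$ can produce, because $\det$ appears at most twice on each side. The precise bound needs checking.

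\textbf{Main obstacle.} The delicate step is the bookkeeping in Step 1. I need to show that every pair $(m,s)$ actually occurring in $R$ avoids the dangerous cases, namely $s=0$ with $4\mid m\neq0$, and half-integer $s$ constituents on which the central element of $\mathcal F$ acts trivially. I would do this by computing the $U(1)\times SU(2)$ weights of $R$ explicitly from the nine sector pairs $(N_1,N_2)$ versus $(N_1',N_2')$. As an independent check, I would verify numerically that the $192$-element average and the Haar average have equal traces of $R$; this is the frame-potential comparison $\frac1{192}\sum_f|\Tr F(f)|^4=\int|\Tr F(u)|^4du$, and equality of the invariant-space dimensions suffices because one fixed space contains the other.
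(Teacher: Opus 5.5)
Your proof is correct, but it is the dual of the paper's argument rather than the same one. The paper stays on the $16$-dimensional two-copy space and decomposes $F^{\otimes2}\cong1\oplus2U\oplus3\det\oplus\mathrm{Sym}^2U\oplus2(\det\otimes U)\oplus\det^2$. It then checks that these six types remain irreducible and pairwise inequivalent on $\mathcal F=C_8\cdot2O$. Inequivalence comes from the $C_8$-charges $0,1,2,2,3,4$, with the two charge-two types separated by dimension; irreducibility comes from $2O$ acting irreducibly on $U$ and $\mathrm{Sym}^2U$. Hence both commutants have dimension $1+4+9+1+4+1=20$.

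You instead pass to operator space and use $\operatorname{End}_G(V)=\mathrm{Fix}_G(V\otimes V^*)$. This needs no multiplicities, only the coarse facts that spins are at most $2$ and total charges lie in $[-4,4]$. Three observations then kill every nontrivial constituent:
\begin{enumerate}
\item $-I\in\mathcal F$ acts by $(-1)^{2s}$;
\item the octahedral group has no invariants for $l=1,2$;
\item $\det(e^{\pi i/4}I)=i$.
\end{enumerate}
Your route makes the design-type content transparent and avoids the multiplicity bookkeeping. The paper's route gives the explicit commutant, whose dimension $20$ is exactly the frame potential $\int|\mathrm{Tr}\,F(u)|^4\,du$ you propose as a check.

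Two small points. First, the ``main obstacle'' you flag is vacuous. Your own charge bound forces $|m|\le2$ when $s=0$, so $4\mid m\neq0$ cannot occur. Also $\det(-I)^m=1$ for every integer $m$, so $-I$ acts by $-1$ on every half-integer constituent without case analysis. Second, state explicitly that $\mathcal F\cap SU(2)=2O$ surjects onto the octahedral group. This holds because $\det\mathcal F\subseteq\{\pm1,\pm i\}$ and $C_8\subseteq\mathcal F$, so every $f$ rescales into $SU(2)$ inside $\mathcal F$. That fact is what lets you discard the $\det^m$ twist when you restrict an $\mathcal F$-fixed vector to an $SU(2)$ subgroup.
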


\noindent\emph{Proof:} Section~\ref{app:finitegates}, by matching the local
commutants. Write $\mathsf G^{\rm Cl}_{n,t}$ for the resulting ensemble,
including the uniform mode permutation. This exact moment replacement
concerns collision and anticoncentration. The average-case hardness
reduction in Section~\ref{sec:hardness} uses continuous Haar gates.

\subsection{A finite-size certificate}

Using the exact coefficients and eigenvalues in
\eqref{eq:main-chain-reduction}, define
\begin{equation}\label{eq:magiccertificate}
 t_n^{\rm mag}=\min\left\{t\in\mathbb Z_{\ge0}:\quad
                          \sum_{j=1}^k|c_j|\lambda_j^t\le1\right\}.
\end{equation}

\begin{corollary}[Certified anticoncentration]\label{cor:certified-ac}
The certificate is computable exactly in polynomial time and satisfies
$t_n^{\rm mag}=\lceil\tau_n(2)+O(n^{-1/2})\rceil$.
For every $n=4B$ and $t\ge t_n^{\rm mag}$,
\begin{equation}\label{eq:magicdepth}
 \ratio_{n,t}\le2\ratio_{\rm Haar}(n),\qquad
 \Pr\!\left[p_y>\frac1{2N}\right]\ge\frac1{32}
\end{equation}
for each fixed half-filled output $y$ under either $\mathsf G_{n,t}$ or
$\mathsf G^{\rm Cl}_{n,t}$. The collision bound also holds for their cores.
\end{corollary}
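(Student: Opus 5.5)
The plan has three parts, following the three claims of Corollary~\ref{cor:certified-ac}: computability, the location of the certificate, and the two inequalities in~\eqref{eq:magicdepth}.

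\emph{Computability.} The chain reduction~\eqref{eq:main-chain-reduction} expresses the collision ratio through the eigenvalues $\lambda_j$ and coefficients $c_j$ of the $(k+1)$-state Hahn chain $A$. Lemma~\ref{lem:matchingchain} and~\eqref{eq:sharpexpansion} give both as explicit rational expressions in $n$. For the magic input they come from the sector law $b$ via Proposition~\ref{prop:magic-fixed-coeffs}, whose data are finite binomial sums of polynomial size. So all $c_j$ and $\lambda_j$ are exact rationals with $\poly(n)$ bit length, and each comparison $\sum_j|c_j|\lambda_j^t\le1$ is an exact rational test. Since $0<\lambda_j<1$, the left side decreases strictly in $t$. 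The minimal $t$ therefore lies in $O(\log n)$, by the location bound below. Evaluating at each candidate $t$ (or bisecting) costs polynomial time, because $\lambda_j^t$ has bit length $O(t\log n)\cdot\poly(n)$.

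\emph{Location of the certificate.} I would reuse the uniform coefficient and tail estimates of Section~\ref{app:uniform-summation}, which the paper says also control $\sum_j|c_j|\lambda_j^t$.
\begin{itemize}[leftmargin=*]
\item In the window $0<z\le Z$, the absolute sum has the same limiting profile as the signed sum. The odd terms, including $c_1\lambda_1^t=O((2/3)^t)$, vanish. The even terms are eventually positive and tend to $(3z/2)^m/m!$. Hence $\sum_j|c_j|\lambda_j^t=e^{3z/2}-1+O_Z(n^{-1/2})$.
\item Solving $e^{3z/2}-1=1$ gives $z=2\log2/3$, which is exactly $t=\tau_n(2)$. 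Because $z$ moves by the factor $4/9$ per layer, the $O(n^{-1/2})$ error translates into a shift of $O(n^{-1/2})$ in the real-valued crossing, hence the ceiling form.
\item For earlier depths ($z\to\infty$ or $z$ large), the absolute sum dominates the signed one. The signed sum exceeds $1$ there, by the two-particle lower bound of Lemma~\ref{lem:magic-early-lower} or by the profile itself, which rules out an earlier crossing.
\end{itemize}

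\emph{The two inequalities.} For $t\ge t_n^{\rm mag}$, monotonicity and~\eqref{eq:main-chain-reduction} give $\ratio_{n,t}/\ratio_{\rm Haar}(n)\le 1+\sum|c_j|\lambda_j^t\le2$. This transfers to $\mathsf G_{n,t}$ by Proposition~\ref{prop:permutationremoval} and to the Clifford ensembles by Lemma~\ref{lem:cliffordtwirl}.

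For the probability bound, fix $y$. The uniform relabeling makes $\EE p_y$ and $\EE p_y^2$ independent of $y$, so $\EE p_y=1/N$ and $\EE p_y^2=Z/N=\ratio/N^2\le 8/N^2$, using $\ratio_{\rm Haar}<4$ from Proposition~\ref{thm:magic}. Paley--Zygmund with $\theta=1/2$ then gives
\[
\Pr\!\left[p_y>\tfrac1{2N}\right]\ge\frac{(1/2)^2(1/N)^2}{8/N^2}=\frac1{32}.
\]

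\emph{Main obstacle.} I expect the hardest step to be the uniform control of the absolute sum over all $j\le k$, including large $j$ where the Hermite asymptotics fail. This is needed to pin $t_n^{\rm mag}$ to within $O(n^{-1/2})$ of $\tau_n(2)$, and I would rely on the tail bounds of Section~\ref{app:uniform-summation} for it. A secondary concern is checking that the odd or negative coefficients do not make $\sum_j|c_j|\lambda_j^t$ differ from the signed sum at order $1$ in the window.
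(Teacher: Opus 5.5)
Your proposal is correct and follows essentially the same route as the paper: exact rational evaluation of the absolute spectral sum; the absolute-value version of the uniform spectral estimate (Lemma~\ref{lem:magic-spectral-sums}) combined with the two-particle lower bound to locate $t_n^{\rm mag}$; monotonicity in $t$; and Paley--Zygmund with $\ratio_{n,t}<8$. The differences are minor. The paper bounds the scan range non-asymptotically by the explicit $T_n$ of~\eqref{eq:matchingtime}, using $|c_j|\le m_j$ and $\lambda_j\le(5/6)^j$, rather than through the asymptotic location estimate, which has unspecified constants and so does not by itself support your bisection option. Also, the exact sector data come from Proposition~\ref{prop:sectorlaw}, not Proposition~\ref{prop:magic-fixed-coeffs}.
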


\begin{proof}
Lemma~\ref{lem:magiccertificate} in Section~\ref{app:finite-certificate}
proves exact computability. The uniform
spectral estimate in Section~\ref{app:magicthreshold} applies to the absolute
sum as well as the signed sum and gives the stated rounding formula.
Since $0\le\lambda_j<1$, the certificate holds at every later depth.
For a fixed $y$, the probability and expectations are over circuit
instances. Output symmetry gives the nonnegative random variable $X=Np_y$
with $\mathbb EX=1$ and $\mathbb EX^2=\ratio_{n,t}<8$ by
Proposition~\ref{thm:magic}. The Paley--Zygmund inequality
\cite{Hangleiter2018Anticoncentration} gives
$\Pr[X>1/2]\ge1/(4\mathbb EX^2)\ge1/32$.
Proposition~\ref{prop:permutationremoval} transfers the collision bound to the
cores, and Lemma~\ref{lem:cliffordtwirl} supplies the finite-gate version.
\end{proof}

The certificate concerns a sufficient collision bound; it need not be the
first crossing of the signed spectral sum at a given finite size.

\section{Average-case hardness of output probabilities}\label{sec:hardness}

The collision theorem describes how broadly the output probabilities are
spread. We now show that estimating those probabilities is computationally
hard on average for the continuous Haar-gate ensemble at a larger
logarithmic depth. The reduction has two parts: place a hard computation
inside a typical matching schedule, then recover its probability from
queries close to the Haar gate law.

\subsection{The estimation problem and computational model}

An instance specifies a matching schedule $S=(M_1,\ldots,M_D)$, its
ordered two-mode gates $u=(u_{\ell,e})$, and an output $z\in\Y$.
Write $C(S,u)=L_D\cdots L_1$, let $g=nD/2$ be the total gate count, and set
\[
 p_z(S,u)=|\langle z|\Gamma(C(S,u))|\Psi_{\rm in}\rangle|^2.
\]
The gate convention extends to either order of the endpoints by
$F_{ji}(u)=F_{ij}(XuX)$. The choices $u=I$ and $u=X$ are called
\emph{switches}; they leave the modes in place or apply a fermionic swap.
An evaluator is successful if its reply is exact,
or differs from $p_z(S,u)$ by at most a specified additive tolerance.
Success is averaged over the circuit law and an independent uniform
$z\in\Y$. For a circuit ensemble $\nu$, write
$\textup{\textsc{Exact}}(\nu,\delta)$ for exact evaluation with success
probability at least $1-\delta$ under this law. A randomized evaluator is a fixed, stateless procedure with
fresh coins on each call; its coins are included in the success probability.

For the core $\Gp$, uniform-output estimation is equivalent to estimating
one fixed output of $\mathsf G_{n,t}$. Indeed,
Proposition~\ref{prop:permutationremoval} gives
\[
 p_{y_0}(CP)=p_{P^{-1}y_0}(P^{-1}CP).
\]
The conjugated circuit has the core law independently of $P$, and
$P^{-1}y_0$ is uniform. Conversely, given a core instance and uniform $z$,
choose $P$ uniformly subject to $Pz=y_0$ and conjugate the circuit by $P$.
These maps preserve the probability and the required instance laws.

The reductions below use the real-RAM model: Haar samples, circuit queries,
and arithmetic on gate entries are exact, as in continuous interpolation
arguments for random-circuit sampling
\cite{Bouland2019RCS,Movassagh2023hardness,Oszmaniec2022FermionSampling}.
We also allow exact comparisons and exact uniform choices from finite sets.
All stated schedule distributions admit efficient exact samplers in this
model. This convention concerns the reductions, not the physical gate
accuracy: a guarantee on a set of positive Haar measure need not hold on
any chosen discretization. The robust decoder itself uses finite rational
data and has polynomial bit complexity.

\subsection{Hardness for the random matching ensemble}

The embedding uses four windows to bring the interacting modes together,
followed by one window to route the occupied set to the requested output.
Define
\begin{align}
 \tau_n^{\rm meet}&=\left\lceil\tfrac58\log_2 n\right\rceil,
 &T_n^{\rm h}&=\lceil4\log_2 n\rceil,\notag\\
 t_{\rm hard}(n)&=4(\tau_n^{\rm meet}+1)+T_n^{\rm h}
 \le6.5\log_2 n+9,\label{eq:hardness-depth}\\
 \eta_n^{\rm emb}&=68n^{-1/4}+4n^{1/16}e^{-n^{1/4}/8}+n^{-3/10}.
 \label{eq:hardness-embedding-error}
\end{align}
Here $\eta_n^{\rm emb}$ bounds the probability that the embedding fails.
It is distinct from the covariance normalization $\eta_n$ in
\eqref{eq:main-two-particle-bound}.

\begin{theorem}[Average-case hardness in the shallow ensemble]
\label{thm:shallow-hardness}
Let $p(n)\ge4$ and $t(n)\ge t_{\rm hard}(n)$ be integer-valued,
polynomial-time computable, polynomially bounded functions, and put
$g=nt(n)/2$. There is an explicitly computable rational tolerance
$\epsilon_n>0$, given in~\eqref{hard:eq:robustprecision}, with
\[
 \log_2(1/\epsilon_n)=O\bigl(n+g\log(gp)+gp\bigr),
\]
such that the following holds. Suppose that, for all sufficiently large
$n$ divisible by four, an evaluator $\mathcal O$ estimates $p_z(S,u)$ to
additive error at most $\epsilon_n$ on a fraction at least
\[
 \frac34+\frac1{p(n)}+\eta_n^{\rm emb}
\]
of instances from $\Gp$ with uniform $z$. Then
$\mathsf P^{\#\mathsf P}\subseteq\mathsf{BPP}^{\mathcal O}$ for randomized
polynomial-time reductions in the real-RAM model. The conclusion also
holds for exact evaluation on the same fraction of instances.
For fixed $p$ and $t=O(\log n)$, the sufficient tolerance is
$\epsilon_n=2^{-O(n\log^2 n)}$.
At these depths, Corollary~\ref{cor:certified-ac} also gives
fixed-output anticoncentration for $\mathsf G_{n,t}$.
\end{theorem}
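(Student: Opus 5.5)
The plan follows the standard worst-to-average template (worst-case hard instance, embedding into typical instances, Cayley-path interpolation, robust decoding). Two things are specific to this ensemble: the hard instance must live inside a sampled matching schedule, and the decoder must tolerate both small errors and outright wrong replies without an $\mathsf{NP}$ oracle.

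\textbf{Worst case.} Show that there is a four-layer passive circuit $W$ on the paired input, using only native two-mode gates, for which computing a single output probability $p_{z_0}$, exactly or to multiplicative precision, is $\#\mathsf P$-hard. Encode logical qubits in dual rail, one fermion per pair of modes. Each block $|\Psi_4\rangle$ then carries a two-qubit entangled state. Postselected type-I fusion, realised by a beam-splitter and a parity-type readout on adjacent blocks, glues these into a graph state. Measurement-based computation with postselection on the final occupation string then yields $\mathsf{PostBQP}=\mathsf{PP}$. This gives $\#\mathsf P$-hardness of $p_{z_0}$ to within a factor, with all postselections folded into the one output string $z_0$. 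The fermionic reordering signs must be tracked so that the encoded amplitude is an explicit multiple of a $\mathsf{GapP}$ quantity. The bit size of that multiple is what fixes $\log(1/\epsilon_n)=O(n+g\log(gp)+gp)$.

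\textbf{Embedding.} Sample the schedule $S$ from the true matching law. Show that with probability at least $1-\eta_n^{\rm emb}$ it admits switch settings that reproduce $W$ followed by a route to a target output. For each of the four layers of $W$, a window of $\tau_n^{\rm meet}+1$ matching layers is used. In that window, swaps move each required pair of modes to a common matching edge, where the actual gate is placed. A meeting lemma handles this: the swap-reachable sets grow like $2^{\ell}$, so after about $\tfrac58\log_2 n$ layers two prescribed modes meet with enough disjoint partners. A greedy or Hall-type assignment then realises all $n/4$ required meetings simultaneously, and a union bound produces the $68n^{-1/4}$ and $e^{-n^{1/4}/8}$ terms. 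A final window of $T_n^{\rm h}=\lceil 4\log_2 n\rceil$ layers routes the occupied set to a uniformly random $z$, so the pair (schedule, output) has almost the correct joint law. Any extra depth up to $t(n)$ is filled with identity switches.

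\textbf{Interpolation and decoding.} Set every gate to $u_e(\theta)=u_e^{\star}\,h_e(\theta)$. Here $u_e^{\star}$ is the switch or hard gate, and $h_e(\theta)$ is the Cayley path, with $h_e(1)$ an independent Haar sample and $h_e(0)$ equal to the identity. The entries of $u_e(\theta)$ are rational functions of degree $O(1)$ in $\theta$ sharing a common denominator. Therefore $p_z(\theta)\prod_e|1+i\theta\cdot|^{2}$-type normalisations make $q(\theta)$ a polynomial of degree $O(g)$. For $\theta\in[1-\Delta,1]$ with $\Delta\sim 1/(gp)$, the law of $u(\theta)$ is within total variation $O(g\Delta)\le 1/(2p)$ of Haar. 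The oracle therefore succeeds at a random such point with probability at least $3/4+1/(2p)$. Query $\mathcal O$ at $O(g\,p^2)$ random points, so that a fraction above $3/4$ of the replies is correct to $\epsilon_n$ with high probability. Then solve a rational linear program: find coefficients minimising the maximum residual over all but a $1/4$ fraction of the points. Finally extrapolate to $\theta=0$, using a Paturi/Remez-type bound to control the blow-up by $2^{O(g\log(gp)+gp)}$. Choosing $\epsilon_n$ smaller than the worst-case gap divided by this blow-up recovers $p_{z_0}$ to the precision needed, which places $\mathsf P^{\#\mathsf P}$ in $\mathsf{BPP}^{\mathcal O}$. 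The exact case is the same argument with Berlekamp--Welch decoding. For $t=O(\log n)$ we have $g=O(n\log n)$, so the tolerance becomes $2^{-O(n\log^2n)}$. Anticoncentration at this depth follows from Corollary~\ref{cor:certified-ac}, since $t_{\rm hard}$ exceeds $t_n^{\rm mag}$.

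\textbf{Main obstacle.} The hardest step is the decoder: making the linear program provably robust to a $1/4$ fraction of arbitrary outliers while retaining polynomial bit complexity. The threshold $3/4$, rather than $1/2$, is presumably the price of the LP formulation. I would first try an $\ell_\infty$ fit on a subset selected by a median-of-windows test. Failing that, I would use the $\ell_1$ LP and argue that a polynomial of degree $d$ bounded on $3/4$ of a dense grid is bounded on the whole interval.
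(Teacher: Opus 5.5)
\textbf{Gap 1: decoding and amplification.} This step fails as planned, and the cause is not the LP. All queries of one run share $S$, $z$, the programmed gates and the Haar masks. The evaluator's failures along a path can therefore be completely correlated: an evaluator that errs whenever the schedule lies in a fixed set of measure nearly $1/4$ is wrong at every node of nearly a quarter of the runs. Adding query points cannot make a fraction above $3/4$ of the replies correct with high probability. All you have is the marginal bound $\mathbb E F\le L(\delta+O(1/p))$ on the number $F$ of bad replies in a run.

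The paper applies Markov, $\Pr[F>b]\le\mathbb E F/(b+1)$, and then takes a majority (a median in the robust case) over independent runs with fresh $S$, $z$ and masks. This needs per-run success above $1/2$. That forces a decoder of radius $b=\lfloor(L-d-1)/2\rfloor$, almost half the nodes, for which $\Pr[F>b]\approx2\delta<1/2$ exactly when $\delta<1/4$. This is the source of the $3/4$, and it appears already for exact Berlekamp--Welch. With your radius $L/4$, Markov only gives $\Pr[F>L/4]\lesssim4\delta\approx1$. Nothing prevents an evaluator from erring at a rate slightly above $1/4$ on most schedules and never on the rest.

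\textbf{Missing decoder.} Fitting ``all but a $1/4$ fraction'' is a subset-selection problem, not an LP. The missing idea is a linearised error locator normalised at the far endpoint. In the paper's orientation (Haar at $\theta=0$, hard circuit at $\theta=1$, equispaced nodes $x_i=i\Delta/L$), put $\lambda=(\Delta/(1-\Delta))^b$. Find $E,Q$ of degrees at most $b$ and $d+b$ with $E(1)=1$, $|E(x_i)|\le\lambda$ and $|Q(x_i)-y_iE(x_i)|\le\sigma\lambda$, and return $Q(1)$.
\begin{itemize}
\item The witness $E_0(x)=x^{b-f}\prod_{j\ \text{bad}}(x-x_j)/(1-x_j)$ is feasible because the nodes lie in $[0,\Delta]$, far from $1$.
\item For any feasible pair, $Q-ER$ is at most $2\sigma\lambda$ at no fewer than $d+b+1$ good nodes.
\item Lagrange extrapolation to $1$ then costs $(2L/\Delta)^{d+b}/(d+b)!$, partly cancelled by $\lambda$.
\end{itemize}
You also need a tail bound $D\le(1+(32gp)^2)^{2g}$ on the Cayley denominator, which multiplies the additive error; it fails with probability at most $1/(8p)$. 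Together these produce the $g\log(gp)+gp$ in $\log_2(1/\epsilon_n)$.

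\textbf{Gap 2: embedding.} As you describe it, the embedding cannot give $\eta_n^{\rm emb}$. With $n/4$ required meetings, that is $\Theta(n)$ tokens, disjoint swap-reachable frontiers of size about $n^{5/8}$ cannot coexist, since $\Theta(n)\cdot n^{5/8}>n$. The meeting-window bound $8m^22^\tau/(n-1)+m\,e^{-4^{\tau-1}/(2(n-1))}$ exceeds $1$ for $m=\Theta(n)$. The terms $68n^{-1/4}$ and $4n^{1/16}e^{-n^{1/4}/8}$ come from $m\le n^{1/16}$, which the paper arranges by padding. The hard instance uses only $|E|$ blocks with $n\ge(2|E|)^{16}$. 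Every other block is left untouched with output pattern $1100$, contributing a known factor $1/2$ absorbed into $\kappa\le n$. Only the hiding window then moves all $n/2$ occupied modes, and only as a set.
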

\noindent\emph{Proof:} Theorem~\ref{hard:thm:pure} proves the exact case;
Corollary~\ref{hard:cor:robusthard} gives the additive-error extension.

The theorem applies to any fixed success fraction strictly above $3/4$
for sufficiently large sizes. Its quantitative embedding is conservative:
$\eta_n^{\rm emb}<1/4$ only above approximately $2^{32.4}$ modes, and a
hard graph with $|E|$ edges is padded to $n\ge(2|E|)^{16}$. These bounds
establish an asymptotic reduction. They do not provide practical finite-size
hardness estimates. The depth $t_{\rm hard}$ also exceeds the sharp
collision threshold by a constant factor; hardness at the sharp threshold
is not established here.

\subsection{Why the reduction works}

\paragraph{A hard circuit in four layers.}
Each four-mode magic block is a Bell pair when one particle in two rails
encodes a qubit. A passive beam splitter followed by an occupation
postselection fuses two such logical vertices. Preparing graph edges,
fusing the copies of each vertex in two layers, and applying final
measurement rotations realizes a postselected universal computation in
four native layers. Theorem~\ref{hard:thm:worst} constructs a circuit
$C_Q$, an output $y_Q$, and a known integer $\kappa$ satisfying
\[
 p_{y_Q}(C_Q)=2^{-\kappa}|\langle0|Q|0\rangle|^2.
\]
The construction handles fermionic signs explicitly and uses a fixed finite
set of two-mode gates. This is a worst-case construction; its gate set is
not the finite moment-matching alphabet of Section~\ref{sec:finitealphabet}.

\paragraph{Embedding in typical schedules.}
The hard circuit need not have the sampled matching schedule. We fill
unused gates with swap or no-swap switches to move its interacting modes
together. The meeting lemma grows disjoint sets of positions reachable by
each mode, so the final matching of a window can realize the required
interactions. A separate set-routing lemma moves the occupied set to a
uniform target output. Its proof computes the collision of random switch
settings through the permutation representation on half-filled subsets.
A maximum-flow algorithm finds the switches whenever routing is possible.
Section~\ref{hard:sec:pure} combines the two steps to obtain
\eqref{eq:hardness-depth}--\eqref{eq:hardness-embedding-error}.

\paragraph{From Haar queries to the hard endpoint.}
Keep the sampled schedule and output fixed, and interpolate each gate
along a unitary Cayley path. The parameter value $0$ gives independent Haar
gates and the value $1$ gives the embedded hard circuit. Clearing the known
denominators turns its output probability into a polynomial of degree at
most $4g$. Queries near $0$ remain close in total variation to the promised
ensemble. Error-correcting interpolation recovers the value at $1$, even
when a fraction of replies is wrong. The proof uses only marginal error
bounds along a path, so it does not assume independent oracle errors.
Appendix~\ref{hard:sec:reduction} gives the exact reduction.

For approximate replies, a rational linear program replaces exact
interpolation. It suppresses the unknown incorrect nodes while fixing the
normalization at the hard endpoint. A tail bound on the Cayley denominators
and a Lagrange extrapolation bound yield the explicit tolerance
in~\eqref{hard:eq:robustprecision}. This decoder needs no additional
$\mathsf{NP}$ oracle. Section~\ref{hard:sec:robust} proves the bound and
shows how the resulting accuracy recovers an exact counting answer.

\subsection{A deterministic routing variant}

The embedding overhead can be separated from the random matching layers.
For $n=2^r$, a fixed prefix of five Bene\v{s} routing networks and four
interaction layers has depth $10\log_2 n-1$. With independent Haar gates
on every prefix edge, followed by $t$ random matching layers, this defines
the hybrid ensemble $\Gsh$ of Section~\ref{hard:sec:hybrid}.
Its schedules admit every hard instance and requested output, so the
embedding never fails. The same exact and additive-error hardness results
hold with success threshold $3/4+1/p(n)$ and total gate count
$g=n(10\log_2 n-1+t)/2$.

The random suffix gives anticoncentration once $t\ge T_n$, with $T_n$
from~\eqref{eq:matchingtime}. The prefix changes the input spectral weights,
so the sharp magic-input threshold does not transfer to this variant.
The uniform collision bound does transfer: the prefix is passive and
therefore preserves the Haar collision benchmark, while the spectral
envelope holds for every normalized half-filled input.

\paragraph{Scope.}
The continuous ensemble thus has both anticoncentration and average-case
hardness of exact or high-precision probability estimation. The finite
alphabet preserves the collision law, but the interpolation argument does
not establish its average-case hardness. The proved additive tolerance is
also finer than the accuracy needed for constant-total-variation sampling
hardness. Section~\ref{sec:conclusion} states these remaining questions.

\section{Numerical checks}\label{sec:numerics}

We first check two parts of the collision analysis: the approach of the exact
finite-size collision to its asymptotic profile, and agreement between the
reduced chain and independently evolved many-particle circuits. The latter
also tests fermionic signs and the finite gate replacement. These checks
support the analytical derivation; the asymptotic claims are proved in
the appendices. Checks of the hardness construction and reduction are
summarized below; the accompanying reproducibility files retain the
detailed check log.

\paragraph{Exact finite-size profile.}
For $n=32,64,128,256,512$, we evaluate the complete spectral sum in
\eqref{eq:main-chain-reduction} at integer depths using rational arithmetic.
Only the final plotted values are converted to floating point.
Figure~\ref{fig:magic-profile} compares the exact values with the limiting
profile as a function of centered depth
$s=t-\log n/\log(9/4)$. The certified depths are $6,7,8,9,9$, respectively.
The visible finite-size corrections show why the asymptotic curve alone
should not be used to certify a particular finite circuit size: the
absolute spectral sum in~\eqref{eq:magiccertificate} supplies that guarantee.

\begin{figure}[!ht]
\centering
\includegraphics[width=.82\textwidth]{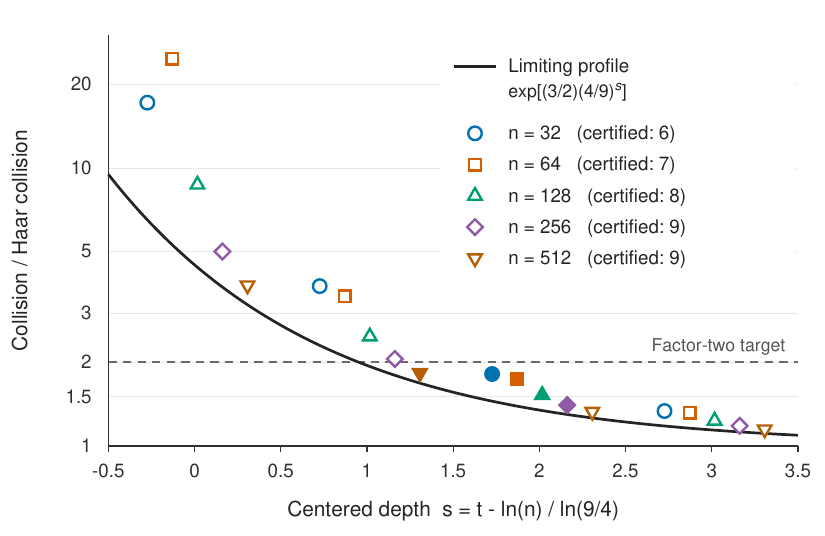}
\caption{\textbf{The magic-input collision profile.}
Markers show the complete exact spectral sum at integer depths for
$n=32,64,128,256,512$, against $s=t-\log n/\log(9/4)$.
The solid curve is the proved limit $\exp[\tfrac32(4/9)^s]$;
the dashed line marks twice the Haar collision. Filled markers indicate
certified depths. The vertical axis is logarithmic; finite-size points
are not interpolated. These exact calculations have no sampling error.}
\label{fig:magic-profile}
\end{figure}

\paragraph{Independent circuit simulations.}
For each $n=8,12,16$ and each gate law (Haar $U(2)$ or the uniform
$192$-element alphabet), we evolve $\CircuitSamples$ independently sampled
occupation-basis state vectors through eight matching layers. Each circuit
includes an independent uniform input permutation, and the evolution
includes fermionic reordering signs. At every depth we enumerate all $N$
output probabilities and compute $N\sum_y p_y^2$, so the sampling error
comes from the finite number of circuits, not from measurement shots.
The collision chain is used only for the comparison means.

Figure~\ref{fig:circuit-checks} reports the sample means and pointwise
$95\%$ normal confidence intervals, with standard errors estimated across
independent circuits. Across all positive-depth comparisons, the largest
absolute difference from the exact mean is $\CircuitMaxResidual$ estimated
standard errors. Depth-zero collision is deterministic and agrees exactly.
Depths share circuit prefixes, so these are correlated checks, not
independent tests or simultaneous confidence bands. Sizes and gate laws
use separate random streams.

\begin{figure}[!htbp]
\centering
\includegraphics[width=\textwidth]{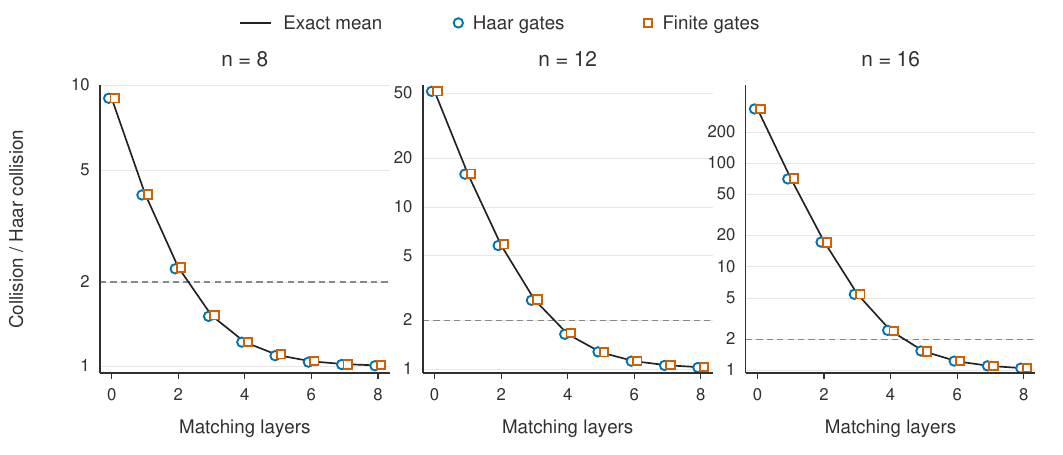}
\caption{\textbf{Direct circuit evolution checks the collision chain.}
For each size, circles (Haar gates) and squares (finite gates) show mean
collision divided by the exact Haar benchmark, using $\CircuitSamples$
circuits per gate law. Error bars are pointwise $95\%$ confidence intervals
and may be smaller than the markers. The solid line joins exact
integer-depth means as a guide; the dashed line is the factor-two target.
Agreement for both laws checks the collision prediction without using the
reduced chain to propagate the simulated state.}
\label{fig:circuit-checks}
\end{figure}

\paragraph{Fixed-output anticoncentration.}
For the fixed string $y$ with the first $k$ modes occupied, we also measure
the fraction of circuit instances satisfying $p_y>1/(2N)$ at the certified
depth. Table~\ref{tab:fixed-output-check} gives these fractions with $95\%$
Wilson binomial intervals. They lie well above the conservative bound
$1/32$ in Corollary~\ref{cor:certified-ac}. The event frequencies need not
be equal for the two gate laws: the finite replacement equates their
first and second probability moments, not their entire distributions.
In particular, the difference visible at $n=8$ does not contradict the
moment identity.

\begin{table}[!htbp]
\centering
\small
\begin{tabular}{cc cc}
\hline
Modes $n$ & Certified depth & Haar gates & Finite gates\\
\hline
\CircuitEventRows
\hline
\end{tabular}
\caption{Observed fractions with $p_y>1/(2N)$ at the certified depth,
with $95\%$ Wilson intervals in brackets. Each column and size uses
$\CircuitSamples$ independent circuit instances. The rigorous lower bound
is $1/32\simeq0.03125$ for both gate laws.}
\label{tab:fixed-output-check}
\end{table}

The exact calculations test the finite-size profile at substantially larger
$n$ than the direct simulations can reach. The simulations provide an
independent check of the circuit-to-chain reduction and the output event,
but their modest sizes do not establish asymptotic convergence rates or
sampling hardness. Larger exact calculations near the threshold and
noise-dependent circuit simulations would address distinct questions
beyond these checks.

\paragraph{Hardness construction and decoding.}
Exact occupation-basis checks verify the fusion map, including arbitrary
rail orders and all fermionic signs, in $16{,}848$ basis cases on two to six
logical slots. Independent circuit simulations check the graph-state
normalization and switch embeddings. Exhaustive averaging over small
matching schedules reproduces the set-routing collision formula, while
exact rational calculations test its spectral bounds through $n=800$.

For the interpolation step, exact rational checks exercise both successful
and inconsistent Berlekamp--Welch transcripts. The robust decoder is checked
through its feasibility witness and Lagrange bound, supplemented by complete
vertex enumeration for five small linear programs. These tests also check
the success-probability arithmetic and the spacing of the encoded counting
values. They support the finite-dimensional identities used by the
reduction; the proof of average-case hardness is the analytical argument in
Appendices~\ref{hard:sec:reduction} and~\ref{hard:sec:embeddings}.

\section{Discussion}\label{sec:conclusion}

\subsection{Conclusion}

Random matching layers acting on the paired magic input reach a constant
collision ratio in logarithmic native fermionic depth, with
$O(n\log n)$ two-mode gates. We determine the leading depth constant,
the full limiting transition profile, and the first depth reaching any
fixed factor above the passive-Haar benchmark, up to the stated rounding
error. A finite gate alphabet preserves the collision law exactly, and
classical output relabeling turns it into an anticoncentration guarantee
for every fixed half-filled output.

The mechanism is the interaction between the input and the circuit's
decay patterns. The canonical magic input keeps the slowest spectral
coefficient bounded, while the next one grows linearly with system size.
An exact chain on only $n/2+1$ occupation sectors resolves the resulting
transition; a separate three-state chain for two-particle correlations
excludes earlier crossings. This provides a way to analyze a physically
specified input and observable without requiring convergence of the
entire many-particle state ensemble to Haar randomness.

The same continuous random matching ensemble also has average-case
hardness of exact and high-precision output-probability estimation at
logarithmic depth. The proof combines a four-layer postselected
construction with typical-schedule embedding and unitary interpolation.
A rational linear-program decoder controls approximate and incorrect
replies without an additional $\mathsf{NP}$ oracle. A deterministic
routing prefix gives a variant with no embedding failures. These results
place the collision guarantee and the probability-estimation reduction in
a common shallow circuit model.

\subsection{Limitations and future work}

The depth statements assume native gates between arbitrary pairs of
fermionic modes and a supplied non-Gaussian input. Preparing the magic
blocks, routing interactions on restricted hardware, and encoding the
fermions into qubits are separate resource questions. The results also
assume ideal gates and measurements. The robustness theorem concerns errors
in estimated probabilities; it does not bound the effect of preparation
errors, particle loss, or physical gate noise.

The collision lower bound is optimal within the fresh random matching
ensemble. It does not prove optimality over all passive architectures, and
divergent collision alone does not rule out fixed-output anticoncentration.
The hardness depth exceeds the sharp collision threshold by a constant
factor, and the random-schedule embedding has substantial padding and
finite-size overheads. Reducing those overheads and bringing the hardness
depth closer to the collision threshold are natural next steps.

Hardness of sampling within constant total-variation distance remains
open. The additive tolerance proved here is finer than the uniform-output
scale $N^{-1}=2^{-n+O(\log n)}$ needed by the usual sampling-to-counting
arguments~\cite{AaronsonArkhipov2013,Hangleiter2023review}.
Establishing average-case hardness at that scale, or suitable relative-error
hardness, would address this gap. A further task is to transfer the
exact-real reductions to a specified finite-precision ensemble. The exact
moment identity for the $192$-element gate alphabet establishes its
anticoncentration, but does not supply this complexity transfer.

For other non-Gaussian inputs, the collision analysis suggests asking which
spectral weights grow with $n$ and whether their sum gives a comparable
transition profile. For other architectures, one can ask which structured
or geometrically local matching schedules support both rapid collision
relaxation and efficient embedding of hard instances. Exact finite-size
calculations and independent circuit simulations provide complementary
checks of these mechanisms.

\subsection{Acknowledgements}

I.K. was partially supported by PEPR EPiQ.

\emph{AI Disclosure:} Claude Opus 5.5 and ChatGPT 6 Astra were extensively used during the research process to explore and refine proof ideas, formalize the proofs, and to perform numerical verifications. The proofs were independently verified by the authors. The authors remain responsible for the claims, attribution, and the final content.

\clearpage
\phantomsection
\addcontentsline{toc}{section}{References}
{\raggedright
\bibliographystyle{unsrt}
\bibliography{paper/refs}

@book{AndrewsAskeyRoy1999,
  author    = {Andrews, George E. and Askey, Richard and Roy, Ranjan},
  title     = {Special Functions},
  series    = {Encyclopedia of Mathematics and its Applications},
  volume    = {71},
  publisher = {Cambridge University Press},
  year      = {1999},
  doi       = {10.1017/CBO9781107325937}
}

@article{Benes1964,
  author  = {Bene{\v s}, V{\'a}clav E.},
  title   = {Optimal Rearrangeable Multistage Connecting Networks},
  journal = {Bell System Technical Journal}, volume = {43}, number = {4},
  pages = {1641--1656}, year = {1964}, doi = {10.1002/j.1538-7305.1964.tb04103.x}
}

@book{BrouwerCohenNeumaier1989,
  author    = {Brouwer, Andries E. and Cohen, Arjeh M. and Neumaier, Arnold},
  title     = {Distance-Regular Graphs},
  series    = {Ergebnisse der Mathematik und ihrer Grenzgebiete (3)},
  volume    = {18},
  publisher = {Springer-Verlag},
  address   = {Berlin},
  year      = {1989},
  doi       = {10.1007/978-3-642-74341-2}
}

@misc{Braccia2025OptimalFLO,
  title  = {Optimal Haar random fermionic linear optics circuits},
  author = {Braccia, Paolo and Diaz, N. L. and Larocca, Martin and Cerezo, M. and Garc{\'i}a-Mart{\'i}n, Diego},
  year   = {2025},
  note   = {arXiv:2505.24212}
}

@article{Bremner2016averagecase,
  title   = {Average-case complexity versus approximate simulation of commuting quantum
             computations},
  author  = {Bremner, Michael J. and Montanaro, Ashley and Shepherd, Dan J.},
  journal = {Physical Review Letters}, volume = {117}, number = {8}, pages = {080501},
  year = {2016}, note = {arXiv:1504.07999}
}

@article{AaronsonArkhipov2013,
  author  = {Aaronson, Scott and Arkhipov, Alex},
  title   = {The Computational Complexity of Linear Optics},
  journal = {Theory of Computing},
  volume  = {9},
  number  = {4},
  pages   = {143--252},
  year    = {2013},
  doi     = {10.4086/toc.2013.v009a004},
  note    = {arXiv:1011.3245}
}

@misc{Kolarovszki2026Photonic,
  author = {Kolarovszki, Zolt{\'a}n and Kaposi, {\'A}goston and
            Zimbor{\'a}s, Zolt{\'a}n and Oszmaniec, Micha{\l}},
  title = {General framework for anticoncentration and linear cross-entropy
           benchmarking in photonic quantum advantage experiments},
  year = {2026},
  eprint = {2604.15258},
  archivePrefix = {arXiv},
  primaryClass = {quant-ph},
  doi = {10.48550/arXiv.2604.15258},
  note = {arXiv:2604.15258v1}
}

@misc{Mhiri2026Boson,
  author = {Mhiri, Hela and Thomas, Hugo and Monbroussou, L{\'e}o and
            Chabaud, Ulysse and Holmes, Zo{\"e} and Kashefi, Elham},
  title = {Boson sampling beyond the dilute regime: second moments and anti-concentration},
  year = {2026},
  eprint = {2604.14323},
  archivePrefix = {arXiv},
  primaryClass = {quant-ph},
  doi = {10.48550/arXiv.2604.14323},
  note = {arXiv:2604.14323v1}
}

@article{Dalzell2022LogDepth,
  title   = {Random quantum circuits anti-concentrate in log depth},
  author  = {Dalzell, Alexander M. and Hunter-Jones, Nicholas and Brand{\~a}o, Fernando G. S. L.},
  journal = {PRX Quantum},
  volume  = {3},
  pages   = {010333},
  year    = {2022},
  note    = {arXiv:2011.12277}
}

@book{GoodmanWallach2009,
  title     = {Symmetry, Representations, and Invariants},
  author    = {Goodman, Roe and Wallach, Nolan R.},
  publisher = {Springer},
  year      = {2009}
}

@article{Go2026ShallowBoson,
  title     = {On computational complexity and average-case hardness of shallow-depth boson sampling},
  author    = {Go, Byeongseon and Oh, Changhun and Jeong, Hyunseok},
  journal   = {Quantum},
  volume    = {10},
  pages     = {2026},
  year      = {2026},
  month     = mar,
  doi       = {10.22331/q-2026-03-13-2026},
  eprint    = {2405.01786},
  archivePrefix = {arXiv},
  primaryClass  = {quant-ph}
}

@article{Hangleiter2018Anticoncentration,
  title   = {Anticoncentration theorems for schemes showing a quantum speedup},
  author  = {Hangleiter, Dominik and Bermejo-Vega, Juan and Schwarz, Martin and Eisert, Jens},
  journal = {Quantum},
  volume  = {2},
  pages   = {65},
  year    = {2018}
}

@article{Hangleiter2023review,
  title   = {Computational advantage of quantum random sampling},
  author  = {Hangleiter, Dominik and Eisert, Jens},
  journal = {Reviews of Modern Physics}, volume = {95}, number = {3}, pages = {035001},
  year = {2023}, note = {arXiv:2206.04079}
}

@article{Hebenstreit2019magic,
  title   = {All pure fermionic non-Gaussian states are magic states for matchgate
             computations},
  author  = {Hebenstreit, Martin and Jozsa, Richard and Kraus, Barbara and
             Strelchuk, Sergii and Yoganathan, Mithuna},
  journal = {Physical Review Letters}, volume = {123}, number = {8}, pages = {080503},
  year = {2019}, note = {arXiv:1905.08584}
}

@article{Howe1989Duality,
  title   = {Remarks on classical invariant theory},
  author  = {Howe, Roger},
  journal = {Transactions of the American Mathematical Society},
  volume  = {313},
  pages   = {539--570},
  year    = {1989}
}

@article{Mackey1952,
  author  = {Mackey, George W.},
  title   = {Induced Representations of Locally Compact Groups {I}},
  journal = {Annals of Mathematics},
  volume  = {55},
  number  = {1},
  pages   = {101--139},
  year    = {1952},
  doi     = {10.2307/1969423}
}

@article{Kivlichan2018lineardepth,
  title   = {Quantum simulation of electronic structure with linear depth and connectivity},
  author  = {Kivlichan, Ian D. and McClean, Jarrod and Wiebe, Nathan and Gidney, Craig and
             Aspuru-Guzik, Al\'an and Chan, Garnet Kin-Lic and Babbush, Ryan},
  journal = {Physical Review Letters}, volume = {120}, number = {11}, pages = {110501},
  year = {2018}, note = {arXiv:1711.04789}
}

@misc{Krovi2022averagecase,
  title  = {Average-case hardness of estimating probabilities of random quantum circuits
            with a linear scaling in the error exponent},
  author = {Krovi, Hari}, year = {2022}, note = {arXiv:2206.05642v1}
}

@article{Movassagh2023hardness,
  title   = {The hardness of random quantum circuits},
  author  = {Movassagh, Ramis},
  journal = {Nature Physics}, volume = {19}, number = {11}, pages = {1719--1724},
  year = {2023}, doi = {10.1038/s41567-023-02131-2}
}

@misc{Oh2026MagicSimulation,
  title         = {Classical simulation of free-fermionic dynamics and quantum chemistry with magic input},
  author        = {Oh, Changhun and Oszmaniec, Micha{\l} and Reardon-Smith, Oliver and
                   Zimbor{\'a}s, Zolt{\'a}n},
  year          = {2026},
  eprint        = {2604.26813},
  archivePrefix = {arXiv},
  primaryClass  = {quant-ph},
  note          = {arXiv:2604.26813v2}
}

@article{Oszmaniec2022FermionSampling,
  title   = {Fermion Sampling: A Robust Quantum Computational Advantage Scheme Using Fermionic Linear Optics and Magic Input States},
  author  = {Oszmaniec, Micha{\l} and Dangniam, Ninnat and Morales, Mauro E. S. and Zimbor{\'a}s, Zolt{\'a}n},
  journal = {PRX Quantum},
  volume  = {3},
  pages   = {020328},
  year    = {2022},
  doi     = {10.1103/PRXQuantum.3.020328},
  eprint  = {2012.15825},
  archivePrefix = {arXiv},
  primaryClass  = {quant-ph},
  note    = {arXiv:2012.15825}
}

@article{Terhal2002Matchgates,
  title   = {Classical simulation of noninteracting-fermion quantum circuits},
  author  = {Terhal, Barbara M. and DiVincenzo, David P.},
  journal = {Physical Review A},
  volume  = {65},
  pages   = {032325},
  year    = {2002}
}

@article{Waksman1968,
  author  = {Waksman, Abraham}, title = {A Permutation Network},
  journal = {Journal of the ACM}, volume = {15}, number = {1}, pages = {159--163},
  year = {1968}, doi = {10.1145/321439.321449}
}

@misc{Grevink2025Glue,
  author = {Grevink, Lorenzo and Haferkamp, Jonas and Heinrich, Markus and
            Helsen, Jonas and Hinsche, Marcel and Schuster, Thomas and Zimbor\'{a}s, Zolt\'{a}n},
  title  = {Will it glue? {O}n short-depth designs beyond the unitary group},
  year   = {2025},
  eprint = {2506.23925},
  archivePrefix = {arXiv},
  primaryClass  = {quant-ph},
  doi = {10.48550/arXiv.2506.23925},
  note = {arXiv:2506.23925v3}
}

@misc{West2026StrongMatchgate,
  author = {West, Maxwell and Cerezo, M. and Larocca, Mart\'{i}n},
  title = {Strong matchgate designs in nearly optimal depth},
  year = {2026},
  eprint = {2609.26677},
  archivePrefix = {arXiv},
  primaryClass = {quant-ph},
  note = {\href{https://arxiv.org/abs/2609.26677}{arXiv:2609.26677}}
}

@misc{Nogo2025,
  author = {West, Maxwell and Garc\'{i}a-Mart\'{i}n, Diego and Diaz, N. L. and
            Cerezo, M. and Larocca, Martin},
  title  = {No-go theorems for sublinear-depth group designs},
  year   = {2025},
  eprint = {2506.16005},
  archivePrefix = {arXiv},
  primaryClass  = {quant-ph},
  note = {\href{https://arxiv.org/abs/2506.16005v2}{arXiv:2506.16005v2}, revised September 2026}
}

@misc{Braccia2026Commutant,
  title  = {The commutant of fermionic Gaussian unitaries},
  author = {Braccia, Paolo and Diaz, N. L. and Larocca, Martin and Cerezo, M.
            and Garc{\'i}a-Mart{\'i}n, Diego},
  year   = {2026},
  eprint = {2603.19210},
  archivePrefix = {arXiv},
  primaryClass  = {quant-ph},
  doi = {10.48550/arXiv.2603.19210},
  note = {arXiv:2603.19210}
}

@misc{Kerenidis2026Scalable,
  author        = {Kerenidis, Iordanis},
  title         = {Scalable Quantum Machine Learning: Trainability, Expressivity and
                   Efficiency},
  year          = {2026},
  month         = aug,
  eprint        = {2607.24014},
  archivePrefix = {arXiv},
  primaryClass  = {quant-ph},
  doi           = {10.48550/arXiv.2607.24014},
  note          = {arXiv:2607.24014v2}
}

@article{Landman2022QuantumMethods,
  author  = {Landman, Jonas and Mathur, Natansh and Li, Yun Yvonna and
             Strahm, Martin and Kazdaghli, Skander and Prakash, Anupam and
             Kerenidis, Iordanis},
  title   = {Quantum Methods for Neural Networks and Application to Medical
             Image Classification},
  journal = {Quantum},
  volume  = {6},
  pages   = {881},
  year    = {2022},
  doi     = {10.22331/q-2022-12-22-881},
  note    = {\href{https://arxiv.org/abs/2212.07389}{arXiv:2212.07389}}
}

@article{Coyle2020BornSupremacy,
  author  = {Coyle, Brian and Mills, Daniel and Danos, Vincent and Kashefi, Elham},
  title   = {The {Born} supremacy: quantum advantage and training of an
             {Ising Born} machine},
  journal = {npj Quantum Information},
  volume  = {6},
  pages   = {60},
  year    = {2020},
  doi     = {10.1038/s41534-020-00288-9},
  note    = {\href{https://arxiv.org/abs/1904.02214}{arXiv:1904.02214}}
}

@misc{Raj2026Generalization,
  author        = {Raj, Snehal and Mathur, Natansh and Perdomo-Ortiz, Alejandro},
  title         = {``{Train} classical, deploy quantum'' requires rethinking
                   generalization},
  year          = {2026},
  eprint        = {2608.31117},
  archivePrefix = {arXiv},
  primaryClass  = {quant-ph},
  doi           = {10.48550/arXiv.2608.31117},
  note          = {\href{https://arxiv.org/abs/2608.31117}{arXiv:2608.31117}}
}

@misc{DLMFHahn,
  title = {{NIST Digital Library of Mathematical Functions}, Sections 18.20--18.22},
  author = {{National Institute of Standards and Technology}},
  howpublished = {\href{https://dlmf.nist.gov/18.20}{dlmf.nist.gov/18.20}},
  note = {Hahn polynomials, duality, and recurrence relations; accessed September 2026}
}

@article{DiaconisShahshahani1987,
  author = {Diaconis, Persi and Shahshahani, Mehrdad},
  title = {Time to Reach Stationarity in the {Bernoulli--Laplace} Diffusion Model},
  journal = {SIAM Journal on Mathematical Analysis},
  volume = {18}, number = {1}, pages = {208--218}, year = {1987},
  doi = {10.1137/0518016}
}

@article{Ghosh2025RegularGraphs,
  author = {Ghosh, Soumik and Hangleiter, Dominik and Helsen, Jonas},
  title = {Random Regular Graph States Are Complex at Almost Any Depth},
  journal = {PRX Quantum}, volume = {6}, pages = {040344}, year = {2025},
  doi = {10.1103/52xz-3hpc},
  note = {\href{https://arxiv.org/abs/2412.07058v2}{arXiv:2412.07058v2}}
}

@misc{Shou2026GBS,
  author = {Shou, Laura and Ehrenberg, Adam and Wang, Yu-Xin and
            Iosue, Joseph T. and Gorshkov, Alexey V.},
  title = {Anticoncentration and entanglement in {Gaussian} boson sampling},
  year = {2026}, eprint = {2609.01241}, archivePrefix = {arXiv},
  howpublished = {\href{https://arxiv.org/abs/2609.01241}{arXiv:2609.01241}},
  note = {Preprint}
}

@misc{KoehlerLeung2026Permanent,
  author = {Koehler, Frederic and Leung, Pui Kuen},
  title = {Anticoncentration of the Permanent in {Ginibre} Ensembles},
  year = {2026}, eprint = {2607.20329}, archivePrefix = {arXiv},
  howpublished = {\href{https://arxiv.org/abs/2607.20329}{arXiv:2607.20329}},
  note = {Preprint}
}

@article{Heinrich2026Anticoncentration,
  author = {Heinrich, Markus and Haferkamp, Jonas and Roth, Ingo and Helsen, Jonas},
  title = {Anticoncentration Is (Almost) All You Need},
  journal = {Physical Review Letters}, volume = {137}, pages = {050601}, year = {2026},
  doi = {10.1103/z3mp-5gml},
  note = {\href{https://arxiv.org/abs/2510.23719v3}{arXiv:2510.23719v3}}
}

@article{Bouland2019RCS,
  author = {Bouland, Adam and Fefferman, Bill and Nirkhe, Chinmay and Vazirani, Umesh},
  title = {On the complexity and verification of quantum random circuit sampling},
  journal = {Nature Physics}, volume = {15}, pages = {159--163}, year = {2019},
  note = {arXiv:1803.04402}
}

@inproceedings{Kondo2021Robustness,
  author = {Kondo, Yasuhiro and Mori, Ryuhei and Movassagh, Ramis},
  title = {Quantum supremacy and hardness of estimating output probabilities of quantum circuits},
  booktitle = {2021 IEEE 62nd Annual Symposium on Foundations of Computer Science (FOCS)},
  year = {2021}, note = {arXiv:2102.01960}
}

@article{BrowneRudolph2005,
  author = {Browne, Daniel E. and Rudolph, Terry},
  title = {Resource-Efficient Linear Optical Quantum Computation},
  journal = {Physical Review Letters}, volume = {95}, pages = {010501}, year = {2005}
}

@article{Raussendorf2003MBQC,
  author = {Raussendorf, Robert and Browne, Daniel E. and Briegel, Hans J.},
  title = {Measurement-based quantum computation on cluster states},
  journal = {Physical Review A}, volume = {68}, pages = {022312}, year = {2003}
}

@article{Danos2007Calculus,
  author = {Danos, Vincent and Kashefi, Elham and Panangaden, Prakash},
  title = {The Measurement Calculus},
  journal = {Journal of the ACM}, volume = {54}, number = {2}, pages = {8}, year = {2007}
}

@inproceedings{Broadbent2009Blind,
  author = {Broadbent, Anne and Fitzsimons, Joseph and Kashefi, Elham},
  title = {Universal blind quantum computation},
  booktitle = {2009 50th Annual IEEE Symposium on Foundations of Computer Science},
  pages = {517--526}, year = {2009}, note = {arXiv:0807.4154}
}

@article{Dawson2005Polynomials,
  author = {Dawson, Christopher M. and Haselgrove, Henry L. and Hines, Andrew P. and
            Mortimer, Duncan and Nielsen, Michael A. and Osborne, Tobias J.},
  title = {Quantum computing and polynomial equations over the finite field {$\mathbb Z_2$}},
  journal = {Quantum Information \& Computation}, volume = {5}, pages = {102--112}, year = {2005}
}

@article{Aaronson2005Postselection,
  author = {Aaronson, Scott},
  title = {Quantum computing, postselection, and probabilistic polynomial-time},
  journal = {Proceedings of the Royal Society A}, volume = {461}, pages = {3473--3482}, year = {2005}
}

@misc{WelchBerlekamp1986,
  author = {Welch, Lloyd R. and Berlekamp, Elwyn R.},
  title = {Error correction for algebraic block codes},
  year = {1986}, note = {US Patent 4,633,470}
}

@misc{Quek2025Hamiltonian,
  author = {Quek, Yihui},
  title = {Quantum advantage from random geometrically-two-local {Hamiltonian} dynamics},
  year = {2025}, note = {arXiv:2510.06321v1}
}
}

\clearpage
\appendix
\addtocontents{toc}{\protect\setcounter{tocdepth}{1}}
\newgeometry{margin=0.75in}

\section{Haar benchmark}\label{app:proofs}

\noindent\textbf{Guide to the proofs.}
Appendices~\ref{app:proofs}--\ref{app:magic-analysis} prove the collision
results: the Haar scale, the exact chain and finite-gate replacement, and
the magic-input threshold. For an overview of the threshold proof, start
with Section~\ref{app:core-profile-proof}. Its uniform estimates and the
two-particle lower bound are proved in the surrounding subsections.

For hardness, Appendix~\ref{hard:sec:worst} constructs the depth-four
endpoint. Appendix~\ref{hard:sec:reduction} gives the exact reduction and
its additive-error extension for any efficiently sampleable schedule law
with a suitable embedding. Appendix~\ref{hard:sec:embeddings} supplies
the embeddings for random matching schedules and the deterministic
routing variant. Section~\ref{sec:numerics} summarizes the supporting checks.

Throughout, $n=2k$, $N=\binom nk$, and
$\mathcal F^{(k)}_n=\Lambda^k\mathbb C^n$ is the sector with exactly $k$
fermions. We write $\operatorname{Tr}$ for the operator trace and
$\|X\|_{\rm HS}^2=\operatorname{Tr}(X^\dagger X)$ for the squared
Hilbert--Schmidt norm. Constants in asymptotic bounds are independent of
$n$ unless their permitted dependencies are indicated by subscripts.

\begin{proof}[Proof of Theorem~\ref{thm:main}]
Use the graded identification of two copies with the Fock space on modes
$(i,c)$, $c\in\{1,2\}$. The mode action $\Gamma(V)^{\otimes2}$ commutes with
$E_{cc'}=\sum_i a^\dagger_{i,c}a_{i,c'}$, which generate the copy $SU(2)$
action $D(g)$. Ordering conventions change only fixed fermionic signs,
leaving probabilities and projector weights unchanged.

\emph{The singlet representation.} Skew Howe duality for $(GL_n,GL_2)$ on
$\Lambda(\CC^n\otimes\CC^2)$~\cite{Howe1989Duality},~\cite[Ch.~5]{GoodmanWallach2009}
gives the multiplicity-free decomposition into
$V^\lambda_{GL_n}\otimes W^{\lambda'}_{GL_2}$, where $V^\lambda$ and
$W^{\lambda'}$ are irreducible representations labelled by a partition
$\lambda$ and its transposed partition $\lambda'$, with $\lambda$ in an
$n\times2$ box. The notation $(2^k)$ means $k$ rows of length two.
At total particle number $2k$, the copy-singlet factor has
$\lambda'=(k,k)$ and is one dimensional. Thus its mode factor
$V_0=V^{(2^k)}_{GL_n}$ is a single irreducible $U(n)$ representation, contained in
the copy-occupation slice $(k,k)$. Write $P_0$ for its projector and $d_0$ for its
dimension.

\emph{Outcomes are singlets.} Write $\mathrm{supp}(y)=\{i:y_i=1\}$.
Up to sign
$|y\rangle\otimes|y\rangle=\prod_{i\in\mathrm{supp}(y)}a^\dagger_{i,1}a^\dagger_{i,2}|0\rangle$.
For the copy-spin generator $S_z=(E_{11}-E_{22})/2$, this vector has
$S_z=0$ and $E_{12}|yy\rangle=0$: a summand either
annihilates an empty mode or attempts to occupy an already occupied mode.
It is therefore a copy singlet.

\emph{Dimension.} Weyl's dimension formula for $\lambda=(2^k,0^k)$ telescopes to
\[
 d_0=\prod_{i=1}^k\prod_{j=k+1}^{2k}\frac{j-i+2}{j-i}
 =\prod_{m=1}^k\frac{(k+m)(k+m+1)}{m(m+1)}=(2k+1)\Cat_k^2.
\]
Here $\Cat_k=\binom{2k}k/(k+1)$ is the Catalan number; hence
$N^2/d_0=(k+1)^2/(2k+1)$.

\emph{Averaging.} Put $\varphi=P_0|\Psi\Psi\rangle$ and restrict
$\Gamma(V)^{\otimes2}$ to the representation $\rho_0(V)$ on $V_0$.
Because $P_0$ commutes with the circuit and $|yy\rangle\in V_0$,
$p_y(V)^2=|\langle yy|\rho_0(V)\varphi\rangle|^2$.
Schur averaging gives
$\EE_V[\rho_0(V)|\varphi\rangle\langle\varphi|\rho_0(V)^\dagger]
=(w_0/d_0)I_{V_0}$, where $w_0=\|\varphi\|^2$.
Hence $\EE_V[p_y^2]=w_0/d_0$ and
$\ratio=N^2w_0/d_0$, proving~\eqref{eq:main}.
For a Fock input, $|xx\rangle$ is itself a singlet, so $w_0=1$; this also
gives the Fock-input benchmark.
\end{proof}

\begin{proof}[Proof of Proposition~\ref{thm:magic}]
The generators $E_{cc'}$ are sums over modes and the blocks occupy disjoint mode sets, so
$D(g)$ factorizes over blocks and $w_0=\int_{SU(2)}\prod_b\langle v|D_b(g)|v\rangle\mathrm dg$
with $|v\rangle=|\Psi_4\rangle^{\otimes2}$ on one doubled block.
Put $u=|g_{11}|^2$. The two terms with doubled occupations contribute $1/2$ to
$\langle v|D(g)|v\rangle$. The two all-single terms contribute
$[u^2+(1-u)^2]/2$, including their cross terms. Therefore
$\langle v|D(g)|v\rangle=1/2+[u^2+(1-u)^2]/2=(3+x^2)/4$, with $x=2u-1$.
For Haar $g\in SU(2)$, $u$ is uniform on $[0,1]$, so $x$ is uniform on $[-1,1]$ and
$w_0(B)=\frac12\int_{-1}^1\bigl(\frac{3+x^2}4\bigr)^B\mathrm dx$; expanding the binomial and
integrating term by term gives the rational sum. Multiplication by $(k+1)^2/(2k+1)$ gives $\ratio_{\rm Haar}$.
Put $I_B=w_0(B)$. Since $x^2\le x$ on $[0,1]$,
\[
 I_B\le\int_0^1\left(\frac{3+x}{4}\right)^Bdx
 \le\frac4{B+1},\qquad
 \ratio_{\rm Haar}\le
 \frac{4(2B+1)^2}{(4B+1)(B+1)}<4.
\]
For the asymptotic benchmark and the first input coefficient, set
$x=1-z/B$ in the endpoint integral. The expansion
\[
 B\log[(3+x^2)/4]=-z/2+z^2/(8B)+O(B^{-2}z^3)
\]
gives $I_B=2/B+2/B^2+O(B^{-3})$. To justify termwise integration,
restrict first to $z\le B^{1/4}$ and use the Taylor remainder there;
the bound $((3+x^2)/4)^B\le e^{-B(1-x)/4}$ makes the remaining
integral exponentially small. Since $(2B+1)^2/(4B+1)$ differs from $B$
by a bounded amount,
$\ratio_{\rm Haar}=2+O(B^{-1})$.
\end{proof}

\paragraph{The collision observable in the singlet sector.}
Let $\Pi=\sum_y|yy\rangle\langle yy|$, $\varphi=P_0|\Psi\Psi\rangle$, and
$\rho_0=\Gamma(\,\cdot\,)^{\otimes2}|_{V_0}$. For
$\mathcal A_\nu=\mathbb E_V[\rho_0(V)^\dagger\Pi\rho_0(V)]$,
\begin{align}
 \sum_y p_y(V)^2&=\|\Pi\rho_0(V)\varphi\|^2,\label{eq:proj}\\
 Z(\nu,\Psi)&=\mathbb E_V\|\Pi\rho_0(V)\varphi\|^2,\qquad
 \ratio(\nu,\Psi)=N\langle\varphi|\mathcal A_\nu|\varphi\rangle.
 \label{eq:cartan}
\end{align}
Moreover, $\Tr\mathcal A_\nu=N$ and $\mathcal A_{\rm Haar}=(N/d_0)I$.
The vectors $|yy\rangle$ are orthonormal singlets and
$\langle yy|\rho_0(V)\varphi\rangle=\langle y|\Gamma(V)|\Psi\rangle^2$,
which gives both identities. Trace invariance gives $\Tr\mathcal A_\nu=\Tr\Pi=N$;
Schur averaging on $V_0$ gives the Haar value.

\section{Collision chain and finite-gate replacement}\label{app:matchingproof}

We derive the chain by identifying its occupation sectors, averaging one
matching through a generating function, and diagonalizing the resulting
polynomial action. We then connect it to the circuit collision and prove
that the finite gate alphabet preserves the same moments.

\subsection{The chain and its spectrum}\label{app:chain-spectrum}

\begin{lemma}[Exact random-matching chain]\label{lem:matchingchain}
Let $\mathsf M_n=U(1)^n\rtimes S_n$ be the group generated by independent
mode phases and mode permutations; $S_n$ is the permutation group and
$\rtimes$ denotes their semidirect product. There are irreducible sectors
$W_r$, $0\le r\le k$, in the restriction of $V_0$ to this group, with
\begin{equation}\label{eq:monomialrestriction}
 V_0\mathord\downarrow_{U(1)^n\rtimes S_n}=\bigoplus_{r=0}^kW_r,
 \qquad
 \delta_r=\dim W_r=\Cat_k\binom kr\binom{k+1}r.
\end{equation}
Let $Q_r$ be their projectors, let $\rho_0$ be the representation on $V_0$, and define the
one-layer twirl by
$\mathcal T(X)=\Ex_L[\rho_0(L)X\rho_0(L)^\dagger]$, with $L$ a random matching layer.
This twirl induces a reversible Markov matrix $A$ through
\begin{equation}\label{eq:matchingAdef}
 \mathcal T(Q_r/\delta_r)=\sum_{s=0}^kA_{sr}Q_s/\delta_s,
 \qquad
 \pi_r=\frac{\delta_r}{d_0}
 =\frac{\binom kr\binom{k+1}r}{\binom{2k+1}k}.
\end{equation}
For $q\ge0$, write
$x^{\underline q}:=x(x-1)\cdots(x-q+1)$ for the falling factorial and
$(x)_q:=x(x+1)\cdots(x+q-1)$ for the rising Pochhammer symbol, with both
equal to $1$ at $q=0$.  Put $m_j=\binom{2k+1}j-\binom{2k+1}{j-1}$, where
$\binom{2k+1}{-1}=0$.  The normalized Hahn functions
\begin{equation}\label{eq:matchingphi}
 \phi_j(r)=\sum_{q=0}^j(-1)^q\binom jq
 \frac{(2k+2-j)^{\underline q}r^{\underline q}}
 {k^{\underline q}(k+1)^{\underline q}}
\end{equation}
diagonalize $A$, with
\begin{equation}\label{eq:matchinglambda}
 \lambda_j=\sum_{q=0}^j(-1)^q\binom jq\left(\frac23\right)^q
 \frac{(2k+2-j)^{\underline q}k^{\underline q}}
 {(2k)^{\underline{2q}}}.
\end{equation}
Here restriction is denoted by $\downarrow$. With $\delta_{ij}$ equal to
one for $i=j$ and zero otherwise, the normalization and kernel identities are
\begin{gather}
 \sum_r\pi_r\phi_i(r)\phi_j(r)=\frac{\delta_{ij}}{m_j},
 \qquad |\phi_j(r)|\le1,                                      \label{eq:hahnnorm}\\
 \frac{(A^t)_{sr}}{\pi_s}
 =1+\sum_{j=1}^km_j\lambda_j^t\phi_j(s)\phi_j(r),             \label{eq:matchingkernel}\\
 0\le\lambda_j\le(5/6)^j.                                    \label{eq:matchingdecay}
\end{gather}
\end{lemma}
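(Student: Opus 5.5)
The plan is to prove the lemma in three stages. First, decompose $V_0$ under $\mathsf M_n$ by weights. Second, use symmetry of the layer law to show that the twirl preserves the commutant $\operatorname{span}\{Q_r\}$. Third, diagonalize the resulting $(k+1)\times(k+1)$ matrix by a triangularity argument on falling factorials.

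\emph{Sectors.} A weight of $V_0=V^{(2^k)}$ under $U(1)^n$ is a vector in $\{0,1,2\}^n$ with coordinate sum $2k$ and, by the singlet condition, $r$ twos, $r$ zeros and $2(k-r)$ ones. Here $r$ is exactly the overlap $|S\cap T|$. The weight multiplicity is a Kostka number for the two-column shape. Removing the doubled rows reduces it to the Kostka number of the $2\times(k-r)$ shape with content $1^{2(k-r)}$, which is $\Cat_{k-r}$.

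The stabilizer $S_r\times S_r\times S_{2k-2r}$ of such a weight acts on this multiplicity space. On the singles it acts through the Schur--Weyl irreducible $S^{(k-r,k-r)}$, and trivially on the doubled and empty modes up to a fixed sign character. Mackey's irreducibility criterion for induction from $U(1)^n\rtimes\mathrm{Stab}$ then makes each
\[
W_r=\operatorname{Ind}\bigl(\chi_r\otimes S^{(k-r,k-r)}\bigr)
\]
irreducible. The $W_r$ are pairwise inequivalent because their weight orbits differ. Their dimension is $\frac{n!}{r!\,r!\,(2k-2r)!}\Cat_{k-r}$, and I will check by a short factorial manipulation that this equals $\Cat_k\binom kr\binom{k+1}r$. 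Dividing by $d_0=(2k+1)\Cat_k^2$, the value from the proof of Theorem~\ref{thm:main}, gives $\pi_r$.

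\emph{The chain.} The layer law is invariant under conjugation by $\mathsf M_n$: uniform matchings are permutation invariant, and Haar $U(2)$ absorbs phases. Hence $\mathcal T$ commutes with the $\mathsf M_n$-twirl. Because the restriction is multiplicity free, the $\mathsf M_n$-commutant is $\operatorname{span}\{Q_r\}$, so $\mathcal T$ maps this span to itself and defines $A$. Unitality and trace preservation give that $A$ is stochastic with $\pi A=\pi$. The law of $L^{-1}$ equals that of $L$ (Haar inverse, same matching), so $\mathcal T$ is self-adjoint in Hilbert--Schmidt. Together with $\|Q_r/\delta_r\|_{\rm HS}^2=1/\delta_r$, this yields detailed balance $\pi_rA_{sr}=\pi_sA_{rs}$.

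To compute $A$ explicitly, I will condition on the sector of a uniformly random weight vector in $W_r$. A matched pair then has local type $(2,0)$, $(2,1)$, $(1,1)$, and so on, with hypergeometric probabilities. The two-copy Haar twirl of one $U(2)$ gate on the pair changes the local doubled count with explicit probabilities; for example, for two singles, creating a double occurs with probability $1/3$ from the $U(2)$ commutant. I will encode this in a generating function over the $k$ disjoint pairs.

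\emph{Spectrum and bounds.} The key claim is that the adjoint action $f\mapsto\sum_s A_{sr}f(s)$ sends $r^{\underline q}$ to $\bigl(\tfrac23\bigr)^q\tfrac{(2k+2-q)\cdots}{\cdots}\,r^{\underline q}$ plus lower falling factorials. I will obtain this by computing the expected falling factorial of the new overlap given $r$, using that a matching touches each $q$-subset of doubled modes with simple factorial probabilities. Triangularity gives eigenvalues equal to the leading coefficients and eigenfunctions of degree $j$. These are $\pi$-orthogonal by reversibility, and the weight $\binom kr\binom{k+1}r$ identifies them with the Hahn polynomials \eqref{eq:matchingphi} normalized by $\phi_j(0)=1$. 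Summing the triangular coefficients gives \eqref{eq:matchinglambda}. The norm $1/m_j$ follows from the standard Hahn norm formula, and $m_j$ is also the dimension of the matching $U(n)$-isotypic piece, as a sanity check. The bound $|\phi_j|\le1$ follows from the maximum at the endpoint $r=0$ for this parameter range of Hahn polynomials, as in the Diaconis--Shahshahani setting. The kernel \eqref{eq:matchingkernel} is then the spectral decomposition of a reversible matrix.

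The main obstacle is \eqref{eq:matchingdecay}. The alternating sum \eqref{eq:matchinglambda} gives neither positivity nor geometric decay directly. I would first try to rewrite $\lambda_j$ as an expectation of a product over the $j$ tracked particles, namely $\EE\bigl[\prod(\text{per-pair contraction})\bigr]$ with each factor in $[0,5/6]$. A second route is to bound the ratio $\lambda_{j}/\lambda_{j-1}$ through a contiguous hypergeometric relation. If neither works, I would prove $\lambda_j\ge0$ from a positive-semidefinite representation of $\mathcal T$ restricted to each $U(n)$-isotypic piece.
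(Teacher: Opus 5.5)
Your overall architecture matches the paper's. You use Mackey induction for the sectors, monomial covariance and Hilbert--Schmidt self-adjointness for a reversible chain with $\pi_r=\delta_r/d_0$, degree preservation plus orthogonality in $L^2(\pi)$ to identify the Johnson/Hahn eigenfunctions, and the spectral expansion for \eqref{eq:matchingkernel}. (The Koszul sign actually sits on the singly occupied modes, so the fibre module is $S^{(2^p)}\cong S^{(p,p)}\otimes\operatorname{sgn}$; this does not affect the argument.)

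The genuine gap is the step meant to produce the explicit transition law, on which \eqref{eq:matchinglambda} rests. An edge carrying two singles does not create a double with probability $1/3$ independently of the other edges. Its copy-singlet component, together with the two orientations of $DE$, spans a copy of $\operatorname{Sym}^2\mathbb C^2$ and goes to a double with probability $2/3$. Its copy-triplet component can never leave $SS$. In $Q_r/\delta_r$ the $2p=2(k-r)$ copy doublets are constrained to total spin zero. So a fixed pair is a local singlet with probability $\Cat_{p-1}/\Cat_p=(p+1)/(2(2p-1))$, and different pairs are correlated. Your $1/3$ is the value for a product configuration with one particle of each copy on the edge, which is not the state the chain acts on. Concretely, from $r=0$ the correct mean is $\EE[R_1\mid R_0=0]=k(k+1)/(3(2k-1))$. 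This is exactly what $\lambda_1=1-\frac{2k+1}{3(2k-1)}$ and $\phi_1(r)=1-\frac{(2k+1)r}{k(k+1)}$ require, whereas independent $1/3$'s give $k/3$. A generating function built as a product of per-edge probabilities therefore computes the wrong chain. The paper handles the coupling with character-valued edge weights: $z\chi$ for $DS$, $\chi$ for $ES$, $(1+2z)/3$ for $DE$, and $\chi_1+(1+2z)/3$ for $SS$. It integrates their product over $SU(2)$ at the end, which evaluates the trace against the global singlet projector and yields \eqref{eq:matchingpgf}.

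The same Catalan ratios are why your triangularity claim does not follow from simple matching counts. The conditional falling-factorial moments come out as combinations of $r^{\underline q}r^{\underline{q+h}}(k-r)^{\underline{s-q}}(k-r+1)^{\underline{s-q}}$, of naive degree $2s+h$. The collapse to degree $s+h$ is the terminating Hahn ${}_3F_2$ identity \eqref{eq:matchinghahnidentity}, which you would have to prove. The resulting diagonal entry is the full alternating sum \eqref{eq:matchinglambda}, not a single product of the form you display; already $\lambda_2=\frac{16k^2-38k+27}{9(2k-1)(2k-3)}$ has an irreducible numerator.

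Finally, \eqref{eq:matchingdecay} is left open. Your third idea does give $\lambda_j\ge0$. For a fixed matching, the layer twirl is the Hilbert--Schmidt orthogonal projection onto the commutant of the edge group $U(2)^{k}$ acting through $\rho_0$. So $\mathcal T$ is an average of such projections, and its compression to $\operatorname{span}\{Q_r\}$, which is similar to $A$, is positive semidefinite. Your second route cannot work, because the eigenvalues are not monotone: at $k=2$ one has $\lambda_1=4/9<\lambda_2=5/9$. The paper instead uses Pfaff's transformation to write $\lambda_d=(5/6)^d\,{}_2F_1(-d,a+2;1-a-d;-1/5)$ with $a=k-d+\tfrac12$ held fixed. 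It then derives a three-term recurrence in the degree whose nonnegative right-hand coefficients sum to at most the left-hand one, giving $0\le g_q\le1$ by induction. The case $a=\tfrac12$ is treated with an explicit supersolution.
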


\begin{proof}[Proof of Lemma~\ref{lem:matchingchain}]

\emph{1. Why the state reduces to occupation sectors.}
The diagonal subgroup $U(1)^n$ is the torus. A torus weight in $V_0$ is a
word in $\{0,1,2\}^n$ of total weight $2k$.  If it contains $r$ zeros, it also contains
$r$ twos and $2(k-r)$ ones.  On a fixed weight the singly occupied modes are copy
doublets, and the global Howe-singlet space is the zero-spin space of these
$2(k-r)$ doublets.  Put $p=k-r$.  Under the ordinary tensor-factor action their
zero-spin space is the Specht module $S^{(p,p)}$, the irreducible symmetric-group
representation labelled by the partition $(p,p)$. Physical mode permutations act in the
fermionic exterior representation and carry the fermionic reordering (Koszul)
sign, so, writing $\operatorname{sgn}$ for the sign representation, in standard Specht
conventions, the stabilizer representation is instead the sign twist
\[
 S^{(p,p)}\otimes\operatorname{sgn}\cong S^{(2^p)},
 \qquad \dim S^{(2^p)}=\Cat_p.
\]
More explicitly, if $\chi_r$ is the torus character of the chosen weight
and $\operatorname{Ind}$ denotes induction of a representation from the
displayed subgroup, then
\[
 W_r\cong\operatorname{Ind}_{U(1)^n\rtimes(S_r\times S_r\times S_{2p})}^{\mathsf M_n}
 \bigl(\chi_r\otimes\mathbf 1\otimes\mathbf 1\otimes S^{(2^p)}\bigr).
\]
The Clebsch--Gordan calculation below uses the underlying invariant vector space
$\operatorname{Inv}(V_{1/2}^{\otimes2p})$, where $V_{1/2}$ is the
spin-half representation and $\operatorname{Inv}$ denotes its invariant
subspace; the sign twist changes its $S_{2p}$ label but
not its dimension, local spin projectors, or traces.
The little-group (Mackey) theorem~\cite{Mackey1952} therefore gives
\eqref{eq:monomialrestriction}; distinct $r$ have distinct torus orbits, so the
restriction is multiplicity free.  Cancelling factorials in the induced dimension gives
the second expression for $\delta_r$, and summing it gives
$d_0=(2k+1)\Cat_k^2$.

\emph{The one-layer chain.}
The monomial covariance of $\mathcal T$ preserves the span of the central projectors
$Q_r$.  Positivity and trace preservation make the coefficients in
\eqref{eq:matchingAdef} nonnegative with column sums one.  Since every Haar twirl is
self-adjoint in the Hilbert--Schmidt inner product,
$\delta_rA_{sr}=\delta_sA_{rs}$, so the stationary law is $\pi_r=\delta_r/d_0$.
Put $m=2k-2r$; a sector-$r$ weight
has $r$ double, $r$ empty, and $m$ singly occupied modes.  Let $R_0$ and $R_1$ denote
the input and output sector labels.

\emph{2. What one matching does to the sector label.}
Let $\chi=\chi_{1/2}$ and $\chi_1=\chi^2-1$ be the spin-half and spin-one
characters (traces of the corresponding representations) of $SU(2)$,
and let $D,E,U$ be formal variables marking double, empty, and single modes.
The local Clebsch--Gordan decomposition determines each edge weight.
Write $V_s^{\rm copy}$ for the spin-$s$ copy representation and
$\operatorname{Sym}^2$ for the symmetric square. Write $\mathsf U$ for the defining mode
representation on one physical edge; its single-copy Fock representation is
$\mathbf 1\oplus\mathsf U\oplus\det\mathsf U$.  In the two-particle subspace of two
copies, skew Howe duality gives
\[
 \operatorname{span}\{DE,SS_{\rm sing},ED\}
   \cong V_0^{\rm copy}\otimes\operatorname{Sym}^2\mathsf U,
 \qquad
 SS_{\rm trip}
   \cong V_1^{\rm copy}\otimes\det\mathsf U.
\]
Thus the two orientations of $DE$ and the copy-singlet direction of $SS$ form the three
weight lines of the irreducible mode representation $\operatorname{Sym}^2\mathsf U$.  Haar
twirling a fixed line gives $I_3/3$: two lines have one double output mode and the third
has none.  This gives $(1+2z)/3$.  The copy-triplet factor has mode representation
$\det\mathsf U$, so it cannot leave $SS$ and contributes its copy character $\chi_1$.  The
remaining particle-number sectors are even simpler: $DD$ and $EE$ are fixed, while
$DS$ and $ES$ always contain, respectively, one and zero double modes and carry one
copy doublet.  Hence the character-valued local weights are
\[
\begin{array}{c|cccccc}
\text{edge type}&DD&EE&DS&ES&DE&SS\\ \hline
\text{weight}&z^2&1&z\chi&\chi&(1+2z)/3&\chi_1+(1+2z)/3.
\end{array}
\]
At $z=1$ these weights are exactly the products of the input copy characters
$1,1,\chi,\chi,1,\chi^2$, as required by trace preservation.  The edge exponential is
\begin{align*}
 \mathcal E_z
 &=\frac{z^2D^2+E^2}{2}+z\chi DU+\chi EU
   +\frac{1+2z}{3}DE
   +\frac12\left(\chi^2+\frac{2(z-1)}3\right)U^2\\
 &=\frac{(zD+E+\chi U)^2}{2}+\frac{z-1}{3}(U^2-DE).
\end{align*}
To normalize the matching average, a fixed sector-$r$ torus weight has fiber
$\operatorname{Inv}(V_{1/2}^{\otimes m})$, of dimension $\Cat_{m/2}$, and there are
$(2k)!/(r!^2m!)$ such weights.  Because the matching law is permutation invariant, the
normalized trace on $Q_r$ may be evaluated at one fixed assignment of $r$ vertices of
type $D$, $r$ of type $E$, and $m$ of type $U$.  If $\mathrm{PM}_{2k}$ is the set of labeled
perfect matchings and $w_{\tau(e)}$ is the table entry for the type of edge $e$, then
\begin{equation}\label{eq:matchingfixedtrace}
 G_r(z)=\frac{1}{(2k-1)!!\,\Cat_{m/2}}
 \sum_{M\in\mathrm{PM}_{2k}}\int_{SU(2)}\prod_{e\in M}w_{\tau(e)}(z;g)\,dg .
\end{equation}
Indeed, the group integral is the trace of the global singlet projector, since
$\int_{SU(2)}\chi_{\mathcal V}(g)\,dg=\dim\operatorname{Inv}(\mathcal V)$.  The labeled
exponential formula turns the matching sum in~\eqref{eq:matchingfixedtrace} into
$r!^2m![D^rE^rU^m]\exp(\mathcal E_z)$; the factors $1/2$ on equal-type edges are their
automorphism factors.  Therefore, with $G_r(z)=\Ex[z^{R_1}\mid R_0=r]$,
\begin{equation}\label{eq:matchingpgf}
 G_r(z)=\frac{r!^2m!}{(2k-1)!!\,\Cat_{m/2}}
 [D^rE^rU^m]\int_{SU(2)}
 \exp\!\left\{\frac{(zD+E+\chi U)^2}{2}
 +\frac{z-1}{3}(U^2-DE)\right\}\,dg .
\end{equation}
At $z=1$, the coefficient is
\[
 [D^rE^rU^m]\exp\!\left(\frac{(D+E+\chi U)^2}{2}\right)
 =\frac{(2k)!\,\chi^m}{2^kk!\,r!^2m!}.
\]
Using $\int\chi^m=\Cat_{m/2}$ and $(2k-1)!!=(2k)!/(2^kk!)$ gives $G_r(1)=1$.

\emph{3. Why polynomials diagonalize the chain.}
The key is that one step preserves polynomial degree. To see the cancellation,
set $w=z-1$ and $C_r=r!^2m!/[(2k-1)!!\,\Cat_{k-r}]$.
Expanding only the correction exponential in~\eqref{eq:matchingpgf} gives
\[
 e^{w(U^2-DE)/3}
 =\sum_{u\ge0}\frac{w^u}{3^uu!}
   \sum_{q=0}^u(-1)^q\binom uqD^qE^qU^{2(u-q)}.
\]
For $a+b+c=2\ell$, the remaining Gaussian coefficient is
\begin{equation}\label{eq:matchinggaussiancoefficient}
 [D^aE^bU^c]e^{(\alpha D+E+\chi U)^2/2}
 =\frac{(2\ell)!}{2^\ell\ell!\,a!b!c!}\alpha^a\chi^c.
\end{equation}
Taking $\ell=k-u$, $a=b=r-q$, $c=m-2u+2q$, and $\alpha=1+w$ now gives
\begin{align}
 G_r(1+w)
 &=C_r\sum_{u\ge0}\frac{w^u}{3^u u!}
   \sum_{q=0}^u(-1)^q\binom uq(1+w)^{r-q}\notag\\
 &\quad\times
 \frac{(2k-2u)!\,\Cat_{k-r-u+q}}
 {2^{k-u}(k-u)!(r-q)!^2(m-2u+2q)!}.
 \label{eq:matchingpgfexpanded}
\end{align}
We used $\int_{SU(2)}\chi^{m-2u+2q}\,dg=\Cat_{k-r-u+q}$; terms with
negative factorial arguments are zero.

For functions $f$ on the sectors, write
$(A^*f)(r)=\sum_sA_{sr}f(s)=\Ex[f(R_1)\mid R_0=r]$.
For $0\le t\le k$, put
$M_t(r)=\Ex[R_1^{\underline t}\mid R_0=r]=t![w^t]G_r(1+w)$.
Using the Catalan factorial formula, the coefficient in
\eqref{eq:matchingpgfexpanded} simplifies to
\begin{align*}
 &C_r\frac{(2k-2u)!\,\Cat_{k-r-u+q}}
 {2^{k-u}(k-u)!(r-q)!^2(m-2u+2q)!}\\
 &\qquad=2^u\frac{k^{\underline u}}{(2k)^{\underline{2u}}}
 r^{\underline q}r^{\underline q}
 (k-r)^{\underline{u-q}}(k-r+1)^{\underline{u-q}}.
\end{align*}
Extract $[w^{t-u}](1+w)^{r-q}$ and use
$r^{\underline q}(r-q)^{\underline{t-u}}=r^{\underline{q+t-u}}$.
Renaming $u$ as $s$ gives
\begin{align}
 M_t(r)&=\sum_{s=0}^t\binom ts\left(\frac23\right)^s
 \frac{k^{\underline s}}{(2k)^{\underline{2s}}}H_{s,t-s}(r),
 \label{eq:matchingmoment}\\
 H_{s,h}(r)&=\sum_{q=0}^s(-1)^q\binom sq
 r^{\underline q}r^{\underline{q+h}}
 (k-r)^{\underline{s-q}}(k-r+1)^{\underline{s-q}}.
 \label{eq:matchingH}
\end{align}
For parameters $a_1,\ldots,a_p$, $b_1,\ldots,b_q$, the hypergeometric series is
\[
 {}_pF_q(a_1,\ldots,a_p;b_1,\ldots,b_q;z)
 =\sum_{h\ge0}\frac{(a_1)_h\cdots(a_p)_h}{(b_1)_h\cdots(b_q)_h}
 \frac{z^h}{h!}.
\]
All series used below terminate at a nonpositive integer upper parameter.
The cancellation that fixes the degree is the terminating Hahn identity
\begin{equation}\label{eq:matchinghahnidentity}
\begin{split}
 &\sum_{q=0}^s(-1)^q\binom sq
 x^{\underline q}(x+A)^{\underline q}
 (N-x)^{\underline{s-q}}(N-x+1)^{\underline{s-q}}\\
 &\quad=N^{\underline s}(N+1)^{\underline s}
 {}_3F_2\!\left(\begin{matrix}-s,s-2N-A-2,-x\\-N,-N-1\end{matrix};1\right).
\end{split}
\end{equation}
Apply the terminating transformation
\begin{equation}\label{eq:matching3f2transform}
 {}_3F_2\!\left(\begin{matrix}-s,a,b\\d,e\end{matrix};1\right)
 =\frac{(d-a)_s}{(d)_s}
 {}_3F_2\!\left(\begin{matrix}-s,a,e-b\\a-d-s+1,e\end{matrix};1\right),
\end{equation}
from~\cite[Ch.~2]{AndrewsAskeyRoy1999}.
Using $y^{\underline{s-q}}=y^{\underline s}/(y-s+1)_q$, the left side of
\eqref{eq:matchinghahnidentity} first becomes
\[
 (N-x)^{\underline s}(N-x+1)^{\underline s}
 {}_3F_2\!\left(
 \begin{matrix}-s,-x,-x-A\\N-x-s+1,N-x-s+2\end{matrix};1\right).
\]
Apply~\eqref{eq:matching3f2transform} twice, with parameter tuples
\begin{align*}
 (a,b,d,e)&=(-x,-x-A,N-x-s+1,N-x-s+2),\\
 (a,b,d,e)&=(-x,N-s+A+2,N-x-s+2,-N).
\end{align*}
The prefactors cancel to give~\eqref{eq:matchinghahnidentity}.
These steps prove the identity at generic parameters;
clearing denominators extends it to exceptional integers.  In our application
$N=k-(t-s)\ge s$, so the final terminating series is well defined.
Indeed~\eqref{eq:matchingH} is $r^{\underline h}$ times
\eqref{eq:matchinghahnidentity} with $x=r-h$, $N=k-h$ and $A=h$.  Therefore
$H_{s,h}$ has degree at most $s+h$.  Only the last term of the terminating series can
contribute degree $s$ in $x$, and its coefficient is
\[
 (s-2N-A-2)_s=(-1)^s(2N+A+2-s)^{\underline s}.
\]
Thus the leading coefficient of $H_{s,h}$ is
$(-1)^s(2k+2-s-h)^{\underline s}$.  Since $h=t-s$, taking the coefficient of $r^t$
in~\eqref{eq:matchingmoment} gives
\begin{align*}
 [r^t]M_t(r)
 &=\sum_{s=0}^t(-1)^s\binom ts\left(\frac23\right)^s
 \frac{(2k+2-t)^{\underline s}k^{\underline s}}
 {(2k)^{\underline{2s}}}
 =\lambda_t.
\end{align*}
Equation~\eqref{eq:matchingmoment} consequently shows that $A^*$ preserves every
polynomial flag and has diagonal entries $\lambda_t$ in the falling-factorial basis.
Reversibility makes
$A^*$ self-adjoint in $L^2(\pi)$.  The orthogonal complement of the degree-$(j-1)$
polynomial flag inside the degree-$j$ flag is therefore a one-dimensional eigenspace.  The Johnson scheme identifies its normalized polynomial.

\emph{Hahn basis and Johnson-scheme normalization.}
For the Johnson scheme on $\mathcal X=\binom{[2k+1]}k$, the set of
$k$-subsets of $\{1,\ldots,2k+1\}$, let $d(x,y)=k-|x\cap y|$.
A sphere of
radius $r$ has size $\binom kr\binom{k+1}r$, so its radial law is exactly $\pi$.
The degree-$j$ primitive idempotent $E_j$ has rank $m_j$ and kernel
\cite[Sec.~9.1]{BrouwerCohenNeumaier1989}
\[
 (E_j)_{xy}=\frac{m_j}{|\mathcal X|}\phi_j(d(x,y)),\qquad
 \phi_j(r)={}_3F_2\!\left(\begin{matrix}-j,j-2k-2,-r\\-k,-k-1\end{matrix};1\right).
\]
This is~\eqref{eq:matchingphi}, normalized by $\phi_j(0)=1$, and equals
$Q_j(r;-k-2,-k-1,k)$ in Hahn notation~\cite[Sec.~18.20]{DLMFHahn}.
Evaluating $E_iE_j=\delta_{ij}E_j$ at a diagonal entry gives
$\sum_r\pi_r\phi_i(r)\phi_j(r)=\delta_{ij}/m_j$.
Positivity and the constant diagonal $m_j/|\mathcal X|$ give
$|(E_j)_{xy}|\le m_j/|\mathcal X|$, hence $|\phi_j(r)|\le1$.
These checks establish~\eqref{eq:hahnnorm} with our normalization and identify
$\phi_j$ with the degree-$j$ eigenfunction above.
This is the same Johnson-scheme spectral basis
used in Bernoulli--Laplace diffusion~\cite{DiaconisShahshahani1987}, not the
same transition law: $A$ describes an entire random matching, rather than one exchange.

\emph{Kernel and collision coordinate.}
The $j=0$ term has $\phi_0=1$ and $\lambda_0=1$.  Expanding the reversible kernel in the
complete orthogonal basis above, whose squared norms are $1/m_j$, gives
\eqref{eq:matchingkernel}.  Finally, at $r=k$ there are no singly occupied
modes and $\Cat_0=1$, so $\delta_k=\binom{2k}k=N$ and $W_k$ is precisely the span of the
doubled outcomes $|yy\rangle$; hence $Q_k=\Pi$, as used in
\eqref{eq:matchingcollisioncoordinate}.

\emph{4. Why high degrees decay uniformly.}
We establish~\eqref{eq:matchingdecay}.  The one-layer
channel is an average of Haar twirls, hence its compression to the monomial sectors is
positive semidefinite and $\lambda_j\ge0$. Fix a degree $d$, put
$h=k-d$ and $a=h+\tfrac12$. Pfaff's transformation
\cite[Ch.~2]{AndrewsAskeyRoy1999} applied to
\eqref{eq:matchinglambda} gives
\begin{equation}\label{eq:matchingpfaff}
 \lambda_d=\left(\frac56\right)^d g_d(a),
 \qquad
 g_q(a)={}_2F_1\!\left(-q,a+2;1-a-q;-\frac15\right).
\end{equation}
Let $b_q=(a)_q/q!$ and $c_q=b_qg_q(a)$. Direct expansion gives
\begin{equation}\label{eq:matchingcgf}
 \sum_{q\ge0}c_qz^q=(1-z)^{-a}(1+z/5)^{-a-2}.
\end{equation}
Its logarithmic derivative yields
\begin{equation}\label{eq:matchinggrec}
 5(a+q)g_{q+1}
 =(4q+4a-2)g_q+\frac{q(q+2a+1)}{a+q-1}g_{q-1}.
\end{equation}
The coefficients are nonnegative. If $h\ge1$, then $a\ge3/2$, and the coefficient on the
left exceeds the sum of those on the right by
\begin{equation}\label{eq:matchingrecdeficit}
 \frac{(a+2)(a-1)}{a+q-1}\ge0.
\end{equation}
Since $g_0=1$ and $g_1=(4a-2)/(5a)\le1$, induction gives $0\le g_q\le1$.

For the remaining case $h=0$, $a=1/2$, one has $g_1=0$, $g_2=4/5$, and
\[
 g_{q+1}=\frac{8q}{5(2q+1)}g_q
 +\frac{4q(q+2)}{5(2q+1)(2q-1)}g_{q-1}.
\]
The sequence $L_q=1-3/(10q)$ is a strict supersolution for $q\ge2$, because
\begin{equation}\label{eq:matchingsupersolution}
 L_{q+1}-\frac{8q}{5(2q+1)}L_q
 -\frac{4q(q+2)}{5(2q+1)(2q-1)}L_{q-1}
 =\frac{22q^2-9q+59}{50(2q+1)(2q-1)(q-1)(q+1)}>0.
\end{equation}
The two base cases complete the induction. Thus $0\le g_d(a)\le1$ in every case, and
\eqref{eq:matchingpfaff} proves~\eqref{eq:matchingdecay} for every $k\ge1$
and $0\le d\le k$, for every system size.

\end{proof}

\subsection{Proof of classical relabeling}\label{app:relabeling-proof}
\begin{proof}[Proof of Proposition~\ref{prop:permutationremoval}]
For a fixed permutation $P$, put $L'_i=P^{-1}L_iP$ and
$C'=L'_t\cdots L'_1$. Relabeling preserves the matching law; reversing an edge
conjugates its gate by $X$, preserving both Haar measure and the uniform law
on $\mathcal F$. Thus $C'$ has the core law, while
\[
 L_t\cdots L_1P=PC'.
\]
Second quantization maps basis occupations to their permuted occupations up to
sign. Measuring $C'$ and reporting $Pz$ therefore reproduces each instance's
conditional probabilities. Each $L'_i$ is still one matching layer.
Since permuting a probability vector preserves its collision, averaging over
$P$ proves~\eqref{eq:permutationcollision}, including $t=0$.
\end{proof}

\subsection{From the circuit to the collision coordinate}\label{app:collision-coordinate}
Write $\rho=\rho_0$ and define the monomial twirl
\[
 \mathcal M(X)=\Ex_{D,P}\!\left[\rho(DP)X\rho(DP)^\dagger\right],
\]
where $D$ has independent uniform diagonal phases and $P$ is uniform in $S_n$.

Let $\varphi=P_0|\Psi\Psi\rangle$, so $\|\varphi\|^2=w_0$; for a normalized input,
Theorem~\ref{thm:main} and $\ratio_{\rm Haar}\ge1$ imply $w_0>0$.  Since the restriction in
\eqref{eq:monomialrestriction} is multiplicity free, Schur orthogonality gives the explicit
twirl identity
\begin{equation}\label{eq:monomialtwirlexplicit}
 \mathcal M(|\varphi\rangle\langle\varphi|)
 =\sum_{r=0}^k\frac{\langle\varphi|Q_r|\varphi\rangle}{\delta_r}Q_r
 =w_0\sum_{r=0}^k b_r\frac{Q_r}{\delta_r},
 \qquad
 b_r:=\frac{\|Q_r\varphi\|^2}{w_0}.
\end{equation}
Because the projectors sum to $I_{V_0}$, $b_r\ge0$ and
\begin{equation}\label{eq:matchingbnormalization}
 \sum_{r=0}^k b_r=\frac{\|\varphi\|^2}{w_0}=1.
\end{equation}

For $m\in\mathsf M_n$, conjugation by its permutation part sends a uniform perfect
matching to a uniform perfect matching, while conjugation by its diagonal part preserves
the Haar law of every edge gate.  Consequently
\begin{equation}\label{eq:matchingcovariance}
 \mathcal T\!\left(\rho(m)X\rho(m)^\dagger\right)
 =\rho(m)\mathcal T(X)\rho(m)^\dagger.
\end{equation}
Thus $\mathcal T$ commutes with $\mathcal M$ and preserves the span of the $Q_r$.
The collision is unchanged by inserting this virtual twirl: since
$\mathcal M(\Pi)=\Pi$ and $\mathcal M$ is self-adjoint,
\begin{equation}\label{eq:virtualmonomialtwirl}
 \Tr[\Pi\mathcal T^t(X)]
 =\Tr[\Pi\mathcal M\mathcal T^t(X)]
 =\Tr[\Pi\mathcal T^t\mathcal M(X)].
\end{equation}
Thus the monomial twirl may be inserted when computing the core's collision, even
though its full output-state channel need not equal $\mathcal T^t\mathcal M$.
Applying
\eqref{eq:matchingAdef} repeatedly to~\eqref{eq:monomialtwirlexplicit} therefore yields
\begin{equation}\label{eq:matchingstateevolution}
 \mathcal T^t\mathcal M(|\varphi\rangle\langle\varphi|)
 =w_0\sum_{s=0}^k(A^tb)_s\frac{Q_s}{\delta_s}.
\end{equation}

Finally, $Q_k=\Pi$ and $\delta_k=N$ imply
$\Tr(\Pi Q_s)=N\mathbf1_{s=k}$. Equation~\eqref{eq:cartan} and
Proposition~\ref{prop:permutationremoval} therefore give
\begin{equation}\label{eq:matchingtracebridge}
 \ratio(\mathsf C_{n,t},\Psi)=\ratio(\mathsf G_{n,t},\Psi)
 =Nw_0(A^tb)_k.
\end{equation}
On the other hand, $\ratio_{\rm Haar}=N^2w_0/d_0$ and
$\pi_k=\delta_k/d_0=N/d_0$.  Dividing~\eqref{eq:matchingtracebridge} by the Haar value gives
\begin{equation}\label{eq:matchingcollisioncoordinate}
 \frac{\ratio(\mathsf G_{n,t},\Psi)}{\ratio_{\rm Haar}(n,\Psi)}
 =\frac{(A^tb)_k}{\pi_k}.
\end{equation}

\subsection{Exact finite-gate replacement}\label{app:finitegates}

\begin{proof}[Proof of Lemma~\ref{lem:cliffordtwirl}]
Write $U$ for the defining representation of $U(2)$.  On one copy of the local
two-mode Fock space, $F_{ij}\cong 1\oplus U\oplus\det$.  To match the twirls, it suffices to match their commutants. The two-copy
representation decomposes as
\begin{equation}\label{eq:hardapp-local-clifford-decomposition}
 R\cong 1\oplus2U\oplus3\det\oplus\operatorname{Sym}^2U
 \oplus2(\det\otimes U)\oplus\det^2.
\end{equation}
The group $\mathcal F$ in~\eqref{eq:single-particle-clifford} is
$C_8\mathbin{\cdot}2O$, where $C_8=\{e^{\pi i m/4}I:0\le m<8\}$
is the scalar phase group and $2O\subset SU(2)$ is the binary octahedral group of
order $48$ and $C_8\cap2O=\{\pm I\}$.  Thus $|\mathcal F|=192$.  Its scalar $C_8$
acts on the six summand types in~\eqref{eq:hardapp-local-clifford-decomposition} with
charges $0,1,2,2,3,4$.  All charges are distinct modulo eight except the two
charge-two types.  The defining
two-dimensional representation remains irreducible on $2O$, while
$\operatorname{Sym}^2U$ descends to the irreducible three-dimensional rotation
representation of the octahedral group and is inequivalent to the one-dimensional
$\det$ representation in charge two.  Hence all six $U(2)$ types remain irreducible and
pairwise inequivalent after restriction to $\mathcal F$, with the same multiplicities
$1,2,3,1,2,1$.

It follows that the commutants of $R(U(2))$ and $R(\mathcal F)$ coincide; both have
dimension
\[
 1^2+2^2+3^2+1^2+2^2+1^2=20.
\]
Haar averaging over a compact group and uniform averaging over a finite group are the
Hilbert--Schmidt orthogonal projections onto their respective commutants, which proves
\eqref{eq:clifford-fock-twirl}. 

The group contains $X=HS^2H$, so relabeling either endpoint of a gate
preserves its law. The identity tensorizes over disjoint edges and composes
over layers. Thus the complete two-copy channel, including the classical
mode symmetrization, agrees with the Haar-gated channel.
\end{proof}

\section{Magic-input profile and sharp threshold}\label{app:magic-analysis}

The exact chain separates the input coefficients from the eigenvalues.
Section~\ref{app:sharpthreshold} computes the former and sharpens the
latter near the transition. Section~\ref{app:magicthreshold} sums the
spectrum uniformly to obtain the profile; Section~\ref{app:magiclower}
rules out earlier threshold crossings.

\subsection{Input coefficients and spectral estimates}
\label{app:sharpthreshold}

\subsubsection{The exact spectral expansion}

Combining \eqref{eq:matchingkernel} with \eqref{eq:matchingcollisioncoordinate} and
$\sum_rb_r=1$ gives
\begin{equation}\label{eq:sharpexpansion}
 \frac{\ratio(\mathsf G_{n,t},\Psi)}{\ratio_{\rm Haar}(n,\Psi)}-1
 =\sum_{j=1}^{k}c_j\lambda_j^{\,t},
 \qquad
 c_j:=m_j\,\phi_j(k)\,B_j,
 \qquad
 B_j:=\sum_{r}b_r\phi_j(r).
\end{equation}
All spectral quantities are rational for the magic input. In particular,
\begin{equation}\label{eq:lam12}
 \lambda_1=1-\frac{2k+1}{3(2k-1)}\longrightarrow\frac23,
 \qquad
 \lambda_2\longrightarrow\frac49.
\end{equation}
At the boundary coordinate, cancelling the common $-k$ parameter in the
${}_3F_2$ expression for $\phi_j$ and applying Chu--Vandermonde gives the exact identity
\begin{equation}\label{eq:phiboundary}
 \phi_j(k)
 ={}_2F_1\!\left(\begin{matrix}-j,j-2k-2\\-k-1\end{matrix};1\right)
 =(-1)^j\frac{k+1-j}{k+1}.
\end{equation}

\subsubsection{The exact initial sector law of the magic input}

\begin{proposition}[Exact sector law]\label{prop:sectorlaw}
Let $n=4B$ and $\Psi=|\Psi_4\rangle^{\otimes B}$.  Then $b_r=0$ for odd $r$, and
\begin{equation}\label{eq:sectorlaw}
 b_{2m}\,w_0(B)=\binom Bm 2^{-m}\,
 \frac12\int_{-1}^{1}\!\left(\frac{1+x^2}{4}\right)^{\!B-m}\!dx
 =\binom Bm 2^{-m}4^{-(B-m)}
   \sum_{\ell=0}^{B-m}\frac{\binom{B-m}\ell}{2\ell+1},
 \qquad 0\le m\le B.
\end{equation}
\end{proposition}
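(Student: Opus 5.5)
The plan is to compute $b_r w_0=\|Q_r\varphi\|^2$ by the same block-factorization used in the proof of Proposition~\ref{thm:magic}, adding a generating variable that counts doubly occupied modes.

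\emph{Step 1: replace $Q_r$ by a diagonal projector.} Let $\widetilde Q_r$ be the projector on the full two-copy Fock space onto occupation patterns (words in $\{0,1,2\}^n$) with exactly $r$ doubly occupied modes. By step~1 of the proof of Lemma~\ref{lem:matchingchain}, $W_r$ is the span of the torus weights of $V_0$ having $r$ zeros (equivalently $r$ twos), so $Q_r=\widetilde Q_r P_0$. The per-mode total occupation $\sum_c a^\dagger_{i,c}a_{i,c}$ commutes with every $E_{cc'}$, so $\widetilde Q_r$ commutes with $D(g)$ and with $P_0$. Hence
\[
 b_rw_0=\langle\Psi\Psi|P_0\widetilde Q_r|\Psi\Psi\rangle
 =\int_{SU(2)}\langle\Psi\Psi|D(g)\widetilde Q_r|\Psi\Psi\rangle\,\mathrm dg .
\]

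\emph{Step 2: factorize over blocks with a generating variable.} Introduce $y$ and the operator $\sum_r y^r\widetilde Q_r=\bigotimes_i y^{[\text{mode } i \text{ doubled}]}$. This operator, like $D(g)$, factorizes over the disjoint four-mode blocks. Fermionic ordering signs are fixed and cancel, as in the proof of Proposition~\ref{thm:magic}. The integrand therefore becomes $\prod_b\langle v|D_b(g)\,y^{\#\mathrm{doubles}}|v\rangle$ with $|v\rangle=|\Psi_4\rangle^{\otimes2}$.

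Expand $|v\rangle$ into four terms:
\begin{itemize}[leftmargin=*]
\item the two copy-identical terms, with per-mode occupations $0022$ or $2200$, each carrying two doubles;
\item the two mixed terms, with per-mode occupation $1111$ and no doubles.
\end{itemize}
Since $D(g)$ preserves per-mode occupation, there are no cross terms between these two classes. The block computation in Proposition~\ref{thm:magic} then gives directly
\[
 \langle v|D(g)\,y^{\#}|v\rangle=\frac{y^2}{2}+\frac{1+x^2}{4},\qquad x=2|g_{11}|^2-1 .
\]

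\emph{Step 3: extract coefficients.} Raise the block factor to the $B$th power and expand binomially. The coefficient of $y^{2m}$ is $\binom Bm2^{-m}\bigl((1+x^2)/4\bigr)^{B-m}$, and odd powers of $y$ never occur, so $b_r=0$ for odd $r$. Since $x$ is uniform on $[-1,1]$ under Haar $g$, integrating gives the first expression in \eqref{eq:sectorlaw}. Expanding $(1+x^2)^{B-m}$ and using $\frac12\int_{-1}^1x^{2\ell}\,\mathrm dx=1/(2\ell+1)$ gives the rational sum.

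\emph{Main obstacle.} The only delicate point is Step~1, namely that the abstract sector projector $Q_r$ on $V_0$ coincides with the restriction of the diagonal occupation projector $\widetilde Q_r$. This follows from the torus-weight description of $W_r$ and from the commutation of per-mode occupations with the copy action. A consistency check is that setting $y=1$ recovers $(3+x^2)/4$ and hence $\sum_r b_r=1$.
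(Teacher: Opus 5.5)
Your proposal is correct and follows essentially the same route as the paper. You identify $Q_r$ with the diagonal doubled-mode projector composed with $P_0$, factorize $\langle\Psi\Psi|z^R D(g)|\Psi\Psi\rangle$ over the blocks, obtain the block factor $z^2/2+(1+x^2)/4$ from the occupation profiles of $|\psi\psi\rangle$, and integrate over uniform $x$. The only cosmetic difference is that the paper gets $(1+x^2)/4$ by subtracting the doubled contribution $1/2$ from $(3+x^2)/4$, whereas you read both class contributions directly off the earlier block computation.
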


\begin{proof}
Write $P_0=\int_{SU(2)}D(g)\,dg$ and $Q_r=\Pi_rP_0=P_0\Pi_r$, where $\Pi_r$ projects
onto configurations with $r$ doubly occupied modes.  Then
$b_rw_0=\langle\Psi\Psi|\Pi_rP_0|\Psi\Psi\rangle$.  Since each four-mode block has even
degree across the two copies, distinct blocks commute and
\begin{equation}\label{eq:blockfactor}
 \langle\Psi\Psi|z^R D(g)|\Psi\Psi\rangle=\mathcal F(z,g)^B,
 \qquad
 \mathcal F(z,g)=\sum_{r_{\rm b}}z^{r_{\rm b}}
 \langle\psi\psi|\Pi_{r_{\rm b}}D(g)|\psi\psi\rangle,
\end{equation}
where $|\psi\rangle=|\Psi_4\rangle$ and $R$ counts doubly occupied modes.

The four terms of $|\psi\psi\rangle$ have mode-occupation profiles $(0,0,2,2)$,
$(2,2,0,0)$ and twice $(1,1,1,1)$, so
$\mathcal F(z,g)=\mathcal F_0(g)+z^2\mathcal F_2(g)$.  The two profiles contributing to
$\mathcal F_2$ do not mix, and each contributes $(\det g)^2=1$ with squared input
coefficient $1/4$.  Hence $\mathcal F_2=1/2$.  The direct block calculation in the proof
of Proposition~\ref{thm:magic} gives
$\mathcal F(1,g)=(3+x^2)/4$, with $x=2|g_{11}|^2-1$.  Therefore, pointwise,
\begin{equation}\label{eq:F02direct}
 \mathcal F_0(g)=\frac{1+x^2}{4},\qquad \mathcal F_2(g)=\frac12.
\end{equation}
For Haar $g\in SU(2)$, $x$ is uniform on $[-1,1]$.  Extracting the coefficient of
$z^{2m}$ in $\bigl(\mathcal F_0+z^2/2\bigr)^B$ and integrating proves
\eqref{eq:sectorlaw}.  Summing over $m$ returns $w_0(B)$.
\end{proof}

Let $I_B=w_0(B)$ and let $R$ be the random sector index with
$\Pr(R=r)=b_r$; $\mathbb E_b$ denotes expectation under this law.
Proposition~\ref{prop:sectorlaw} equivalently gives the probability
generating function
\begin{equation}\label{eq:magicsectorpgf}
 \sum_{r=0}^{k}b_rz^r
 =\frac1{2I_B}\int_{-1}^{1}
 \left(\frac{1+x^2}{4}+\frac{z^2}{2}\right)^Bdx.
\end{equation}
Differentiating at $z=1$ yields
$\Ex_bR=B I_{B-1}/I_B$.  Since
$\phi_1(r)=1-(2k+1)r/[k(k+1)]$, the endpoint expansion for $I_B$ proved with
Proposition~\ref{thm:magic} gives
\begin{equation}\label{eq:B1magic}
 B_1=-\frac3n+O(n^{-2}),
 \qquad
 c_1=m_1\phi_1(k)B_1=3+O(n^{-1}).
\end{equation}
Thus the magic input removes the $\Theta(n)$ amplification of the first mode; it does not
remove the mode itself.

\subsubsection{Quantitative low-degree eigenvalues}

\begin{lemma}[Quantitative spectral bound]\label{lem:quantmatching}
For $1\le j\le k$, with $a=k-j+1/2$,
\begin{equation}\label{eq:quant-gap-target}
 0\le\lambda_j\le(2/3)^j\exp\!\left(\frac{5j(j-1)}{16a}\right).
\end{equation}
For $j\le k/2$ the exponent is at most $5j(j-1)/(8k)$.
Moreover, uniformly for $1\le j\le k^{1/3}$ as $k\to\infty$,
\begin{equation}\label{eq:matching-eigenvalue-comparison}
 \left|\log\frac{\lambda_j}{(2/3)^j}\right|\le\frac{Cj^2}{n},
 \qquad n=2k,
\end{equation}
for an absolute $C$.
\end{lemma}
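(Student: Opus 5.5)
The plan is to work from the Pfaff representation \eqref{eq:matchingpfaff} established in the proof of Lemma~\ref{lem:matchingchain}: $\lambda_j=(5/6)^jg_j(a)$ with $a=k-j+1/2$, where $g_j(a)={}_2F_1(-j,a+2;1-a-j;-1/5)$. Nonnegativity is already known. The upper bound $(2/3)^j$ corresponds to $g_j\approx(4/5)^j$, so I will control the ratio $h_q=g_q/(4/5)^q$. Using the generating function \eqref{eq:matchingcgf}, the coefficient $c_q=b_qg_q$ of $z^q$ in $(1-z)^{-a}(1+z/5)^{-a-2}$ is essentially a Cauchy-type coefficient. Heuristically, for large $a$ this is $\binom{a+q-1}{q}$ convolved with $(1+z/5)^{-a}$, which gives $(1-1/5)^q=(4/5)^q$ corrected by terms of order $q^2/a$.

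For the explicit bound \eqref{eq:quant-gap-target}, I will rewrite the recurrence \eqref{eq:matchinggrec} in terms of $h_q$:
\[
 4(a+q)h_{q+1}=(4q+4a-2)h_q+\frac{5q(q+2a+1)}{4(a+q-1)}h_{q-1}.
\]
I will then show by induction that $h_q\le\exp(5q(q-1)/(16a))=:E_q$, that is, that $E_q$ is a supersolution. The difference $E_{q+1}-E_q$ behaves like $E_q\cdot5q/(8a)$, while the recurrence demands growth of roughly $h_q\cdot[-2+5q(q+2a+1)/(4(a+q-1))\cdot(4/5)^{\ldots}]/(4(a+q))$. The main term $\tfrac{5q}{8}\cdot\tfrac{2a}{4a}$ should match $5q/(8a)$ after expanding, and the $-2$ term absorbs the remainders. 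The base cases are $h_0=1$ and $h_1=\tfrac54\cdot\tfrac{4a-2}{5a}\le1$. Checking the supersolution inequality with an exact rational remainder, in the same way as \eqref{eq:matchingsupersolution}, is where I expect the main obstacle to lie. The constant $5/16$ is presumably chosen to be exactly what makes this remainder positive, using $E_{q-1}\le E_q$ and $e^x\ge1+x$. If the direct check fails for small $a$, I will fall back on a separate treatment of the bounded cases. The corollary for $j\le k/2$ is then immediate, since in that range $a\ge k/2$.

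For the two-sided estimate \eqref{eq:matching-eigenvalue-comparison}, the upper half follows from the bound above, since $a\ge k/2$ when $j\le k^{1/3}$. For the lower half I will expand $\lambda_j$ directly from \eqref{eq:matchinglambda}. The ratio $(2k+2-j)^{\underline q}k^{\underline q}/(2k)^{\underline{2q}}$ equals $2^{-q}\exp(O(q^2/k))$ uniformly for $q\le j\le k^{1/3}$. However, the alternating sum with $\binom jq(2/3)^q2^{-q}$ exhibits cancellation down to $(2/3)^j$, so a termwise estimate loses too much. I will therefore instead use a lower-supersolution argument on $h_q$ through the same recurrence, proving $h_q\ge\exp(-Cq^2/a)$ by induction for $q\le k^{1/3}$. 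Here the restriction $q^3\lesssim k$ keeps the accumulated second-order errors, of total size $\sum q^2/a^2$, below the main correction. Converting $a$ to $n$ only changes the constant.
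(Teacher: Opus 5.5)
Your plan is correct and is essentially the paper's proof: normalize the Pfaff recurrence by $(5/4)^q$, note that both coefficients are nonnegative, and compare with $\exp(5q(q-1)/(16a))$ as a supersolution (the paper uses the majorant $M_{q+1}=M_q(1+5q/(8a))$ and then $1+x\le e^x$). The step you flag as the main obstacle is immediate: $2(a+q)(a+q-1)-a(q+2a+1)=(q-1)(3a+2q)\ge0$ shows that the second coefficient is at most $5q/(8a)$, so there is no remainder to absorb and no small-$a$ exception. For the lower half, dropping the nonnegative second term gives $h_j\ge\prod_{q=0}^{j-1}\bigl(1-\tfrac{1}{2(a+q)}\bigr)\ge e^{-Cj/k}$, so no second-order error bookkeeping is needed.
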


\begin{proof}
Normalize the Pfaff recurrence~\eqref{eq:matchinggrec} by
$u_q=(5/4)^qg_q(a)$. Then $\lambda_j=(2/3)^ju_j$,
$u_0=1$, $u_1=1-1/(2a)\in[0,1]$, and, for $q\ge1$,
\[
 u_{q+1}=\left(1-\frac1{2(a+q)}\right)u_q
 +\frac{5q(q+2a+1)}{16(a+q)(a+q-1)}u_{q-1}.
\]
The first coefficient lies in $[0,1]$ and the second in $[0,5q/(8a)]$, since
\[
 2(a+q)(a+q-1)-a(q+2a+1)=(q-1)(3a+2q)\ge0.
\]
The nondecreasing majorant $M_0=M_1=1$,
$M_{q+1}=M_q(1+5q/(8a))$ therefore bounds $u_q$ by induction. Hence
\[
 u_j\le\prod_{q=1}^{j-1}\left(1+\frac{5q}{8a}\right)
 \le\exp\!\left(\frac{5j(j-1)}{16a}\right).
\]
For $j\le k/2$ use $a\ge k/2$.

For the logarithmic comparison restrict to $j\le k^{1/3}$ and
large $k$, so $a\ge k/2$. Dropping the nonnegative second term of the
recurrence gives
\[
 u_j\ge\prod_{q=0}^{j-1}\left(1-\frac1{2(a+q)}\right)
 \ge e^{-Cj/k}.
\]
Combine this with the upper bound $u_j\le e^{Cj^2/k}$ to obtain
\eqref{eq:matching-eigenvalue-comparison}.
\end{proof}

\subsection{The collision profile and depth certificate}
\label{app:magicthreshold}

Throughout, $n=4B$, $k=2B$, and $c_j$ are the coefficients in
\eqref{eq:sharpexpansion}. We first assemble the main proof using the estimates
proved below.

\subsubsection{Proof of the collision profile and depth threshold}\label{app:core-profile-proof}
\begin{proof}[Proof of Theorem~\ref{thm:magicprofile}]
Put $z=n(4/9)^t$. Lemma~\ref{lem:magic-spectral-sums}, proved below,
shows that both $1+\sum_jc_j\lambda_j^t$ and $1+\sum_j|c_j|\lambda_j^t$
equal $e^{3z/2}(1+O_Z(n^{-1/2}))$, uniformly for $0<z\le Z$.
The first sum is the Haar-relative collision and the second bounds it above,
so the collision tends to Haar when $z\to0$.
For fixed $q>1$, solving $3z/2=\log q$ gives $t=\tau_n(q)$.
Lemma~\ref{lem:magic-early-lower} excludes all depths before a bounded window
about this value. Inside that window the slope of $3n(4/9)^t/2$ is bounded
away from zero. The uniform error therefore brackets each first crossing
within $O_q(n^{-1/2})$ of $\tau_n(q)$, giving the integer rounding in
\eqref{eq:magicsharpthreshold}. The same lower bound gives divergence for
$z\to\infty$; no monotonicity of the collision is assumed.

For the certificate, the absolute sum is nonincreasing and bounds the
signed sum above. The same upper bracket therefore applies at $q=2$, while
the lower bound for the actual collision excludes earlier certificates.
This proves the rounding assertion used in Corollary~\ref{cor:certified-ac}.
\end{proof}

\subsubsection{Uniform coefficient bounds}\label{app:coefficient-control}
We bound the polynomial roots and the input fluctuations separately;
Lemma~\ref{lem:magic-coeff-majorant} combines them into a summable coefficient bound.

Write $\mu=k(k+1)/(2k+1)$ and
\[
 \ell_j=\frac{(2k+2-j)^{\underline j}}
 {k^{\underline j}(k+1)^{\underline j}},\qquad
 P_j(r)=\frac{(-1)^j}{\ell_j}\phi_j(r).
\]
Thus $P_j$ is monic of degree $j$, and $\mu$ is the stationary mean
overlap. The random variable $R$ below has the magic-input law $b$,
not the stationary law $\pi$.

\begin{lemma}[Hahn root localization]\label{lem:magic-hahn-roots}
For $1\le j\le k/2$, every zero $\zeta$ of $P_j$ satisfies
\[
 |\zeta-\mu|\le\sqrt{kj}.
\]
\end{lemma}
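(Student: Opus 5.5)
The plan is to treat $P_j$ as an orthogonal polynomial and locate its zeros through a second-moment argument for the orthogonality measure. By~\eqref{eq:hahnnorm}, the polynomials $\phi_j$ are orthogonal in $L^2(\pi)$, where $\pi_r=\binom kr\binom{k+1}r/\binom{2k+1}k$ is the radial law of the Johnson scheme. Consequently $P_j$ is, up to sign and normalization, the degree-$j$ monic orthogonal polynomial for $\pi$, and its zeros are real, simple, and lie in $[0,k]$.

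The first step uses a quadrature bound that controls the zeros by moments of $\pi$. Let $\zeta$ be a zero and write $P_j(r)=(r-\zeta)Q(r)$ with $\deg Q=j-1$. Orthogonality of $P_j$ to $Q$ gives
\[
 \sum_r\pi_r(r-\zeta)Q(r)^2=0,\qquad\text{so}\qquad
 \zeta=\frac{\sum_r\pi_r\,r\,Q(r)^2}{\sum_r\pi_r Q(r)^2}.
\]
Thus $\zeta$ is the mean of $r$ under the tilted law $\pi Q^2/\|Q\|^2$. To bound $|\zeta-\mu|$, a cleaner route is the Jacobi matrix. The zeros of $P_j$ are the eigenvalues of the $j\times j$ truncated Jacobi matrix $J_j$, which has diagonal entries $\alpha_0,\dots,\alpha_{j-1}$ and off-diagonal entries $\sqrt{\beta_1},\dots,\sqrt{\beta_{j-1}}$. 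By Gershgorin, every zero satisfies
\[
 |\zeta-\mu|\le\max_{i<j}\Bigl(|\alpha_i-\mu|+\sqrt{\beta_i}+\sqrt{\beta_{i+1}}\Bigr).
\]
The bound $\sqrt{kj}$ should therefore follow from explicit recurrence coefficients.

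The second step extracts these coefficients from the three-term recurrence for the dual Hahn (equivalently Hahn) polynomials $Q_j(r;-k-2,-k-1,k)$, taken from~\cite[Sec.~18.22]{DLMFHahn}, or derived directly from~\eqref{eq:matchingphi}. I expect, and will verify, that $\alpha_i=\mu+O(1)$ uniformly in $i\le k/2$. This holds because the Bernoulli--Laplace/Johnson parameters are nearly symmetric: the stationary law is nearly symmetric about $\mu\approx k/2$, with $\mu=k(k+1)/(2k+1)$. I also expect $\beta_i\approx i(k-i)^2(k+1-i)^2/(2k+1-2i)^2\cdot(\text{const})/k^2$, which is of order $ik/16$ for $i\le k/2$. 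Near $i=0$, the variance of $\pi$ is $\beta_1\approx k/8$, which matches the hypergeometric-type variance $k(k+1)^2k/((2k+1)^2 2k)\approx k/8$. Gershgorin then gives
\[
 |\zeta-\mu|\le O(1)+2\sqrt{\max_{i\le j}\beta_i}\lesssim 2\sqrt{jk/16}+O(1)=\tfrac12\sqrt{jk}+O(1),
\]
which is within $\sqrt{kj}$ for $j\ge1$ once the constants are checked. An explicit inequality $\beta_i\le ik/8$ and $|\alpha_i-\mu|\le 1/2$ for $i\le k/2$ leaves room: $1/2+2\sqrt{jk/8}\le\sqrt{kj}$ whenever $kj\ge 2$.

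The main obstacle is obtaining clean, exact closed forms for $\alpha_i$ and $\beta_i$ with the paper's normalization and verifying the elementary inequalities uniformly in $i\le k/2$. The Hahn recurrence coefficients are messy rational functions. I would first compute them from the standard Hahn recurrence with $\alpha=-k-2$, $\beta=-k-1$, $N=k$, expressed in the variable $r$. As a fallback, the orthonormal-recurrence route can be replaced by the tilted-mean argument above together with a Chebyshev-type estimate: among polynomials $Q$ of degree $j-1$, the tilted law $\pi Q^2$ can shift the mean by at most $O(\sqrt{j\operatorname{Var}_\pi})$. The same scaling then follows, since $\operatorname{Var}_\pi\approx k/8$.
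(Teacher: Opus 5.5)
Your plan is the paper's proof. The paper likewise specializes the DLMF Hahn recurrence to $(\alpha,\beta,N)=(-k-2,-k-1,k)$, rewrites it as the monic recurrence $rP_q=P_{q+1}+\alpha_qP_q+\beta_qP_{q-1}$, and applies the row-sum bound to the $j\times j$ Jacobi matrix whose characteristic polynomial is $P_j$. The computation you defer gives $|\alpha_q-\mu|<1/4$ and $\beta_q=\frac{qu(u+1)(k+u+2)}{4(2u+1)^2}\le\frac{(k+2)q}{8}$, with $u=k+1-q$.

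Your guessed constants need small corrections. $\beta_q$ is near $qk/8$ rather than $qk/16$, and $\beta_1=k(k+1)^2/(2(2k+1)^2)$ slightly exceeds $k/8$, so $\beta_i\le ik/8$ is false as written. Your final inequality $1/2+2\sqrt{jk/8}\le\sqrt{kj}$ also fails at $kj=2$. The argument still closes, because only $\beta_1,\dots,\beta_{j-1}$ enter the matrix. The remaining slack gives $1/4+\sqrt{k(j-1)}\le\sqrt{kj}$ for $j\ge2$, and for $j=1$ the sole root is exactly $\alpha_0=\mu$.
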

\begin{proof}
Use the standard Hahn recurrence~\cite[Eqs.~18.22.2--18.22.3]{DLMFHahn}
with $(\alpha,\beta,N)=(-k-2,-k-1,k)$:
\[
 r\phi_q(r)=-A_q\phi_{q+1}(r)+(A_q+C_q)\phi_q(r)-C_q\phi_{q-1}(r),
\]
where
\[
 A_q=\frac{(2k+2-q)(k-q)}{2(2k+1-2q)},\qquad
 C_q=\frac{q(k+2-q)}{2(2k+3-2q)},\qquad C_0=0.
\]
Although these parameters lie outside the usual positive-parameter range,
the recurrence is a rational identity and extends after clearing denominators.
All denominators are nonzero for $0\le q<j\le k/2$; hence the specialization is valid.
The coefficient of the leading term also gives
$\ell_{q+1}=\ell_q/A_q$. Consequently
\[
 rP_q=P_{q+1}+\alpha_qP_q+\beta_qP_{q-1},\qquad
 \alpha_q=A_q+C_q,\quad \beta_q=A_{q-1}C_q.
\]
Put $u=k+1-q$. Direct simplification gives
\begin{equation}\label{eq:magic-jacobi-coefficients}
 \alpha_q=\frac k2+\frac14-\frac{2k+3}{4(4u^2-1)},\qquad
 \beta_q=\frac{qu(u+1)(k+u+2)}{4(2u+1)^2}
 \le\frac{(k+2)q}{8}\le\frac{kq}{4}.
\end{equation}
The first inequality follows because
$(k+2)(2u+1)^2-2u(u+1)(k+u+2)
=2qu^2+(k+2)(2u+1)\ge0$.
For $q\le k/2$, the diagonal formula gives
$|\alpha_q-\mu|<1/4$, and $\alpha_0=\mu$.
The $j$-by-$j$ real symmetric tridiagonal matrix with diagonal
$\alpha_0,\ldots,\alpha_{j-1}$ and adjacent entries
$\sqrt{\beta_1},\ldots,\sqrt{\beta_{j-1}}$ has characteristic polynomial $P_j$,
by expansion of its leading principal minors. Its eigenvalues therefore are
the zeros of $P_j$. The row-sum bound after subtracting $\mu I$ gives
\[
 |\zeta-\mu|\le\frac14+\sqrt{k(j-1)}\le\sqrt{kj}\qquad(j\ge2),
\]
since $\sqrt{kj}-\sqrt{k(j-1)}\ge\sqrt{k}/(2\sqrt j)\ge1/\sqrt2$.
For $j=1$, the sole root is $\mu$.
\end{proof}

\begin{lemma}[Magic-sector moments]\label{lem:magic-sector-moments}
For every integer $1\le j\le B$,
\[
 \left(\mathbb E_b|R-\mu|^j\right)^{1/j}\le3\sqrt{Bj}.
\]
\end{lemma}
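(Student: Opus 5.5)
The plan is to represent the law of $R$ as a mixture and bound the centered moments through that mixture. Proposition~\ref{prop:sectorlaw} and its generating function \eqref{eq:magicsectorpgf} give
\[
 \sum_r b_r z^r=\frac1{2I_B}\int_{-1}^1\bigl(\mathcal F_0(x)+z^2/2\bigr)^B\,dx,
 \qquad \mathcal F_0(x)=\frac{1+x^2}{4}.
\]
Thus $R=2M$, where, conditional on a tilted variable $x$, $M$ is $\mathrm{Bin}(B,p(x))$ with $p(x)=\frac{1/2}{1/2+\mathcal F_0(x)}=\frac{2}{3+x^2}$. The mixing density of $x$ is proportional to $\bigl((3+x^2)/4\bigr)^B$ on $[-1,1]$.

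We decompose
\[
 R-\mu=2\bigl(M-Bp(x)\bigr)+\bigl(2Bp(x)-2B\bigr)+(2B-\mu).
\]
Since $k=2B$, we have $\mu=k(k+1)/(2k+1)=B+O(1)$, whereas $2Bp$ ranges over $[B,2B]$. This is suspicious. At $x=\pm1$, $p=1/2$ and $2Bp=B$, which is consistent with $\mu\approx B$. The tilted density is concentrated near $|x|=1$ at scale $1-|x|=O(1/B)$, as in the endpoint expansion in the proof of Proposition~\ref{thm:magic}. There $2Bp(x)-B=O(B(1-|x|))=O(1)$ in typical size, with an exponential tail.

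By Minkowski's inequality in $L^j$, the moment bound splits into three pieces:
\begin{itemize}
 \item the binomial fluctuation $2(M-Bp)$, whose conditional $L^j$ norm is at most $2\sqrt{Bj}$ up to a constant, by a standard sub-Gaussian (Hoeffding) moment bound for sums of bounded independent variables, $\|\cdot\|_j\le\sqrt{j}\cdot\sqrt{B}\cdot$const;
 \item the drift $2B(p(x)-1/2)$;
 \item the constant $|B-\mu|\le1$.
\end{itemize}

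For the drift, note $p(x)-1/2=(1-x^2)/(2(3+x^2))\le(1-|x|)/3$. Under the tilted law, $Y=B(1-|x|)$ is dominated by an exponential with rate about $1/2$: use $((3+x^2)/4)^B\le e^{-B(1-|x|)/4}$ and the lower bound $I_B\ge c/B$ from the endpoint expansion. This gives $\|Y\|_j\le C j$, so the drift contributes at most $Cj\le C\sqrt{Bj}$ because $j\le B$.

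The main obstacle is getting the explicit constant $3$. Hoeffding gives $\|M-Bp\|_j\le\sqrt{jB/2}\cdot$ (a Gamma-function factor). The clean route uses $\mathbb E e^{\theta(M-Bp)}\le e^{B\theta^2/8}$ and the inequality $\mathbb E|X|^j\le j!\,(\text{MGF scale})^j$-type bounds, yielding $\|2(M-Bp)\|_j\le 2\sqrt{Bj/2}\cdot(e^{1/2}\text{-ish})\approx 2.4\sqrt{Bj}$. The remaining $\approx0.6\sqrt{Bj}$ must absorb the drift $\le (2/3)\|Y\|_j\le Cj$ and the $O(1)$ constant.

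The drift constant would be tracked explicitly with Gamma integrals: $\|Y\|_j\le 4(j!)^{1/j}\cdot$ (normalization ratio), and one must check that this is at most $0.6\sqrt{Bj}$ for $j\le B$. Small $B$ may need a direct check, or a cruder uniform argument: for small $j$, use the trivial bound $|R-\mu|\le k$ where that already suffices. If the constant budget fails, I would instead compute the second and even moments exactly via \eqref{eq:magicsectorpgf} and interpolate.
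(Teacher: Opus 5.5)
Your route is the paper's. You use the same binomial mixture read off from \eqref{eq:magicsectorpgf} and the same Minkowski split of $R-\mu$ into three pieces: the conditional fluctuation $S=R-\mathbb E[R\mid X]$, the drift $D=\mathbb E[R\mid X]-B=B(1-X^2)/(3+X^2)$, and the constant $\mu-B$. You also use conditional Hoeffding for $S$ and the tail bound $h(x)^B\le e^{-B(1-x)/4}$, with $h(x)=(3+x^2)/4$, for $D$. (Your displayed split should read $(2Bp(x)-B)+(B-\mu)$, as your bullets use; note also $2Bp(x)\in[B,4B/3]$.)

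\textbf{The gap: the constant $3$ is not established.} The lemma's content is exactly the constant $3$, and the bookkeeping you sketch cannot deliver it. As written, your argument gives only $\|R-\mu\|_j\le C\sqrt{Bj}$ for some unspecified $C$. That is not the statement, and the explicit $C=15$ of Lemma~\ref{lem:magic-coeff-majorant} is built on it.

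The step that fails is ``the drift contributes at most $Cj\le C\sqrt{Bj}$ because $j\le B$.'' When $j$ is comparable to $B$, this discards the factor $\sqrt{j/B}$, and the constant your tail bound produces is not small. Doing your integral with an explicit normalization gives $\mathbb E D^j\le(2+4/B)(8/3)^j j!$, so only $\|D\|_j\le 4j$.
\begin{itemize}
\item At $B=j=2$, your route bounds the drift alone by about $7.5\approx3.8\sqrt{Bj}$.
\item Even for large $j=B$ it gives about $(8/(3e))\sqrt{Bj}\approx0.98\sqrt{Bj}$. That exceeds the $0.6\sqrt{Bj}$ you budgeted after spending $2.4\sqrt{Bj}$ on the fluctuation.
\end{itemize}
Your fallback of using a trivial bound ``for small $j$'' points at the wrong regime. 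Trivial bounds help when $j$ is large relative to $B$, and $|R-\mu|\le k$ alone covers only $j\ge4B/9$.

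\textbf{How the paper closes the budget.} It uses the deterministic drift bound $D\le B/3$, which holds since $p(x)\le2/3$, and takes the geometric mean:
\[
\|D\|_j\le\min(B/3,4j)\le\sqrt{(B/3)(4j)}=2\sqrt{Bj/3}
\]
uniformly for $2\le j\le B$. Three further explicit inputs are needed, which you left vague:
\begin{itemize}
\item An all-$B$ normalization $I_B\ge2/(B+2)$. It follows from $h(x)\ge x^{1/2}$ by weighted AM--GM. The endpoint expansion you cite is only asymptotic.
\item A sharp sub-Gaussian constant. Integrating the conditional Hoeffding tail gives $\mathbb E e^{S^2/(4B)}\le3$, hence $\|S\|_j\le\sqrt{6/e}\,\sqrt{Bj}<\tfrac32\sqrt{Bj}$ for $j\ge2$.
\item The bound $0<\mu-B<1/4\le\sqrt{Bj}/8$, using $Bj\ge4$.
\end{itemize}
These sum to $\tfrac32+2/\sqrt3+\tfrac18<3$.

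The case $j=1$ must be handled separately, since there $\sqrt{Bj}$ can equal $1$. Use $\mathbb E|S|\le\sqrt B$ and $\mathbb E D\le\min(B/3,\,16/3+32/(3B))\le3\sqrt B/2$.
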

\begin{proof}
For a random variable $W$, use $\|W\|_j=(\mathbb E|W|^j)^{1/j}$.
Put $h(x)=(3+x^2)/4$ and $I_B=\int_0^1h(x)^Bdx$. The exact sector
generating function~\eqref{eq:magicsectorpgf} gives the following probability-mixture representation:
sample $X$ on $[0,1]$ with density $h(x)^B/I_B$, and, conditional on $X=x$,
sample $Y\sim\operatorname{Bin}(B,p_x)$, the binomial law with $B$ trials
and success probability $p_x=2/(3+x^2)$; then $R=2Y$.
Write
\[
 D=\mathbb E[R\mid X]-B=\frac{B(1-X^2)}{3+X^2},\qquad
 S=R-\mathbb E[R\mid X].
\]
Weighted arithmetic--geometric mean gives $h(x)\ge x^{1/2}$, hence
$I_B\ge2/(B+2)$. Also
\[
 h(x)^B\le e^{-B(1-x)/4},\qquad D\le(2B/3)(1-X).
\]
For every integer $j\ge1$, integration gives
\begin{align*}
 \mathbb E D^j
 &\le\frac{B+2}{2}\left(\frac{2B}{3}\right)^j
       \int_0^1(1-x)^je^{-B(1-x)/4}\,dx\\
 &\le(2+4/B)(8/3)^j j!\le6(8/3)^j j!.
\end{align*}
For $2\le j\le B$, arithmetic--geometric mean gives
$(j!)^{1/j}\le(j+1)/2\le3j/4$. Since $B\ge j\ge2$, the sharper
prefactor above obeys $(2+4/B)^{1/j}\le2$; hence
$\|D\|_j\le2(8/3)(3j/4)=4j$, including $j=2$.
Together with $D\le B/3$, this yields
$\|D\|_j\le\min(B/3,4j)\le2\sqrt{Bj/3}$.
Conditional Hoeffding's inequality, uniformly in $x$, gives
\[
 \Pr(|S|\ge u\mid X=x)\le2e^{-u^2/(2B)}.
\]
Integration implies $\mathbb E e^{S^2/(4B)}\le3$; maximizing
$u^je^{-u^2/(4B)}$ then gives
$\|S\|_j\le\sqrt{6/e}\sqrt{Bj}<\tfrac32\sqrt{Bj}$ for $j\ge2$.
Since $0<\mu-B<1/4$ and $\sqrt{Bj}\ge2$, Minkowski gives
\[
 \|R-\mu\|_j\le(3/2+2/\sqrt3+1/8)\sqrt{Bj}<3\sqrt{Bj}.
\]
For $j=1$, conditional variance gives $\mathbb E|S|\le\sqrt B$.
The drift obeys $\mathbb ED\le\min(B/3,16/3+32/(3B))\le3\sqrt B/2$:
use the first bound for $B\le16$ and the second for $B\ge16$.
Adding $|\mu-B|<1/4$ proves the same conclusion.
\end{proof}

\begin{lemma}[A summable bound for all low-degree magic coefficients]
\label{lem:magic-coeff-majorant}
The absolute constant $C=15$ suffices for
\begin{equation}\label{eq:magic-coeff-majorant}
 |c_j|\le\frac{(C\sqrt n)^j}{\sqrt{j!}},
 \qquad 1\le j\le B.
\end{equation}
\end{lemma}
\begin{proof}
Cancellation of falling factorials gives the exact identity
\begin{equation}\label{eq:magic-exact-normalization}
 m_j\ell_j=\frac1{j!}\prod_{h=0}^{j-1}
 \left(4-\frac2{k+1-h}\right)\le\frac{4^j}{j!}.
\end{equation}
Indeed, their numerator before cancellation is
$(2k+1)^{\underline{2j}}$; pair its adjacent factors.
Since $|\phi_j(k)|\le1$, factorization over the real roots
and Lemmas~\ref{lem:magic-hahn-roots}--\ref{lem:magic-sector-moments} give
\begin{align*}
 |c_j|&\le m_j\ell_j\mathbb E|P_j(R)|\\
 &\le\frac{4^j}{j!}\left(3\sqrt{Bj}+\sqrt{kj}\right)^j
 \le\frac{(15\sqrt n)^j}{\sqrt{j!}}.
\end{align*}
Here $B=k/2=n/4$, $j!\ge(j/e)^j$, and
$4(3/2+1/\sqrt2)\sqrt e<15$.
\end{proof}

\subsubsection{A quantitative Gaussian limit}\label{app:gaussian-limit}
We compare the input fluctuations and the Hahn polynomials directly with
their Gaussian limits. Only the leading generating function is needed.

\begin{proposition}[Gaussian coefficient comparison]\label{prop:magic-fixed-coeffs}
Define $a_j$ by $\sum_{j\ge0}a_jx^j=e^{3x^2/2}$.
There is an absolute $K$ such that, uniformly for
$1\le j\le B^{1/3}$ as $B\to\infty$,
\begin{equation}\label{eq:magic-fixed-coeffs}
 \left|\frac{c_j}{n^{j/2}}-a_j\right|
 \le\frac{K^j}{\sqrt n\sqrt{j!}}.
\end{equation}
In particular, $a_{2m}=(3/2)^m/m!$ and $a_{2m+1}=0$.
\end{proposition}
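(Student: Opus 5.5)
We want $c_j = m_j\phi_j(k)B_j$ with $B_j=\mathbb E_b\phi_j(R)=(-1)^j\ell_j\,\mathbb E_bP_j(R)$. By \eqref{eq:phiboundary} and \eqref{eq:magic-exact-normalization},
\[
c_j=(m_j\ell_j)\,\frac{k+1-j}{k+1}\,\mathbb E_bP_j(R),\qquad m_j\ell_j=\frac{4^j}{j!}\bigl(1+O(j^2/k)\bigr),
\]
so the task reduces to a quantitative Gaussian limit for $\mathbb E_bP_j(R)$ at scale $\sqrt n$. The plan has four steps: rescale both the polynomial and the law, show that the polynomial converges to a Hermite polynomial, show that the law converges to a Gaussian, and identify the limiting generating function with $e^{3x^2/2}$.

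\emph{Rescaling.} Set $s=\sqrt{k}/2=\sqrt{\mu/2}\,(1+O(1/k))$, the stationary standard deviation scale. From the Jacobi coefficients \eqref{eq:magic-jacobi-coefficients} we have $\alpha_q=\mu+O(q^2/k)$ and $\beta_q=\frac{kq}{8}(1+O(q/k))$. The rescaled polynomials $\tilde P_j(y)=s^{-j}P_j(\mu+sy)$ therefore satisfy a perturbed monic Hermite recurrence $y\tilde P_q=\tilde P_{q+1}+O(q^2/k^{3/2})\tilde P_q+q(1+O(q/k))\tilde P_{q-1}$. Comparing with $\mathrm{He}_q$ by induction on $q$, and tracking the coefficients through the majorant $\sum|{\rm coef}|\,|y|^i\le (C(1+|y|))^q\sqrt{q!}$-type bounds, gives
\[
\tilde P_j(y)-\mathrm{He}_j(y)=O\bigl(j^3k^{-1/2}\bigr)\cdot\text{(a majorant of size }(K(1+|y|))^j\text{)}
\]
for $j\le B^{1/3}$.

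\emph{Input law.} The mixture representation from the proof of Lemma~\ref{lem:magic-sector-moments} gives $R=2Y$, $Y\sim\mathrm{Bin}(B,2/(3+X^2))$, with $B(1-X)$ asymptotically exponential, so the drift $D=O(1)$ is negligible on the $\sqrt B$ scale. Hence $(R-\mu)/s$ has conditional variance $4B\,p(1-p)/s^2\to 4B\cdot\tfrac14/(B/2)=2$, and its mean shift is $O(B^{-1/2})$. I would not compute this via moment convergence term by term. Instead I would work directly with generating functions: write $\mathbb E\,\mathrm{He}_j(Z')$ for $Z'$ approximately $N(0,2)$. Since $\sum_j \mathrm{He}_j(y)x^j/j!=e^{xy-x^2/2}$, we get $\mathbb E e^{xZ'-x^2/2}=e^{x^2/2}$, which gives $\mathbb E\,\mathrm{He}_{2m}=(2m)!/(2^m m!)$. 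Combining with the prefactor,
\[
c_j/n^{j/2}\approx \frac{4^j s^j}{j!\,n^{j/2}}\,\mathbb E\,\mathrm{He}_j=\frac{(1/\sqrt2)^j\,4^j\,2^{-j}}{j!}\cdot\frac{(2m)!}{2^m m!}.
\]
This must be checked against $a_{2m}=(3/2)^m/m!$. The constants need care, and I expect the variance actually equals $3$ rather than $2$ once the mixture over $X$ is included correctly: $p_x$ ranges over $[1/2,2/3]$, so $R$ is concentrated near $B$, and the relevant object is the projection onto the stationary Hahn basis around $\mu\approx B$. I would therefore compute the exact generating function $\mathbb E_b e^{\theta(R-\mu)}$ from \eqref{eq:magicsectorpgf} and match it to the Gaussian. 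The rigorous error comes from a Taylor expansion of $\log\bigl((1+x^2)/4+e^{2\theta}/2\bigr)$ to third order on $|\theta|\le B^{-1/2}$-scale, uniformly in the $X$-mixture, as in the endpoint argument for Proposition~\ref{thm:magic}.

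\emph{Main obstacle.} The hard part is uniformity in $j$ up to $B^{1/3}$ with the error $K^j/(\sqrt n\sqrt{j!})$. Fixed-$j$ convergence is routine; the uniform version needs three controls. First, the Hermite comparison error must be bounded using root localization (Lemma~\ref{lem:magic-hahn-roots}), which keeps all polynomial factors within $O(\sqrt{kj})$. Second, the moment mismatch between $R$ and the Gaussian must be bounded via cumulants: the third and higher cumulants of $(R-\mu)/s$ are $O(B^{-1/2})$ with factorial growth. Third, the sub-Gaussian tails of Lemma~\ref{lem:magic-sector-moments} control the remainder. Summing these contributions gives a relative error $O(j^3/\sqrt n)$ times $(K')^j/\sqrt{j!}$, which is absorbed into $K^j$. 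Odd $j$ then vanish in the limit by the symmetry of the Gaussian.
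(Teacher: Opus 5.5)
You have the same outline as the paper: rescale, compare the Hahn polynomials with Hermite polynomials through the Jacobi recurrence, compare the input law with a Gaussian, and sum the generating function. But the step that fixes the constant is wrong, and the repair you suggest targets the wrong ingredient.

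At your scale $s=\sqrt k/2$ one has $s^2=k/4$, so \eqref{eq:magic-jacobi-coefficients} gives $\beta_q/s^2=(kq/8)/(k/4)=q/2$, not $q$. The rescaled recurrence is therefore
$y\tilde P_q=\tilde P_{q+1}+O(q^2k^{-3/2})\tilde P_q+\frac q2(1+O(q/k))\tilde P_{q-1}$,
and $\tilde P_j$ tends to $2^{-j/2}\operatorname{He}_j(\sqrt2\,y)$, not to $\operatorname{He}_j(y)$. Equivalently, $s$ is not the stationary standard deviation: $\pi$ is hypergeometric with variance $\approx k/8=B/4$, which is $1/2$ in your units.

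Your input variance $2$ was right. The $X$-mixture enters only through the drift $D=O(1)$, which has bounded exponential moments. It moves the mean by $O(B^{-1/2})$ in your units and cannot raise the variance to $3$. Computing the exact generating function from \eqref{eq:magicsectorpgf}, as you propose, would simply confirm $2$ and leave your mismatch $1/m!$ versus $(3/2)^m/m!$ in place.

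With the corrected limit, $\sum_j2^{-j/2}\operatorname{He}_j(\sqrt2\,y)x^j/j!=e^{xy-x^2/4}$. So for $Z'\sim N(0,2)$ one gets $\mathbb E e^{xZ'-x^2/4}=e^{3x^2/4}$, and your prefactor $(\sqrt2)^j/j!$ turns this into $\sum_ja_jx^j=e^{3x^2/2}$. The $3$ is the ratio $4$ of the input variance $B$ of $R$ to the stationary variance $B/4$, minus one; it is not a mixture effect.

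The paper's normalization makes this bookkeeping transparent. There $V=(R-B-1/4)/\sqrt B$ is asymptotically standard normal, and $Q_j(y)=B^{-j/2}P_j(B+1/4+\sqrt B\,y)$ has $\beta_q/B=q/4+O(q^2/B)$, so it tends to $H_j(y)=2^{-j}\operatorname{He}_j(2y)$. Then $\sum_j\frac{2^j}{j!}\mathbb E H_j(Z_0)x^j=\mathbb E e^{2xZ_0-x^2/2}=e^{3x^2/2}$.

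For uniformity in $j\le B^{1/3}$, the paper does three things:
\begin{itemize}
\item it compares $\mathbb E e^{sV}$ with $e^{s^2/2}$ on complex $|s|\le B^{1/6}$ and extracts moment errors by Cauchy's formula on $|s|=\sqrt j$;
\item it uses root localization to bound the coefficients of $Q_j$;
\item it compares characteristic polynomials by telescoping the determinants of the two Jacobi matrices.
\end{itemize}
Your cumulant-and-induction plan is a reasonable substitute for these uniform estimates once the scale is corrected. As written, however, the proposal asserts a false Hermite limit and does not establish $a_{2m}=(3/2)^m/m!$.
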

\begin{proof}
Put $\epsilon=B^{-1/2}$, $V=(R-B-1/4)/\sqrt B$, and
$Q_j(y)=B^{-j/2}P_j(B+1/4+\sqrt B\,y)$.
Let $\operatorname{He}_j$ denote the probabilists' Hermite polynomial and
set $H_j(y)=2^{-j}\operatorname{He}_j(2y)$, with exponential generating
function $e^{ys-s^2/8}$, and let $Z_0$ be a standard normal variable.

\emph{Input moments.}
For the drift $D$ in Lemma~\ref{lem:magic-sector-moments},
$\mathbb ED^r\le6(8/3)^r r!$. Thus $D$ has uniformly bounded
exponential moments in a fixed neighborhood of zero. Its conditional
moment generating function is exactly
\[
 \mathbb E[e^{s(R-B)/\sqrt B}\mid D]
 =\bigl[\cosh(s\epsilon)+D\epsilon^2\sinh(s\epsilon)\bigr]^B.
\]
Factor out $\cosh(s\epsilon)^B$ and expand for complex
$|s|\le B^{1/6}$. The two logarithms are
$s^2/2+O(\epsilon^2|s|^4)$ and
$D\epsilon s+O(D\epsilon^3|s|^3+D^2\epsilon^4|s|^2)$.
Here $D\le B/3$ and $\epsilon|s|\le B^{-1/3}$, so the errors are
integrable under a uniformly bounded exponential tilt of $D$.
Multiplying by $e^{-s\epsilon/4}$ gives
\[
 \left|\mathbb E e^{sV}-e^{s^2/2}\right|
 \le C\epsilon(1+|s|^4)e^{C|s|^2}.
\]
Cauchy's formula on $|s|=\sqrt j$ consequently gives, for $0\le r\le j$,
\[
 |\mathbb EV^r-\mathbb EZ_0^r|
 \le\epsilon K^j r!j^{-r/2}.
\]
By Lemma~\ref{lem:magic-hahn-roots}, all roots of $Q_j$ have modulus at
most $C\sqrt j$. Its coefficient of $y^r$ is therefore bounded by
$\binom jr(C\sqrt j)^{j-r}$. Summing the moment errors yields
\begin{equation}\label{eq:magic-polynomial-moment-error}
 \frac{2^j}{j!}|\mathbb E Q_j(V)-\mathbb E Q_j(Z_0)|
 \le\frac{\epsilon K^j}{\sqrt{j!}}.
\end{equation}
Indeed the sum is bounded by $\epsilon K^j j^{-j/2}$, using
$\sum_{h=0}^j(Cj)^h/h!\le e^{Cj}$, and $j!\le j^j$.

\emph{Polynomial comparison.}
The scaled Jacobi matrix $J_j$ with characteristic polynomial $Q_j$ has
entries, uniformly for $q<j\le B^{1/3}$,
\[
 \frac{\alpha_q-B-1/4}{\sqrt B}=O(B^{-3/2}),\qquad
 \frac{\beta_q}{B}=\frac q4+O\!\left(\frac{q(q+1)}B\right),
\]
by~\eqref{eq:magic-jacobi-coefficients}. The Hermite Jacobi matrix
$J_j^{\rm H}$ has zero diagonal and adjacent entries $\sqrt q/2$.
In operator norm,
$\|J_j-J_j^{\rm H}\|\le Cj^{3/2}/B$ and both norms are at most $C\sqrt j$.
Telescoping the columns of the two characteristic determinants gives
\[
 |Q_j(y)-H_j(y)|
 \le\frac{Cj^{5/2}}B(|y|+C\sqrt j)^{j-1}.
\]
Gaussian moment bounds and $j!\ge(j/e)^j$ imply
\[
 \frac{2^j}{j!}\mathbb E|Q_j(Z_0)-H_j(Z_0)|
 \le\frac{K^j}{B\sqrt{j!}}.
\]
Polynomial factors in $j$ are absorbed into $K^j$.

\emph{Normalization.}
The exact boundary identity~\eqref{eq:phiboundary} and the leading
coefficient normalization give
\[
 \frac{c_j}{n^{j/2}}=\frac{2^j}{j!}\theta_{k,j}\mathbb E Q_j(V),\qquad
 \theta_{k,j}=\left(1-\frac j{k+1}\right)
 \prod_{h=0}^{j-1}\left(1-\frac1{2(k+1-h)}\right)=1+O(j/B).
\]
Finally,
\[
 \sum_{j\ge0}\frac{2^j}{j!}\mathbb E H_j(Z_0)x^j
 =\mathbb E e^{2xZ_0-x^2/2}=e^{3x^2/2}.
\]
The Gaussian moments also bound $|a_j|$ by $K^j/\sqrt{j!}$.
Combining these estimates with~\eqref{eq:magic-polynomial-moment-error}
and $\epsilon=2/\sqrt n$ proves the proposition.
\end{proof}

\subsubsection{Uniform summation of the spectrum}\label{app:uniform-summation}
We first control the high degrees, then sum the quantitative Gaussian limit.

\begin{lemma}[Uniform spectral sums]\label{lem:magic-spectral-sums}
For every fixed $Z>0$, uniformly over integer depths with
$0<z=n(4/9)^t\le Z$,
\begin{equation}\label{eq:magic-spectral-sums}
 1+\sum_{j=1}^k c_j\lambda_j^t=e^{3z/2}(1+O_Z(n^{-1/2})),\qquad
 1+\sum_{j=1}^k |c_j|\lambda_j^t=e^{3z/2}(1+O_Z(n^{-1/2})).
\end{equation}
\end{lemma}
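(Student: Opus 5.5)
I would split the spectral sum at a cutoff $J=\lfloor B^{1/3}\rfloor$ (or a slightly smaller power, say $J\asymp \log n$ if convenient) into a low range $1\le j\le J$ and a high range $J<j\le k$, and treat the signed and absolute sums simultaneously; since $e^{3z/2}$ is bounded above and below on $0<z\le Z$, an additive error $O_Z(n^{-1/2})$ suffices. In the depth window, $z\le Z$ forces $t=\log(n/z)/\log(9/4)\ge \log_{9/4}n-O_Z(1)$, and for $z$ very small one may also have large $t$; I would use throughout $(4/9)^t=z/n$, so that $n^{j/2}(2/3)^{jt}=z^{j/2}$.

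\emph{Low range.} For $j\le J$, Lemma~\ref{lem:quantmatching} gives $\lambda_j^t=(2/3)^{jt}\exp(O(tj^2/n))$, and since $t=O(\log n)$ here (for $z\ge n^{-C}$; the regime of tiny $z$ is handled by the crude high-range style bound below), $\lambda_j^t=(2/3)^{jt}(1+O(j^2\log n/n))$. Proposition~\ref{prop:magic-fixed-coeffs} gives $c_j=n^{j/2}(a_j+O(K^j/(\sqrt n\sqrt{j!})))$. Hence $c_j\lambda_j^t=a_jz^{j/2}+O\bigl((K\sqrt z)^j/(\sqrt n\sqrt{j!})\bigr)+O\bigl(|a_j|z^{j/2}j^2\log n/n\bigr)$. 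Summing, $\sum_{j\ge0}a_jz^{j/2}=e^{3z/2}$ exactly (odd $a_j$ vanish), the error series converge to $O_Z(n^{-1/2})$ because $\sum (K\sqrt Z)^j/\sqrt{j!}<\infty$, and the tail $\sum_{j>J}|a_j|Z^{j/2}$ is superpolynomially small. For the absolute sum, the same estimate shows $|c_j|\lambda_j^t=|a_j|z^{j/2}+{}$errors with $a_j\ge0$, so both sums share the limit; the odd terms contribute only to the error.

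\emph{High range.} For $J<j\le B$ I would use Lemma~\ref{lem:magic-coeff-majorant} together with the upper bound of Lemma~\ref{lem:quantmatching}: for $j\le k/2$, $\lambda_j^t\le(2/3)^{jt}e^{5j^2t/(8k)}$, so $|c_j|\lambda_j^t\le (15\sqrt z\,e^{5jt/(8k)})^j/\sqrt{j!}$. As long as $jt/k$ stays bounded (true for $j\le k/\log n$ with $t=O(\log n)$), this is $(C\sqrt Z)^j/\sqrt{j!}$, whose tail beyond $J$ is $O(n^{-1})$. For larger $j$ (up to $k$) I would abandon the Gaussian majorant and use the cruder facts $|c_j|\le m_j\le\binom{2k+1}{j}$ (from $|\phi_j|\le1$, $\sum b_r=1$) and $\lambda_j\le(5/6)^j$ from \eqref{eq:matchingdecay}; with $t\ge c\log n$ this gives $\binom{2k+1}j(5/6)^{jt}\le (2k(5/6)^t)^j/j!$, and since $(5/6)^t\le n^{-c'}$ with $c'=\log(6/5)/\log(9/4)\approx0.22$, the factor $2k(5/6)^t\le n^{0.78}$ is killed by $1/j!$ once $j\ge k/\log n$, giving a superpolynomially small contribution. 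The same crude bound covers large $t$ (tiny $z$), where everything is dominated by $z$-small terms.

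The main obstacle is the middle regime $J<j\le k/\log n$ where neither the quantitative eigenvalue comparison nor the crude $(5/6)^j$ bound is obviously sufficient; I would first try the $\exp(5j(j-1)/(16a))$ bound of Lemma~\ref{lem:quantmatching} as above, checking that $e^{5jt/(8k)}$ stays bounded uniformly, and if the window between $k/\log n$ and $B$ (Lemma~\ref{lem:magic-coeff-majorant} only covers $j\le B$) causes trouble, bound $|c_j|$ there by $m_j$ and interpolate, using $\lambda_j\le(2/3)^je^{5j/16}\cdot$ for $j\le k/2$ and $(5/6)^j$ beyond.
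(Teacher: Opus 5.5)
Your proposal is correct and is essentially the paper's proof. It uses the same three-range split: Proposition~\ref{prop:magic-fixed-coeffs} with the eigenvalue comparison for $j\le B^{1/3}$; Lemma~\ref{lem:magic-coeff-majorant} with the bound of Lemma~\ref{lem:quantmatching} up to $j\approx n/\mathrm{polylog}\,n$; and the crude $\binom{n+1}{j}(5/6)^{jt}$ envelope beyond. It also uses the same $\bigl||u|-a_j\bigr|\le|u-a_j|$ step for the absolute sum. The ``obstacle'' you flag in the middle range is already settled by your own high-range paragraph.

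The only difference is how you get uniformity in large $t$. The paper bounds the upper ranges at $t_0$ and then uses $0\le\lambda_j<1$, writing $t=t_0+s$ in the eigenvalue replacement. You instead restrict to $t=O(\log n)$ and treat tiny $z$ with the global envelope $e^{(n+1)(5/6)^t}-1$. That works once the cutoff is $z\le n^{-C}$ with $C$ large enough (e.g.\ $C=6$) that this envelope is $O(n^{-1/2})$.
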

\begin{proof}
Let $t_0=\lceil\log(n/Z)/\log(9/4)\rceil$,
$L=\lfloor B^{1/3}\rfloor$, and $J=\lfloor n/(\log n)^2\rfloor$.
For large $n$, $L\le J\le B$, and all stated depths satisfy $t\ge t_0$.
At $t_0$, Lemma~\ref{lem:quantmatching} and the coefficient majorant give
\[
 |c_j|\lambda_j^{t_0}
 \le\frac{[15\sqrt Z\exp(5t_0J/(4n))]^j}{\sqrt{j!}}
 \qquad(j\le J).
\]
The bracket stays bounded. Its tail from $j>L$ decreases faster than every
power of $n$. For $j>J$, use $|c_j|\le m_j\le\binom{n+1}j$ and
$\lambda_j\le(5/6)^j$ to obtain
\[
 \sum_{j>J}|c_j|\lambda_j^{t_0}
 \le\frac{r_n^{J+1}}{1-r_n},\qquad
 r_n=\frac{e(n+1)}J(5/6)^{t_0}
 =O_Z\!\left((\log n)^2n^{-\log(6/5)/\log(9/4)}\right)\to0.
\]
This tail also decreases faster than every power of $n$.
Both estimates hold for every $t\ge t_0$, since $0\le\lambda_j<1$.

For $j\le L$ put $v_j=(2/3)^j$. Equation~\eqref{eq:matching-eigenvalue-comparison}
gives
\[
 |\lambda_j^t-v_j^t|
 \le\frac{Ct j^2}{n}v_j^t e^{Ct j^2/n}.
\]
To make this estimate uniform at later depths, write $t=t_0+s$.
For large $n$, $(v_je^{Cj^2/n})^s\le e^{-cjs}$, and
$\sup_{s\ge0}(t_0+s)e^{-cjs}=O(1+t_0)$.
Multiplying by the coefficient majorant and summing therefore bounds the
power-replacement error by $O_Z(\log n/n)$: use
$n^{j/2}v_j^{t_0}\le Z^{j/2}$ and the bounded per-degree factor
$e^{Ct_0j/n}$ for $j\le L$.

Proposition~\ref{prop:magic-fixed-coeffs} now gives
\[
 1+\sum_{j=1}^k c_j\lambda_j^t
 =\sum_{j\ge0}a_jz^{j/2}+O_Z(n^{-1/2})
 =e^{3z/2}+O_Z(n^{-1/2}).
\]
The coefficient error sums because
$\sum_j(K\sqrt Z)^j/\sqrt{j!}<\infty$; the omitted Gaussian tail
is smaller than every power of $n$, and $\log n/n=O(n^{-1/2})$.
Since every $a_j\ge0$,
$\bigl||u|-a_j\bigr|\le|u-a_j|$ gives the same comparison for absolute
coefficients. Finally $e^{3z/2}\ge1$, so both additive errors imply the
relative estimates stated above.
\end{proof}

\subsubsection{Exact finite-size certification}\label{app:finite-certificate}
The asymptotic profile locates the transition. Exact rational arithmetic
certifies the bound at each size, including sizes outside the asymptotic regime.

\begin{lemma}[Exact magic-input depth certificate]\label{lem:magiccertificate}
The certificate~\eqref{eq:magiccertificate} exists and is computable exactly
in time polynomial in $n$. At this depth the canonical input has Haar-relative collision at most two
under both $\mathsf G$ and its core, including their finite-alphabet versions.
The certificate does not assert that $t_n^{\rm mag}$ is the least depth at
which the actual collision crosses this factor-two threshold.
\end{lemma}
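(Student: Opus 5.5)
The plan has three parts: show that every ingredient of the certificate is a rational number computable in polynomial time, show that a finite certifying depth exists, and translate the certificate into the collision bound for all four ensembles.

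\emph{Exact computability.} By Proposition~\ref{prop:sectorlaw}, each sector weight $b_{2m}w_0(B)$ is a finite rational sum with $O(B)$ terms. Proposition~\ref{thm:magic} gives $w_0(B)$ the same way, so every $b_r$ is rational with polynomially many bits. The eigenvalues $\lambda_j$ in \eqref{eq:matchinglambda} and the Hahn values $\phi_j(r)$ in \eqref{eq:matchingphi} are finite alternating sums of ratios of falling factorials. The values $\phi_j(k)$ also have the closed form \eqref{eq:phiboundary}, and $m_j$ is a difference of binomials. Hence each $c_j=m_j\phi_j(k)B_j$ with $B_j=\sum_r b_r\phi_j(r)$ is a rational number. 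Its numerator and denominator have $\poly(n)$ bits, and it is computed with $O(k^2)$ arithmetic operations.

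\emph{Existence and a polynomial bound on the search.} By \eqref{eq:matchingdecay}, $0\le\lambda_j\le(5/6)^j\le5/6$ for all $j\ge1$. The quantity $\sum_j|c_j|\lambda_j^t$ is therefore nonincreasing in $t$ and tends to $0$, so the minimum in \eqref{eq:magiccertificate} exists. To keep the search polynomial, I would bound $|c_j|\le m_j\le\binom{n+1}{j}$, using $|\phi_j|\le1$ from \eqref{eq:hahnnorm} and $\sum_r b_r=1$. This gives $\sum_j|c_j|\lambda_j^t\le 2^{n+1}(5/6)^t$, so some $t=O(n)$ already satisfies the certificate. Scanning $t=0,1,\dots$ and evaluating each sum exactly in rationals then costs polynomial time. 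Powers $\lambda_j^t$ with $t=O(n)$ have $\poly(n)$ bits, and the comparison with $1$ is exact. This bit-size bookkeeping is the only mildly delicate point; it is routine once all entries are known to be rationals of polynomial height.

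\emph{The collision bound.} At $t=t_n^{\rm mag}$, the reduction \eqref{eq:sharpexpansion} gives
\[
\ratio_{n,t}/\ratio_{\rm Haar}(n)=1+\sum_j c_j\lambda_j^t\le1+\sum_j|c_j|\lambda_j^t\le2.
\]
By \eqref{eq:matchingtracebridge} this holds for $\mathsf G_{n,t}$ and, through Proposition~\ref{prop:permutationremoval}, for its core. Lemma~\ref{lem:cliffordtwirl} shows that the finite alphabet has the same two-copy channel, and therefore the same $\ratio$, which gives the finite-alphabet versions. Since the bound is obtained from the absolute sum rather than from the signed one, it certifies a sufficient condition only. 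It makes no claim of minimality for the actual crossing, which matches the final sentence of the statement.
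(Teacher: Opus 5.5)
Your proposal is correct and follows the paper's proof. You bound $|c_j|\le m_j\le\binom{n+1}{j}$, use the envelope \eqref{eq:matchingdecay} to make the search finite, evaluate the absolute spectral sum in exact rational arithmetic, and transfer the bound through \eqref{eq:sharpexpansion}, Proposition~\ref{prop:permutationremoval} and Lemma~\ref{lem:cliffordtwirl}.

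The one difference is that you discard the $j$ in $(5/6)^{jt}$, which gives an $O(n)$ scan limit. Keeping it gives $\sum_j\binom{n+1}{j}(5/6)^{jt}\le e^{(n+1)(5/6)^t}-1$ and hence the logarithmic limit $T_n$ of \eqref{eq:matchingtime}. The paper reuses that logarithmic bound later, for example in $t^{\rm mag}_n\le T_n\le t_{\rm hard}(n)$. For this lemma alone, your cruder bound suffices.
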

\begin{proof}
Since $0\le\lambda_j<1$ for $j\ge1$, the absolute spectral sum in
\eqref{eq:magiccertificate} is nonincreasing and tends to zero.
For an explicit finite scan limit, set
\begin{equation}\label{eq:matchingtime}
 T_n=\left\lceil
 \frac{\log((n+1)/\log2)}{\log(6/5)}
 \right\rceil.
\end{equation}
Since $|c_j|\le m_j$, the global envelope~\eqref{eq:matchingdecay} gives
\[
 \sum_{j=1}^k|c_j|\lambda_j^{T_n}
 \le\sum_{j=1}^k\binom{n+1}{j}(5/6)^{jT_n}
 \le\exp\!\bigl((n+1)(5/6)^{T_n}\bigr)-1\le1.
\]
Thus $0\le t_n^{\rm mag}\le T_n$.
The collision guarantee follows from
\eqref{eq:sharpexpansion}, Proposition~\ref{prop:permutationremoval}, and
Lemma~\ref{lem:cliffordtwirl}.

For exact computation, the unnormalized masses $b_{2m}I_B$ in
\eqref{eq:sectorlaw} have common denominator $D_B=4^B(2B+1)!$.
Writing $I_B=W_B/D_B$, the denominators of $b_r$, $\phi_j(r)$, and $c_j$
divide, respectively,
\[
 W_B,\qquad k!(k+1)!,\qquad (k+1)W_Bk!(k+1)!.
\]
These integers have $O(n\log n)$ bits. The eigenvalues have common denominator
$3^k(2k)!$, since each summand's denominator in~\eqref{eq:matchinglambda}
divides $3^q(2k)^{\underline{2q}}$. Raising it to
$t\le T_n=O(\log n)$ still gives polynomial bit length.
Polynomially many exact rational operations therefore evaluate the absolute
sum at each $t=0,\ldots,T_n$; the first successful comparison is the certificate.
\end{proof}

\subsection{The two-particle lower bound}
\label{app:magiclower}

Two-particle correlations retain a positive contribution until the transition
window. We isolate that contribution without assigning signs to the full
collision expansion.

\begin{lemma}[Two-particle lower bound on the collision ratio]\label{lem:magic-early-lower}
Let $n=4B\ge4$, let $\Psi_{\rm in}=|\Psi_4\rangle^{\otimes B}$, and put
\[
 \eta_n=\frac{n(n-2)}{16(n-1)(n-3)},\qquad
 e_n=\frac{7n-8}{16(n^2-1)},\qquad
 a_n=\frac{3n(n-4)}{16(n^2-1)},\qquad
 b_n=\frac{3n^3-7n^2+6n+16}{32(n^2-1)}.
\]
With $\lambda_1,\lambda_2$ as in Lemma~\ref{lem:matchingchain}, every integer
depth $t\ge0$ obeys
\[
 \ratio(\mathsf G_{n,t},\Psi_{\rm in})
 =\ratio(\mathsf C_{n,t},\Psi_{\rm in})
 \ge1+\frac{e_n+a_n\lambda_1^t+b_n\lambda_2^t}{\eta_n}.
\]
The coefficients $a_n,b_n$ are nonnegative and $b_n/\eta_n\sim3n/2$.
Consequently, if $n(4/9)^{t(n)}\to\infty$, then the collision ratio diverges.
For every fixed $q>1$, the least factor-$q$ depth satisfies
\[
 t^\star(q)\ge\frac{\log n}{\log(9/4)}-C_q,
\]
where $C_q$ is a constant depending only on $q$.
\end{lemma}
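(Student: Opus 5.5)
The proof has three stages. First, a per-circuit inequality bounds the collision below by pair correlations. Second, the ensemble average of $S_0$ evolves under a three-state linear recursion with eigenvalues $1,\lambda_1,\lambda_2$. Third, we evaluate the initial data for the magic input.

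\emph{Stage 1: the per-circuit inequality.} Fix a circuit and write $p$ for its output law on $\Y$ and $u$ for the uniform law. Let $\mathcal H$ be the span of the centered functions $f_{ij}(y)=y_iy_j-q_n$, $i<j$, in $L^2(u)$. Since $\sum_y p_y^2/u_y=\|p/u\|_{L^2(u)}^2$, projecting $p/u-1$ onto $\mathcal H$ gives
\[
N\sum_y p_y^2\ge1+\|\mathrm{proj}_{\mathcal H}(p/u)\|^2 .
\]
The inner products $\langle p/u,f_{ij}\rangle_u=\langle n_in_j\rangle-q_n$ are exactly the pair deviations. It therefore suffices to show that the Gram matrix $G$ of the $f_{ij}$ under $u$ satisfies $G\preceq\eta_n I$ on the relevant vectors. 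The deviation vector has zero row sums: the magic input has one-particle density $I/2$, preserved by passive circuits, and total number is fixed, so $\sum_{j\ne i}(\langle n_in_j\rangle-q_n)=0$. On that subspace $G$ acts by the Johnson-scheme eigenvalue belonging to the degree-two component. We compute this eigenvalue from the hypergeometric pair and triple moments and expect it to equal $\eta_n$. The Cauchy--Schwarz bound $\|\mathrm{proj}\,v\|^2\ge\langle v,f\rangle^2/\langle f,Gf\rangle$ then yields~\eqref{eq:main-two-particle-bound}.

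\emph{Stage 2: the three-state chain.} Let $\Delta=\rho^{(2)}-q_nI$, where $\rho^{(2)}$ is the two-particle reduced density matrix on $\Lambda^2\CC^n$. Passive circuits act on it by $\Lambda^2V$ conjugation. The identity commutes with this action, and the one-particle reduction is fixed at $I/2$, so $\Delta$ evolves covariantly. Define
\[
X_a=\sum|\Delta_{\{ij\},\{kl\}}|^2 ,
\]
summed over index pairs sharing $a=2,1,0$ indices. Then $S_0\le X_2$, and in fact $S_0$ equals the diagonal part of $X_2$. After averaging over the monomial group the relevant quantity is $\mathbb E X_2$ together with the diagonal subpart. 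To make the analysis rigorous we would instead track $\mathbb E S_0$ directly.

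One Haar two-mode gate on an edge mixes the diagonal pair entry $\{i,j\}$ with $\{i',j'\}$ when the edge touches them. The Haar second moment of $2\times2$ unitary entries gives explicit transfer coefficients, and averaging over a uniform matching makes each class's expected update linear in $(X_0,X_1,X_2)$. The resulting $3\times3$ matrix has invariant $\sum_aX_a=\|\Delta\|_{\rm HS}^2$, which gives eigenvalue $1$. Its other two eigenvalues should coincide with $\lambda_1,\lambda_2$ of Lemma~\ref{lem:matchingchain}, because the pair space $\Lambda^2$ decomposes into the low-degree Johnson components; we verify this by direct computation. Projecting $\mathbb E S_0(t)$ onto the eigenvectors then gives $e_n+a_n\lambda_1^t+b_n\lambda_2^t$.

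\emph{Stage 3: initial data and conclusion.} For the magic input at $t=0$, the pair correlations are $1/2$ for partner pairs within a block ($\{1,2\}$ and $\{3,4\}$), $0$ for cross pairs within a block, and $1/4$ across blocks. The off-diagonal $\Delta$ entries come from $|0011\rangle\leftrightarrow|1100\rangle$ coherences. These values fix the three initial class sums, and expanding them in the eigenbasis yields $e_n,a_n,b_n$.

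Nonnegativity of the coefficients would then let us drop the first two terms. The main obstacle is that $S_0$ is not itself closed under the dynamics. I would try to show that $\mathbb E S_0$ is a fixed linear functional of $(X_0,X_1,X_2)$ within an invariant 3-dimensional space spanned by the monomial-averaged structures. The asymptotics $b_n/\eta_n\sim3n/2$ follows by inspection. Finally, $\ratio_{\rm Haar}<4$ together with $\lambda_2\ge(4/9)(1-O(1/n))$, from Lemma~\ref{lem:quantmatching}, shows that $1+(3/2)n\lambda_2^t\le4q$ forces $t\ge\log n/\log(9/4)-C_q$, and that divergence of the collision follows when $n(4/9)^t\to\infty$.
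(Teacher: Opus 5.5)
Your Stage 1, your Stage 3 and your closing depth and divergence argument follow the paper's proof. The Gram operator does act as $\eta_n$ on zero-row-sum arrays, and your initial pair data are correct. The gap is in Stage 2.

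\textbf{The obstacle you name is not one.} Two pair sets that share two indices are equal, so your $X_2$ is exactly the sum of squared diagonal entries: $S_0=X_2$.

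\textbf{The real issue is closure.} You assert that averaging one matching layer makes $\mathbb E X_a'$ linear in $(X_0,X_1,X_2)$. For a general Hermitian $\Delta$ on $\Lambda^2\CC^n$ this is false. Phase invariance of the Haar gates removes only cross terms between entries of different torus weight. The one-layer update still produces cross terms of two kinds:
\begin{itemize}
\item on the diagonal, $\sum_c\sum_{a\ne b}\Delta_{ac,ac}\Delta_{bc,bc}$ (from an edge $\{a,b\}$ mixing $\{a,c\}$ with $\{b,c\}$);
\item on the weight-$(e_i-e_\ell)$ entries, $\bigl|\sum_c\pm\Delta_{ic,\ell c}\bigr|^2$.
\end{itemize}
These are $S_n$-invariant quadratic forms independent of the class sums. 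They reduce to multiples of $S_0$ and $X_1$ only if the full one-index contraction of $\Delta$ vanishes. That means zero row sums on the diagonal and also $\sum_c\pm\Delta_{ic,\ell c}=0$ for $i\ne\ell$.

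This condition does hold here. The entire one-particle density matrix, not just its diagonal, equals $I_n/2$ and is preserved by passive circuits, so the contraction of $\rho^{(2)}$ is $(k-1)I_n/2=q_n(n-1)I_n$. You record this fact, but in Stage 2 you use it only for covariance, which is automatic because $I$ commutes with $\Lambda^2V$.

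\textbf{How the paper closes the chain.} The contraction-free part of $\operatorname{End}(\Lambda^2\CC^n)$ is a single irreducible $U(n)$ module of highest weight $(1,1,0^{n-4},-1,-1)$. Its restriction to $U(1)^n\rtimes S_n$ is multiplicity free, with exactly three sectors. These are the torus-weight types $0$, $e_i-e_\ell$ and $e_i+e_j-e_\ell-e_m$, which are precisely your three overlap classes. A virtual monomial twirl may be inserted, because each class sum is a monomial-invariant quadratic form and the layer law is monomial covariant. Schur's lemma then gives an exact reversible three-state chain.

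\textbf{How the paper gets the matrix and eigenvalues.} It derives the matrix \eqref{eq:magic-2rdm-chain} explicitly through a particle-hole identification with the copy-spin picture. The first column comes from counting local singlets, the third from $DE$ edges, and the middle from detailed balance. Factoring the characteristic polynomial then gives $1,\lambda_1,\lambda_2$.

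Your direct Haar second-moment route is a legitimate alternative, but only once you impose the vanishing contraction explicitly. Your heuristic that $\Lambda^2$ decomposes into low-degree Johnson components does not replace that factorization.
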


\begin{proof}
Throughout, $k=n/2$ and $\eta=\eta_n$.
Let $\Gamma_2$ be the two-particle reduced density matrix, regarded as an operator on
$\Lambda^2\mathbb C^n$, with the convention
\[
 (\Gamma_2)_{ij,\ell m}
 =\langle a_\ell^\dagger a_m^\dagger a_j a_i\rangle,
 \qquad i<j,\quad\ell<m.
\]
In particular $(\Gamma_2)_{ij,ij}=\langle n_i n_j\rangle$, and
$\Gamma_2$ transforms by conjugation with $\Lambda^2 U$ under a passive
transformation $U$.
The canonical magic input has one-particle reduced density matrix $I_n/2$;
this identity is preserved by every passive circuit. Put
\[
 q=\frac{k(k-1)}{n(n-1)}=\frac{n-2}{4(n-1)},\qquad
 T=\Gamma_2-qI_{\Lambda^2\mathbb C^n}.
\]
The one-index contraction of $T$ vanishes, since the contraction of $\Gamma_2$
is $(k-1)I_n/2$ and that of $qI$ is $q(n-1)I_n$.
Define
\[
 S_0(T)=\sum_{i<j}|T_{ij,ij}|^2,
\]
and let $S_1(T)$ and $S_2(T)$ be the sums of squared matrix entries whose row
and column index pairs intersect in exactly one and zero indices, respectively.
Thus $S_0+S_1+S_2=\|T\|_{\rm HS}^2$.

\paragraph{A pointwise collision lower bound.}
For real coefficients $a_{ij}=a_{ji}$, $a_{ii}=0$, satisfying
$\sum_{j\ne i}a_{ij}=0$ for every $i$, set
$f_a(y)=\sum_{i<j}a_{ij}y_i y_j$. Under the uniform measure $u$ on the
weight-$k$ slice, $\mathbb E_u f_a=0$ and
\[
 \mathbb E_u f_a^2=\eta\sum_{i<j}a_{ij}^2,\qquad
 \eta=\mu_2-2\mu_3+\mu_4
 =\frac{n(n-2)}{16(n-1)(n-3)},\qquad
 \mu_\ell=\frac{k^{\underline\ell}}{n^{\underline\ell}}.
\]
Indeed, the sum of $a_ea_f$ over unordered distinct pairs of edges sharing a
vertex is $-\sum_ea_e^2$, while the corresponding sum over disjoint edge pairs
is $\tfrac12\sum_ea_e^2$. These identities follow by squaring the row-sum
constraints and the resulting identity $\sum_ea_e=0$.
For the output law $p$ of any fixed passive circuit, put
$a_{ij}=T_{ij,ij}=\mathbb E_p[y_i y_j]-q$.
These coefficients have zero row sums. Writing $h(y)=Np(y)-1$ gives
\[
 \langle h,f_a\rangle_{L^2(u)}=S_0(T),\qquad
 \|h\|_{L^2(u)}^2=N\sum_y p(y)^2-1.
\]
Cauchy--Schwarz, with the zero case understood separately, therefore yields
\begin{equation}\label{eq:magic-2rdm-parseval}
 N\sum_y p(y)^2\ge1+\frac{S_0(T)}\eta.
\end{equation}

\paragraph{The exact three-sector matching chain.}
The contraction-free operator space containing $T$ is the irreducible $U(n)$
module of highest weight $(1,1,0^{n-4},-1,-1)$. This follows from the same
skew-Howe decomposition used in Appendix~\ref{app:proofs}: under the
particle-hole identification displayed below, the three summands of
$\operatorname{End}(\Lambda^2\mathbb C^n)$ are the scalar, adjoint, and
contraction-free modules. The contraction maps the scalar and adjoint
summands isomorphically onto $\operatorname{End}(\mathbb C^n)$ and has the
third summand as its kernel. Explicitly, if $L(A)$ is the induced one-body
operator on $\Lambda^2\mathbb C^n$, then its contraction is
$(n-2)A+(\operatorname{Tr}A)I$, which verifies the first two assertions.
Its restriction to the monomial
group has exactly three inequivalent irreducible sectors, with dimensions
\[
 d_0'=\frac{n(n-3)}2,\qquad
 d_1'=n(n-1)(n-3),\qquad
 d_2'=\frac{n(n-1)(n-2)(n-3)}4.
\]
Here the zero torus weight consists of the pair-indexed diagonal arrays with
zero row sums, namely the Specht module $S^{(n-2,2)}$.
A weight $e_i-e_j$ has $n-2$ coordinates indexed by the remaining shared
index and one zero-sum constraint; its stabilizer acts by the irreducible
standard module of dimension $n-3$ (up to an immaterial sign twist).
A weight $e_i+e_j-e_\ell-e_m$, with four distinct indices, has a
one-dimensional fiber. Little-group induction gives the dimensions displayed
above and proves multiplicity-freeness.

Regard $T$ as a vector $|T\rangle$ in operator space with the
Hilbert--Schmidt inner product. Consequently, the averaged matching twirl of the rank-one operator
$|T\rangle\langle T|$ induces a reversible three-state chain on these squared
norms. A virtual monomial twirl can be inserted because each $S_r$ is a
monomial-invariant quadratic observable and the layer law is monomial covariant.
The chain therefore applies to arbitrary initial $T$ in this module; no physical
input permutation is required. Its stationary probabilities are
\[
 \widehat\pi_0=\frac2{n(n+1)},\qquad
 \widehat\pi_1=\frac{4(n-1)}{n(n+1)},\qquad
 \widehat\pi_2=\frac{(n-1)(n-2)}{n(n+1)}.
\]
In the column-input convention the transition matrix is
\begin{equation}\label{eq:magic-2rdm-chain}
 \widehat A=\frac1{9(n-1)(n-3)}
 \begin{pmatrix}
 (4n-11)(n-1)&2n-7&2\\
 (4n-14)(n-1)&6n^2-22n+14&4(3n-10)\\
 (n-2)(n-1)&(n-2)(3n-10)&9n^2-48n+65
 \end{pmatrix}.
\end{equation}

Here is a local derivation of the matrix, independent of an eigenvalue fit.
The particle-hole identification
\[
 \Lambda^2\mathbb C^n\otimes(\Lambda^2\mathbb C^n)^*
 \cong (\det)^{-1}\otimes
       \Lambda^2\mathbb C^n\otimes\Lambda^{n-2}\mathbb C^n
\]
identifies the contraction-free module with the copy-spin
$S=n/2-2$ summand. The fixed copy occupations select one weight line in
this spin representation. A torus weight with $r$ double and $r$ empty
modes has $n-2r$ copy doublets, and the multiplicity of spin $S$ in those
doublets is
\[
 f_0=\binom n2-n=\frac{n(n-3)}2,\qquad f_1=n-3,\qquad f_2=1;
\]
there are no such weights with $r>2$.
The local mode-$U(2)$ twirl acts exactly as in the collision chain: the two
$DE$ orientations and the $SS$ copy singlet form the three weight lines of
$\operatorname{Sym}^2\mathbb C^2$, which the twirl makes equiprobable;
the $SS$ copy triplet cannot leave $SS$. The sectors with a different local
particle number preserve the number of double modes on that edge.

At $r=0$ all $n/2$ matching edges are $SS$. If $J$ counts local singlet
edges in the normalized global spin-$S$ multiplicity state, then fixing $q$
such edges leaves the spin-$S$ multiplicity space on $n-2q$ doublets. Hence
\[
 \mathbb E J=\frac n2\frac{f_1}{f_0}=1,\qquad
 \mathbb E[J(J-1)]
 =\frac n2\left(\frac n2-1\right)\frac{f_2}{f_0}
 =\frac{n-2}{2(n-3)}.
\]
This trace identity may equivalently be evaluated on the entire spin-$S$
isotypic space: both numerator and denominator then acquire the same factor
$2S+1$. The local singlet projectors commute with the global spin action,
so the result also holds on the selected copy-weight line.
Conditionally on these commuting local projectors, each singlet independently
produces one double mode with probability $2/3$. Thus
\[
 \mathbb E[R_1\mid R_0=0]=\frac23,\qquad
 \mathbb E[R_1(R_1-1)\mid R_0=0]
 =\frac{2(n-2)}{9(n-3)}.
\]
Since $R_1\le2$, these determine the first column of
\eqref{eq:magic-2rdm-chain}.

At $r=2$, the remaining $n-4=2S$ doublets have maximal total spin and are
fully symmetric, so every $SS$ edge is a triplet. Only an initially $DE$
edge can reduce the sector label, and it does so with probability $1/3$.
If $J$ is now the number of $DE$ edges in the uniform matching, elementary
matching counts give
\[
 \mathbb E J=\frac4{n-1},\qquad
 \mathbb E[J(J-1)]=\frac4{(n-1)(n-3)}.
\]
Thus $2-R_1$, conditionally on $J$, is $\operatorname{Bin}(J,1/3)$, giving
the third column. Detailed balance with $d_r'$ then determines the two
off-diagonal entries of the middle column, and its column sum determines
the remaining entry. This proves the matrix explicitly.

Its characteristic polynomial factors as
\[
 \det(zI-\widehat A)=(z-1)(z-\lambda_1)(z-\lambda_2),
\]
where
\[
 \lambda_1=\frac{2(n-2)}{3(n-1)},\qquad
 \lambda_2=\frac{4n^2-19n+27}{9(n-1)(n-3)}.
\]
These are precisely the first two nontrivial eigenvalues of the full
collision chain. Both lie strictly between zero and one for $n\ge4$,
and they are distinct for these integer sizes.

\paragraph{Initial norms and the lower bound at every depth.}
For a canonical magic block, the two supported pair occupations have
probability $1/2$ and the other four pair occupations have probability zero.
Pairs in different blocks have probability $1/4$. The only off-diagonal
two-particle entries are the two complementary-pair entries per block,
each of magnitude $1/2$. Therefore
\[
 S_0(0)=\frac{n(3n-4)}{32(n-1)},\qquad
 S_1(0)=0,\qquad S_2(0)=\frac n8,
\]
and
\[
 H:=\|T\|_{\rm HS}^2=\frac{n(7n-8)}{32(n-1)}.
\]
Writing $S_0(t)$ for the averaged value after $t$ independent matching layers,
\eqref{eq:magic-2rdm-chain} and its characteristic polynomial give
\begin{equation}\label{eq:magic-2rdm-evolution}
 S_0(t)=e_n+a_n\lambda_1^t+b_n\lambda_2^t,
\end{equation}
with $e_n=H\widehat\pi_0$ and the coefficients stated in the lemma.
The remaining two coefficients follow by using the stated $S_0(0)$ and
\[
 S_0(1)=\frac{n(12n^2-49n+52)}{288(n-1)(n-3)}
\]
to solve the two linear equations after subtracting $e_n$.
Here $a_n\ge0$ and $b_n>0$ for all $n\ge4$, and $b_n\sim3n/32$.
Averaging \eqref{eq:magic-2rdm-parseval} therefore proves the all-depth estimate
\begin{equation}\label{eq:magic-early-depth-lower}
 \ratio_{n,t}\ge1+\frac{b_n}{\eta}\lambda_2^t,
 \qquad t\ge0.
\end{equation}
In particular $b_n/\eta=(3/2+o(1))n$ and
$\lambda_2=(4/9)(1+O(n^{-1}))$.
For any nonnegative depth sequence satisfying $n(4/9)^{t(n)}\to\infty$,
one has $t(n)\le\log n/\log(9/4)$ for all sufficiently large $n$. Thus
$\lambda_2^{t(n)}=(4/9)^{t(n)}(1+o(1))$ uniformly for these depths,
and \eqref{eq:magic-early-depth-lower} proves the asserted divergence.

Fix $q>1$ and choose $C=C(q)$ large enough. At
$t_0=\lfloor\log n/\log(9/4)-C\rfloor$, the lower bound is at least
$1+(3/2+o(1))(9/4)^C>q\ratio_{\rm Haar}(n)$ for large $n$.
Because this bound is nonincreasing in $t$, it excludes every earlier crossing.
No monotonicity of the collision itself is needed.
\end{proof}

\clearpage
\section{Worst-case hardness at native depth four}\label{hard:sec:worst}

We work with the paired input and ordered two-mode gates of
Section~\ref{sec:model}. The proof has three stages: express each input
block as a logical Bell pair, fuse these pairs into a graph state, and
postselect a measurement pattern for a universal computation.

\subsection{The slot picture}
Fix $w$ \emph{slots}. Slot $q$ carries an ordered pair of distinct modes $(r_q(0),r_q(1))$, its two rails, and all $2w$ rails are distinct.
For $x\in\{0,1\}^w$ define the logical basis state
\begin{equation}
\ket{x}_L:=a^\dagger_{r_1(x_1)}\,a^\dagger_{r_2(x_2)}\cdots a^\dagger_{r_w(x_w)}\ket{\rm vac},
\end{equation}
with creation operators ordered by \emph{slot index}, not by mode label.
Each $\ket{x}_L$ equals $\pm$ an occupation-basis state, with a sign fixed by $x$.
Probabilities of occupation strings can therefore be read off in either basis.

\begin{lemma}[Slot calculus]\label{hard:lem:slot}
\begin{enumerate}[label=(\roman*),itemsep=1pt]
\item For $u\in U(2)$, the gate $F_{r_q(0)r_q(1)}(u)$ on the two rails of slot $q$ acts on $\{\ket{x}_L\}$ as $u$ on qubit $q$, with no signs.
\item Give block $b$ the slots $X_b=(4b+1,4b+3)$ and $Y_b=(4b+2,4b+4)$, listed block by block. Then $\ket{\Psi_{\rm in}}=\bigotimes_b(\ket{00}_L+\ket{11}_L)/\sqrt2$ on the pairs $(X_b,Y_b)$, a product of logical Bell pairs.
\end{enumerate}
\end{lemma}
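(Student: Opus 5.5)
The plan is to work entirely with the defining conjugation relation of second quantization, $\Gamma(U)a_i^\dagger\Gamma(U)^\dagger=\sum_jU_{ji}a_j^\dagger$ and $\Gamma(U)|{\rm vac}\rangle=|{\rm vac}\rangle$. This relation is basis-independent, so no Jordan--Wigner reordering signs ever need to be computed explicitly. Throughout I will use the convention that an occupation-basis state is the product of creation operators in increasing mode label applied to the vacuum. The slot states $\ket{x}_L$ are then products of the same operators in a different order, hence equal to $\pm$ an occupation state.

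For (i), write $F_{r_q(0)r_q(1)}(u)=\Gamma(U)$, where $U$ acts as $u$ on the ordered pair $(r_q(0),r_q(1))$ and as the identity elsewhere. This is exactly how the ordered gate convention of \eqref{eq:flogate} is defined: the local basis vector $|10\rangle$ means that the first listed mode is occupied. Inserting $\Gamma(U)^\dagger\Gamma(U)=I$ between consecutive factors of $\ket{x}_L$ leaves every creation operator of slots $p\neq q$ unchanged, because their rails are distinct from $r_q(0),r_q(1)$. The factor in position $q$ becomes $\sum_{c}u_{c\,x_q}a^\dagger_{r_q(c)}$, and the vacuum is fixed. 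Each term is again a product in slot order, with only the $q$th factor replaced, so
\[
 \Gamma(U)\ket{x}_L=\sum_{c\in\{0,1\}}u_{c\,x_q}\ket{x_1\cdots c\cdots x_w}_L .
\]
This is $u$ acting on qubit $q$, and no operator was moved past another, so no signs appear. The only point I need to check carefully is that the indices $u_{c\,x_q}$ (row $c$, column $x_q$) match the local matrix $1\oplus u\oplus\det u$ under the identification rail $0\leftrightarrow$ first mode.

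For (ii), I will fix one block $b$ with modes $4b+1,\dots,4b+4$. The logical state $\ket{00}_L$ on $(X_b,Y_b)$ is $a^\dagger_{4b+1}a^\dagger_{4b+2}|{\rm vac}\rangle$, which is the local occupation $|1100\rangle$ with sign $+$, since the two operators are already in increasing label order. Likewise $\ket{11}_L$ is $a^\dagger_{4b+3}a^\dagger_{4b+4}|{\rm vac}\rangle=|0011\rangle$, again with sign $+$. Hence $|\Psi_4\rangle$ on block $b$ equals $(\ket{00}_L+\ket{11}_L)/\sqrt2$. To pass to all blocks, I will observe two facts. First, the slots are listed block by block, and within every term each block contributes exactly two creation operators on modes with labels smaller than those of all later blocks. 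So the slot-ordered product of the $2B$ operators coincides, term by term, with the label-ordered product. Second, each block's creation bilinear is even, so these bilinears commute with one another. Together these facts identify the fermionic product $\prod_b(\cdots)|{\rm vac}\rangle$ defining $|\Psi_{\rm in}\rangle$ with the logical tensor product, with no extra signs.

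I expect the main obstacle to be purely a matter of conventions: making sure that "tensor product" of logical Bell pairs is read through the slot-ordered definition of $\ket{x}_L$, and that the block ordering makes the two orderings agree. Once these conventions are pinned down, both items reduce to the one-line conjugation computation above.
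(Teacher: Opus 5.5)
Your proposal is correct and follows essentially the paper's proof. In (i), the conjugation rule $\Gamma(U)a_i^\dagger\Gamma(U)^\dagger=\sum_jU_{ji}a_j^\dagger$ replaces slot $q$'s creation operator in place, so $u$ acts on qubit $q$ without any reordering signs. In (ii), you identify the block bilinears $a^\dagger_{4b+1}a^\dagger_{4b+2}$ and $a^\dagger_{4b+3}a^\dagger_{4b+4}$ with $\ket{00}_L$ and $\ket{11}_L$, and use that even monomials from different blocks commute. Your extra check that the slot-ordered and label-ordered products agree term by term makes explicit a convention the paper leaves implicit.
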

\begin{proof}
A passive transformation satisfies $\Gamma(U)a_i^\dagger\Gamma(U)^\dagger=\sum_jU_{ji}a_j^\dagger$ and $\Gamma(U)\ket{\rm vac}=\ket{\rm vac}$.
Hence it replaces each creation operator in the product \emph{in place} by a linear combination.
In (i), the operator in slot $q$ becomes a combination of that slot's two rails, with coefficients given by $u$.
For (ii), $a^\dagger_{4b+1}a^\dagger_{4b+2}=a^\dagger_{r_{X}(0)}a^\dagger_{r_{Y}(0)}$ and $a^\dagger_{4b+3}a^\dagger_{4b+4}=a^\dagger_{r_{X}(1)}a^\dagger_{r_{Y}(1)}$.
Different blocks contribute even monomials, which commute.
\end{proof}

\subsection{A fermionic type-I fusion}
\begin{lemma}[Fusion]\label{hard:lem:fusion}
Let $\ket{\phi}=\sum_x\phi(x)\ket{x}_L$ have every slot singly occupied. Take slots $A$ and $B$ with slot indices $\alpha<\beta$, and set $m=r_A(0)$, $m'=r_B(1)$.
Apply $F_{mm'}(H)$ and postselect the occupations $(n_m,n_{m'})=(1,0)$, implemented as $(1-n_{m'})\,a_m$.
The unnormalized result is $\ket{\phi'}=(-1)^\beta\,Z_V K_0\ket\phi$, where
\[
K_0=\tfrac{1}{\sqrt2}\big(\ket0_V\bra{00}_{AB}+\ket1_V\bra{11}_{AB}\big).
\]
The new slot $V$ sits at slot index $\alpha$ with rails $(r_B(0),r_A(1))$, and slot $B$ is removed.
In particular the sign depends only on $(\alpha,\beta)$, never on the other qubits.
\end{lemma}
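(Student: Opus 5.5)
The plan is a direct computation in the slot basis of Lemma~\ref{hard:lem:slot}. By linearity it suffices to treat one basis vector $\ket{x}_L=a^\dagger_{r_1(x_1)}\cdots a^\dagger_{r_w(x_w)}\ket{\rm vac}$. Every slot carries exactly one creation operator, so each reordering sign depends only on how many slot positions an operator is moved across. That count depends on $(\alpha,\beta)$ and not on the values $x_q$ of the other slots, which is why the final sign can be independent of the other qubits.

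\emph{Step 1: the gate.} As in the proof of Lemma~\ref{hard:lem:slot}, $F_{mm'}(H)$ replaces each creation operator in place. From $\Gamma(U)a_i^\dagger\Gamma(U)^\dagger=\sum_jU_{ji}a_j^\dagger$ with $U=H$ on the ordered pair $(m,m')$,
\[
a_m^\dagger\mapsto(a_m^\dagger+a_{m'}^\dagger)/\sqrt2,\qquad a_{m'}^\dagger\mapsto(a_m^\dagger-a_{m'}^\dagger)/\sqrt2 .
\]
All other operators are unchanged, and only slots $A$ and $B$ can contain $m$ or $m'$.

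\emph{Step 2: the four cases for $(x_A,x_B)$.}
\begin{itemize}
\item $(1,0)$: neither $m$ nor $m'$ is occupied, so $a_m$ annihilates the term.
\item $(0,1)$: the operators at positions $\alpha$ and $\beta$ become $\tfrac12(a_m^\dagger+a_{m'}^\dagger)(a_m^\dagger-a_{m'}^\dagger)=-a_m^\dagger a_{m'}^\dagger$. This occupies $m'$, so the factor $1-n_{m'}$ kills it.
\item $(0,0)$: keep the $a_m^\dagger/\sqrt2$ term at position $\alpha$. Applying $a_m$ removes it with sign $(-1)^{\alpha-1}$. The surviving $a^\dagger_{r_B(0)}$ now sits at position $\beta-1$. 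Moving it to position $\alpha$ crosses $\beta-1-\alpha$ single-fermion operators, so the total sign is $(-1)^{\beta}$.
\item $(1,1)$: keep the $+a_m^\dagger/\sqrt2$ term at position $\beta$. Removing it gives $(-1)^{\beta-1}$. The operator $a^\dagger_{r_A(1)}$ already sits at position $\alpha$ and is rail $1$ of $V$.
\end{itemize}

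\emph{Step 3: collect the terms.} With $V$ placed at index $\alpha$ with rails $(r_B(0),r_A(1))$, the surviving terms are
\[
(-1)^\beta\tfrac1{\sqrt2}\bigl(\phi(\cdot,0_A,0_B)\ket{0}_V-\phi(\cdot,1_A,1_B)\ket1_V\bigr),
\]
which is exactly $(-1)^\beta Z_VK_0\ket\phi$. The main obstacle is the sign bookkeeping in Step 2. It requires fixing the convention that the slot ordering defines the operator order, and checking that the $(1,1)$ case picks up the relative $-1$ that produces $Z_V$. I would verify this bookkeeping on a small example with $w=2$ or $w=3$.
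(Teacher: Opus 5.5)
Your proposal is correct and follows the paper's proof essentially verbatim. Like the paper, you check the four logical basis states: the postselection kills $(1,0)$ (no particle in $m,m'$) and $(0,1)$ (both occupied), while the surviving cases acquire signs $(-1)^{\alpha-1}(-1)^{\beta-\alpha-1}=(-1)^\beta$ and $(-1)^{\beta-1}$, which gives $(-1)^\beta Z_VK_0$. The relative minus sign you flag as the main obstacle is already established by your $(1,1)$ computation, so the small-example check you propose is not needed.
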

\begin{proof}
It suffices to check the four logical basis states. The postselection keeps exactly one particle in the mixed modes $(m,m')$, so it kills $01$ (both occupied) and $10$ (both empty). The two surviving cases are
\[
\begin{array}{c|c|c}
(x_A,x_B)&\text{remaining rail}&\text{amplitude}\\\hline
(0,0)&r_B(0)&(-1)^{\alpha-1}(-1)^{\beta-\alpha-1}H_{11}=(-1)^\beta/\sqrt2\\
(1,1)&r_A(1)&(-1)^{\beta-1}H_{12}=-(-1)^\beta/\sqrt2.
\end{array}
\]
In the first row, annihilation passes $\alpha-1$ occupied slots, and moving the remaining rail to slot $\alpha$ passes another $\beta-\alpha-1$. In the second row, annihilation passes $\beta-1$ slots and the remaining rail is already in place. These signs depend only on the slot indices. The two amplitudes give $(-1)^\beta Z_VK_0$.
\end{proof}
For graph states $\ket{G}=2^{-|V|/2}\sum_x(-1)^{\sum_{uv\in E}x_ux_v}\ket x$, applying $K_0$ to non-adjacent vertices $a,b$ gives $\tfrac12\ket{G'}$.
Here $G'$ merges $a,b$ into one vertex with neighborhood $N(a)\triangle N(b)$; this is the type-I fusion of~\cite{BrowneRudolph2005}.
Since $K_0(Z\otimes I)=ZK_0$, byproducts $Z$ on fused vertices propagate to the merged vertex.

\subsection{Postselected measurement-based computation}
Write $\ket{+_\phi}=(\ket0+e^{i\phi}\ket1)/\sqrt2$ and $R_\phi=H\,{\rm diag}(1,e^{-i\phi})$, so that $\bra0R_\phi=\bra{+_\phi}$.
\begin{fact}[\cite{Raussendorf2003MBQC,Danos2007Calculus,Broadbent2009Blind}]\label{hard:fact:mbqc}
Let $Q$ be a Clifford$+T$ circuit on $w$ qubits with $s$ gates. One can compute in time $\poly(w,s)$ the following data.
\begin{itemize}[leftmargin=*,itemsep=1pt]
\item A subgraph $G=(V,E)$ of the square lattice, so every degree is at most $4$, with $w$ designated output vertices. There are no isolated vertices: a wire of length one is padded with two $\phi=0$ measurements, which implement $H^2=I$.
\item For every other vertex, either an angle $\phi_v\in\frac\pi4\mathbb{Z}$ or a $Z$-deletion.
\end{itemize}
For a known local Clifford output frame $K$ and an irrelevant global phase $e^{i\chi}$, these satisfy
\[\textstyle
\big(\bigotimes_{v}\bra{m_v}\big)\ket{G}=e^{i\chi}2^{-(|V|-w)/2}\,KQ\ket{+^{w}}.
\]
Here the product runs over non-output vertices, with $\bra{m_v}=\bra{+_{\phi_v}}$ or $\bra{0}$.
The brickwork patterns of~\cite{Broadbent2009Blind} have this form and have flow.
The normalization follows from the one-bit teleportation identity
\[
 (\bra{+_\phi}\otimes I)\,CZ(\ket\psi\otimes\ket+)
 =2^{-1/2}H\operatorname{diag}(1,e^{-i\phi})\ket\psi
 =2^{-1/2}R_\phi\ket\psi,
\]
valid for every input qubit $\ket\psi$, with $CZ$ the controlled-$Z$ gate.
We use the all-zero measurement branch: each measured vertex contributes
$2^{-1/2}$, and the outcome-dependent byproduct corrections are trivial.
A $Z$-deletion with outcome zero also contributes $2^{-1/2}$ and deletes
its vertex without a byproduct. No generalized-flow result is needed here.
Here $K=\bigotimes_vK_v$ is a tensor product of single-qubit Cliffords on the outputs; an example is Hadamards fixed by wire-length parity. This known frame is absorbed into the final postselection, which becomes $\bra0K_v^{-1}$ on output $v$. Up to phase this is one of $\bra0$, $\bra1$ or $\bra{+_\phi}$ with $\phi\in\frac\pi2\mathbb Z$.
\end{fact}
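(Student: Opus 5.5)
The plan is to take the brickwork construction of~\cite{Broadbent2009Blind} as the pattern and to verify the stated identity with the all-zero branch by a direct induction along the wires. No flow or byproduct analysis is needed. First I compile $Q$ in time $\poly(w,s)$ into the brickwork gate set. Each logical wire becomes a horizontal path of the square lattice, and each gate of $Q$ becomes a constant number of bricks. The brick angles are $\phi_v\in\frac\pi4\mathbb Z$ and implement single-qubit rotations $R_\phi=H\operatorname{diag}(1,e^{-i\phi})$ together with $CZ$ between neighbouring wires. Products of such $R_\phi$ with $\phi\in\frac\pi4\mathbb Z$ generate $H$, $S$ and $T$ up to phase, so every Clifford$+T$ gate is covered. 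Where a region of the lattice is not needed, I remove it by $Z$-deletions, and each wire of length one is padded by two $\phi=0$ measurements, since $R_0R_0=H^2=I$. The graph is a subgraph of the square lattice, so every degree is at most $4$.

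The core step is the induction. Because all $CZ$ gates in $\ket G=\prod_{uv\in E}CZ_{uv}\ket{+}^{\otimes V}$ commute, I order the vertices column by column. I then apply a vertex's edges to later columns only after that vertex has been measured. At every stage the unmeasured frontier carries a logical state $\ket{\psi_c}$ on $w$ qubits. Advancing one column uses the displayed one-bit teleportation identity on each wire. The measurement $\bra{+_{\phi_v}}$ applied to the horizontal $CZ$ and the fresh $\ket+$ gives $2^{-1/2}R_{\phi_v}$ on that logical qubit. A vertical edge between two frontier vertices of neighbouring wires acts as $CZ$ on the corresponding logical pair, and it commutes past the other measurements because it touches neither of them.

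For a $Z$-deletion I use $\bra0_v\,CZ_{vu}=\bra0_v\otimes I_u$ for every neighbour $u$. Thus $\bra0_v$ removes all edges at $v$ and contracts the factor $\ket+_v$ to $2^{-1/2}$. This deletes the vertex exactly, with no byproduct. The initial state is $\ket{+^w}$, the first column of the lattice. Collecting one factor $2^{-1/2}$ per non-output vertex gives $2^{-(|V|-w)/2}$ times a product of the brick unitaries. By the choice of compilation, this product equals $e^{i\chi}KQ$. The frame $K$ is the residual tensor product of single-qubit Cliffords, for example Hadamards fixed by the parity of the wire length, and it is computable.

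Finally I absorb $K$ into the output postselection. Each $K_v$ is a single-qubit Clifford, so $\bra0K_v^{-1}$ is, up to phase, a Pauli eigenbra: one of $\bra0$, $\bra1$ or $\bra{+_\phi}$ with $\phi\in\frac\pi2\mathbb Z$. I expect the main obstacle to be the bookkeeping in the induction. The commutation argument must show that vertical edges attached to a vertex measured in an earlier column are correctly treated as logical $CZ$ gates and not as spurious interactions. I would handle this by fixing the convention that a vertical edge is applied exactly when both its endpoints are on the frontier. In the brickwork layout this is always the case, because vertical edges join vertices of the same column.
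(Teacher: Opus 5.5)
Your proposal is correct and is essentially the paper's own justification: the brickwork pattern of~\cite{Broadbent2009Blind} supplies the compilation with angles in $\frac\pi4\mathbb Z$. The all-zero-branch factor $2^{-(|V|-w)/2}$ comes from one teleportation step $2^{-1/2}R_{\phi_v}$ per measured vertex plus $\bra0_v\,CZ_{vu}=\bra0_v\otimes I_u$ for each $Z$-deletion, and $K$ is absorbed into stabilizer bras at the outputs. Two wording fixes: the horizontal edge from $v$ into the next column must be applied \emph{before} $v$ is measured (your teleportation step does exactly this; only edges among later columns are deferred), and a two-qubit gate between non-adjacent wires costs $O(w)$ bricks of swaps rather than a constant number, which still keeps the construction polynomial in $(w,s)$.
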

\begin{fact}[\cite{Dawson2005Polynomials,Aaronson2005Postselection}]\label{hard:fact:sharpP}
Computing $|\bra{0^w}Q\ket{0^w}|^2$ exactly for Clifford$+T$ circuits $Q$ is $\#\mathsf P$-hard.
Explicitly, let $f$ be a Boolean formula on $m$ variables with $s$ satisfying assignments, and put $f'(x,b)=b\wedge f(x)$. Compute $f(x)$ into an ancilla with Toffoli gates, apply $CZ$ between $b$ and that ancilla, and uncompute; this realizes the phase oracle $(-1)^{f'}$. Sandwiching it between Hadamards on all $m+1$ input qubits gives a Clifford$+T$ circuit $Q_f$ with $\bra0Q_f\ket0=2^{-m-1}\sum_{x,b}(-1)^{b f(x)}=1-s/2^m\ge0$. So $|\bra0Q_f\ket0|^2$ determines $s$. More precisely, the rational values $(1-j/2^m)^2$, $0\le j\le2^m$, are strictly decreasing. Binary search using exact comparisons recovers $s$ in $O(m)$ comparisons, without enumerating all counts or assuming a square-root primitive.
\end{fact}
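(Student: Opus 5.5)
This is Fact~\ref{hard:fact:sharpP}: computing $|\bra{0^w}Q\ket{0^w}|^2$ exactly for Clifford$+T$ circuits $Q$ is $\#\mathsf P$-hard, via the explicit construction $Q_f$. The proof splits into checking that $Q_f$ is a Clifford$+T$ circuit with the stated amplitude, and recovering $s$ from the squared amplitude.

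\emph{Building $Q_f$.} Write $f$ as a reversible circuit of Toffoli, CNOT and $X$ gates that computes $f(x)$ into an ancilla $c$, using further work ancillas initialised to $0$. Each Toffoli has a standard exact Clifford$+T$ decomposition, so the whole circuit is Clifford$+T$ with size polynomial in $|f|$. Then apply $CZ$ between $b$ and $c$ and uncompute. Because the computation is classical, reversible and exact, every ancilla returns to $\ket0$. The net effect on $\ket{x,b}\ket{0}_{\rm anc}$ is the phase $(-1)^{b f(x)}$.

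\emph{The amplitude.} Put Hadamards on the $m+1$ input qubits on both sides. Then
\[
 \bra0Q_f\ket0=2^{-(m+1)}\sum_{x,b}(-1)^{bf(x)}.
\]
The $b=0$ terms contribute $2^m$. The $b=1$ terms contribute $2^m-2s$. Hence the amplitude is $1-s/2^m$, which is nonnegative since $s\le2^m$.

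\emph{Recovering $s$.} The map $j\mapsto(1-j/2^m)^2$ is strictly decreasing on $0\le j\le 2^m$. So the exact value $|\bra0Q_f\ket0|^2$ determines $s$ uniquely. Binary search over $j$, using exact comparisons of rationals with $O(m)$ bits, finds $s$ in $O(m)$ steps. Since $\#\mathsf{SAT}$ is $\#\mathsf P$-complete, this gives $\#\mathsf P$-hardness under a polynomial-time reduction.

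The only point needing care is that uncomputation leaves the ancillas exactly in $\ket0$, with no stray phases. This holds because the reversible classical circuit maps basis states to basis states, and the single $CZ$ phase is diagonal and commutes past the uncomputation.
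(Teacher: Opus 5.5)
Your proposal is correct and follows the same argument as the paper: compute $f$ reversibly into an ancilla, apply a $CZ$ with $b$, uncompute, use the Hadamard sandwich to get amplitude $1-s/2^m$, and binary-search over the strictly decreasing values $(1-j/2^m)^2$. One wording fix: the $CZ$ gate does not literally commute with the uncomputation, which rewrites the ancilla it acts on; the correct justification is that the computing circuit permutes basis states, so on each branch $\ket{x,b,0}$ the sign $(-1)^{bf(x)}$ is a scalar and the inverse circuit returns the ancillas to $\ket{0}$ with that sign intact.
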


\begin{theorem}[Worst case at native depth four]\label{hard:thm:worst}
For every Clifford$+T$ circuit $Q_0$ one can compute in polynomial time the following objects.
\begin{itemize}[leftmargin=*,itemsep=1pt]
\item A number of blocks $B$ and a passive circuit $C_{Q_0}$ on $n=4B$ modes. It consists of four \emph{native layers} $P_1,\dots,P_4$, each a set of gates on disjoint mode pairs, with all gates in $\{R_\phi:\phi\in\frac\pi4\mathbb Z\}$ (note $H=R_0$). Modes outside the pairs of a layer are idle.
\item A string $y_{Q_0}\in\Y$.
\item An integer $\kappa\ge0$.
\end{itemize}
These satisfy
\[
p_{y_{Q_0}}(C_{Q_0})=\big|\bra{y_{Q_0}}\Gamma(C_{Q_0})\ket{\Psi_{\rm in}}\big|^2=2^{-\kappa}\,\big|\bra{0^w}Q_0\ket{0^w}\big|^2 .
\]
Each layer has at most $2B$ pairs.
Hence computing $p_y$ exactly for depth-four passive circuits on the magic input is $\#\mathsf P$-hard.
\end{theorem}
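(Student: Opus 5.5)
We prove Theorem~\ref{hard:thm:worst} by a postselected measurement-based construction assembled from Lemma~\ref{hard:lem:slot}, Lemma~\ref{hard:lem:fusion}, and Facts~\ref{hard:fact:mbqc}--\ref{hard:fact:sharpP}.

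\emph{Graph and resource.} Starting from $Q_0$, apply Fact~\ref{hard:fact:mbqc} to obtain the square-lattice subgraph $G=(V,E)$ with maximum degree at most $4$, the measurement angles and $Z$-deletions, and the output frame $K$. Since $G$ has no isolated vertices, each vertex $v$ has degree $\deg v\in\{1,\dots,4\}$.

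Take one input block per edge, so $B=|E|$. By Lemma~\ref{hard:lem:slot}(ii), the block of edge $uv$ is a logical Bell pair on slots $(X_b,Y_b)$. Assign $X_b$ as a copy of $u$ and $Y_b$ as a copy of $v$. Apply a logical Hadamard $H=R_0$ to one slot of every pair; by Lemma~\ref{hard:lem:slot}(i) this is one gate on that slot's two rails. The Bell pair becomes the two-vertex graph state $(\ket{0+}+\ket{1-})/\sqrt2$ up to a known convention. This is layer $P_1$. It uses one gate per block, hence at most $B\le2B$ pairs.

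\emph{Fusion of vertex copies.} Each vertex $v$ now owns $\deg v\le4$ copies, each a leaf attached to a copy of a neighbour. Merge the copies by fusion (Lemma~\ref{hard:lem:fusion}) in a binary tree of depth $\lceil\log_2\deg v\rceil\le2$; this gives layers $P_2$ and $P_3$.

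Within one layer the fusions must act on disjoint mode pairs. We therefore choose rails so that the mode $r_A(0)$ of one slot pairs with $r_B(1)$ of another. The merged slot inherits the rails $(r_B(0),r_A(1))$, and those rails are untouched by the postselection, so the second round can reuse them. The two copies being fused are non-adjacent, since they lie in different blocks and neither is an endpoint of the other's edge. Each fusion is therefore the type-I fusion $\tfrac12\ket{G'}$. The symmetric difference of neighbourhoods equals the union, because distinct copies have distinct neighbours, so the fused graph is exactly $G$. Each fusion contributes an explicit factor $2^{-1/2}$ and a sign $(-1)^\beta$. The byproduct $Z_V$ propagates by $K_0(Z\otimes I)=ZK_0$ to a known Pauli $Z$ on the final vertex.

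\emph{Measurement layer.} In $P_4$ apply, on each surviving slot, $R_{\phi_v}$ with the angle shifted by $\pi$ wherever a known $Z$ byproduct sits. We use $Z\ket{+_\phi}=\ket{+_{\phi+\pi}}$, which keeps all angles in $\frac\pi4\mathbb Z$. On outputs apply $(K_v^{-1})^\dagger$ realised as some $R_\phi$; an $X$-type frame is handled by flipping the target rail rather than by a gate. $Z$-deletions need no gate, and logical $\bra0$ is a rail postselection. Since $\bra0R_\phi=\bra{+_\phi}$, postselecting logical $0$ on every slot fixes the occupation string $y_{Q_0}$. The fusion postselections $(1,0)$ fix the remaining rails, and every mode is occupied or empty, so the total weight is $k$. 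Idle rails keep their slot occupations.

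\emph{Amplitude and conclusion.} Collecting factors gives
\[
\bra{y_{Q_0}}\Gamma(C_{Q_0})\ket{\Psi_{\rm in}} = \pm e^{i\chi}\,2^{-B/2}\,2^{-f/2}\,2^{-(|V|-w)/2}\,2^{-w/2}\bra{0^w}Q_0\ket{0^w},
\]
where $f$ is the number of fusions and the output vertices begin in $\ket{+^w}$, which we fold into $Q_0$ via a Hadamard layer. This yields $\kappa$. Signs are harmless because the slot calculus makes every sign depend only on slot indices, and the deferred postselections commute with later gates on other modes. Fact~\ref{hard:fact:sharpP} then gives $\#\mathsf P$-hardness.

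\emph{Main obstacle.} The hardest step is the combinatorial schedule: arranging rail labels so that the depth-two fusion trees of all vertices fit in disjoint pairs in exactly two layers, and that the $Z$ byproducts and Hadamard conventions are consistent. I would first fix a canonical ordering of copies per vertex and verify the resulting constraints on a single degree-$4$ vertex.
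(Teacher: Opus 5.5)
Your construction is the paper's: one block per edge of the pattern graph of $Q=Q_0H^{\otimes w}$, a Hadamard on one slot of each block, depth-two fusion trees in layers 2--3, byproduct-shifted angles $R_{\phi_v+\pi b_v}$ in layer 4, and all postselections deferred to the final occupation string. The gap is in the normalization bookkeeping, which is exactly what determines $\kappa$. Your amplitude formula mixes normalized and unnormalized conventions and gives the wrong $\kappa$.

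\begin{itemize}
\item After layer 1 the state is already the \emph{normalized} graph state of the disjoint union of edges, so no separate factor $2^{-B/2}$ remains.
\item On normalized graph states each fusion gives $K_0\ket{G}=\tfrac12\ket{G'}$, as you yourself state, so it contributes $2^{-1}$, not $2^{-1/2}$. One $2^{-1/2}$ comes from $K_0$ and another from the surviving sum having half as many terms.
\item $\bra{0^w}Q_0H^{\otimes w}\ket{+^w}=\bra{0^w}Q_0\ket{0^w}$ exactly, so there is no factor $2^{-w/2}$.
\end{itemize}

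The correct amplitude, up to phase, is $2^{-F}2^{-(|V|-w)/2}\bra{0^w}Q_0\ket{0^w}$ with $F=2|E|-|V|$ fusions. Hence $\kappa=2F+|V|-w=4|E|-|V|-w$. Your expression collapses to $2^{-3B/2}$, i.e.\ $\kappa=3B$ regardless of $|V|$ and $w$.

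The smallest pattern already shows the discrepancy: one edge, one measured vertex, one output, and no fusion. There $(\bra{+_\phi}\otimes I)\,CZ\ket{++}=2^{-1/2}R_\phi\ket{+}$ gives $\kappa=1$, while your formula gives $\kappa=3$. The reduction recovers $|\bra0Q_0\ket0|^2$ as $2^{\kappa}p_y$, so this error falsifies the identity you set out to prove. It is repaired by redoing the count as above.

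A few smaller points:
\begin{itemize}
\item The weight of $y_{Q_0}$ follows from counting, not from ``every mode is occupied or empty''. There is one occupied mode per fusion and one per surviving slot, $F+|V|=2|E|=n/2$, and the $2F+2|V|=n$ modes involved exhaust all rails.
\item On an output you must realize the bra $\bra0K_v^{-1}Z^{b_v}$. By Fact~\ref{hard:fact:mbqc} and $\bra{+_\phi}Z=\bra{+_{\phi+\pi}}$ this is, up to phase, $\bra0$, $\bra1$ or $\bra{+_\phi}$ with $\phi\in\frac\pi2\mathbb Z$. The Clifford $K_v^{-1}$ itself need not be an $R_\phi$.
\item The scheduling you flag as the main obstacle is automatic. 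In each fusion round every slot enters at most one fusion, each fusion uses one rail from each of its two slots, and the merged slot keeps the two unused rails. Lemma~\ref{hard:lem:fusion} already allows arbitrary slot indices, so disjointness within a layer needs no special rail labelling.
\end{itemize}
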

\begin{proof}
The construction prepares graph edges, fuses copies of each vertex, and then measures the resulting graph state.
Apply Fact~\ref{hard:fact:mbqc} to $Q=Q_0H^{\otimes w}$. Use one magic block per edge of its graph $G=(V,E)$, so $B=|E|$. The block's two logical slots represent the endpoints of that edge.
\begin{enumerate}[leftmargin=*,itemsep=3pt]
\item[\emph{Layer 1.}] Apply $H$ to the second slot of each block. This turns its Bell pair into a two-vertex graph state, producing the disjoint union of all graph edges.
\item[\emph{Layers 2--3.}] Fuse the $d_v\le4$ copies of each vertex $v$ along a binary tree of depth at most two. All fusions in a layer use disjoint modes. Because $G$ is simple, the copies have disjoint neighborhoods and each fusion contributes a factor $1/2$. After
\[
F=\sum_v(d_v-1)=2|E|-|V|
\]
fusions, the state is $2^{-F}(\bigotimes_v Z_v^{b_v})\ket G$, up to a global sign, with known bits $b_v$.
\item[\emph{Layer 4.}] Let $\bra{m_v}$ be the desired measurement bra from Fact~\ref{hard:fact:mbqc}; on an output take $\bra{m_v}=\bra0K_v^{-1}$. Implement $\bra{m_v}Z^{b_v}$ to cancel the fusion byproduct. A computational-basis bra needs no gate: select the corresponding occupied rail. For $\bra{m_v}=\bra{+_{\phi_v}}$, apply $R_{\phi_v+\pi b_v}$ and select rail 0. Thus every gate belongs to the stated finite set.
\end{enumerate}
Define $y_{Q_0}$ by these final rail choices and by $(1,0)$ on each fusion's measured modes. Its weight is $F+|V|=2|E|=n/2$.
Fusion modes are never used again, so their projections commute with all later gates. Moreover $\bra y=\pm\bra{y\setminus m}a_m$ when $y_m=1$, with a sign fixed by $y$. Hence the final occupation measurement implements the annihilation-form postselections of Lemma~\ref{hard:lem:fusion}, up to an irrelevant overall sign.
The resulting amplitude, up to phase, is
\[
2^{-F}2^{-(|V|-w)/2}\bra{0^w}Q\ket{+^w}
=2^{-F-(|V|-w)/2}\bra{0^w}Q_0\ket{0^w}.
\]
Squaring gives $\kappa=2F+|V|-w=4|E|-|V|-w$. The graph and all gates are computable in polynomial time, and Fact~\ref{hard:fact:sharpP} gives the hardness claim.
\end{proof}
\begin{remark}[Padding]\label{hard:rem:pad}
If a larger $n=4B'$ is prescribed, add $B'-B$ untouched blocks and give each the pattern $1100$ in $y_{Q_0}$. Each contributes $|\langle1100|\Psi_4\rangle|^2=1/2$, so $\kappa$ increases by $B'-B$.
In particular, the padded exponent is $\kappa=3|E|+n/4-|V|-w\le n$, since $|E|\le n/4$.
\end{remark}

\subsection{Relabelling by switches}
Worst cases are placed inside a schedule by filling it with switches. The next lemma records that this only relabels modes.
\begin{lemma}[Relabelling]\label{hard:lem:relabel}
Let $C$ be a passive circuit whose gates are switches, except in four layers $\ell_1<\dots<\ell_4$.
Let $\pi_\ell$ be the permutation of positions implemented by the switches before layer $\ell$, and let $\pi$ be the permutation implemented by all switches.
Suppose layer $\ell_r$ applies $F_{\pi_{\ell_r}(i)\,\pi_{\ell_r}(j)}(u)$ for every gate $F_{ij}(u)$ of $P_r$, and $I$ elsewhere.
Then $p_z(C)=p_{\pi^{-1}(z)}(C_{Q_0})$ for every $z$. In particular $p_z(C)=p_{y_{Q_0}}(C_{Q_0})$ whenever $\pi$ maps the ones of $y_{Q_0}$ onto the ones of $z$.
\end{lemma}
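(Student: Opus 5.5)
The plan is to commute all switch layers to the end of the circuit, leaving exactly $C_{Q_0}$ followed by one mode permutation. Then we read off the output probability by relabelling occupation strings.

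First, describe the action of a switch. By \eqref{eq:flogate}, the gate $F_{ij}(I)$ is the identity, while $F_{ij}(X)$ is the fermionic swap. At the single-particle level, a switch is therefore a permutation matrix. Hence the single-particle matrix of any block of consecutive switch layers is a permutation matrix $P_\sigma$, where $\sigma$ is the composed position permutation. Write $C=S_{D}\cdots S_1$, where each layer $S_\ell$ is either a switch layer or one of the four special layers $\ell_r$. The switch layers of a special layer act only on the modes not touched by the interaction gates, but I will treat them as a separate factor. This is legitimate because gates in one layer act on disjoint pairs and commute.

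Second, the key conjugation identity. For any single-particle unitary $G$ and permutation matrix $P_\sigma$, we have $P_\sigma G P_\sigma^{-1}$ equal to $G$ with its row and column indices relabelled by $\sigma$. In particular, $P_\sigma F_{ij}(u)P_\sigma^{-1}=F_{\sigma(i)\sigma(j)}(u)$ at the single-particle level, with the ordered endpoint convention $F_{ji}(u)=F_{ij}(XuX)$ covering orientation. The hypothesis says that layer $\ell_r$ applies exactly $P_{\pi_{\ell_r}}P_rP_{\pi_{\ell_r}}^{-1}$. Telescoping then gives
\[
C=P_{\pi}\,P_4P_3P_2P_1=P_\pi C_{Q_0},
\]
where $P_\pi$ collects all switches, including those after $\ell_4$. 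Two points need checking here: the order of the switch prefix defining $\pi_\ell$, and the convention that $\pi_\ell(i)$ is the current position of logical mode $i$. Both just fix whether $\pi$ or $\pi^{-1}$ appears.

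Third, pass to second quantization. Because $\Gamma$ is a homomorphism, $\Gamma(C)=\Gamma(P_\pi)\Gamma(C_{Q_0})$. By the defining relation $\Gamma(U)a_i^\dagger\Gamma(U)^\dagger=\sum_jU_{ji}a_j^\dagger$, the operator $\Gamma(P_\pi)$ maps each occupation basis state $|y\rangle$ to $\pm|\pi(y)\rangle$. The sign is the Koszul reordering sign, and it does not affect probabilities. Thus
\[
\langle z|\Gamma(C)|\Psi_{\rm in}\rangle=\pm\langle\pi^{-1}(z)|\Gamma(C_{Q_0})|\Psi_{\rm in}\rangle.
\]
Squaring gives $p_z(C)=p_{\pi^{-1}(z)}(C_{Q_0})$. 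If $\pi$ maps the support of $y_{Q_0}$ onto the support of $z$, then $\pi^{-1}(z)=y_{Q_0}$, and the final claim follows.

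The only real obstacle is bookkeeping. Fermionic signs from nonadjacent embeddings are harmless, because everything is done on single-particle matrices and the only signs appear in the overall phase of a basis vector. The step that needs care is the index convention: one must confirm that the hypothesis $F_{\pi_{\ell_r}(i)\pi_{\ell_r}(j)}$ matches $P_{\pi_{\ell_r}}F_{ij}P_{\pi_{\ell_r}}^{-1}$ rather than its inverse relabelling. I will fix the convention $\pi_\ell(i)=$ position reached by the content originally at $i$, and verify the identity on a single fSWAP followed by one gate.
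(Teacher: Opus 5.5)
Your proposal is correct and follows essentially the same route as the paper's proof. Both write each native layer as $P_{\pi_{\ell_r}}P_rP_{\pi_{\ell_r}}^{-1}$ on single-particle matrices, telescope the product to $P_\pi C_{Q_0}$, and use that $\Gamma(P_\pi)$ maps $|x\rangle$ to $\pm|\pi(x)\rangle$ (the lemma places $I$ on every non-gate edge of the layers $\ell_r$, so there are no extra switches there to factor out, though your treatment would absorb them harmlessly).
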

\begin{proof}
Work with single-particle matrices, on which $\Gamma$ is a homomorphism.
A permutation matrix $P_\sigma$ satisfies $P_\sigma F_{ij}(u)P_\sigma^{-1}=F_{\sigma(i)\sigma(j)}(u)$. So each native layer of $C$ equals $P_{\pi_{\ell_r}}P_rP_{\pi_{\ell_r}}^{-1}$, where $P_r$ also denotes the layer's matrix, and the switches between native layers multiply to $P_{\pi_{\ell_{r+1}}}P_{\pi_{\ell_r}}^{-1}$.
The product telescopes to $P_\pi\,C_{Q_0}$.
Finally, $\Gamma(P_\pi)$ maps each occupation state $\ket{x}$ to $\pm\ket{\pi(x)}$.
\end{proof}

\section{Exact and robust worst-to-average reductions}\label{hard:sec:reduction}

A \emph{gate-random ensemble} is $\nu=\mu_S\otimes\Haar(U(2))^{\otimes g}$. Here $\mu_S$ is a law on schedules of a prescribed depth $D\ge1$ that admits a polynomial-time exact sampler in the adopted real-RAM model; the gates are i.i.d.\ Haar, and $g=nD/2$.
Both $\Gp$ (all $M_\ell$ i.i.d.\ uniform) and $\Gsh$ (Section~\ref{hard:sec:hybrid}) satisfy this sampling requirement. A uniform permutation, paired consecutively, gives a uniform perfect matching; the hybrid prefix is fixed. Likewise, a uniform output in $\Y$ is sampled by taking the first $n/2$ elements of a uniform permutation.

\paragraph{How the reduction works.}
First draw a schedule and an output from the target law, then program a hard instance into that schedule with the embedding procedure $\mathcal W$.
The interpolation below changes only the gates, so the schedule and output retain their correct joint distribution. Queries are made near the Haar endpoint, where the oracle's average-case guarantee transfers by total variation, and the recovered polynomial is evaluated at the hard endpoint.
The task is therefore to control three separate quantities: the probability that embedding fails, the distributional error of each query, and the degree needed for decoding.

\subsection{Cayley paths and polynomial degree}
Let $\cay(A)=(I-iA)(I+iA)^{-1}$ for Hermitian $A$; its inverse is $\cay^{-1}(U)=-i(U+I)^{-1}(I-U)$, defined when $-1\notin{\rm spec}(U)$.
Given a worst-case gate $w_e$ and a Haar sample $h_e$, set $A_e=\cay^{-1}(w_e^{-1}h_e)$, which exists almost surely.
Define
\begin{equation}
u_e(\theta)=w_e\,\cay\big((1-\theta)A_e\big),\qquad \theta\in[0,1].
\end{equation}
Then $u_e(0)=h_e$, which is Haar distributed, and $u_e(1)=w_e$. The path is unitary throughout.

\begin{lemma}[Total variation]\label{hard:lem:tv}
For $h\sim\Haar(U(2))$ and $a=1-\theta\in(0,1]$, the law of $\cay(a\,\cay^{-1}(h))$ has density with respect to Haar in $[a^4,a^{-4}]$.
Consequently $\TV\le1-a^4\le4\theta$ for every $\theta\in[0,1]$.
By left invariance the same bound holds for $u_e(\theta)$, and for $g$ independent gates the distance is at most $4g\theta$.
\end{lemma}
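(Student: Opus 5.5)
The plan is to compute the density of the pushforward of Haar measure under the map $h\mapsto \cay(a\,\cay^{-1}(h))$ explicitly, using the Cayley chart of $U(2)$.

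\textbf{Step 1: Haar measure in the Cayley chart.} The Cayley transform identifies an open dense subset of $U(2)$ (unitaries without eigenvalue $-1$) with the $4$-dimensional real space of Hermitian $2\times2$ matrices. In this chart, Haar measure has a density proportional to
\[
 \rho(A)=\det(I+A^2)^{-2}
 =\prod_{i=1}^2(1+\alpha_i^2)^{-2},
\]
with respect to Lebesgue measure on the Hermitian matrices, where $\alpha_i$ are the eigenvalues of $A$. I would justify this from the standard formula $\det(I+A^2)^{-d}$ for $U(d)$. An alternative check: the differential of $\cay$ at $A$ is $X\mapsto -2i(I+iA)^{-1}X(I+iA)^{-1}$. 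After left translation to the identity, this becomes the congruence $X\mapsto (I+iA)^{-1}X(I-iA)^{-1}$ up to a constant, whose real Jacobian on the $d^2$-dimensional Hermitian space is $|\det(I+iA)|^{-2d}=\det(I+A^2)^{-d}$. The set $\{-1\in\operatorname{spec}\}$ is Haar-null, so it can be ignored.

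\textbf{Step 2: the scaling map.} The map under study is $A\mapsto aA$ in this chart, and its Lebesgue Jacobian on the $4$-dimensional space is $a^4$. The pushforward density at $B=aA$ with respect to Haar is therefore
\[
 f(B)=\frac{\rho(B/a)\,a^{-4}}{\rho(B)}
 =a^{-4}\prod_i\left(\frac{1+\beta_i^2}{1+\beta_i^2/a^2}\right)^2 .
\]
Each factor $(1+\beta^2)/(1+\beta^2/a^2)$ lies in $[a^2,1]$ for $0<a\le1$, since $1+\beta^2/a^2\le a^{-2}(1+\beta^2)$. With two eigenvalues, each contributing a square, the product lies in $[a^8,1]$. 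Hence $f\in[a^4,a^{-4}]$.

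\textbf{Step 3: total variation.} The density has lower bound $f\ge a^4$, so $\TV=\int(1-f)_+\,d\Haar\le 1-a^4$. Since $a=1-\theta$, Bernoulli's inequality gives $1-(1-\theta)^4\le4\theta$.

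\textbf{Step 4: transfer to the path.} Since $u_e(\theta)=w_e\,\cay((1-\theta)A_e)$ with $A_e=\cay^{-1}(w_e^{-1}h_e)$, and $w_e^{-1}h_e$ is Haar by left invariance, left-multiplying by the fixed $w_e$ preserves Haar measure and total variation. For $g$ independent gates, total variation between product measures is subadditive, which gives $4g\theta$.

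\textbf{Main obstacle.} The only nontrivial input is the Haar density $\det(I+A^2)^{-2}$ in the Cayley chart. I would verify it carefully via the left-translated differential, checking that the real Jacobian of $X\mapsto MXM^\dagger$ on Hermitian matrices equals $|\det M|^{2d}$.
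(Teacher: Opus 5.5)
Your proof is correct. It reaches exactly the same density as the paper, $a^{-4}\prod_i\bigl((1+\beta_i^2)/(1+\beta_i^2/a^2)\bigr)^2$, but through a different parametrization.

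The paper works in eigenphase coordinates. It uses Weyl's integration formula, with density $\propto|e^{i\varphi_1}-e^{i\varphi_2}|^2$, and observes that the map fixes eigenvectors and sends each phase by $\tan(\psi/2)=a\tan(\varphi/2)$. It then combines the one-dimensional phase Jacobians $J_i\in[a,a^{-1}]$ with the ratio of Vandermonde factors, via the chordal identity $|e^{i\psi_1}-e^{i\psi_2}|^2=4(s_1-s_2)^2/[(1+s_1^2)(1+s_2^2)]$, to get $(J_1J_2)^2$.

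You instead pull Haar measure back to the Cayley chart, where its density is $\propto\det(I+A^2)^{-2}$. There the map becomes the linear scaling $A\mapsto aA$ with constant Jacobian $a^4$. This avoids the eigenvector and Vandermonde bookkeeping, and the same computation gives a density in $[a^{d^2},a^{-d^2}]$ on $U(d)$ with no extra work. The price is establishing the chart density. Your congruence argument does this correctly: the real Jacobian of $X\mapsto MXM^\dagger$ with $M=(I\pm iA)^{-1}$ is $|\det M|^{2d}=\det(I+A^2)^{-d}$.

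Two small points:
\begin{itemize}
\item The expression $(I+iA)^{-1}X(I-iA)^{-1}$ comes from right translation $dU\,U^{-1}$. Left translation gives $(I-iA)^{-1}X(I+iA)^{-1}$. This is harmless, since Haar measure is bi-invariant and both maps have the same Jacobian.
\item At $\theta=1$ we have $a=0$, and the law is a point mass at $I$ with no density. That endpoint must be covered separately, as the paper does, by the trivial bound $\mathrm{TV}\le 1=1-a^4$.
\end{itemize}
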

\begin{proof}
The map preserves eigenvectors and sends an eigenphase $\varphi$ to $\psi=2\arctan(a\tan(\varphi/2))$. Put $s_i=\tan(\psi_i/2)$. The inverse phase Jacobian is
\[
J_i=\frac1a\frac{1+s_i^2}{1+s_i^2/a^2}\in[a,a^{-1}].
\]
Haar's eigenphase density is proportional to $|e^{i\varphi_1}-e^{i\varphi_2}|^2$. Using
$|e^{i\psi_1}-e^{i\psi_2}|^2=4(s_1-s_2)^2/[(1+s_1^2)(1+s_2^2)]$,
the ratio of the preimage and image densities is $J_1J_2$. Multiplying by the change-of-variables Jacobian gives $\rho=(J_1J_2)^2\in[a^4,a^{-4}]$. Hence
\[
\TV=\int(1-\rho)_+\,d\Haar\le1-a^4\le4\theta.
\]
Left multiplication preserves Haar measure, and total variation of product laws is at most the sum of the factor distances. At $\theta=1$ the bound follows from $\TV\le1$.
\end{proof}

\begin{lemma}[Degree]\label{hard:lem:degree}
Fix $S$, $z$, the worst-case gates $w$ and the samples $A_e$. Put $q_e(\theta)=\det(I+i(1-\theta)A_e)$ and $D(\theta)=\prod_e|q_e(\theta)|^2$.
Then $R(\theta):=p_z(S,u(\theta))\,D(\theta)$ agrees on $\mathbb R$ with a polynomial of degree at most $4g$.
Moreover $D(\theta)\ge1$ on $[0,1]$, $D(1)=1$, and $R(1)=p_z(S,w)$.
\end{lemma}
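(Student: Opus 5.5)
We plan to clear denominators gate by gate and then track degrees through the determinant formula~\eqref{eq:amp}. Write $\cay(B)=(I-iB)(I+iB)^{-1}$ via the adjugate, $(I+iB)^{-1}=\operatorname{adj}(I+iB)/\det(I+iB)$. For $B=(1-\theta)A_e$ with $A_e$ a fixed $2\times2$ Hermitian matrix, every entry of $(I-iB)\operatorname{adj}(I+iB)$ is a polynomial in $\theta$ of degree at most $2$, since $\operatorname{adj}$ of a $2\times2$ matrix is linear in its entries. The same holds for $q_e(\theta)=\det(I+i(1-\theta)A_e)$. Hence each gate has the form $u_e(\theta)=N_e(\theta)/q_e(\theta)$, where $N_e=w_e(I-iB)\operatorname{adj}(I+iB)$ has entries of degree at most $2$ in $\theta$ with complex coefficients.

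Next we insert this into the amplitude. The layer matrices $L_\ell(\theta)$ are block diagonal in the matched pairs, so the full single-particle matrix is $C(\theta)=\prod_\ell L_\ell(\theta)$. Every entry of $C(\theta)$ is a sum over paths, and each path picks one entry from one gate per layer. We therefore do not use naive entrywise degrees. Instead we use the many-particle form: $\Gamma(F_{ij}(u))=1\oplus u\oplus\det u$ by~\eqref{eq:flogate}. Multiplying by $q_e$ gives $q_e\oplus N_e\oplus\det N_e/q_e$. Since $\det u_e$ is unimodular, $\det N_e=q_e^2\det u_e$, and $\det u_e=\det w_e\cdot\overline{q_e}/q_e$ because $\det\cay(B)=\det(I-iB)/\det(I+iB)=\overline{q_e}/q_e$ for Hermitian $B$ and real $\theta$. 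So $q_e\Gamma(F(u_e))=q_e\oplus N_e\oplus(\det w_e)\,\overline{q_e}$, and every entry has degree at most $2$ in $\theta$, where $\overline{q_e}(\theta)$ means the polynomial with conjugated coefficients. The unnormalized amplitude $\prod_e q_e(\theta)\,\langle z|\Gamma(C(\theta))|\Psi_{\rm in}\rangle$ is then a polynomial of degree at most $2g$. Its modulus squared is that polynomial times its coefficient conjugate, which agrees on $\mathbb R$ with a polynomial of degree at most $4g$. This modulus squared equals $p_z\prod_e|q_e|^2=R$.

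The remaining claims are direct. Write $A_e$ in its eigenbasis with real eigenvalues $\alpha_{1},\alpha_{2}$. Then $|q_e(\theta)|^2=\prod_i(1+(1-\theta)^2\alpha_i^2)\ge1$, with equality at $\theta=1$, so $D\ge1$ on $[0,1]$ and $D(1)=1$. Also $u_e(1)=w_e\cay(0)=w_e$, so $R(1)=p_z(S,w)$.

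The one point needing care, and the main obstacle, is the degree count for the two-particle block. If we cleared $\det u_e$ naively by $q_e^2$, each gate would contribute degree $4$ to the amplitude, giving degree $8g$ for $R$. The identity $\det\cay(B)=\overline{q_e}/q_e$, which holds because $B$ is Hermitian, is what keeps each gate at degree $2$ in the amplitude. We must also check that fermionic reordering signs for nonadjacent modes are constants independent of $\theta$, so they do not affect degrees.
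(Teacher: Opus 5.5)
Your proposal is correct and follows the paper's proof essentially line for line. The paper likewise clears each gate by $q_e$, uses $\det\cay(B)=\bar q_e/q_e$ to get $q_eF(u_e)=q_e\oplus N_e\oplus(\det w_e)\bar q_e$ with degree-two entries, multiplies over the $g$ gates to obtain an amplitude numerator $P$ of degree at most $2g$, takes $P\bar P$, and bounds $|q_e|^2=\det(I+(1-\theta)^2A_e^2)\ge1$ exactly as your eigenvalue computation does.
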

\begin{proof}
Use $\cay(X)=(I-iX)\,{\rm adj}(I+iX)/\det(I+iX)$ and $\det(I-isA)=\overline{\det(I+isA)}$ for real $s$.
Then $q_e(\theta)\,F(u_e(\theta))=q_e\oplus w_e(I-i(1-\theta)A_e){\rm adj}(I+i(1-\theta)A_e)\oplus\det(w_e)\,\bar q_e$.
Here $\bar q_e$ denotes the polynomial with conjugated coefficients, and every entry has degree $\le2$ in $\theta$.
Hence $\Gamma(C(\theta))=\prod_eF_e(u_e(\theta))$, multiplied by $\prod_eq_e(\theta)$, has polynomial matrix elements of degree $\le2g$.
So $\bra z\Gamma(C(\theta))\ket{\Psi_{\rm in}}=P(\theta)/\prod_eq_e(\theta)$ with $\deg P\le2g$.
For real $\theta$ this gives $p_z=(P\bar P)(\theta)/D(\theta)$ with $\deg(P\bar P)\le4g$.
Also $|q_e(\theta)|^2=\det(I+(1-\theta)^2A_e^2)\ge1$, with equality at $\theta=1$.
\end{proof}

\subsection{Exact decoding and reduction}

\begin{fact}[Berlekamp--Welch with a total decoder~\cite{WelchBerlekamp1986}]\label{hard:fact:bw}
Let $L>d\ge0$ be integers, let $\theta_1,\ldots,\theta_L$ be distinct real numbers, and set
\[
b=\left\lfloor\frac{L-d-1}{2}\right\rfloor.
\]
There is a deterministic decoder that uses $\poly(L)$ arithmetic operations on every real transcript $(y_1,\ldots,y_L)$.
It returns either a polynomial of degree at most $d$ agreeing with at least $L-b$ replies, or the failure symbol $\bot$.
If the replies disagree with a degree-at-most-$d$ polynomial $R$ at at most $b$ nodes, the decoder returns that $R$, uniquely.
\end{fact}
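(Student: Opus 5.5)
The plan is the classical Berlekamp--Welch construction, with a termination guarantee added so that the procedure is total on every real transcript. Write $e=b$ for the error budget. Look for a monic error-locator polynomial $E$ of degree exactly $e$ and a polynomial $N$ of degree at most $d+e$ satisfying the key equations
\[
 N(\theta_i)=y_iE(\theta_i),\qquad i=1,\ldots,L .
\]
These equations are linear in the $e$ free coefficients of $E$ and the $d+e+1$ coefficients of $N$. That gives $d+2e+1\le L$ unknowns and $L$ equations. Solve the system by Gaussian elimination, using exact real-RAM arithmetic and exact zero tests, in $\poly(L)$ operations. If the system has no solution, return $\bot$.

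Suppose the system is consistent. Take any solution, divide $N$ by $E$ with remainder, and return $\bot$ unless the remainder vanishes. Otherwise set $R'=N/E$. Next count the nodes with $R'(\theta_i)=y_i$. Return $R'$ if this count is at least $L-b$, and $\bot$ otherwise. Every branch is decided by finitely many exact comparisons. Hence the decoder is total and deterministic, and anything it returns has degree at most $d$ and agrees with at least $L-b$ replies, as the statement requires.

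For correctness, assume $R$ has degree at most $d$ and disagrees with the transcript on a set $\mathcal B$ with $|\mathcal B|\le b$. Pad $\prod_{i\in\mathcal B}(\theta-\theta_i)$ by extra monic linear factors at arbitrary real points until it has degree $e$, and call the result $E_0$. Then $(E_0,RE_0)$ solves the key equations, so the system is consistent. Now let $(E,N)$ be the solution the decoder actually finds. The polynomial $NE_0-RE_0E$ vanishes at every good node $i\notin\mathcal B$. This holds because $N(\theta_i)=y_iE(\theta_i)=R(\theta_i)E(\theta_i)$ at such nodes. The polynomial has degree at most $d+2e$, while the number of good nodes is at least $L-e\ge d+e+1$.

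The main obstacle is exactly this degree count. The bound $L-e>d+2e$ needs $L\ge d+3e+1$, which the definition of $b$ does not give directly. The standard remedy compares $N$ with $RE$ rather than with $RE_0$. The difference $N-RE$ has degree at most $d+e$. At a good node $i$, $(N-RE)(\theta_i)=E(\theta_i)(y_i-R(\theta_i))=0$. The number of good nodes is at least $L-b\ge d+b+1\ge d+e+1$, so $N-RE$ has more zeros than its degree and is identically zero.

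Hence $N=RE$, division gives $R'=R$ with zero remainder, and $R$ agrees with at least $L-b$ replies, so the decoder outputs $R$. Uniqueness follows because two degree-$\le d$ polynomials that each agree with at least $L-b$ replies agree on at least $L-2b\ge d+1$ common nodes, so they coincide.
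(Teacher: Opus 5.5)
Your proposal is correct and follows the paper's proof essentially step for step. Both use the same monic degree-$b$ error locator, the same consistency, division and agreement-count checks that make the decoder total, and the same padded witness $E_0$ (the paper pads with $\theta^{b-f}$). The key step is also the same: $N-RE$ has degree at most $d+b$ yet vanishes at the $L-b>d+b$ good nodes. Your abandoned detour through $NE_0-RE_0E$ would also have worked. That polynomial equals $E_0(N-RE)$, so it vanishes at the bad nodes as well, giving $L>d+2b$ zeros.
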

\begin{proof}
Introduce a monic error-locator polynomial $E$ of degree $b$ and a polynomial $Q$ of degree at most $d+b$.
Solve the linear system
\[
Q(\theta_i)=y_iE(\theta_i),\qquad 1\le i\le L,
\]
for their coefficients; the leading coefficient of $E$ is fixed to one.
If no solution exists, return $\bot$. Otherwise take a solution, divide $Q$ by $E$, and verify that the remainder is zero, that the quotient has degree at most $d$, and that it agrees with at least $L-b$ replies. Return the quotient if all checks pass, and $\bot$ otherwise.
Gaussian elimination, polynomial division and these checks have polynomial arithmetic cost on every transcript.

For correctness, suppose the bad-node set is $B$, with $f=|B|\le b$.
Then
\[
E_0(\theta)=\theta^{b-f}\prod_{i\in B}(\theta-\theta_i),\qquad Q_0=E_0R
\]
is a feasible monic pair. For any solution $(E,Q)$, the polynomial $Q-ER$ has degree at most $d+b$ and vanishes at all $L-f\ge L-b>d+b$ good nodes. Hence $Q=ER$, and the verification returns $R$.
Finally, two degree-at-most-$d$ polynomials each agreeing with at least $L-b$ replies agree with each other at at least $L-2b>d$ distinct nodes, so they coincide.
Outside the correction radius no correctness claim is made, but the decoder still terminates in polynomial time.
\end{proof}

The exact and robust reductions use the same query grid. For $g\ge1$
and an integer margin parameter $p\ge4$, set
\begin{equation}\label{hard:eq:decodingparameters}
d=4g,\quad L=(4p+1)d,\quad \Delta=(32gp)^{-1},\quad
\theta_i=i\Delta/L,\quad b=2pd-1.
\end{equation}
Here $1\le i\le L$, and $b=\lfloor(L-d-1)/2\rfloor$ is the
exact decoder's correction radius.

\begin{theorem}[Worst-to-average link]\label{hard:thm:reduction}
Let $\nu$ be a gate-random ensemble with $g\ge1$ gates, let $Q_0$ be a Clifford$+T$ circuit, let $p$ be a positive integer, and let $0<\xi<1$.
Suppose a polynomial-time procedure $\mathcal W$, on input $(S,z)$, either fails or outputs gates $w\in U(2)^g$ and a nonnegative integer $\kappa$ with $p_z(S,w)=2^{-\kappa}|\bra0Q_0\ket0|^2$.
The integer $\kappa$ has a polynomial-length binary description; it is at most $n$ for the constructions below.
Assume that $\mathcal W$ fails with probability at most $\eta$ over $S\sim\mu_S$, uniform $z$, and its private coins if randomized.
If an oracle $\mathcal O$ solves $\textup{\textsc{Exact}}(\nu,\delta)$ with $\delta,\eta\ge0$ and
\[
\delta+\eta\le\frac14-\frac1p,
\]
then a randomized real-RAM algorithm outputs $|\bra0Q_0\ket0|^2$ with probability at least $1-\xi$.
It makes $O(gp^3\log(2/\xi))$ oracle queries and $\poly(g,p,\log(2/\xi))$ further arithmetic operations, in addition to schedule sampling, the runs of $\mathcal W$, and $O(\log(2+\kappa))$ operations per successful embedding to form $2^\kappa$ by repeated squaring.
If $\mathcal O$ is randomized, it is a fixed, stateless randomized procedure, and its success probability includes fresh internal coins for each query.
\end{theorem}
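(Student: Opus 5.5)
The reduction runs in independent rounds, each producing either $\bot$ or a candidate value of $|\langle0|Q_0|0\rangle|^2$, followed by a majority-type aggregation.

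\emph{One round.} Sample $S\sim\mu_S$ and a uniform $z\in\Y$, and run $\mathcal W(S,z)$; if it fails, output $\bot$. Otherwise we have worst-case gates $w$ and an integer $\kappa$. Sample i.i.d.\ Haar gates $h_e$, form $A_e=\cay^{-1}(w_e^{-1}h_e)$, and use the grid \eqref{hard:eq:decodingparameters}. For each node $\theta_i$, query $\mathcal O$ on $(S,u(\theta_i),z)$ and set $y_i=\mathcal O(\cdot)\,D(\theta_i)$. By Lemma~\ref{hard:lem:degree}, $R(\theta)=p_zD$ is a polynomial of degree at most $d=4g$. Run the decoder of Fact~\ref{hard:fact:bw}. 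If it returns a polynomial $\widehat R$, output $2^{\kappa}\widehat R(1)$, using $D(1)=1$, $R(1)=p_z(S,w)$, and computing $2^\kappa$ by repeated squaring. If it returns $\bot$, output $\bot$.

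\emph{Error count for one round.} For node $i$, the query $(S,u(\theta_i),z)$ differs from an instance of $\nu$ with uniform output only through the gates. By Lemma~\ref{hard:lem:tv} the gate law is within total variation $4g\theta_i\le4g\Delta=1/(8p)$ of Haar$^{\otimes g}$, and the schedule--output pair keeps its exact law. However, we condition on $\mathcal W$ succeeding. We handle this by bounding the unconditioned probability that node $i$ is bad or the embedding fails, which is at most $\delta+1/(8p)$. Summing over nodes with linearity of expectation gives
\[
\mathbb E\bigl[\#\text{bad nodes}\cdot\mathbf 1_{\rm success}\bigr]\le L(\delta+1/(8p)).
\]
The decoder succeeds whenever at most $b=2pd-1$ nodes are bad, and $b/L\approx\tfrac12-\tfrac{1}{2(4p+1)}$. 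Markov's inequality then bounds the probability that the embedding succeeds but more than $b$ nodes are bad by $L(\delta+1/(8p))/(b+1)\approx 2(\delta+1/(8p))(1+1/(4p))$. Adding $\eta$ and using $\delta+\eta\le\tfrac14-\tfrac1p$, a round returns the correct value with probability at least $\tfrac12+c/p$ for an absolute $c>0$. This arithmetic is the main obstacle. The constant $\tfrac14$ arises exactly from Markov at radius $\approx L/2$, and the slack terms $1/(8p)$ and $1/(4p)$ have to be absorbed by the margin $1/p$. If the bound comes out too tight, I would redo it with the exact values of $L$, $b$ and $\Delta$ rather than asymptotics, keeping the failure term $\eta$ outside the factor $2$, since failed embeddings consume no node budget.

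\emph{Amplification.} A round never returns an incorrect polynomial when at most $b$ nodes are bad, by the uniqueness part of Fact~\ref{hard:fact:bw}. A wrong value can still appear when more than $b$ nodes are bad, so correctness must come from a plurality vote rather than from trusting any single candidate. Run $r=O(p^2\log(2/\xi))$ independent rounds and output the value occurring in more than half of them. Exact real comparisons make this test well defined. Since each round yields the true value with probability at least $\tfrac12+c/p$, Hoeffding's inequality gives a failure probability of at most $\xi$ for this $r$. A stateless randomized oracle with fresh coins keeps the rounds independent.

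\emph{Resource count.} Each round uses $L=(4p+1)4g=O(gp)$ queries, giving $O(gp^3\log(2/\xi))$ queries in total. Decoding costs $\poly(L)$ arithmetic operations per round, plus the operations for the Cayley paths, the factors $D(\theta_i)$, and the $O(\log(2+\kappa))$ squarings per successful embedding.
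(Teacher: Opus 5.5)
Your proposal is correct and follows the paper's proof essentially step for step. It uses the same Cayley-path queries on the grid \eqref{hard:eq:decodingparameters}, the per-node bound $4g\Delta=1/(8p)$ from Lemma~\ref{hard:lem:tv}, and Berlekamp--Welch with Markov at $b+1=(L-d)/2$, which gives run failure at most $\eta+(2+\tfrac1{2p})(\delta+\tfrac1{8p})\le\tfrac12-\tfrac{13}{8p}$, followed by Hoeffding over $O(p^2\log(2/\xi))$ runs. The only difference is how failed embeddings are handled. The paper issues dummy queries (with $w_e=I$) on failed embeddings so that the query-law argument never conditions on success. You skip those queries and bound $\mathbb E[F\,\mathbf 1_{\rm success}]$ directly, which is equally valid because the per-$(S,z)$ oracle error is nonnegative and $\mathbf 1_{\rm success}\le1$. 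Your phrase ``bad or the embedding fails'' should read ``bad and the embedding succeeds.''
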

\begin{proof}
\emph{Algorithm.} The promise implies $p\ge4$, so use the parameters
in~\eqref{hard:eq:decodingparameters}. One run has three steps.
\begin{enumerate}[itemsep=2pt]
\item Sample $S,z$ from the target law and run $\mathcal W$. On failure, mark the run and use dummy gates $w_e=I$.
\item Draw independent Haar masks and form the Cayley paths. Query all $L$ points, even on marked runs, and multiply each reply by $D(\theta_i)$. A singular Cayley inverse returns $\bot$; this event has probability zero.
\item On a marked run return $\bot$. Otherwise apply Fact~\ref{hard:fact:bw}; return $2^\kappa R(1)$ if decoding succeeds, and $\bot$ otherwise.
\end{enumerate}

\emph{Why a run succeeds.} Conditional on $S,z$ and the embedding coins, the programmed gates are fixed before the Haar masks are drawn. Lemma~\ref{hard:lem:tv} therefore puts each query within $4g\Delta=1/(8p)$ of the target law. Averaging over $S,z$ and the embedding coins preserves this bound; fresh oracle coins can be included in both laws. Dummy queries ensure that this argument never conditions on embedding success.

Let $F$ count incorrect replies. The replies may be correlated, but
$\mathbb EF\le L(\delta+1/(8p))$.
On a successful embedding with $F\le b$, Lemma~\ref{hard:lem:degree} and Fact~\ref{hard:fact:bw} recover the desired value exactly. Since $b+1=(L-d)/2$, Markov's inequality gives
\begin{align}
\Pr[\text{run fails}]&\le\eta+\Pr[F>b]
\le\eta+\left(2+\frac1{2p}\right)\left(\delta+\frac1{8p}\right)\notag\\
&\le\frac12-\frac{13}{8p}-\frac7{16p^2}
-\eta\left(1+\frac1{2p}\right)
\le\frac12-\frac{13}{8p}.\label{hard:eq:runfailure}
\end{align}

\emph{Amplification and cost.} Repeat independently $J=\lceil2p^2\ln(2/\xi)\rceil$ times and return the most frequent output, counting $\bot$ as an output and breaking ties deterministically. Hoeffding's inequality bounds the probability that at most half the runs are correct by
$\exp[-2J(13/(8p))^2]\le\xi$.
There are $JL=O(gp^3\log(2/\xi))$ queries. The total decoder has polynomial cost on every transcript; forming $2^\kappa$ by repeated squaring gives the stated remaining cost.
\end{proof}
\begin{remark}
The degree $4g$ counts every gate, including those that are the identity in the worst case, because every gate is randomized.
With $g=O(n\log n)$ the interpolation degree is $O(n\log n)$.
\end{remark}

\subsection{Robustness to additive approximation}\label{hard:sec:robust}
The exact reduction extends to sufficiently small additive errors, including arbitrary incorrect answers on the allowed fraction of instances. We give a self-contained endpoint-decoding argument using a rational linear program. It is related to the robust interpolation methods of~\cite{Kondo2021Robustness,Quek2025Hamiltonian}, but requires no additional $\mathsf{NP}$ oracle. All circuit queries retain the real-RAM conventions above.

\subsubsection{A robust decoder at the hard endpoint}
\begin{lemma}[Endpoint decoding with outliers]\label{hard:lem:robustdecode}
Let $0<\Delta<1/2$ be rational, let $d,b,L$ be nonnegative integers with $L\ge d+2b+1$, and put $r=d+b\ge1$ and $x_i=i\Delta/L$ for $1\le i\le L$. Set $\lambda=(\Delta/(1-\Delta))^b$.
Suppose a real polynomial $R$ of degree at most $d$ obeys $|y_i-R(x_i)|\le\sigma$ at all but at most $b$ nodes, where $y_i$ and $\sigma>0$ are rational.
A deterministic algorithm, polynomial in the rational input length, returns $\widehat R_1$ such that
\[
|\widehat R_1-R(1)|\le2\sigma\lambda\frac{(2L/\Delta)^r}{r!}.
\]
If $L/r\le2$, this is at most $2\sigma K$, where $K=\lambda(12/\Delta)^r$. The algorithm terminates on every transcript, returning a default value if its linear program is infeasible.
\end{lemma}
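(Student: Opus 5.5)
The plan is to make the unknown outlier set disappear through an error-locator polynomial, which is the step that introduces $\lambda$. Let $B$ be the set of bad nodes, with $f=|B|\le b$. Fix the normalized locator
\[
E(x)=\prod_{i\in B}\frac{x-x_i}{1-x_i},
\]
of degree $f\le b$, so that $E(1)=1$. Every node lies in $(0,\Delta]$, so $|x_j-x_i|\le\Delta$ and $1-x_i\ge1-\Delta$. Hence $|E(x_j)|\le\lambda$ at every node, and $E$ vanishes at the bad nodes. I read the statement's $\lambda=(\Delta/(1-\Delta))^b$ as this locator bound at full error count $f=b$. For $f<b$ the product only gives $(\Delta/(1-\Delta))^f$, which is larger than $\lambda$, so the case $f<b$ needs separate handling; this is the point I am least sure of. If padding $E$ by further node factors does not resolve it, I would accept the weaker factor $(\Delta/(1-\Delta))^f$.

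The decoder is then a linear feasibility program with rational data. Its unknowns are the coefficients of a polynomial $E'$ of degree at most $b$ and a polynomial $Q$ of degree at most $r=d+b$. The constraints are
\[
E'(1)=1,\qquad |E'(x_i)|\le\lambda,\qquad |Q(x_i)-y_iE'(x_i)|\le\lambda\sigma\qquad(1\le i\le L).
\]
Because the $y_i$ are data, the last constraint is linear in the unknowns, so the program is an LP. It is feasible with $(E',Q)=(E,ER)$:
\begin{itemize}
\item At good nodes, $|ER(x_i)-y_iE(x_i)|=|E(x_i)|\,|R(x_i)-y_i|\le\lambda\sigma$.
\item At bad nodes, both sides vanish.
\end{itemize}
Solve the LP exactly in polynomial time, for instance with the ellipsoid or an interior-point method plus rounding to a vertex. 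Output $\widehat R_1=Q(1)$, which equals $Q(1)-E'(1)R(1)+R(1)$. If the LP is infeasible, return the default value.

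For the error bound, take any feasible pair and put $G=Q-E'R$. This polynomial has degree at most $r$, and $G(1)=\widehat R_1-R(1)$. At each good node,
\[
|G(x_i)|\le|Q(x_i)-y_iE'(x_i)|+|E'(x_i)|\,|y_i-R(x_i)|\le2\lambda\sigma .
\]
There are at least $L-b\ge r+1$ good nodes, so choose $r+1$ of them, sorted as $x_{i_0}<\dots<x_{i_r}$. Lagrange interpolation at these nodes represents $G$ exactly. With $h=\Delta/L$, the basis functions satisfy
\[
|\ell_j(1)|\le\prod_{m\ne j}\frac{1}{|x_{i_j}-x_{i_m}|}\le\frac{h^{-r}}{j!(r-j)!},
\]
because $|1-x_{i_m}|\le1$ and the sorted integer indices differ by at least $|j-m|$. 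Summing over $j$ gives $2^rh^{-r}/r!$, and therefore
\[
|\widehat R_1-R(1)|\le2\sigma\lambda\frac{(2L/\Delta)^r}{r!}.
\]

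For the simplified constant, assume $L\le2r$. Then $(2L/\Delta)^r\le(4r/\Delta)^r$, and $r!\ge(r/e)^r$, so the factor $(2L/\Delta)^r/r!$ is at most $(4e/\Delta)^r\le(12/\Delta)^r$. This yields the bound $2\sigma K$.

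The only delicate points I expect are these:
\begin{itemize}
\item matching the stated $\lambda$ when $f<b$, as discussed above;
\item confirming exact polynomial-time LP solvability and termination in rational arithmetic, including bit-size bounds on $\lambda$, which follow from its rational form with $O(b\log(1/\Delta))$ bits.
\end{itemize}
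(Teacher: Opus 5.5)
Your route is the paper's: the same rational LP with the normalization $E'(1)=1$, the same normalized error locator as feasibility witness, and the same bound $|G(x_i)|\le2\sigma\lambda$ for $G=Q-E'R$ at good nodes. The finish is also the same: Lagrange extrapolation to $x=1$ from $r+1$ good nodes with spacing $\Delta/L$, then $r!\ge(r/e)^r$ and $4e<12$.

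The one genuine gap is the one you flagged. Your witness satisfies $|E(x_i)|\le\lambda$ only when exactly $b$ nodes are bad. For $f<b$ you have therefore not shown that the LP is feasible, so your argument leaves open that the decoder returns the default value. Your fallback does not prove the lemma. The LP has to use a bound that does not depend on the unknown $f$. Without padding, the only bound valid for every $0\le f\le b$ is $1$, which deletes the factor $\lambda$ from the conclusion. That factor is exactly what produces the $(32gp-1)^b$ gain in \eqref{hard:eq:robustprecision}.

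The fix is the padding you suspected, and it takes one line. The paper uses $E_0(x)=x^{b-f}\prod_{j\in B}(x-x_j)/(1-x_j)$ with $Q_0=E_0R$.
\begin{itemize}
\item Every node satisfies $0<x_i\le\Delta$, so $|x_i|^{b-f}\le\Delta^{b-f}\le(\Delta/(1-\Delta))^{b-f}$.
\item The extra factor equals $1$ at $x=1$, so $E_0(1)=1$.
\item $\deg E_0=b$ and $\deg(E_0R)\le r$.
\item Hence $|E_0(x_i)|\le\lambda$ at every node.
\end{itemize}

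Your own idea works equally well: multiply by $b-f$ further factors $(x-x_i)/(1-x_i)$, with repeated nodes allowed. Each such factor has absolute value at most $\Delta/(1-\Delta)$ at every node and equals $1$ at $x=1$. The extra zeros only make the residual constraint $|Q(x_i)-y_iE(x_i)|\le\sigma\lambda$ trivially satisfied at those nodes.

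With this addition your proof is complete.
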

\begin{proof}
\emph{Decoder.} The idea is to suppress the bad nodes with an unknown polynomial $E$, while fixing $E(1)=1$ so that the desired endpoint is unchanged. Solve the rational linear program
\begin{equation}\label{hard:eq:robustlp}
E(1)=1,\qquad |E(x_i)|\le\lambda,\qquad
|Q(x_i)-y_iE(x_i)|\le\sigma\lambda,
\end{equation}
with $\deg E\le b$ and $\deg Q\le r$. Return $Q(1)$, or a default if infeasible. Rational linear programming has polynomial bit complexity on every transcript.

\emph{Feasibility.} If $B$ is the bad-node set and $f=|B|\le b$, a witness is
\[
E_0(x)=x^{b-f}\prod_{j\in B}\frac{x-x_j}{1-x_j},\qquad Q_0=E_0R.
\]
On $[0,\Delta]$, $|E_0(x)|\le\Delta^{b-f}(\Delta/(1-\Delta))^f\le\lambda$, and $E_0(1)=1$. The residual vanishes at bad nodes and is at most $\sigma\lambda$ at good ones. This real witness suffices: a feasible rational linear program has a polynomial-length rational solution.

\emph{Accuracy.} For any feasible solution, $H=Q-ER$ has degree at most $r$ and $|H(x_i)|\le2\sigma\lambda$ at the good nodes. There are at least $L-b\ge r+1$ of them. Choose $r+1$ in increasing order, $z_0<\cdots<z_r$. Their spacing is at least $h=\Delta/L$, so the Lagrange weights at 1 obey
\[
|\ell_j(1)|\le\frac{h^{-r}}{j!(r-j)!},\qquad
\sum_{j=0}^r|\ell_j(1)|\le\frac{(2L/\Delta)^r}{r!}.
\]
Here the numerator factors are at most one. Interpolation bounds $|H(1)|=|Q(1)-R(1)|$ as claimed. If $L/r\le2$, use $r!\ge(r/\mathrm e)^r$ and $4\mathrm e<12$ to get $2\sigma K$. The decoder never needs to identify the good nodes.
\end{proof}

\subsubsection{The denominator and probability of success}
\begin{lemma}[Denominator control]\label{hard:lem:robustden}
Fix an integer $p\ge4$ and let $T=32gp$ and $M=(1+T^2)^{2g}$. For the Cayley path of Appendix~\ref{hard:sec:reduction},
\[
\Pr[D(0)>M]\le\frac1{8p},\qquad
1\le D(\theta)\le D(0)\quad(0\le\theta\le1).
\]
The probability bound holds conditional on each schedule, output and programmed hard circuit, before the Haar masks are drawn.
\end{lemma}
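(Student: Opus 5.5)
The statement has two parts: a monotonicity bound $1\le D(\theta)\le D(0)$, and a tail bound on $D(0)$. I will treat them separately. Recall that $D(\theta)=\prod_e|q_e(\theta)|^2$ with $|q_e(\theta)|^2=\det(I+(1-\theta)^2A_e^2)$, which is the identity in the proof of Lemma~\ref{hard:lem:degree}. Since $A_e$ is Hermitian with real eigenvalues $\alpha_{e,1},\alpha_{e,2}$, each factor is $\prod_i(1+(1-\theta)^2\alpha_{e,i}^2)$. This is at least one and nonincreasing in $\theta\in[0,1]$. That gives $1\le D(\theta)\le D(0)$ immediately, with no randomness involved.

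For the tail bound, I condition on the schedule, output and programmed gates $w$. Left invariance of Haar measure makes $w_e^{-1}h_e$ Haar distributed, and independent across edges. So the $A_e=\cay^{-1}(w_e^{-1}h_e)$ are i.i.d.\ with the law of $\cay^{-1}(h)$ for Haar $h$, and this conditional law does not depend on $w$. This is what gives the ``conditional on each schedule, output and programmed hard circuit'' clause.

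The eigenphases $\varphi$ of $h$ map to $\alpha=\tan(\varphi/2)$, up to the sign convention of $\cay$. So $D(0)\le\prod_{e,i}(1+\alpha_{e,i}^2)$. The event $D(0)>M=(1+T^2)^{2g}$ forces at least one of the $2g$ factors $1+\alpha^2$ to exceed $1+T^2$, that is, some $|\alpha_{e,i}|>T$. A union bound then gives $\Pr[D(0)>M]\le 2g\cdot\Pr[|\tan(\varphi/2)|>T]$ for a single eigenphase marginal.

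The main step is bounding that single-eigenphase tail. The event $|\tan(\varphi/2)|>T$ means $\varphi$ lies within about $2/T$ of $\pi$, since $|\pi-\varphi|<2\arctan(1/T)\le 2/T$. The one-eigenvalue marginal density of Haar $U(2)$ is bounded by $1/\pi$. Indeed, the joint density is $\frac{1}{8\pi^2}|e^{i\varphi_1}-e^{i\varphi_2}|^2\le\frac{1}{2\pi^2}$, and integrating out one phase gives marginal density exactly $1/(2\pi)$. So the probability is at most $(4/T)/(2\pi)=2/(\pi T)$. Then
\[
2g\cdot\frac{2}{\pi T}=\frac{4g}{\pi\cdot 32gp}<\frac1{8p},
\]
as required. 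Summing over both eigenphases is already absorbed in the factor $2g$. The only care needed is the ordering of the eigenvalue labels in the marginal, and the union bound handles this regardless of correlations between the two eigenphases.
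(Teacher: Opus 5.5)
Your proposal is correct and follows the paper's proof essentially step for step: monotonicity from $|q_e(\theta)|^2=\det(I+(1-\theta)^2A_e^2)$, left invariance so the bound holds for every fixed programmed circuit, the identification $|\alpha|=|\tan(\varphi/2)|$ together with the uniform one-eigenphase marginal, and a union bound giving $4g/(\pi T)\le 1/(8p)$. The paper phrases the union bound as the expected number of large Cayley eigenvalues per gate, $\frac4\pi\arctan(1/T)$, which is exactly your $2g$-fold bound; like your argument, it avoids any issue of eigenvalue labelling or correlation between the two eigenphases.
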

\begin{proof}
The absolute Cayley eigenvalue associated with an eigenphase $\varphi$ is $|\tan(\varphi/2)|$. Haar $U(2)$ has a uniform one-eigenphase marginal, so the expected number of eigenvalues with absolute Cayley value above $T$ is
\[
\frac4\pi\arctan(1/T)\le\frac4{\pi T}.
\]
A union bound over gates gives $\Pr[\max_e\|A_e\|>T]\le4g/(\pi T)\le1/(8p)$; independence of the two eigenphases is not needed. Off this event each factor $\det(I+(1-\theta)^2A_e^2)$ is at most $(1+T^2)^2$, proving the bound on $D$. Monotonicity in $1-\theta$ proves the remaining inequalities. Left invariance makes the argument valid for every fixed programmed gate.
\end{proof}

\begin{theorem}[Robust worst-to-average reduction]\label{hard:thm:robust}
Use the hypotheses of Theorem~\ref{hard:thm:reduction}, except that the oracle returns an estimate within additive error $0<\epsilon\le1$ on a fraction $1-\delta$ of the target instances, with fresh coins if randomized. Let $\kappa\le K_0$ be a known bound on the successful embedding exponents, and assume $\delta+\eta\le1/4-1/p$ for an integer $p\ge4$. Use $d,L,b,\Delta$ from~\eqref{hard:eq:decodingparameters}, and put
\begin{align*}
r&=d+b,& M&=(1+\Delta^{-2})^{2g},&
\lambda&=\left(\frac{\Delta}{1-\Delta}\right)^b,& K&=\lambda(12/\Delta)^r.
\end{align*}
For $0<\xi<1$, a randomized reduction in the stated real-RAM oracle model outputs an estimate $\widehat a$ of $a=|\bra0Q_0\ket0|^2$ with
\begin{equation}\label{hard:eq:robusterror}
\Pr[|\widehat a-a|\le4\,2^{K_0}MK\epsilon]\ge1-\xi.
\end{equation}
It uses $O(gp^3\log(2/\xi))$ oracle queries. Further work is polynomial in the embedding size, $g$, $p$, $K_0$, $\log(1/\epsilon)$ and $\log(2/\xi)$, for polynomial-length rational precision parameters. No additional $\mathsf{NP}$ oracle is used.
\end{theorem}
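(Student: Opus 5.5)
The plan is to reuse the run structure of Theorem~\ref{hard:thm:reduction} unchanged, replacing Berlekamp--Welch with the linear-program decoder of Lemma~\ref{hard:lem:robustdecode}. One run samples $(S,z)$, runs $\mathcal W$ (using dummy gates on failure), draws Haar masks, queries the $L$ grid points $\theta_i=i\Delta/L$, and rescales each reply by $D(\theta_i)$. By Lemma~\ref{hard:lem:degree}, $R(\theta)=p_z(S,u(\theta))D(\theta)$ is a polynomial of degree at most $d=4g$ with $R(1)=p_z(S,w)=2^{-\kappa}a$. We apply the decoder with $y_i=D(\theta_i)\cdot(\text{reply}_i)$ and $\sigma=M\epsilon$, and output $2^\kappa\widehat R_1$. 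Marked runs return a default value.

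\textbf{Good runs.} A run is good if three things happen: the embedding succeeds, the masks satisfy $D(0)\le M$, and at most $b$ replies are outside the $\epsilon$ window. Note that $M=(1+\Delta^{-2})^{2g}$ and $\Delta^{-1}=32gp=T$, so $M$ matches Lemma~\ref{hard:lem:robustden}. On the event $D(0)\le M$, the monotonicity $1\le D(\theta_i)\le D(0)$ gives $|y_i-R(\theta_i)|\le D(\theta_i)\epsilon\le M\epsilon$ at every correct node. Lemma~\ref{hard:lem:robustdecode} applies because $L=(4p+1)d\ge d+2b+1$ with $b=2pd-1$. Also $r=d+b=(2p+1)d-1$ and $L/r\le 2$ hold, since $(4p+1)d\le 2(2p+1)d-2$ whenever $d\ge 2$ (and $d=4g\ge4$). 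The decoder therefore returns $\widehat R_1$ with $|\widehat R_1-R(1)|\le 2M\epsilon K$. Multiplying by $2^\kappa\le 2^{K_0}$ gives an error of $2^{K_0+1}MK\epsilon$, which leaves a factor-two slack inside the stated $4\cdot2^{K_0}MK\epsilon$.

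\textbf{Failure probability.} The denominator event costs an additional $1/(8p)$. It is independent of the oracle's coins, and by Lemma~\ref{hard:lem:robustden} its bound holds conditionally on the programmed circuit. Union-bounding it with the estimate \eqref{hard:eq:runfailure} gives a run-failure probability of at most $1/2-13/(8p)+1/(8p)=1/2-3/(2p)$. The total-variation transfer $4g\Delta=1/(8p)$ and the Markov bound on incorrect replies are the same as in the exact proof; here "correct" means within additive error $\epsilon$.

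\textbf{Amplification (the main obstacle).} Good runs no longer agree exactly, so majority vote must be replaced by a median. We repeat $J=\lceil 2p^2\ln(2/\xi)\rceil$ times and return the median of all outputs, with defaults counted as arbitrary values. By Hoeffding, more than half the runs are good except with probability at most $\exp[-2J(3/(2p))^2]\le\xi$. When more than half are good, every good output lies in the interval $[a-\rho,a+\rho]$ with $\rho=2^{K_0+1}MK\epsilon$, and hence so does the median, giving \eqref{hard:eq:robusterror}. The remaining bookkeeping is cost. Quantities such as $y_i$, $\sigma$ and $\lambda$ are real-RAM or rational; to keep the linear program rational, I would round $y_i$ to precision $\epsilon$ and enlarge $\sigma$ by the same amount, which is absorbed by the slack factor of two. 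The bit lengths of $\lambda$, $K$ and $M$ are polynomial in $g$, $p$ and $\log(1/\epsilon)$, and the query count is $JL=O(gp^3\log(2/\xi))$. No $\mathsf{NP}$ oracle is used because the decoder is a single rational linear program.
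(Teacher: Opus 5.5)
Your argument follows the paper's proof almost step for step. You use the same runs with dummy queries, the linear-program decoder of Lemma~\ref{hard:lem:robustdecode} with $\sigma$ of order $M\epsilon$, and the extra $1/(8p)$ from Lemma~\ref{hard:lem:robustden}, which gives run failure at most $1/2-3/(2p)$. You then take a median over $J=\lceil 2p^2\ln(2/\xi)\rceil$ runs and apply Hoeffding, as the paper does. Your accuracy bookkeeping is also fine: rounding to mesh $\epsilon$ and taking $\sigma=(M+1)\epsilon\le 2M\epsilon$ reaches the same final constant as the paper's mesh $M\epsilon$ with $\sigma=2M\epsilon$.

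The one genuine gap is in the cost claim, because you never bound the size of $y_i=D(\theta_i)\cdot\text{reply}_i$. Replies at inaccurate nodes are arbitrary reals. On the event $D(0)>M$, which your algorithm never tests, $D(\theta_i)$ is itself unbounded. Rounding such a real to a grid of mesh $\epsilon$ by exact comparisons takes about $\log_2(|y_i|/\epsilon)$ steps. The resulting rationals, and hence the input to the linear program, have unbounded bit length. So ``further work polynomial in $g,p,K_0,\log(1/\epsilon),\log(2/\xi)$'' does not follow on every transcript.

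The paper closes this gap as follows. It clips each reply to $[0,1]$, multiplies by $D(x_i)$, and clips the product to $[0,M]$. After all queries, so that the unconditional query-law argument is untouched, it returns the default if the embedding failed or $D(0)>M$. Clipping costs nothing in accuracy. On a good run, $R(x_i)=p_z D(x_i)\in[0,M]$ because $1\le D(x_i)\le D(0)\le M$, so projecting onto $[0,M]$ cannot increase $|y_i-R(x_i)|$ at accurate nodes. Each $y_i$ then lies in $[0,M]$. It is found by a binary search of length $O(\log(M/\epsilon))=O(g\log(gp)+\log(1/\epsilon))$ and has that many bits. With this addition your proof is complete.
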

\begin{proof}
\emph{Algorithm.} Use the sampling and queries of Theorem~\ref{hard:thm:reduction}, including dummy queries on failed embeddings, and write $x_i=\theta_i$. Replace its exact decoder as follows. Clip each oracle reply to $[0,1]$, multiply by $D(x_i)$, clip to $[0,M]$, and round with error at most $M\epsilon$ to a rational grid of mesh at most $M\epsilon$. Call the result $y_i$.
After all queries, return zero if embedding failed or $D(0)>M$. Otherwise use Lemma~\ref{hard:lem:robustdecode} with $\sigma=2M\epsilon$ and return $2^\kappa\widehat R_1$, clipped to $[0,1]$.

\emph{Accuracy of a successful run.} When $D(0)\le M$, we have $R(x_i)\in[0,M]$. Clipping cannot increase error, so every accurate oracle reply gives
\[
|y_i-R(x_i)|\le D(x_i)\epsilon+M\epsilon\le2M\epsilon.
\]
If at most $b$ replies are inaccurate, Lemma~\ref{hard:lem:robustdecode} applies: $L\ge d+2b+1$ and $L/r<2$. Its bound is $4MK\epsilon$, or at most $4\,2^{K_0}MK\epsilon$ after scaling.

\emph{Success probability and cost.} Queries are drawn before testing the denominator, so the unconditional query-law and Markov arguments of Theorem~\ref{hard:thm:reduction} apply unchanged, now counting errors larger than $\epsilon$. The only extra failure event is $D(0)>M$. By~\eqref{hard:eq:runfailure} and Lemma~\ref{hard:lem:robustden},
\[
\Pr[\text{run fails}]\le\frac12-\frac{13}{8p}+\frac1{8p}
=\frac12-\frac3{2p}.
\]
Take the median of $J=\lceil2p^2\ln(2/\xi)\rceil$ independent runs. Hoeffding's inequality bounds the chance that at most half are accurate by $\exp[-2J(3/(2p))^2]\le\xi$.
Rounding uses binary search with exact comparisons and yields rational data of polynomial bit length. Thus the decoder has polynomial bit cost, while the queries retain the exact-real model. The query count is the same as in Theorem~\ref{hard:thm:reduction}.
\end{proof}

\subsubsection{Hardness at an explicit additive precision}
\begin{corollary}[High-precision average-case hardness]\label{hard:cor:robusthard}
For either ensemble in Theorem~\ref{hard:thm:pure} or Proposition~\ref{hard:thm:hybrid}, use its total gate count $g$, margin parameter $p=p(n)$, and the corresponding $M,K$ above. Set
\begin{equation}\label{hard:eq:robustprecision}
\epsilon_n=\frac{2^{-3n-5}}{MK}
=\frac{2^{-3n-5}(32gp-1)^b}{[1+(32gp)^2]^{2g}(384gp)^r}.
\end{equation}
The conclusions of both results remain valid if exact replies are replaced by estimates to additive error $\epsilon_n$, on the same promised fractions of instances. These remain real-RAM oracle reductions.
For a fixed success margin above $3/4$ and logarithmic total depth, the sufficient precision has scale $\epsilon_n=2^{-O(n\log^2n)}$.
\end{corollary}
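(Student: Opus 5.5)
The plan is to feed the tolerance $\epsilon_n$ into Theorem~\ref{hard:thm:robust} and check that the resulting error on $a=|\langle0|Q_0|0\rangle|^2$ is below half the gap between adjacent admissible values in Fact~\ref{hard:fact:sharpP}. Exact rounding then recovers $a$, so the exact-case conclusions of Theorem~\ref{hard:thm:pure} and Proposition~\ref{hard:thm:hybrid} carry over unchanged.

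\textbf{Step 1: error bound.} I take $K_0=n$, which is valid because the padded exponent satisfies $\kappa\le n$ by Remark~\ref{hard:rem:pad}, and the same padding bound holds for both embeddings. Theorem~\ref{hard:thm:robust} then gives, with probability at least $1-\xi$,
\[
|\widehat a-a|\le4\cdot2^{n}MK\epsilon_n=2^{-2n-3}.
\]
The promised fraction is the same as in the exact results, so the hypothesis $\delta+\eta\le1/4-1/p$ transfers verbatim.

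\textbf{Step 2: rounding.} By Fact~\ref{hard:fact:sharpP}, $a\in\{(1-j/2^m)^2\}$. Adjacent values differ by at least $2^{-2m}$. I need $2^{-2n-3}<2^{-2m-1}$, which holds as long as $m\le n$. This can be arranged in the padding, since the embedded formula has size much smaller than $n$ (recall $n\ge(2|E|)^{16}$). The reduction then rounds $\widehat a$ to the nearest admissible value by binary search with exact comparisons, and this determines the count $s$.

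\textbf{Step 3: closed form and scale.} The displayed closed form is pure algebra. With $\Delta=1/(32gp)$,
\[
M=(1+(32gp)^2)^{2g},\qquad \lambda=\Bigl(\frac{1}{32gp-1}\Bigr)^b,\qquad \frac{12}{\Delta}=384gp.
\]
For the scale, $\log_2(MK)\le 4g\log_2(33gp)+r\log_2(384gp)$, with $r=d+b=O(gp)$. When $p$ is fixed and $g=O(n\log n)$, this is $O(n\log n\cdot\log n)=O(n\log^2n)$, so $\epsilon_n=2^{-O(n\log^2 n)}$.

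\textbf{Main obstacle.} The step that needs care is Step 2: making sure the counting gap $2^{-2m}$ dominates the error $2^{-2n-3}$. That requires checking that the number of formula variables $m$ stays below $n$ after padding, including for the hybrid ensemble. If that check fails, I would instead choose $\epsilon_n$ with an extra factor $2^{-2m}$. This factor is still polynomial in the bit length, because $m\le|E|$.
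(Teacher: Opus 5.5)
Your proposal is correct and follows the paper's proof essentially step for step: take $K_0=n$ from Remark~\ref{hard:rem:pad}, so that Theorem~\ref{hard:thm:robust} gives error $4\cdot2^nMK\epsilon_n=2^{-2n-3}$. Then enlarge the padding to force $n\ge m$, so this error falls below half the $2^{-2m}$ counting gap, and finally bound $\log_2(MK)$. The one difference is your scale estimate $r\log_2(384gp)=O(gp\log(gp))$, which is cruder than the paper's $K\le(384gp)^d13^b$; that bound gives $\log_2(1/\epsilon_n)=O(n+g\log(gp)+gp)$. Your estimate still suffices here, because a fixed margin above $3/4$ permits a fixed $p$ once $\eta_n^{\rm emb}\to0$.
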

\begin{proof}
For the counting circuit of Fact~\ref{hard:fact:sharpP}, $a_s=(1-s/2^m)^2$, and distinct adjacent possible values have separation at least $2^{-2m}$. Enlarge the polynomial padding in Theorem~\ref{hard:thm:pure} and Proposition~\ref{hard:thm:hybrid} to include $n\ge m$. Remark~\ref{hard:rem:pad} gives $\kappa\le n$, so Theorem~\ref{hard:thm:robust} and~\eqref{hard:eq:robustprecision} yield
\[
|\widehat a-a_s|\le4\,2^n MK\epsilon_n
=2^{-2n-3}\le2^{-2m-3}
\]
with the amplified success probability. This identifies the nearest allowed $a_s$ uniquely; binary search and comparison of the two neighboring candidates recover $s$ in polynomial time. Thus the count is recovered exactly from approximate replies.

Finally,
\[
K=(384gp)^d\left(\frac{384gp}{32gp-1}\right)^b
\le(384gp)^d13^b,
\]
so $\log_2(1/\epsilon_n)=O(n+g\log(gp)+gp)$. A fixed margin permits fixed $p$ for all sufficiently large sizes, including the embedding-failure allowance. With $g=O(n\log n)$ this gives the claimed scale. For an inverse-polynomial margin, the dependence on $p(n)$ in~\eqref{hard:eq:robustprecision} must be retained.
\end{proof}

\begin{remark}[What this robustness means]
The theorem tolerates numerical error in evaluating probabilities, as well as incorrect oracle replies. It is not a theorem of robustness to laboratory noise or constant-TV sampling error. Here $N^{-1}=2^{-n+O(\log n)}$, whereas the proved tolerance at logarithmic depth is much finer. Moreover, the circuit-query part remains an exact-real reduction; only the decoder has been given a finite rational implementation.
\end{remark}

\section{Embedding hard circuits into schedules}\label{hard:sec:embeddings}

We now supply the embedding procedure required by
Appendix~\ref{hard:sec:reduction}. Random matching schedules need two
routing estimates; a fixed routing prefix gives a simpler variant with
no embedding failures.

\subsection{Random matching schedules}\label{hard:sec:pure}

Let $T_n$ be the scan limit in~\eqref{eq:matchingtime}. The procedure $\mathcal W$ of Theorem~\ref{hard:thm:reduction} must realize $C_{Q_0}$ inside a typical schedule of $\Gp$, whose layers are i.i.d.\ uniform perfect matchings, and must do so for a uniformly random output string.
By Lemma~\ref{hard:lem:relabel} it suffices to fill the schedule with switches and solve two routing tasks.
\begin{itemize}[leftmargin=*,itemsep=1pt]
\item \emph{Meeting.} Before native layer $r$, the two modes of every pair of $P_r$ must sit on a common edge of the next layer. Only $m\le 2|E|$ pairs are involved, and we take $n$ polynomially larger than $m$.
\item \emph{Hiding.} At the end, the ones of $y_{Q_0}$ must be moved onto the ones of $z$. All $n/2$ occupied modes move, but only their \emph{set} matters.
\end{itemize}
Under a switch setting on a matching $M$, a mode at position $p$ moves to $M(p)$ if its edge carries $X$ and stays otherwise.

\subsubsection{Meeting windows}
\begin{lemma}[Meeting window]\label{hard:lem:meet}
Let $n$ be even and $m,\tau\ge1$. Let $(a_q,b_q)_{q\le m}$ be pairs of positions, all $2m$ distinct, and let $M_1,\dots,M_{\tau+1}$ be i.i.d.\ uniform perfect matchings of $[n]$.
A deterministic polynomial-time procedure either fails or outputs switch settings on $M_1,\dots,M_\tau$ with the following property: afterwards, for every $q$, the modes that started at $a_q$ and $b_q$ occupy the two endpoints of an edge of $M_{\tau+1}$.
The procedure fails with probability at most
\[
\varepsilon_{\rm meet}:=\frac{8m^2 2^{\tau}}{n-1}+m\exp\Big(-\frac{4^{\tau-1}}{2(n-1)}\Big).
\]
\end{lemma}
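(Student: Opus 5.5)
The plan is to grow, for each of the $2m$ marked modes, a binary tree of positions it can reach through switch settings on $M_1,\dots,M_\tau$. Keeping the trees of different modes disjoint makes their switch settings independent. The last matching $M_{\tau+1}$ must then contain an edge joining the reachable set $A_q$ of $a_q$ to the reachable set $B_q$ of $b_q$. Because the trees are disjoint, the chosen meeting edges for different $q$ are automatically distinct. Every switch not on a selected tree path is set to $I$.

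\emph{Step 1: tree growth.} Start with $A_q^{(0)}=\{a_q\}$ and $B_q^{(0)}=\{b_q\}$. At layer $\ell$, reveal $M_\ell$ and, for each current position $p$ in some set, look at its partner $M_\ell(p)$.
\begin{itemize}
\item If the partner is \emph{fresh}, meaning it lies in no current set, both $p$ and $M_\ell(p)$ are added. The switch on $\{p,M_\ell(p)\}$ is then owned by that one tree, and setting it to $X$ or $I$ chooses which of the two positions the mode occupies.
\item If the partner already lies in a set, this is a \emph{collision}. Both endpoints are pruned (frozen with switch $I$ and not doubled), so a collision never creates shared switches between trees.
\end{itemize}
Each surviving leaf then corresponds to a unique root-to-leaf switch pattern. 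Since the switches used by distinct trees lie on distinct edges, any choice of one leaf per tree is realized simultaneously. Every step is a deterministic polynomial-time scan.

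\emph{Step 2: bounding collisions (first failure term).} Reveal partners sequentially. Each revealed partner is uniform over at most $n-1$ positions, so it falls in the occupied region, of size at most $2m2^\ell$, with conditional probability at most $2m2^\ell/(n-1)$.
\begin{itemize}
\item A collision at layer $\ell$ costs its tree at most $2^{\tau-\ell+1}$ final leaves.
\item Weighting collisions by this cost gives an expected total leaf loss of at most $\sum_\ell (2m2^{\ell})^2 2^{\tau-\ell+1}/(n-1)=O(m^2 4^\tau/(n-1))$.
\item Markov's inequality with per-tree threshold $2^{\tau-1}$ then shows that, except with probability $O(m^22^\tau/(n-1))$, every tree keeps at least $2^{\tau-1}$ leaves.
\end{itemize}
Tuning these constants to reach exactly $8m^2 2^\tau/(n-1)$ (for instance by sharpening the region-size estimate layer by layer) is the step I expect to need the most care. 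The target shape, $m^2 2^\tau/(n-1)$, is forced by this weighted-Markov argument.

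\emph{Step 3: meeting in $M_{\tau+1}$ (second failure term).} Condition on the final disjoint sets with $|A_q|,|B_q|\ge 2^{\tau-1}$. Reveal the partners in $M_{\tau+1}$ of the elements of $A_q$ one at a time. Each hits $B_q$ with probability at least roughly $|B_q|/(n-1)$, even after accounting for partners already exposed. The probability that no edge joins $A_q$ to $B_q$ is therefore at most $(1-2^{\tau-1}/(n-1))^{2^{\tau-1}/2}\le\exp(-4^{\tau-1}/(2(n-1)))$; the factor $1/2$ absorbs the elements of $A_q$ whose partners were already revealed inside $A_q$. A union bound over the $m$ pairs gives the term $m\exp(-4^{\tau-1}/(2(n-1)))$.

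\emph{Step 4: output.} Pick any $A_q$–$B_q$ edge $(x,y)$ of $M_{\tau+1}$ and set the tree switches that route $a_q$ to $x$ and $b_q$ to $y$. Adding the two failure probabilities gives $\varepsilon_{\rm meet}$.
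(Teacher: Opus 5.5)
Your construction and overall argument match the paper's proof. The paper also grows disjoint frontiers by doubling, drops a position when its $M_\ell$-partner lies in another token's frontier, and reads switch settings off root-to-leaf paths. It then applies a Markov-plus-union bound to the frontier deficits and reveals $M_{\tau+1}$ sequentially over $A_q$ for the meeting term.

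Steps 1, 3 and 4 are sound, with two small points. In Step 1, the fresh/collision test must be made against the frontiers as they stood before layer $\ell$. Otherwise a newly added partner registers as a collision with its own parent. In Step 3 the bound is exact, not ``roughly'': until the first hit, all of $B_q$ is unmatched. So each new partner, uniform over at most $n-1$ positions, lands in $B_q$ with probability at least $|B_q|/(n-1)$.

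The gap is in Step 2, which as written gives only $O(m^22^\tau/(n-1))$, not the stated term.

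\emph{The conditional probability claim.} The ``at most $n-1$ positions'' reasoning runs the wrong way for an upper bound. Once some pairs of $M_\ell$ are exposed, the next partner is uniform over fewer positions. Its chance of landing in the occupied region can then exceed $|W|/(n-1)$. Since you only need an expectation, use instead that for each fixed $p$ the partner $M_\ell(p)$ is marginally uniform on $[n]\setminus\{p\}$, together with linearity.

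\emph{The constant.} Your region size $2m2^\ell$ gives an expected loss of about $16m^24^\tau/(n-1)$, hence $32m^22^\tau/(n-1)$ after Markov. The constant $8$ comes out once you use the sizes before the layer-$\ell$ doubling. Put $W=\bigcup_vA_v(\ell-1)$, so $|W|\le 2m2^{\ell-1}$. Let $X_u(\ell)$ count positions of $A_u(\ell-1)$ whose partner lies in $W$. Then $\mathbb E X_u(\ell)\le 2m4^{\ell-1}/(n-1)$ and $|A_u(\ell)|\ge2|A_u(\ell-1)|-2X_u(\ell)$. Hence the deficit $D_\ell=2^\ell-|A_u(\ell)|$ satisfies $D_\ell\le2D_{\ell-1}+2X_u(\ell)$, and
\[
\mathbb E D_\tau\le\sum_{\ell=1}^\tau2^{\tau-\ell+1}\,\frac{2m\,4^{\ell-1}}{n-1}\le\frac{2m\,4^\tau}{n-1}.
\]
Markov at $2^{\tau-1}$ gives $4m2^\tau/(n-1)$ per token. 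The union bound over the $2m$ tokens then gives exactly $8m^22^\tau/(n-1)$, with every frontier of size greater than $2^{\tau-1}$, as Step 3 requires. This is precisely the paper's accounting.
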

\begin{proof}
Call the $2m$ modes \emph{tokens} and write $s_u$ for the start of token $u$.
We grow \emph{frontiers} $A_u(\ell)\subseteq[n]$, the positions token $u$ may occupy after layer $\ell$, from $A_u(0)=\{s_u\}$.
Given the frontiers after layer $\ell-1$, call $p\in A_u(\ell-1)$ \emph{safe} if $M_\ell(p)\notin A_v(\ell-1)$ for every token $v\neq u$, and set $A_u(\ell)=\{p,M_\ell(p): p\ \text{safe}\}$.

\emph{Disjointness and paths.} The frontiers of distinct tokens stay disjoint. Suppose a position lay in $A_u(\ell)\cap A_v(\ell)$. It is an endpoint of an $M_\ell$-edge with an endpoint $p\in A_u(\ell-1)$ and an endpoint $r\in A_v(\ell-1)$. Disjointness at $\ell-1$ gives $p\neq r$, so $r=M_\ell(p)$, and then $p$ is not safe.
Every $x\in A_u(\ell)$ has a \emph{parent} $x'\in\{x,M_\ell(x)\}$ that is safe in $A_u(\ell-1)$.
Following parents from any $x_u\in A_u(\tau)$ gives a path $s_u=p_u(0),\dots,p_u(\tau)=x_u$ with $p_u(\ell)\in\{p_u(\ell-1),M_\ell(p_u(\ell-1))\}$ and $p_u(\ell-1)$ safe.
Put $X$ on the edge $\{p_u(\ell-1),M_\ell(p_u(\ell-1))\}$ if $p_u(\ell)\ne p_u(\ell-1)$, put $I$ on it otherwise, and put $I$ on all other edges.
Safety makes these edges distinct for distinct tokens and keeps other tokens off them. So every token follows its path, for any choice of endpoints.

\emph{Growth.} Let $W=\bigcup_vA_v(\ell-1)$ and let $X_u(\ell)$ count positions in $A_u(\ell-1)$ whose partner lies in $W$. All other positions double safely, so
\[
|A_u(\ell)|\ge2|A_u(\ell-1)|-2X_u(\ell),\qquad
\mathbb E X_u(\ell)\le\frac{2m\,4^{\ell-1}}{n-1}.
\]
The expectation bound follows because each partner is uniform among $n-1$ positions, while $|A_u(\ell-1)|\le2^{\ell-1}$ and $|W|\le2m2^{\ell-1}$. Iterating the deficit recurrence gives
\[
\mathbb E\big[2^\tau-|A_u(\tau)|\big]
\le\sum_{\ell=1}^\tau2^{\tau-\ell+1}\frac{2m\,4^{\ell-1}}{n-1}
\le\frac{2m\,4^\tau}{n-1}.
\]
Markov's inequality at deficit $2^{\tau-1}$, followed by a union bound over $2m$ tokens, shows that all frontiers have size $>2^{\tau-1}$ except with probability $8m^22^\tau/(n-1)$.

\emph{Meeting.} The frontiers depend only on $M_1,\dots,M_\tau$, so $M_{\tau+1}$ is independent of them.
Fix $q$ and put $A=A_{a_q}(\tau)$ and $B=A_{b_q}(\tau)$, which are disjoint.
Reveal the $M_{\tau+1}$-partners of the elements of $A$ one at a time, skipping elements that are already matched. At least $|A|/2$ partners get revealed.
While no partner has landed in $B$, all of $B$ is unmatched, so each new partner lands in $B$ with probability at least $|B|/(n-1)$.
Hence no edge joins $A$ and $B$ with probability at most $(1-|B|/(n-1))^{|A|/2}\le\exp(-4^{\tau-1}/(2(n-1)))$ when $|A|,|B|>2^{\tau-1}$.
The procedure picks, for each $q$, an edge $\{x_q,y_q\}$ of $M_{\tau+1}$ with $x_q\in A$ and $y_q\in B$, and uses the paths to $x_q$ and $y_q$.
A union bound over $q$ completes the proof.
\end{proof}

\subsubsection{Set routing}
Put $d_j=\binom nj-\binom n{j-1}$ (with $\binom n{-1}=0$), so that $\sum_{j=0}^kd_j=\binom nk$.
\begin{lemma}[Set routing]\label{hard:lem:setroute}
Let $n=2k\ge400$ and $T\ge20$. Let $Y\subseteq[n]$ with $|Y|=k$, and let $M_1,\dots,M_T$ be i.i.d.\ uniform perfect matchings.
Call $Z$ \emph{reachable} if some switch setting on $M_1,\dots,M_T$ maps the set $Y$ onto $Z$.
For $Z$ uniform among $k$-subsets and independent of the matchings,
\[
\Pr[Z\text{ is not reachable}]\le\tfrac12\Big(e^{n(3/4)^T}-1+2^{-n/2}\Big)^{1/2}.
\]
For reachable $Z$, a switch setting is found in polynomial time as an integral maximum flow.
\end{lemma}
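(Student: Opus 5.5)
The approach is a second-moment (collision) bound for a random walk on $k$-subsets, combined with a counting argument. Fix the matchings and let $\sigma$ be a uniformly random switch setting, i.e.\ independent fair coins on every edge of every layer. Let $\mu$ be the law of $\sigma(Y)$ on $\binom{[n]}{k}$. Every $Z$ in the support of $\mu$ is reachable. Hence
\[
\Pr_Z[Z\text{ not reachable}\mid M]\le 1-|\operatorname{supp}\mu|/N .
\]
By Cauchy--Schwarz, $|\operatorname{supp}\mu|\ge 1/\sum_Z\mu(Z)^2$. A cleaner route uses total variation: $\Pr[Z\notin\operatorname{supp}\mu]\le d_{\rm TV}(\mu,{\rm unif})$, and this is at most $\tfrac12\bigl(N\sum_Z\mu(Z)^2-1\bigr)^{1/2}$. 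Taking expectation over the matchings and applying Jensen, it suffices to prove
\[
\mathbb E_M\Bigl[N\sum_Z\mu(Z)^2\Bigr]-1\le e^{n(3/4)^T}-1+2^{-n/2}.
\]

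The collision equals the probability that two independent switch settings $\sigma,\sigma'$ on the same matchings send $Y$ to the same set. I compute this in the permutation representation of $S_n$ on $\mathbb C^{\binom{[n]}{k}}$. This representation decomposes as $\bigoplus_{j=0}^k S^{(n-j,j)}$, with dimensions $d_j$. One random-switch layer on a fixed matching $M$ acts by the operator $\prod_{e\in M}\tfrac12(I+\tau_e)$. Two copies with the same matching give the operator $\mathbb E_M\bigl[\bigotimes^2\prod_e\tfrac12(I+\tau_e)\bigr]$. Conjugation invariance of the uniform matching law means this operator commutes with the diagonal $S_n$ action, so it acts through the isotypic components of pairs. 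Projected onto the relevant functions, the quantity $N\,\mathbb E\sum\mu^2-1$ becomes a sum over $j\ge1$ of $d_j$ times a $T$-th power of a per-layer contraction factor on $S^{(n-j,j)}$, evaluated against the initial indicator of $Y$. The target bound is $\sum_{j\ge1} d_j\,\beta_j^{T}$, with $\beta_j\le(3/4)^j$ for $j$ up to a constant fraction of $n$, and $\beta_j$ exponentially small for the large-$j$ tail. Then
\[
\sum_j d_j\beta_j^T\le\sum_j\binom nj(3/4)^{jT}\le e^{n(3/4)^T}-1,
\]
and the residual tail (and the $j=k$ edge effects) accounts for the extra $2^{-n/2}$. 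This mirrors the collision-chain analysis of Lemma~\ref{lem:matchingchain}, with the Johnson scheme again supplying the spectral basis.

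The main obstacle is bounding the per-layer factor on $S^{(n-j,j)}$. Heuristically, for $j$ tracked elements each edge either separates them, giving a factor $1/2$ from the independent coins when the two copies act on one tracked point, or collides two tracked points. The diagonal second-moment weight per tracked point should be $\tfrac12+\tfrac14=3/4$. I would compute this factor exactly via a generating function over matching edge types, as in~\eqref{eq:matchingpgf}, using spherical functions (Hahn polynomials) of the Johnson scheme. I would then bound it by $(3/4)^j e^{O(j^2/n)}$, in the spirit of Lemma~\ref{lem:quantmatching}. I would also show a crude uniform bound $\beta_j\le c<1$, so that for $T\ge20$ and $n\ge400$ the high-$j$ terms contribute at most $2^{-n/2}$.

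Finally, for the algorithmic claim, the layered graph on positions $\times$ layers is built with unit capacities and switch gadgets per edge. A source feeds $Y$ and $Z$ feeds a sink. Integral max flow decides reachability of $Z$ and extracts consistent switch settings. Consistency needs care: an edge carrying one token must swap or stay for both of its endpoints. I would handle this by modelling each matching edge as a two-in/two-out gadget whose integral flows correspond exactly to the set-level effects of $I$ or $X$.
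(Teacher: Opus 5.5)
Your outer reduction matches the paper's. Random switches give a law $\mu$ supported on reachable sets, $\Pr[Z\notin\operatorname{supp}\mu]\le\TV(\mu,\mathrm{Unif})\le\frac12(N\|\mu\|_2^2-1)^{1/2}$, Jensen handles the average over matchings, and the collision decomposes over $V_j\cong S^{(n-j,j)}$. The flow part is also fine. With unit node capacities, an edge carrying two tokens forces both to stay or both to cross, and these give the same set, so no special gadget is needed.

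You should drop the two-copy operator, though. $V\otimes V$ is not multiplicity-free under the diagonal action, so Schur's lemma there does not give per-$j$ scalars. What actually works is that $P_M=\prod_{e\in M}\frac12(I+\tau_e)$ is the orthogonal projector onto the invariants of $H_M\cong\mathbb Z_2^k$ and commutes with the isotypic projectors $E_j$. Hence $\mathbb E_M\|P_Mv\|^2=\langle v,\mathbb E_MP_Mv\rangle=\beta_j\|v\|^2$ for $v\in V_j$, with $\beta_j=\operatorname{tr}(E_jP_M)/d_j$.

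The genuine gap is the step you flag as the main obstacle: you give no way to evaluate or bound $\beta_j$, and the pieces you sketch would not yield the stated inequality.
\begin{itemize}
\item The heuristic of $3/4$ per tracked point is wrong for small $j$. In fact $\beta_1=(k-1)/(2k-1)$ and $\beta_2=(k-1)/(2(2k-3))$, so $\beta_j\approx2^{-j}$ there, and $(3/4)^j$ is only a loose envelope.
\item A bound of the form $(3/4)^je^{O(j^2/n)}$ is useless once $j=\Theta(n)$.
\item A uniform bound $\beta_j\le c<1$ cannot control the tail. The dimensions $d_j$ are exponentially large near $j=k$ (e.g.\ $d_k=\Cat_k$), while $T\ge20$ is fixed. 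You need $\beta_j\le\rho^k$ with $4\rho^T\le\frac12$.
\end{itemize}

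The missing idea is an orbit count. The permutation module on $j$-subsets is $\bigoplus_{i\le j}V_i$. The trace of the projector $P_M$ on it equals the number of $H_M$-orbits on $j$-subsets, which is $N_j=[x^j](1+x+x^2)^k$, since an orbit is fixed by whether each edge is met in $0$, $1$ or $2$ points. This gives:
\begin{itemize}
\item $\beta_j=(N_j-N_{j-1})/d_j\le r_jN_j/\binom nj$, with $r_j=\binom nj/d_j$.
\item $N_j/\binom nj=2^{-j}\,\mathbb E\,4^D$, where $D$ is the number of matching edges inside a uniform $j$-subset $S$.
\item Expanding $4^D=\prod_e(1+3\cdot\mathbf 1[e\subseteq S])$ gives $\mathbb E\,4^D\le(1+3j^2/(4k^2))^k$.
\item Hence $\beta_j\le(3/4)^j$ for $j\le k/2$.
\item For $j\ge k/2$, log-convexity of $2^{-\alpha}(1+3\alpha^2/4)$ on $[1/2,1]$ gives $\beta_j\le(k+1)(7/8)^k\le(9/10)^k$.
\item Then $\binom nk(9/10)^{kT}\le2^{-k}$ for $T\ge20$, which supplies the $2^{-n/2}$ term.
\end{itemize}
A Hahn-polynomial route could express $\beta_j$ exactly as an average of a Johnson-scheme spherical function. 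You would still need uniform estimates for all $j\le k$, which is precisely what is missing.
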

\begin{proof}
\emph{Algorithm.} Use the time-expanded network with nodes $(p,\ell)$ for $p\in[n]$ and $0\le\ell\le T$, node capacities $1$, and arcs $(p,\ell-1)\to(p,\ell)$ and $(p,\ell-1)\to(M_\ell(p),\ell)$. A source feeds $\{(y,0):y\in Y\}$ and $\{(z,T):z\in Z\}$ drain to a sink.
An integral flow of value $k$ is a family of $k$ token trajectories with at most one token per node.
On an edge $\{p,p'\}$ of $M_\ell$ carrying exactly one token, put $X$ if the token crosses and $I$ otherwise; on the other edges put $I$. The occupied set then evolves as the flow does, so $Y$ is mapped onto $Z$.
If both endpoints are occupied, crossing both trajectories and leaving both in place give the same occupied set. Thus these edges need no swap: the task routes a set, not individually labelled particles.
Conversely, a switch setting mapping $Y$ onto $Z$ yields such a flow.

\emph{Reduction to a collision bound.} Fix the matchings and let $\mu$ be the law of the image of $Y$ when every switch is $X$ independently with probability $1/2$.
Every set in the support of $\mu$ is reachable. So for uniform $Z$,
\[
\Pr[Z\notin{\rm supp}\,\mu]\le\TV(\mu,{\rm Unif})\le\tfrac12\big(N\|\mu\|_2^2-1\big)^{1/2}
\]
by Cauchy--Schwarz.
By Jensen's inequality it suffices to prove
\begin{equation}\label{hard:eq:switchcoll}
\mathbb E_{M}\big[N\|\mu\|_2^2\big]=\sum_{j=0}^k d_j\beta_j^T,\qquad \sum_{j=1}^k d_j\beta_j^T\le e^{n(3/4)^T}-1+2^{-n/2},
\end{equation}
with $\beta_0=1$ and $\beta_j\in[0,1]$ defined below.

\emph{Spectral identity.} The permutation action on $V=\mathbb C^{\binom{[n]}k}$ decomposes as $V=\bigoplus_{j=0}^kV_j$, where $V_j\cong S^{(n-j,j)}$ has dimension $d_j$. Write $E_j$ for the corresponding orthogonal projectors.
For a matching $M$, averaging all its switch settings gives the orthogonal projector $P_M$ onto invariants of $H_M\cong\mathbb Z_2^k$. Each $P_M$ preserves every $V_j$.
Since the matching law is invariant under relabeling, Schur's lemma gives
\[
\left.\mathbb E_M P_M\right|_{V_j}=\beta_j I,
\qquad \beta_j=\frac{\operatorname{tr}(E_jP_M)}{d_j}\in[0,1].
\]
The trace is independent of $M$, and $\beta_0=1$. Because $P_M^2=P_M$, any fixed $v\in V_j$ satisfies
\[
\mathbb E_M\|P_Mv\|^2=\mathbb E_M\langle v,P_Mv\rangle=\beta_j\|v\|^2.
\]
Thus each independent layer multiplies the expected squared norm in $V_j$ by $\beta_j$. Starting at $\delta_Y$ and using transitivity, which gives $\|E_j\delta_Y\|^2=d_j/N$, we obtain
\[
\mathbb E\|\mu\|^2=\sum_{j=0}^k\beta_j^T\|E_j\delta_Y\|^2
=\frac1N\sum_{j=0}^kd_j\beta_j^T.
\]

\emph{The eigenvalues.} The trace ${\rm tr}(E_jP_M)$ is the dimension of the $H_M$-invariant subspace of $V_j$.
The permutation module on $j$-subsets is $\bigoplus_{i\le j}V_i$. Its $H_M$-invariant subspace has dimension equal to the number of $H_M$-orbits on $j$-subsets.
An orbit is determined by recording, for each edge of $M$, whether the subset meets it in $0$, $1$ or $2$ points. So the number of orbits is $N_j:=[x^j](1+x+x^2)^k$, and
\[
\beta_j=\frac{N_j-N_{j-1}}{d_j}.
\]
For a uniformly random $j$-subset $S$, let $s$ be the number of edges of $M$ that $S$ meets exactly once, and $D=(j-s)/2$ the number it contains.
The orbit of $S$ has size $2^s$, so $N_j/\binom nj=\mathbb E\,2^{-s}=2^{-j}\,\mathbb E\,4^D$.
Expand $4^D=\prod_e(1+3\cdot\mathbf 1[e\subseteq S])$ and use $\Pr[\text{$a$ given edges}\subseteq S]=\frac{j(j-1)\cdots(j-2a+1)}{n(n-1)\cdots(n-2a+1)}\le(j/n)^{2a}$. This gives
\[
\beta_j\le r_j\,2^{-j}\Big(1+\frac{3j^2}{4k^2}\Big)^k,\qquad r_j:=\frac{\binom nj}{d_j}=\frac{2k-j+1}{2k-2j+1}.
\]
For $j\le k/2$ we have $r_j\le e^{j/(k+1)}$ and $(1+3j^2/4k^2)^k\le e^{3j/8}$, so $\beta_j\le(e^{3/8+1/(k+1)}/2)^j\le(3/4)^j$ for $k\ge32$.
For $k/2\le j\le k$ we have $r_j\le k+1$. With $\alpha=j/k$, put $f(\alpha)=2^{-\alpha}(1+3\alpha^2/4)$. On $[1/2,1]$,
\[
(\log f)''(\alpha)=\frac{(3/2)(1-3\alpha^2/4)}{(1+3\alpha^2/4)^2}>0.
\]
Hence $f$ is at most its larger endpoint value: $f(1/2)=19/(16\sqrt2)<f(1)=7/8$.
So $\beta_j\le(k+1)(7/8)^k\le(9/10)^k$ for $k\ge200$.
Therefore
\[
\sum_{1\le j\le k/2}d_j\beta_j^T\le\sum_{j\ge1}\binom nj(3/4)^{jT}\le e^{n(3/4)^T}-1,
\qquad
\sum_{j>k/2}d_j\beta_j^T\le\binom{n}{k}(9/10)^{kT}\le(4\cdot0.9^{T})^k\le2^{-k}
\]
for $T\ge20$.
\end{proof}

\subsubsection{The embedding and the theorem}
Use the window lengths and embedding-failure bound from
\eqref{eq:hardness-depth}--\eqref{eq:hardness-embedding-error}.

\begin{lemma}[Worst cases inside $\Gp$]\label{hard:lem:embedpure}
Let $n=4B\ge400$ and $t\ge t_{\rm hard}(n)$.
Let $Q_0$ be a Clifford$+T$ circuit whose construction in Theorem~\ref{hard:thm:worst} uses $|E|$ blocks with $2|E|\le n^{1/16}$, padded as in Remark~\ref{hard:rem:pad}.
Then $\Gp$ admits a procedure $\mathcal W$ as in Theorem~\ref{hard:thm:reduction} with failure probability $\eta\le\eta_n^{\rm emb}$.
\end{lemma}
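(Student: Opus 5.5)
The procedure $\mathcal W$ splits the sampled schedule $M_1,\dots,M_t$ into four meeting windows of length $\tau_n^{\rm meet}+1$, followed by a hiding window of length at least $T_n^{\rm h}$. Any extra layers beyond $t_{\rm hard}(n)$ are absorbed into the hiding window; extra layers only enlarge the reachable family, and Lemma~\ref{hard:lem:setroute} applies to any $T\ge20$. In window $r\in\{1,2,3,4\}$ the first $\tau=\tau_n^{\rm meet}$ matchings carry switches. The last matching $M_{\ell_r}$ of the window is the native layer $\ell_r$ of Lemma~\ref{hard:lem:relabel}. It receives the gates of $P_r$ on the edges selected by Lemma~\ref{hard:lem:meet}, and identity on all other edges.

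Before window $r$, the permutation $\pi_{\ell_r}$ built by earlier switches places the $2m\le 2\cdot 2|E|\le 2n^{1/16}$ relevant modes at known distinct positions. We apply Lemma~\ref{hard:lem:meet} with the current positions of the pairs of $P_r$ and the fresh matchings of that window. A later window's matchings are independent of earlier choices, and the target positions are determined by the earlier switches. Hence conditioning is harmless, and each window fails with probability at most $\varepsilon_{\rm meet}$. An idle mode of $P_r$ may sit on the selected $M_{\ell_r}$ edge only with its partner by construction. Remaining edges get the identity, so Lemma~\ref{hard:lem:relabel} applies. After the fourth native layer, the hiding window uses Lemma~\ref{hard:lem:setroute} with $Y=\pi(\text{ones of }y_{Q_0})$, the set after prior switches, and a uniform $Z=z$ independent of those matchings. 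This gives failure probability at most $\tfrac12(e^{n(3/4)^{T}}-1+2^{-n/2})^{1/2}$, and max-flow finds the switches. The output is $w$ (switches, $P_r$ gates, identities) and $\kappa$ from Remark~\ref{hard:rem:pad}. Lemma~\ref{hard:lem:relabel} then yields $p_z(S,w)=2^{-\kappa}|\langle0|Q_0|0\rangle|^2$.

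The remaining work is arithmetic, bounding $4\varepsilon_{\rm meet}$ plus the hiding term by $\eta_n^{\rm emb}$. With $m\le n^{1/16}$ and $2^\tau\le 2n^{5/8}$, we get $4\cdot 8m^2 2^\tau/(n-1)\le 64n^{1/8+5/8}/(n-1)\le 68 n^{-1/4}$ for $n\ge400$. Next, $4^{\tau-1}\ge n^{5/4}/4$, so $4m\exp(-4^{\tau-1}/(2(n-1)))\le 4n^{1/16}e^{-n^{1/4}/8}$. For hiding, $T\ge4\log_2 n$ gives $n(3/4)^T\le n^{1-4\log_2(4/3)}\approx n^{-0.66}$. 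Then $e^{x}-1\le 2x$, together with the $2^{-n/2}$ term, gives at most $n^{-3/10}$ after the square root and factor $\tfrac12$.

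The main obstacle is the independence and conditioning bookkeeping across windows. Each Lemma~\ref{hard:lem:meet} application needs its start positions fixed before that window's matchings are revealed, and the set-routing step needs $z$ uniform and independent of the hiding matchings. I would handle both by revealing the matchings window by window and noting that the failure bounds hold conditionally on all previous layers. One more check is that the padded blocks, untouched by $P_r$, never obstruct a meeting. This holds because only the $2m$ tokens are constrained and other modes ride along through switches. Their effect is captured by the permutation $\pi$, which the hiding step then accounts for.
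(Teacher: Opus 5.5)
Your proposal is correct and follows essentially the same route as the paper: four meeting windows handled by Lemma~\ref{hard:lem:meet} with native layer $\ell_r$ at the end of each, a hiding window handled by Lemma~\ref{hard:lem:setroute}, Lemma~\ref{hard:lem:relabel} for the final identity, fresh matchings in each window for the conditioning, and the same arithmetic ($68n^{-1/4}$, $4n^{1/16}e^{-n^{1/4}/8}$, and $\tfrac12(3n^{-0.66})^{1/2}\le n^{-3/10}$). The differences are cosmetic. You absorb surplus layers into the hiding window, whereas the paper sets them to $I$; both are valid because the set-routing bound decreases in $T$. You should also add the trivial case of an empty $P_r$: Lemma~\ref{hard:lem:meet} needs $m\ge1$, so in that case set the whole window to $I$, as the paper does.
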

\begin{proof}
Split the $t$ layers of $S$ into four \emph{meeting windows} of $\tau_n^{\rm meet}+1$ layers, then a \emph{hiding window} of $T^{\rm h}_n$ layers, then the rest. Track the current position of every mode.
\begin{itemize}[leftmargin=*,itemsep=1pt]
\item In window $r\le4$, apply Lemma~\ref{hard:lem:meet} to the pairs of $P_r$, of which there are at most $2|E|\le n^{1/16}$; if $P_r$ is empty, set the whole window to $I$. Use the switches it returns on the first $\tau_n^{\rm meet}$ layers. In the last layer, place $F_{\pi(i)\pi(j)}(u)$ on the edge joining the current positions $\pi(i),\pi(j)$ for each gate $F_{ij}(u)$ of $P_r$, and $I$ elsewhere. With the ordered-pair convention this gate is $u$ or $XuX$.
\item In the hiding window, apply Lemma~\ref{hard:lem:setroute} with $Y$ the current positions of the ones of $y_{Q_0}$ and $Z$ the ones of $z$.
\item Set all remaining gates to $I$.
\end{itemize}
If every step succeeds, Lemma~\ref{hard:lem:relabel} gives $p_z(S,w)=2^{-\kappa}|\bra0Q_0\ket0|^2$, with $\kappa$ from Theorem~\ref{hard:thm:worst} and Remark~\ref{hard:rem:pad}. All gates lie in $\{I,X\}\cup\{R_\phi,XR_\phi X\}$.
Each window uses fresh matchings, independent of the positions it starts from and of $z$.
With $m\le n^{1/16}$ and $n^{5/8}\le2^{\tau_n^{\rm meet}}<2n^{5/8}$, Lemma~\ref{hard:lem:meet} bounds each meeting window's failure probability by $16n^{3/4}/(n-1)+n^{1/16}e^{-n^{1/4}/8}\le17n^{-1/4}+n^{1/16}e^{-n^{1/4}/8}$.
Since $T^{\rm h}_n\ge20$ and $n(3/4)^{T^{\rm h}_n}\le n^{1+4\log_2(3/4)}\le n^{-0.66}$, Lemma~\ref{hard:lem:setroute} bounds the hiding window's by $\frac12(3n^{-0.66})^{1/2}\le n^{-3/10}$.
A union bound gives $\eta\le\eta_n^{\rm emb}$.
\end{proof}

\begin{theorem}[Average-case hardness for the random-matching ensemble]\label{hard:thm:pure}
Let $p(n)\ge4$ be an integer-valued, polynomial-time computable, polynomially bounded margin parameter. Let $t(n)\ge t_{\rm hard}(n)$ be an integer-valued, polynomial-time computable, polynomially bounded depth on the admissible sizes.
Suppose that, for all sufficiently large $n$ divisible by $4$, an oracle $\mathcal O$ computes $p_z$ exactly on a fraction at least $\frac34+\frac1{p(n)}+\eta_n^{\rm emb}$ of instances of $\Gp$ at depth $t(n)$, with $z$ uniform.
Then $\mathsf P^{\#\mathsf P}\subseteq\mathsf{BPP}^{\mathcal O}$, for reductions in the real-RAM model.
At every such depth the ensemble also anticoncentrates: $t_{\rm hard}(n)\ge T_n\ge t^{\rm mag}_n$, so Corollary~\ref{cor:certified-ac} applies.
\end{theorem}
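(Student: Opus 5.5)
The plan is to instantiate Theorem~\ref{hard:thm:reduction} with the embedding procedure of Lemma~\ref{hard:lem:embedpure}, and to reduce an arbitrary $\#\mathsf P$ instance to the padded worst-case circuit of Theorem~\ref{hard:thm:worst}.

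\textbf{Worst case and padding.} Given a Boolean formula $f$ on $m$ variables, form $Q_f$ as in Fact~\ref{hard:fact:sharpP} and run the construction of Theorem~\ref{hard:thm:worst} on it. This yields a graph with $|E|=\poly(m)$ blocks. Choose $n$ as the least multiple of four with $n\ge\max\{400,(2|E|)^{16},m\}$, large enough that the oracle promise holds, and pad as in Remark~\ref{hard:rem:pad}. Then $\kappa\le n$ and $n=\poly(m)$, so $t(n)$ and $g=nt(n)/2$ are polynomial and computable.

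\textbf{Embedding.} Lemma~\ref{hard:lem:embedpure} gives a procedure $\mathcal W$ with failure probability $\eta\le\eta_n^{\rm emb}$ over $S\sim\Gp$ and uniform $z$. Its outputs $w$ and $\kappa$ satisfy $p_z(S,w)=2^{-\kappa}|\langle0|Q_f|0\rangle|^2$, as Theorem~\ref{hard:thm:reduction} requires. The schedule sampler is exact in the real-RAM model, since a uniform permutation paired consecutively gives a uniform matching.

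\textbf{Reduction.} The oracle succeeds on a fraction at least $3/4+1/p+\eta_n^{\rm emb}$, so $\delta\le1/4-1/p-\eta_n^{\rm emb}$ and hence $\delta+\eta\le1/4-1/p$. Theorem~\ref{hard:thm:reduction} then recovers $|\langle0|Q_f|0\rangle|^2=(1-s/2^m)^2$ with probability $1-\xi$, using polynomially many queries and polynomial further work. Fact~\ref{hard:fact:sharpP} recovers $s$ by binary search with exact comparisons. This places $\#\mathsf P$, and therefore $\mathsf P^{\#\mathsf P}$, in $\mathsf{BPP}^{\mathcal O}$.

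\textbf{Main obstacle.} The step needing the most care is confirming that the probability and independence bookkeeping in Lemma~\ref{hard:lem:embedpure} matches the hypotheses of Theorem~\ref{hard:thm:reduction}. Two points must hold. First, the embedding failure must be measured under the exact target law of $(S,z)$. Second, the queries must be made even on failed runs, which the dummy-query device ensures. Both points are already handled in the cited proofs.

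\textbf{Anticoncentration.} For the final sentence, compare depths. From \eqref{eq:matchingtime}, $T_n\approx\log_2 n/\log_2(6/5)+O(1)\approx3.80\log_2n$. This is below $4\log_2 n\le T_n^{\rm h}\le t_{\rm hard}$ once $n\ge400$; I would verify this inequality explicitly, including the additive constant $\log(1/\log 2)/\log(6/5)\approx2.0$, against the extra $4(\tau_n^{\rm meet}+1)$ layers. Lemma~\ref{lem:magiccertificate} gives $t_n^{\rm mag}\le T_n$, and since the certificate persists at all later depths, Corollary~\ref{cor:certified-ac} applies at $t(n)$.
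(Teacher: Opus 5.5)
Your proposal is correct and follows the paper's proof essentially step for step: pad to $n\ge\max(n_1,(2|E|)^{16})$, feed the embedding of Lemma~\ref{hard:lem:embedpure} into Theorem~\ref{hard:thm:reduction} with $\delta=\tfrac14-\tfrac1{p(n)}-\eta_n^{\rm emb}$, recover the count by binary search, and bound $T_n\le 3.81\log_2(n+1)+4\le 6.5\log_2 n+4\le t_{\rm hard}(n)$. Two small cautions: the comparison $T_n\le T_n^{\rm h}$ on its own fails for some admissible sizes (at $n=500$, $T_n=37>36=T_n^{\rm h}$), so the meeting layers are genuinely needed in the depth comparison, as you suspected; and passing from $\#\mathsf P$ to $\mathsf P^{\#\mathsf P}$ requires amplifying each adaptive call to inverse-polynomial failure and taking a union bound, which the paper states explicitly and you should too.
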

\begin{proof}
Given a Clifford$+T$ circuit $Q_0$ of size $s$, Theorem~\ref{hard:thm:worst} uses $|E|=\poly(s)$ blocks.
Let $n$ be the least multiple of $4$ with $n\ge\max(n_1,(2|E|)^{16})$, where $n_1\ge400$ is a constant beyond which the hypothesis on $\mathcal O$ holds. Then $n=\poly(s)$.
The eventual cutoff $n_1$ is fixed for the given oracle and can be built into the reduction; it need not be inferred from oracle answers. Computing $t(n)$ and $p(n)$ and sampling the schedule take polynomial time by the hypotheses.
Apply Theorem~\ref{hard:thm:reduction} to $Q_0$ with $\mathcal W$ from Lemma~\ref{hard:lem:embedpure}, $g=nt(n)/2$ and $\delta=\frac14-\frac1{p(n)}-\eta_n^{\rm emb}$. It computes $|\bra0Q_0\ket0|^2$ with probability $1-\xi$ using $\poly(s)$ oracle queries and time. Fact~\ref{hard:fact:sharpP} then recovers the encoded count by binary search. To simulate a polynomial-time computation with polynomially many adaptive $\#\mathsf P$ queries, amplify each call to inverse-polynomial failure and use a union bound, conditional on the preceding answers being correct. This gives $\mathsf P^{\#\mathsf P}\subseteq\mathsf{BPP}^{\mathcal O}$ in the stated real-RAM model.
For the last claim, the scan limit $T_n$ in~\eqref{eq:matchingtime} satisfies $t^{\rm mag}_n\le T_n\le3.81\log_2(n+1)+4\le6.5\log_2n+4\le t_{\rm hard}(n)$ for $n\ge4$.
\end{proof}

\begin{remark}
The margin above $3/4$ must exceed the specific inverse polynomial $\eta_n^{\rm emb}=\Theta(n^{-1/4})$, so an arbitrary $1/\poly$ margin does not suffice. The hybrid ensemble of Section~\ref{hard:sec:hybrid} avoids this, since there $\eta=0$.
With the present constants $\eta_n^{\rm emb}<1/4$ only for $n$ above about $2^{32.4}$, so the theorem is asymptotic.
The exponent $16$ in $n\ge(2|E|)^{16}$ and the constants in $t_{\rm hard}$ come from the Markov bound in Lemma~\ref{hard:lem:meet} and the $(3/4)^j$ envelope in Lemma~\ref{hard:lem:setroute}.
\end{remark}

\subsection{A deterministic routing prefix}\label{hard:sec:hybrid}

A fixed routing prefix removes the embedding failures. Let $n=2^r$,
$r\ge2$, and label the modes $0,\ldots,n-1$. The Bene\v{s} schedule
$\mathcal B_r$ pairs $i$ with $i\oplus2^{b_s}$ at stage $s$, where
$(b_s)=(r-1,r-2,\ldots,1,0,1,\ldots,r-1)$. Its $2r-1$ switch stages
realize every permutation, with settings computable in $O(n\log n)$
time~\cite{Benes1964,Waksman1968}. Let
$\mathcal A=\{(2i,2i+1):0\le i<n/2\}$.

The \emph{hybrid ensemble} $\Gsh$ has prefix
\[
 S_{\rm pre}=(\mathcal B_r,\mathcal A,\mathcal B_r,\mathcal A,
 \mathcal B_r,\mathcal A,\mathcal B_r,\mathcal A,\mathcal B_r),
\]
followed by $t$ independent uniform matching layers. Every gate, including
those in the prefix, is independently Haar distributed in $U(2)$, and
$z$ is uniform on $\Y$. The total depth is $10\log_2n-1+t$ and the gate
count is $g=n(10\log_2n-1+t)/2$.

\begin{samepage}
\begin{proposition}[Deterministic routing variant]\label{hard:thm:hybrid}
\label{hard:lem:ac}\label{hard:lem:embed}
The hybrid ensemble has the following properties.
\begin{enumerate}[label=(\roman*),itemsep=2pt]
\item Every hard circuit from Theorem~\ref{hard:thm:worst} using at most
$n/4$ blocks, padded as in Remark~\ref{hard:rem:pad}, embeds in every
schedule and at every requested output $z$. The embedding is computable
in polynomial time, uses gates in
$\{I,X\}\cup\{R_\phi,XR_\phi X:\phi\in\frac\pi4\mathbb Z\}$, and preserves
$p_z(S,w)=2^{-\kappa}|\bra0Q_0\ket0|^2$. Thus $\eta=0$.
\item For every $t\ge0$,
\[
 \mathbb E\Big[N\sum_zp_z^2\Big]
 \le(1+\varepsilon_t)\mathcal R_{\rm Haar}(n),\qquad
 \varepsilon_t=\sum_{j=1}^km_j\lambda_j^t
 \le e^{(n+1)(5/6)^t}-1.
\]
For $t\ge T_n$ from~\eqref{eq:matchingtime}, the collision is less than
$8$ and $\Pr_{(S,u),z}[p_z\ge1/(2N)]\ge1/32$.
\item Let $p(n)\ge4$ and $t(n)\ge0$ be integer-valued, polynomial-time
computable, polynomially bounded functions on powers of two $n\ge4$.
If an oracle computes $p_z$ exactly on a fraction at least
$3/4+1/p(n)$ of instances for all sufficiently large such $n$, then
$\mathsf P^{\#\mathsf P}\subseteq\mathsf{BPP}^{\mathcal O}$ by randomized
polynomial-time reductions in the real-RAM model. The additive-error
extension is Corollary~\ref{hard:cor:robusthard}.
\end{enumerate}
\end{proposition}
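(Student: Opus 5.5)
Part (i) is proved by interleaving the four native layers of $C_{Q_0}$ with the five Bene\v{s} stages, using Lemma~\ref{hard:lem:relabel}. Choose an arbitrary injective placement of the padded hard circuit's modes. For each native layer $P_r$, pick a permutation $\sigma_r$ that sends the two modes of every pair of $P_r$ to positions $2i,2i+1$ forming an edge of $\mathcal A$. The first Bene\v{s} network realizes $\sigma_1$. The $r$-th Bene\v{s} network for $r=2,3,4$ realizes $\sigma_r\sigma_{r-1}^{-1}$, composed with the appropriate current labelling. The settings are computed in $O(n\log n)$ time by the cited Bene\v{s}/Waksman routing.

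On each $\mathcal A$ layer we place $F_{\pi(i)\pi(j)}(u)$, which equals $u$ or $XuX$ depending on the orientation, and $I$ on unused edges. The final Bene\v{s} network realizes a permutation that carries the current positions of the ones of $y_{Q_0}$ onto the ones of $z$. Any completion of this set bijection to a permutation works, and one exists for every $z$. The $t$ random suffix layers all receive $I$. Lemma~\ref{hard:lem:relabel} then gives $p_z(S,w)=p_{y_{Q_0}}(C_{Q_0})=2^{-\kappa}|\langle0|Q_0|0\rangle|^2$, with $\kappa$ from Theorem~\ref{hard:thm:worst} and Remark~\ref{hard:rem:pad}. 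Because the schedule is fixed and routing is always possible, no failure event occurs, so $\eta=0$. The only care needed is that at most $n/4$ blocks gives at most $n/2$ pairs per layer, so every pair fits on an $\mathcal A$ edge.

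For (ii), let $V_{\rm pre}$ denote the prefix unitary, which is passive. Conditional on $V_{\rm pre}$, the input $\Gamma(V_{\rm pre})\Psi_{\rm in}$ is a normalized half-filled state $\Psi'$. Moreover $w_0(\Psi')=w_0(\Psi_{\rm in})$, because $P_0$ commutes with $\Gamma(V)^{\otimes2}$, so Theorem~\ref{thm:main} gives $\ratio_{\rm Haar}(n,\Psi')=\ratio_{\rm Haar}(n)$. The suffix is the core $\mathsf C_{n,t}$, and \eqref{eq:sharpexpansion} gives $\ratio/\ratio_{\rm Haar}=1+\sum_j c_j(\Psi')\lambda_j^t$. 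Next, $|c_j|\le m_j|\phi_j(k)||B_j|\le m_j$ follows from $|\phi_j|\le1$ in \eqref{eq:hahnnorm} and $\sum_r b_r=1$. Together with $\lambda_j\ge0$, this yields $1+\varepsilon_t$ for every $\Psi'$, and averaging over the prefix preserves the bound. The envelope follows from \eqref{eq:matchingdecay} together with $m_j\le\binom{n+1}j$, as in Lemma~\ref{lem:magiccertificate}. At $t\ge T_n$ we get $\varepsilon_t\le1$, so the collision is at most $2\ratio_{\rm Haar}<8$ by Proposition~\ref{thm:magic}.

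The fixed-output bound for uniform $z$ cannot use output relabeling symmetry directly, and this is the step that needs the most care. I would instead apply Paley--Zygmund to the joint law of $(S,u,z)$ with $X=Np_z$. Here $\mathbb E X=N\cdot\mathbb E_{S,u}\sum_z p_z/N=1$, and $\mathbb E X^2=\mathbb E\,N\sum_z p_z^2<8$. This gives $\Pr[X>1/2]\ge1/(4\cdot 8)=1/32$.

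For (iii), apply Theorem~\ref{hard:thm:reduction} with the embedding from (i), so $\eta=0$ and $\delta=1/4-1/p(n)$. Pad $n$ to the least power of two at least $\max(n_1,4|E|)$ so that the block count fits. The ensemble is a gate-random ensemble with a fixed prefix, hence exactly sampleable. The remainder (binary search through Fact~\ref{hard:fact:sharpP} and amplification over adaptive queries) copies the proof of Theorem~\ref{hard:thm:pure}.
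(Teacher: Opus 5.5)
Your proposal is correct and follows the paper's proof. The Bene\v{s} blocks realize the meeting permutations and the final set routing through Lemma~\ref{hard:lem:relabel}, and the prefix-conditioned input keeps the Haar benchmark and satisfies $|c_j|\le m_j$. Paley--Zygmund over uniform $z$ gives the bound $1/32$, and (iii) applies Theorem~\ref{hard:thm:reduction} with $\eta=0$.

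One phrase needs tightening: the initial placement is not a free choice. The hard circuit's modes must start on the physical modes carrying $\Psi_{\rm in}$, which is the identity placement assumed in Lemma~\ref{hard:lem:relabel}. The first Bene\v{s} block then does all the moving.
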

\end{samepage}

\begin{proof}
\emph{Embedding.} Before each of the four native layers, use one
Bene\v{s} block to place every interacting pair on an $\mathcal A$-edge.
Apply its gate on that edge, conjugating by $X$ if the orientation is
reversed, and use identities on idle edges. The last Bene\v{s} block
maps the occupied set of $y_{Q_0}$ to that of $z$; set every suffix gate
to $I$. Rearrangeability and Lemma~\ref{hard:lem:relabel} prove (i).

\emph{Collision.} Condition on the prefix and set
$\Phi=\Gamma(C_{\rm pre})\ket{\Psi_{\rm in}}$. Haar invariance gives
$\mathcal R_{\rm Haar}(n,\Phi)=\mathcal R_{\rm Haar}(n,\Psi_{\rm in})$.
The collision expansion~\eqref{eq:main-chain-reduction} holds for this
input, and $|c_j(\Phi)|\le m_j$ because its sector weights form a
probability vector and $|\phi_j|\le1$. This proves the first bound in (ii).
The binomial estimate used in~\eqref{eq:matchingtime} bounds
$\varepsilon_t$ and gives $\varepsilon_t\le1$ for $t\ge T_n$.
Use $\mathcal R_{\rm Haar}<4$ and Paley--Zygmund on $Np_z$, whose mean
over uniform $z$ is one, to obtain the stated event probability.

\emph{Hardness.} Pad a size-$s$ counting circuit to the smallest power
of two $n\ge\max(n_1,4|E(Q_0)|)$, where the fixed cutoff $n_1\ge4$
is beyond the oracle's eventual guarantee. Then $n=\poly(s)$.
Apply Theorem~\ref{hard:thm:reduction} with the embedding in (i),
$\eta=0$, and $\delta=1/4-1/p(n)$. Schedule sampling and all parameter
computations are efficient. Recover counts and amplify adaptive calls
as in Theorem~\ref{hard:thm:pure}, proving (iii).
\end{proof}

The prefix changes the input coefficients $c_j$, so the sharp magic-input
threshold does not transfer directly. The uniform bound above guarantees
anticoncentration once the suffix reaches $T_n\simeq5.5\ln n$.

\end{document}